\pdfoutput=1
\documentclass[11pt]{article}

\usepackage[a4paper,margin=1in]{geometry}
\usepackage{amsmath,amssymb,amsthm,bm,mathtools}
\usepackage{enumitem}
\usepackage{booktabs}
\usepackage{graphicx}
\usepackage{placeins}
\usepackage{float}
\usepackage{natbib}
\usepackage[colorlinks=true,linkcolor=blue,citecolor=blue,urlcolor=blue]{hyperref}
\hypersetup{
  pdftitle={Elliptical Regularized Hotelling Tests for High-Dimensional Change-Point Detection},
  pdfauthor={Fengyi Song, Mengtao Wen, and Long Feng},
  pdfsubject={High-dimensional change-point detection under elliptical distributions},
  pdfkeywords={change point, elliptical distribution, regularized Hotelling test, spatial median, spatial-sign covariance matrix}
}

\allowdisplaybreaks
\setlist[itemize]{leftmargin=2em}
\setlist[enumerate]{leftmargin=2em}

\newcommand{\R}{\mathbb{R}}
\newcommand{\Pbb}{\mathbb{P}}
\newcommand{\E}{\mathbb{E}}
\newcommand{\Var}{\operatorname{Var}}
\newcommand{\Cov}{\operatorname{Cov}}
\newcommand{\tr}{\operatorname{tr}}
\newcommand{\diag}{\operatorname{diag}}
\newcommand{\argmax}{\operatorname*{arg\,max}}
\newcommand{\argmin}{\operatorname*{arg\,min}}
\newcommand{\ind}{\mathbf{1}}
\newcommand{\bfI}{\mathbf{I}}
\newcommand{\bfR}{\mathbf{R}}
\newcommand{\bfS}{\mathbf{S}}
\newcommand{\bfT}{\mathbf{T}}
\newcommand{\bfa}{\mathbf{a}}
\newcommand{\bfA}{\mathbf{A}}
\newcommand{\bfB}{\mathbf{B}}
\newcommand{\bfb}{\mathbf{b}}
\newcommand{\bfd}{\mathbf{d}}
\newcommand{\bfc}{\mathbf{c}}
\newcommand{\bfe}{\mathbf{e}}
\newcommand{\bfdelta}{\bm{\delta}}
\newcommand{\bfp}{\mathbf{p}}
\newcommand{\bfr}{\mathbf{r}}
\newcommand{\bfC}{\mathbf{C}}
\newcommand{\bfD}{\mathbf{D}}
\newcommand{\bfE}{\mathbf{E}}
\newcommand{\bfF}{\mathbf{F}}

\newcommand{\bfK}{\mathbf{K}}
\newcommand{\bfL}{\mathbf{L}}
\newcommand{\bfM}{\mathbf{M}}
\newcommand{\bfJ}{\mathbf{J}}
\newcommand{\bfG}{\mathbf{G}}
\newcommand{\bfH}{\mathbf{H}}
\newcommand{\bfQ}{\mathbf{Q}}
\newcommand{\bfU}{\mathbf{U}}
\newcommand{\bfV}{\mathbf{V}}
\newcommand{\bfX}{\mathbf{X}}
\newcommand{\bfx}{\mathbf{x}}
\newcommand{\bfu}{\mathbf{u}}
\newcommand{\bfv}{\mathbf{v}}
\newcommand{\bfY}{\mathbf{Y}}
\newcommand{\bfy}{\mathbf{y}}
\newcommand{\bfz}{\mathbf{z}}
\newcommand{\bfZ}{\mathbf{Z}}
\newcommand{\bfW}{\mathbf{W}}
\newcommand{\bfw}{\mathbf{w}}
\newcommand{\bfone}{\mathbf{1}}
\newcommand{\bftheta}{\bm{\theta}}
\newcommand{\bfDelta}{\bm{\Delta}}
\newcommand{\bfvarepsilon}{\bm{\varepsilon}}
\newcommand{\bfOmega}{\bm{\Omega}}
\newcommand{\bfO}{\mathbf{O}}
\newcommand{\bfP}{\mathbf{P}}
\newcommand{\bfLambda}{\bm{\Lambda}}
\newcommand{\bfbeta}{\bm{\beta}}
\newcommand{\bfGamma}{\bm{\Gamma}}
\newcommand{\bfvarsigma}{\bm{\varsigma}}
\newcommand{\bfupsilon}{\bm{\upsilon}}
\newcommand{\calS}{\mathcal{S}}
\newcommand{\calR}{\mathcal{R}}
\newcommand{\calF}{\mathcal{F}}
\newcommand{\calG}{\mathcal{G}}
\newcommand{\dto}{\overset{d}{\longrightarrow}}
\newcommand{\pto}{\overset{P}{\longrightarrow}}
\newcommand{\opnorm}[1]{\left\|#1\right\|_{\rm op}}
\newcommand{\norm}[1]{\left\|#1\right\|}
\newcommand{\abs}[1]{\left|#1\right|}
\newcommand{\floor}[1]{\lfloor #1\rfloor}

\newtheorem{theorem}{Theorem}[section]
\newtheorem{assumption}{Assumption}[section]
\newtheorem{proposition}{Proposition}[section]
\newtheorem{lemma}{Lemma}[section]
\newtheorem{corollary}{Corollary}[section]
\newtheorem{remark}{Remark}[section]
\newtheorem{convention}{Convention}[section]
\theoremstyle{definition}
\newtheorem{algorithm}{Algorithm}[section]
\theoremstyle{plain}

\title{\bfseries Elliptical Regularized Hotelling Tests for High-Dimensional Change-Point Detection}
\author{Fengyi Song, Mengtao Wen and Long Feng\\ School of Statistics and Data Science, Nankai University}
\date{}

\begin{document}
\maketitle

\begin{abstract}
We propose an elliptical regularized Hotelling (ERHT) procedure for detecting location changes in high-dimensional sequences with heavy-tailed, cross-sectionally dependent observations. ERHT contrasts spatial medians on adjacent segments using a ridge-regularized inverse of the pooled centered spatial-sign covariance matrix, thereby combining robustness to radial variation with dependence-aware weighting. We establish Gaussian-process limits for the single- and multiple-change scans and joint convergence over a finite set of regularization parameters. These results provide asymptotically exact calibration of a Cauchy-aggregated adaptive test through the joint Gaussian limit, together with guarantees for local power and single-change localization. We further embed the ERHT score in wild binary segmentation and prove consistency for estimating the number and locations of multiple changes. Simulations show that ERHT is generally well calibrated and delivers competitive power under heavy-tailed distributions, particularly when cross-sectional dependence is substantial. An analysis of the Fama--French 49 industry portfolios reveals persistent evidence of location instability and identifies four structural breaks.
\end{abstract}

\noindent\textbf{Keywords:} Cauchy combination; change point; elliptical distribution; high-dimensional location testing; regularized Hotelling statistic; spatial median; spatial-sign covariance matrix.

\section{Introduction}

High-dimensional change-point analysis studies whether the location structure of a multivariate sequence remains stable over time.  It is now a standard component of modern data analysis in genomics, neuroimaging, finance, environmental monitoring and network monitoring, where the dimension may be comparable with, or far exceed, the available sample size.  In this regime, classical likelihood-ratio and CUSUM procedures face two related obstacles.  First, covariance estimation is unstable without additional regularization.  Secondly, the dependence among coordinates can strongly affect both calibration and power.  These issues have led to several lines of high-dimensional mean change-point methodology.

Early and foundational work developed simultaneous or uniform testing procedures for many coordinates.  
\citet{zhang2010simultaneous} studied simultaneous changepoints in multiple sequences. \citet{jirak2015} developed uniform high-dimensional change-point tests based on coordinatewise CUSUM processes, and \citet{wen2024activation} proposed an FDR-controlling procedure for discovering coordinates that contain change-points.  
A second line focuses on localization and multiple-change segmentation.  Contributions in this direction include binary segmentation \citep{cho2015sbs}, projection-based estimation \citep{wangsamworth2018}, detection-boundary theory \citep{enikeeva2019}, multiple-change detection for high-dimensional sequences \citep{wang2019multiple}, breakpoint inference for dependent high-dimensional time series \citep{chen2022breakpoints}, dating the break in high-dimensional data \citep{wangshao2023dating}, and two-way moving-sum inference \citep{li2024twoway}.  A third line addresses data-adaptive or distributionally robust calibration through resampling, self-normalization or aggregation; representative examples include \citet{zhang2018self}, \citet{liu2020unified}, \citet{yu2021finite}, \citet{wang2022self} and \citet{zhang2022adaptive}.  Recent developments in this broad area are reviewed by \citet{liu2022review}.

A particularly relevant contribution is \citet{wangfeng2023}.  They showed that max-type and sum-type high-dimensional change-point statistics are asymptotically independent under light-tailed observations and weak cross-sectional dependence, and they used this property to construct a computationally efficient data-adaptive test.  This result clarifies why combining coordinatewise and quadratic summaries can be effective when the dependence is sufficiently weak.  However, the method is still mean-based and its theory does not target strongly correlated elliptical observations.  Heavy-tailed radial components can destabilize sample means and covariance-type quantities, while strong cross-sectional dependence can make unregularized or weak-dependence calibrations inefficient.

Robust high-dimensional change-point inference replaces moment-based summaries by spatial ranks, spatial signs, or self-normalized statistics.  \citet{shu2022spatialrank} proposed a spatial-rank method based on random integration, and \citet{jiang2023robust} developed robust high-dimensional inference using spatial signs and self-normalization.  More closely related to this paper, \citet{liu2025sscpd} introduced SSCPD, a spatial-sign-based high-dimensional change-point procedure that uses spatial medians and spatial signs to reduce sensitivity to heavy tails.  SSCPD is attractive for elliptically distributed data, but its theoretical guarantees are formulated for weak coordinate dependence and its normalization does not exploit a regularized inverse of a shape matrix.  Consequently, it may lose efficiency when the change is aligned with major dependence directions.

Regularized Hotelling methodology provides a different way to use dependence information in high dimension.  Instead of inverting an unstable sample covariance matrix, ridge-regularized Hotelling procedures use a ridge-regularized inverse to obtain a stable covariance-adjusted quadratic contrast.  This idea has been developed for high-dimensional two-sample testing by \citet{chen2011rht} and \citet{li2020adaptable}.  For change-point testing, \citet{li2026change} proposed covariance-based regularized Hotelling's \(T^2\) (RHT) scan statistics under light-tailed assumptions, and \citet{zhao2026rhtcp} studied Cauchy aggregation over finite grids of ridge-regularization parameters.  These procedures are powerful when the covariance is informative, but their covariance-based construction is vulnerable to heavy-tailed radial variation.  This observation motivates a robust RHT-type change-point procedure for elliptically symmetric data.

\subsection{Our contribution}

We propose ERHT, an elliptical regularized Hotelling \(T^2\) method for high-dimensional location change-point testing.  For each candidate scan interval, ERHT compares the spatial medians of two adjacent segments and standardizes the contrast by a ridge-regularized inverse of the pooled centered spatial-sign covariance matrix.  The proposed method combines two complementary ingredients: spatial medians and spatial signs remove the effect of radial magnitudes, while ridge regularization retains information about the dependence structure.
This construction is designed for settings where heavy-tailed elliptical distributions and non-negligible cross-sectional dependence occur simultaneously.
Concretely, we summarize our contributions in the following three points:

\begin{itemize}
  \item[1.] \textbf{Asymptotic theory for the ERHT statistic.}
  The analysis for the proposed statistic is nontrivial because the spatial median is nonlinear, the spatial-sign covariance matrix is computed after pooled recentering, and the ridge-regularized inverse couples all coordinates through a random high-dimensional matrix. 
  We develop a unified asymptotic theory for the ERHT statistic that accommodates both a fixed ridge-regularization parameter and a finite grid of ridge-regularization parameters, thereby providing a common foundation for parameter-specific and adaptively aggregated inference.
  \item[2.] \textbf{Adaptive aggregation over regularization parameters.} We propose a regularization-adaptive test that combines scan evidence across a finite grid of ridge-regularization parameters using the Cauchy rule. The joint limiting theory accounts for dependence across parameter values and supports fixed-level calibration of the aggregated statistic.
  \item[3.] \textbf{Global testing and segmentation under multiple changes.} 
  We extend ERHT to the multiple-change setting. 
  Additionally, motivated by wild binary segmentation \citep[WBS,][]{fryzlewicz2014}, we propose an ERHT-based WBS method for estimating the number and locations of changes. 
  The accompanying theory establishes the null validity of the global test and the consistency of the estimated number and locations of the changes.
\end{itemize}

\subsection{Organization}

The rest of the paper is organized as follows.  Section~\ref{sec:methods}
introduces the model, the ERHT statistic, and the calibration schemes.
Section~\ref{sec:theory-global} develops the deterministic equivalents and
the single- and multiple-change testing theory.
Section~\ref{sec:estimation} gives the WBS-based
multiple-change estimator and its consistency theorem.
Sections~\ref{sec:simulation} and~\ref{sec:realdata} report the results of the simulation studies
and real-data analysis.  
Technical lemmas and proofs are given in the
Supplementary Material.

\section{Methodology}
\label{sec:methods}

\subsection{Model, scan intervals, and raw statistic}\label{sec:model}

Let $\bfX_1,\ldots,\bfX_n\in\R^p$ be independent \(p\)-dimensional observations following the elliptical model \citep{fang1990}
\begin{equation}
  \label{eq:elliptical}
  \bfX_i=\bftheta_i+ \bfvarepsilon_i,\qquad \bfvarepsilon_i = \sqrt p\,R_i\bfOmega_p^{1/2}\frac{\bfG_i}{\norm{\bfG_i}},
\end{equation}
Here, \(\bftheta_i\in\R^p\) is the location vector of interest; \(\bfG_i\sim N_p(\mathbf 0,\bfI_p)\), so that \(\bfG_i/\|\bfG_i\|\) is the noise direction; \(\bfOmega_p\) is a positive-definite shape matrix normalized by \(p^{-1}\tr(\bfOmega_p)=1\); and \(R_i>0\) is a radial variable independent of \(\bfG_i\).
For each \((n,p)\), the pairs \((R_i,\bfG_i)\) are assumed to be independent and identically distributed across \(i\), although their common distribution may depend on \(p\).
We consider testing the equality of the location vectors, that is,
\begin{equation*}
  H_0:\quad \bftheta_1=\cdots=\bftheta_n=\bftheta.
\end{equation*}
We consider two alternatives corresponding to the single- and multiple-change problems.  
For the single-change scenario, we consider
\begin{equation*}
  H_{1,{\rm sc}}:\quad
\bftheta_i=
  \begin{cases}
    \bftheta^{(1)}, & i\le \floor{n\tau_\star},\\
    \bftheta^{(2)}, & i>\floor{n\tau_\star},
  \end{cases}
\end{equation*}
where \(\bfDelta_p=\bftheta^{(2)}-\bftheta^{(1)}\ne\mathbf{0}\) is the change vector, and $\tau_\star\in[\varepsilon,1-\varepsilon]$ for a fixed $\varepsilon\in(0,1/2)$ is the change-point fraction.  
When testing the existence of multiple changes, we consider
\begin{equation*}
  H_{1,{\rm mc}}:\quad
  \bftheta_i=\bftheta^{(\ell)}, \qquad 
  \floor{n\tau_{\ell-1}}<i\le \floor{n\tau_\ell},\qquad 
  \ell=1,\ldots,q+1,
\end{equation*}
where $q\ge1$ is the number of change-points and $0=\tau_0<\tau_1<\cdots<\tau_q<\tau_{q+1}=1$ are the change-point fractions. 
Adjacent location levels are assumed to differ, i.e., \(\bftheta^{(\ell)}\neq \bftheta^{(\ell+1)}\) for \(\ell = 1, \ldots, q\).  

To consider these two types of alternatives in a unified framework, following \citet{li2026change} and \citet{zhao2026rhtcp},
we introduce an ordered triple
\[
s = (t_1, t_2, t_3), \qquad 0\le t_1<t_2<t_3\le 1,
\]
representing the starts and ends of two adjacent intervals \(I_1(s)\) and \(I_2(s)\), where $\iota_n(t)=\floor{nt}+1$ for $t\in[0,1]$, and
\begin{align*}
  I_1(s)=\{\iota_n(t_1),\ldots,\iota_n(t_2)-1\},\qquad 
  I_2(s)=\{\iota_n(t_2),\ldots,\iota_n(t_3)-1\},
\end{align*}
are the adjacent intervals with sizes \(n_1(s) = |I_1(s)|\) and \(n_2(s) = |I_2(s)|\).
This representation permits different scan domains for the two alternatives, with the global statistic obtained by taking the supremum of the local two-sample statistic over the relevant candidate triples.  
For the single-change problem, define
\begin{equation*}
  \calS_{\rm sc}(\varepsilon)=\{(0,t,1):\varepsilon\le t\le1-\varepsilon\},
\end{equation*}
whereas the multiple-change problem uses the full adjacent-triple domain
\begin{equation*}
  \calS_{\rm mc}(\varepsilon)=\{(t_1,t_2,t_3)\in[0,1]^3:
  t_1<t_2<t_3,
  \ t_2-t_1\ge\varepsilon,
  \ t_3-t_2\ge\varepsilon\}.
\end{equation*}
The construction of the two-sample statistic on the adjacent intervals indexed by \(s\) parallels that of an RHT statistic. 
Concretely, we define the difference of sample spatial medians on \(I_1(s)\) and \(I_2(s)\) as 
\[
\widehat\bfDelta_s=\widehat\bftheta_{2,s}-\widehat\bftheta_{1,s}, \quad \text{with} \quad \widehat\bftheta_{a,s}=\widehat\bftheta(I_a(s)) \quad \text{for}\quad a = 1, 2,
\]
where \(\widehat\bftheta(I)\in\argmin_{\bfu\in\R^p}\sum_{i\in I}\norm{\bfX_i-\bfu}\) is the sample spatial median for any finite set \(I\subseteq \{1, \ldots, n\}\); if the minimizer is not unique, a fixed deterministic tie-breaking rule is used.
Let \(J(s)\subseteq \{1, \ldots, n\}\) denote the indices used to estimate the scatter normalizer, with \(m_s = |J(s)|\).  We use the ridge-regularized inverse of the pooled centered spatial-sign covariance matrix \citep{li2020adaptable},
\[
\widehat\bfQ_{\rho, s}=(\widehat\bfR_s+\rho\bfI_p)^{-1}, \quad \text{with} \quad \widehat\bfR_s=\frac1{m_s}\sum_{i\in J(s)} \widehat\bfY_{i, s}\widehat\bfY_{i, s}^{\top} \quad \text{and}\quad \widehat\bfY_{i, s}=\sqrt p\,U(\bfX_i-\widehat\bftheta_{0, s}),
\]
where \(\rho\) is the ridge-regularization parameter, \(U(\bfx)=\bfx/\norm{\bfx}\) with the convention \(U(\bfx)=\mathbf0\) for \(\bfx=\mathbf0\), and \(\widehat\bftheta_{0, s}=\widehat\bftheta(J(s))\) is the spatial median on \(J(s)\).
For both global testing problems, we use the common full-sample pool
\begin{equation}\label{eq:Js}
J(s) = \{1, \ldots, n\},
\end{equation}
for every candidate in \(\calS_{\rm sc}\) or \(\calS_{\rm mc}\), and retain the subscript \(s\) for notational convenience.
The raw local ERHT statistic is
\begin{equation}
\label{eq:Vraw}
  V^{\rm raw}_{\rho}(s)=N_s\widehat\bfDelta_s^{\top}\widehat\bfQ_{\rho, s}\widehat\bfDelta_s,
\end{equation}
where \(N_s={n_1(s)n_2(s)}/\{n_1(s)+n_2(s)\}\) is the effective sample size.
Because the high-dimensional quadratic form has a nonzero null mean,
\(V_{\rho}^{\mathrm{raw}}(s)\) cannot be compared directly across candidate
triples.  We therefore center and studentize the raw statistic before
constructing the scan statistic.

\begin{remark}
  The multiple-change scan set \(\calS_{\rm mc}(\varepsilon)\) is a continuous three-dimensional domain. For computational efficiency, one may use the grid
  \begin{equation}
  \label{eq:Smc-star}
    \calG_\varepsilon=\{j\varepsilon:j\in\mathbb Z\}\cap[0,1],\qquad
    \calS_{\rm mc}^*(\varepsilon)=\calS_{\rm mc}(\varepsilon)\cap\calG_\varepsilon^3.
  \end{equation}
  This discretization reduces computation by restricting the scan to finitely many candidate triples while leaving each local statistic unchanged.
\end{remark}

\subsection{Local studentization and scan statistic}\label{sec:calibration}

The null distribution of the raw ERHT statistic \eqref{eq:Vraw} is difficult to derive directly because of the nonlinearity of the sample spatial medians, the dependence between the spatial-median contrast and the estimated scatter matrix, and the high-dimensional regime.
We therefore first introduce the score representation of the spatial-median contrast.
Specifically, define the inverse-distance averages
\begin{equation*}
  \widehat e_{a,s}=\frac1{n_a(s)}\sum_{i\in I_a(s)}
  \widehat w_{i,a,s},
  \qquad a=1,2,
\end{equation*}
where the inverse-distance weight \(\widehat w_{i,a,s} = \sqrt{p}/\|\bfX_i-\widehat\bftheta_{a,s}\|\) for \(\bfX_i\ne\widehat\bftheta_{a,s}\) and \(\widehat w_{i,a,s} = 0\) for \(\bfX_i = \widehat\bftheta_{a,s}\). 
The inversion factors \(\widehat e_{a,s}^{-1}\) account for the Jacobian appearing in the first-order expansion of the two segment spatial medians.
Then, the score-CUSUM weights are 
\begin{equation}
\label{eq:beta-hat}
  \widehat\beta_i(s)=\sqrt{N_s}
  \left\{
  \frac{\ind(i\in I_2(s))}{n_2(s)\widehat e_{2,s}}
  -\frac{\ind(i\in I_1(s))}{n_1(s)\widehat e_{1,s}}
  \right\}.
\end{equation}
These weights are the spatial-median analogues of the ordinary adjacent-segment CUSUM weights. 
Let \(\widehat\bfbeta_s=(\widehat\beta_i(s):i\in J(s))^\top\) be the weight vector, and $\widehat\bfY_s$ be the $p\times m_s$ matrix whose columns are $\widehat\bfY_{i,s}$ in their time order. 
The spatial-median linearization motivates the following score approximation,
\[
\sqrt{N_s}\,\widehat\bfDelta_s\approx\sum_{i\in J(s)}\widehat\beta_i(s)\widehat \bfY_{i,s} = \widehat \bfY_{s} \widehat \bfbeta_s.
\]
Thus, the nonlinear contrast between the two spatial medians of the adjacent intervals is approximated by a
weighted linear combination of pooled spatial signs. 

Next, we introduce the companion matrix to reformulate the approximated quadratic form of the raw local statistic \(V^{\rm raw}_{\rho}(s)\).
Define the companion matrix by
\begin{equation*}
  \widehat\bfA_{\rho,s}=\frac1{m_s}\widehat\bfY_s^{\top}\widehat\bfQ_{\rho,s}\widehat\bfY_s,
\end{equation*}
whose $(i,j)$th entry is
\(\widehat A_{ij,\rho,s}
=
m_s^{-1}
\widehat \bfY_{i,s}^{\top}
\widehat \bfQ_{\rho,s}
\widehat \bfY_{j,s}\).
This gives
\[
V^{\rm raw}_{\rho}(s) \approx \widehat \bfbeta_s^\top \widehat \bfY_s^\top \widehat \bfQ_{\rho, s} \widehat \bfY_s \widehat \bfbeta_s = m_s \widehat\bfbeta_s^{\top}
\widehat \bfA_{\rho,s}
\widehat\bfbeta_s.
\]
This representation separates the two principal components of the raw local statistic:
the weight vector $\widehat\bfbeta_s$ captures the
temporal contrast, whereas the companion matrix
$\widehat \bfA_{\rho,s}$ captures the cross-sectional geometry of the
pooled spatial signs after weighting by the ridge-regularized inverse
\(\widehat\bfQ_{\rho,s}\).

Furthermore, motivated by the central symmetry of the elliptical error distribution, we use the Rademacher representation to derive the conditional expectation and conditional variance of the companion quadratic form. 
In particular, let \(\bfupsilon_s = (\upsilon_i:i\in J(s))^\top\) have independent Rademacher components satisfying
\(
\Pbb(\upsilon_i=1)
=
\Pbb(\upsilon_i=-1)
= 1/2\), independent of the
data. 
Define the companion quadratic form as
\begin{equation}
\label{eq:companion-quadratic-form}
\widehat V^{\rm comp}_{\rho}(s)
  =m_s\{\widehat\bfbeta_s\circ\bfupsilon_s\}^\top
    \widehat\bfA_{\rho,s}
    \{\widehat\bfbeta_s\circ\bfupsilon_s\},
\end{equation}
where $\circ$ denotes componentwise multiplication.
Expanding
\eqref{eq:companion-quadratic-form} gives
\begin{align*}
\widehat V_{\rho}^{\mathrm{comp}}(s)
&=
m_s
\sum_{i\in J(s)}
\widehat\beta_i(s)^2
\widehat A_{ii,\rho,s}
+
m_s
\sum_{\substack{i,j\in J(s)\\i\ne j}}
\widehat\beta_i(s)
\widehat\beta_j(s)
\upsilon_i\upsilon_j
\widehat A_{ij,\rho,s}.
\end{align*}
The diagonal component of \(\widehat V_{\rho}^{\mathrm{comp}}(s)\) is invariant
to the Rademacher signs and determines the conditional center, while the
off-diagonal component has conditional mean zero and determines the
stochastic variation.
Accordingly, define the centering quantity and the variance quantity as
\begin{equation*}
\widehat\kappa_{\rho}(s)
=
\sum_{i\in J(s)}
\widehat\beta_i(s)^2
\widehat A_{ii,\rho,s},\qquad
\widehat\sigma_{\rho}^2(s)
=
2m_s
\sum_{\substack{i,j\in J(s)\\i\ne j}}
\widehat\beta_i(s)^2
\widehat\beta_j(s)^2
\widehat A_{ij,\rho,s}^2,
\end{equation*}
respectively.
Consequently, we have 
\[
\E\{\widehat V_\rho^{\mathrm{comp}}(s) \mid \widehat \bfA_{\rho, s}, \widehat\bfbeta_s\} = m_s \widehat \kappa_\rho(s) \quad\text{and}\quad \Var\{\widehat V_\rho^{\mathrm{comp}}(s) \mid \widehat \bfA_{\rho, s}, \widehat\bfbeta_s\} = m_s \widehat \sigma_\rho^2(s),
\] 
which motivates the studentized local statistic
\begin{equation}
\label{eq:Z-local}
  Z_{\rho}(s)=
  \frac{V^{\rm raw}_{\rho}(s)-m_s\widehat\kappa_{\rho}(s)}
  {\{m_s\widehat\sigma_{\rho}^2(s)\}^{1/2}}.
\end{equation}

Finally, for the studentized local statistic, we take the supremum over a
chosen scan set
\(\calS\) to obtain the global statistic
\[
  T_{\rho}(\calS)=\sup_{s\in\calS}Z_{\rho}(s).
\]
For the single-change alternative, the scan set is \(\calS_{\rm sc}(\varepsilon)\), whereas for the multiple-change alternative it is \(\calS_{\rm mc}(\varepsilon)\) (or its discretized version \(\calS_{\rm mc}^*(\varepsilon)\)). 
Thus, the two global statistics are
\[
\begin{aligned}
T_{\rho}^{\rm sc}
  &:=T_{\rho}\{\calS_{\rm sc}(\varepsilon)\}
    =\sup_{s\in\calS_{\rm sc}(\varepsilon)}Z_{\rho}(s),\\
T_{\rho}^{\rm mc}
  &:=T_{\rho}\{\calS_{\rm mc}(\varepsilon)\}
    =\sup_{s\in\calS_{\rm mc}(\varepsilon)}Z_{\rho}(s),
\end{aligned}
\]
for testing the single-change alternative and the multiple-change alternative, respectively.

\subsection{Adaptive aggregation over ridge-regularization parameters}\label{sec:ridge-aggregation}

The ridge-regularization parameter controls the strength of shape adjustment: smaller
values exploit the estimated dependence structure more aggressively but may
amplify noise in estimated directions with small eigenvalues, whereas larger values improve
numerical stability at the cost of shrinking the procedure toward isotropic
weighting.  
Thus, the choice of regularization parameter balances exploitation of the estimated dependence structure against numerical stability.

For a fixed integer \(K\ge1\), let
\[
  \calR_K=\{\rho_n^{(1)},\ldots,\rho_n^{(K)}\}
\]
be a finite grid of ridge-regularization parameters, with each
\(\rho_n^{(k)}\) deterministic.  We combine the scan evidence computed at
these parameter values.
Let \(F_{\calS}\) denote the limiting Gaussian-supremum distribution of \(T_\rho(\calS)\) for the scan set \(\calS\). 
We define
\begin{equation}
\label{eq:ridge-marginal-pvalues}
  P_k=1-F_{\calS}\{T_{\rho_n^{(k)}}(\calS)\},
  \qquad k=1,\ldots,K.
\end{equation}
Under the common-pool convention \eqref{eq:Js}, the supremum law \(F_\calS\) depends on the scan domain but not on the regularization-parameter value because all marginal Gaussian processes have the same covariance kernel.
Finally, the Cauchy aggregate and its analytic transformation are
\begin{equation}
\label{eq:CCT}
  T_{\rm CC}=\sum_{k=1}^K\varpi_k
  \tan\left[\pi\left\{\frac12-P_k\right\}\right],
  \qquad
  P_{\rm CC}=\frac12-\frac1\pi\arctan(T_{\rm CC}),
\end{equation}
where $\varpi_k>0$ are the combination weights satisfying $\sum_{k=1}^K\varpi_k=1$.
Two aggregate calibrations should be distinguished.  Exact asymptotic calibration rejects for large \(T_{\rm CC}\) using a quantile of its joint Gaussian limit, which accounts for dependence across regularization-parameter values.  The simpler analytic rule rejects when \(P_{\rm CC}\le\alpha\); Theorems~\ref{thm:sc-cauchy} and~\ref{thm:mc-cauchy} characterize its limiting rejection probability without assuming that \(P_{\rm CC}\) is exactly uniform.  In the empirical sections, ``Gaussian-supremum calibration'' refers to using \(F_{\calS}\) for each \(P_k\), whereas ``time-permutation calibration'' refers to replacing the \(P_k\)'s by parameter-specific permutation p-values before applying the same analytic Cauchy transformation.

\section{Asymptotic Theory}
\label{sec:theory-global}

\subsection{Primitive assumptions}\label{sec:assumptions}

This section develops the asymptotic theory for global testing.  
We begin with the primitive assumptions used throughout.  
Define the inverse radial variables by \(\xi_i=R_i^{-1}\), \(i=1,\ldots,n\).
Let \(\lambda_{1,p},\ldots,\lambda_{p,p}\) and
\(\bfp_{1,p},\ldots,\bfp_{p,p}\) be the eigenvalues and eigenvectors of
\(\bfOmega_p\), respectively.
We impose the following assumptions.

\begin{assumption}[Elliptical error distribution]\label{ass:elliptical}
Assume that the elliptical model \eqref{eq:elliptical} holds for \(i=1,\ldots,n\).
For some \(\eta>0\) and \(c_0>0\),
\(\E\xi_i\to\zeta_{-1}\in(0,\infty)\), \(\limsup_{n\to\infty}\E\xi_i^{4+\eta}<\infty\), and
\(\Pbb\left\{\max_{1\le i\le n}\xi_i>(\log n)^{c_0}\right\}\to0\).
\end{assumption}

\begin{assumption}[Dimension and spectrum]\label{ass:spectrum}
(a) As \(n,p\to\infty\), assume \(\gamma_n={p}/{n}\to\gamma\in(0,\infty)\).
(b) There exists a constant \(0<\omega_+<\infty\) such that \(0<\lambda_{j,p}\le \omega_+\)
for all \(j=1,\ldots,p\) and all \(p\).   
(c) The empirical spectral distribution
\(H_p(x)=p^{-1}\sum_{j=1}^p\ind(\lambda_{j,p}\le x)\)
converges weakly to a probability distribution \(H\) supported on \([0,\omega_+]\).  
\end{assumption}

\begin{assumption}[Grid of ridge-regularization parameters]\label{ass:grid}
There exist constants \(0<\rho_0<\rho_1<\infty\) such that, for fixed \(K\ge1\),
\[
  \calR_K=\{\rho_n^{(1)},\ldots,\rho_n^{(K)}\}
  \subseteq[\rho_0,\rho_1].
\]
For some \(\rho^{(1)},\ldots,\rho^{(K)}\in[\rho_0,\rho_1]\),
\(\max_{1\le k\le K}|\rho_n^{(k)}-\rho^{(k)}|\to0\).
\end{assumption}

Assumption~\ref{ass:elliptical} specifies the radial regularity needed beyond the elliptical representation. 
The convergence of \(\E\xi_i\) stabilizes the Jacobian factor in the spatial-median expansion.  The \(4+\eta\) moment bound controls its higher-order terms and the quadratic-form remainders, while the
polylogarithmic maximum bound provides the uniform control needed over all candidate scan intervals.  
Assumption~\ref{ass:spectrum} places the problem in the proportional-growth regime.  
The uniform upper spectral bound rules out directions whose scale diverges with the dimension, and the weak convergence of \(H_p\) stabilizes the normalized resolvent traces that determine the null variance and local power. 
Assumption~\ref{ass:grid} keeps the deterministic grid of ridge-regularization parameters finite and inside a compact interval bounded away from zero.  

We also summarize the design of different scan sets for different inferential targets.

\begin{convention}[Scan designs and inferential targets]\label{conv:scan}
The scan set is chosen according to the change-point problem under study.
\begin{enumerate}[label=(\roman*)]
\item \emph{Single-change scan}: 
\begin{equation*}
  \calS=\calS_{\rm sc}(\varepsilon),\qquad
  \varepsilon\in(0,1/2),\qquad
  J(s)=\{1,\ldots,n\}\quad(s\in\calS).
\end{equation*}
\item \emph{Multiple-change scan}: 
\begin{equation*}
  \calS=\calS_{\rm mc}(\varepsilon),\qquad
  \varepsilon\in(0,1/2),\qquad
  J(s)=\{1,\ldots,n\}\quad(s\in\calS).
\end{equation*}
\end{enumerate}
The two cases correspond to two different inferential targets. 
The single-change problem scans the one-dimensional path \((0,t,1)\), whereas the multiple-change testing problem scans the full adjacent-triple domain \((t_1,t_2,t_3)\).  
\end{convention}

\subsection{Core approximation and pointwise null law}\label{sec:core-null}

In this section, we present fundamental results used in both the single-change and
multiple-change tests.  
We first introduce notation used in their statements.
Under \(H_0\), define the oracle-scaled spatial
sign and inverse-distance weight by
\(\bfY_i=\sqrt p\,U(\bfvarepsilon_i)\)
and 
\(w_i={\sqrt p}/{\norm{\bfvarepsilon_i}}\), respectively.
Accordingly, we define \(\bfY_s\) as the
\(p\times m_s\) matrix whose columns are \(\bfY_i\) for \(i\in J(s)\) in time
order, and put
\[
  \bfR_s^0=\frac{1}{m_s}\bfY_s\bfY_s^\top,
  \qquad
  \bfQ_{\rho,s}^0=(\bfR_s^0+\rho\bfI_p)^{-1},
  \qquad
  \bfA_{\rho,s}^0=\frac{1}{m_s}\bfY_s^\top
  \bfQ_{\rho,s}^0\bfY_s.
\]
For the score weights, let \(e_{a,s}=n_a(s)^{-1}\sum_{i\in I_a(s)}w_i\) for \(a = 1, 2\), and set \(\bfbeta_s=(\beta_i(s):i\in J(s))^\top\) with elements
\[
  \beta_i(s)=\sqrt{N_s}
  \left\{
  \frac{\ind(i\in I_2(s))}{n_2(s)e_{2,s}}
  -\frac{\ind(i\in I_1(s))}{n_1(s)e_{1,s}}
  \right\}.
\]
Finally, define the oracle score
\(\bfB_s=\sum_{i\in J(s)}\beta_i(s)\bfY_i\) and the oracle quadratic form
\[
  \widetilde V_\rho^0(s)
  =\bfB_s^\top\bfQ_{\rho,s}^0\bfB_s
  =m_s\bfbeta_s^\top\bfA_{\rho,s}^0\bfbeta_s.
\]
Writing \(A_{ij,\rho,s}^0\) for the entries of
\(\bfA_{\rho,s}^0\), define the oracle center and variance by
\[
  \kappa_\rho^0(s)
  =\sum_{i\in J(s)}\beta_i(s)^2A_{ii,\rho,s}^0,
  \qquad
  \sigma_\rho^{0,2}(s)
  =2m_s\sum_{\substack{i,j\in J(s)\\i\neq j}}
  \beta_i(s)^2\beta_j(s)^2(A_{ij,\rho,s}^0)^2.
\]
Let \(m_*=\inf_{s\in\calS}m_s\), and let
\(\ell_n=(\log n)^{c_\ell}\) for some constant \(c_\ell > 0\) denote the fixed, sufficiently large polylogarithmic envelope used in Section~\ref{app:auxiliary} of the Supplementary Material.  

\begin{proposition}[Raw-to-score reduction]\label{prop:raw-score}
Under \(H_0\), Assumptions~\ref{ass:elliptical}--\ref{ass:grid}, and Convention~\ref{conv:scan},
\begin{equation*}
  \sup_{s\in\calS}\sup_{\rho\in[\rho_0,\rho_1]}
  \abs{V^{\rm raw}_{\rho}(s)-\widetilde V_{\rho}^0(s)}
  =O_P(\ell_n).
\end{equation*}
Since \(m_*=n\) under Convention~\ref{conv:scan}, it follows after null standardization that
\[
  \sup_{s\in\calS}\sup_{\rho\in[\rho_0,\rho_1]}
  \frac{\abs{V^{\rm raw}_{\rho}(s)-\widetilde V_{\rho}^0(s)}}{\sqrt{m_s}}
  =O_P(\ell_nm_*^{-1/2})=o_P(1).
\]
\end{proposition}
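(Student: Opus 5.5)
Under \(H_0\) the statistic is location invariant, so we may take \(\bftheta=\mathbf 0\) and \(\bfX_i=\bfvarepsilon_i\). The plan is to write \(V^{\rm raw}_\rho(s)-\widetilde V^0_\rho(s)\) as a telescoping sum of three pieces and bound each one uniformly over \(s\in\calS\) and \(\rho\in[\rho_0,\rho_1]\).

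The decomposition is
\[
V^{\rm raw}_\rho(s)-\widetilde V^0_\rho(s)
=\underbrace{N_s\widehat\bfDelta_s^\top(\widehat\bfQ_{\rho,s}-\bfQ^0_{\rho,s})\widehat\bfDelta_s}_{\rm (I)}
+\underbrace{\bigl(\sqrt{N_s}\widehat\bfDelta_s-\bfB_s\bigr)^\top\bfQ^0_{\rho,s}\bigl(\sqrt{N_s}\widehat\bfDelta_s+\bfB_s\bigr)}_{\rm (II)} .
\]
The oracle pieces are easy to control. Because \(J(s)=\{1,\ldots,n\}\), the matrix \(\bfQ^0_{\rho,s}\) is common to all \(s\), and \(\opnorm{\bfQ^0_{\rho,s}}\le\rho_0^{-1}\) holds deterministically. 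The score \(\bfB_s\) is a CUSUM of the spatial signs \(\bfY_i\), which are i.i.d., centered and have \(\E\bfY_i\bfY_i^\top=\bfOmega_p\) up to \(1+O(p^{-1})\). A maximal inequality over the \(O(n^2)\) distinct interval endpoints (for the multiple-change scan, or \(O(n)\) for the single-change scan) should give \(\sup_s\|\bfB_s\|^2=O_P(p+\sqrt p\,\ell_n)=O_P(n)\). Here I use the standard Hanson--Wright-type concentration for the uniform direction, together with the bound \(e_{a,s}\ge c\) uniformly in \(s\), which follows from \(\E\xi_i\to\zeta_{-1}>0\) and the maximal bound in Assumption~\ref{ass:elliptical}.

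\emph{Step 1 (Bahadur representation, uniform in intervals).}
For each segment \(I_a(s)\), I will show
\[
\widehat\bftheta_{a,s}=\frac{1}{n_a(s)e_{a,s}}\sum_{i\in I_a(s)}p^{-1/2}\bfY_i+\bm r_{a,s},
\qquad
\sup_s\sqrt{N_s}\|\bm r_{a,s}\|=O_P(\ell_n n^{-1/2}).
\]
The approach is the high-dimensional spatial-median expansion of Zou--Peng--Feng type. Write the estimating equation \(\sum_{i}U(\bfvarepsilon_i-\widehat\bftheta)=0\) and expand \(U(\bfvarepsilon_i-\bftheta)\) to second order in \(\bftheta/\|\bfvarepsilon_i\|\); the Jacobian is \(\approx w_i p^{-1/2}(\bfI-\bfY_i\bfY_i^\top/p)\). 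A preliminary rate \(\|\widehat\bftheta_{a,s}\|=O_P(\ell_n^{1/2})\) uniformly in \(s\) comes from convexity plus the same maximal inequality. The remainder terms are then controlled by \(\xi\)-moments up to order \(4+\eta\), with \(\max_i\xi_i\le(\log n)^{c_0}\). This gives \(\sup_s\|\sqrt{N_s}\widehat\bfDelta_s-\bfB_s\|=O_P(\ell_n n^{-1/2})\), and Cauchy--Schwarz with \(\|\bfB_s\|=O_P(\sqrt n)\) yields \(\sup|{\rm (II)}|=O_P(\ell_n)\).

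The bound \(\|\cdot\|\cdot\|\cdot\|\) is enough here because the remainder is small in norm, not merely in a projected sense. If that turns out to be too crude, I would instead bound the projected term \(\bfB_s^\top\bfQ^0\bm r\) directly, exploiting the independence of the remainder's leading part from the signs.

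\emph{Step 2 (scatter estimation).}
The only difference between \(\widehat\bfR_s\) and \(\bfR^0_s\) is centering at the full-sample median \(\widehat\bftheta_0\), with \(\|\widehat\bftheta_0\|=O_P(n^{-1/2}\cdot\sqrt p/\sqrt n\cdots)\); concretely \(\|\widehat\bftheta_0\|=O_P(1)\) in Euclidean norm on the \(\sqrt p\) scale. Expanding \(U(\bfvarepsilon_i-\widehat\bftheta_0)\) gives \(\widehat\bfY_i=\bfY_i-w_i(\bfI-\bfY_i\bfY_i^\top/p)\widehat\bftheta_0+O(w_i^2\|\widehat\bftheta_0\|^2/\sqrt p)\). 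Hence \(\widehat\bfR_s-\bfR^0_s\) is a low-rank-plus-small perturbation whose leading part is \(-\bfa\widehat\bftheta_0^\top-\widehat\bftheta_0\bfa^\top\), where \(\bfa=n^{-1}\sum_i w_i\bfY_i\).

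The resolvent identity \(\widehat\bfQ-\bfQ^0=-\widehat\bfQ(\widehat\bfR-\bfR^0)\bfQ^0\) then gives
\[
|{\rm (I)}|\le\rho_0^{-2}\,\|\sqrt{N_s}\widehat\bfDelta_s\|\;\bigl\|(\widehat\bfR-\bfR^0)\sqrt{N_s}\widehat\bfDelta_s\bigr\|.
\]
\textbf{This is the main obstacle.} A naive bound \(\opnorm{\widehat\bfR-\bfR^0}\cdot n\) is only \(O_P(\sqrt n)\), which is too large. The fix is to use the rank-two structure of the leading perturbation. Inner products such as \(\bfB_s^\top\widehat\bftheta_0\) and \(\bfB_s^\top\bfQ^0\bfa\) are \(O_P(\ell_n)\) uniformly, because they are bilinear in weakly dependent centered signs. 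Using these, one should get \(\sup_s|{\rm (I)}|=O_P(\ell_n)\), with the \(\widehat\bfQ\) on the left handled by a Sherman--Morrison step. The quadratic remainder has operator norm \(O_P(\ell_n/n)\), and multiplied by \(\|\bfB_s\|^2=O_P(n)\) it is also \(O_P(\ell_n)\).

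\emph{Step 3 (conclusion).}
Uniformity in \(\rho\) follows because every bound above depends on \(\rho\) only through \(\rho_0^{-1}\). Combining Steps 1 and 2 gives the first display of the proposition. Dividing by \(\sqrt{m_s}=\sqrt n\) gives \(O_P(\ell_n n^{-1/2})=o_P(1)\), which is the second display.
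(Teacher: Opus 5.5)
Your overall route is the paper's: a uniform linearization of the two segment medians with an $O_P(\ell_n/n)$ remainder, operator-norm bounds for every term carrying that remainder, and a decomposition of the recentred SSCM as $\bfR^0_s$ plus a rank-two part plus a small residual. A two-sided Woodbury identity then reduces $\bfB_s^\top(\widehat\bfQ_{\rho,s}-\bfQ^0_{\rho,s})\bfB_s$ to the projections $\bfB_s^\top\bfQ^0_{\rho,s}\bfa$ and $\bfB_s^\top\bfQ^0_{\rho,s}\widehat\bftheta_0$. Your telescoping order differs only cosmetically.

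The genuine gap is in how you justify these projections. Put $\bfa=n^{-1}\sum_iw_i\bfY_i$, $e_0=n^{-1}\sum_iw_i$ and $\widehat\bftheta_0\approx(ne_0)^{-1}\sum_i\bfY_i$. The projections are then $\sum_{i,j}w_i\beta_j(s)A^0_{ij,\rho,s}$ and, up to negligible terms, $e_0^{-1}\sum_{i,j}\beta_j(s)A^0_{ij,\rho,s}$. Because $A^0_{ij,\rho,s}=\varsigma_i\varsigma_j\widetilde A_{ij,\rho,s}$, only the off-diagonal parts are centered sign chaoses. The diagonal parts $\sum_iw_i\beta_i(s)A^0_{ii,\rho,s}$ and $\sum_i\beta_i(s)A^0_{ii,\rho,s}$ carry no signs and are not centered given $\calF^0$. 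The direct bound ($n$ terms of size $n^{-1/2}$) is only $O_P(\sqrt n)$, which would leave a Woodbury term of order $n$.

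The missing idea is a leverage contrast. Write these sums as $\sqrt{N_s}\{L_2(s)/e_{2,s}-L_1(s)/e_{1,s}\}$, where $L_a(s)$ is the average of $v_iA^0_{ii,\rho,s}$, $v_i\in\{1,w_i\}$, over $I_a(s)$. You must then prove that these segment averages concentrate at a common deterministic center, uniformly over the scan and over $\rho$, at rate $n^{-1/2}$ up to polylogarithmic factors. The paper gets this from column exchangeability and McDiarmid's inequality (replacing one column perturbs the companion matrix by rank at most two), followed by a union bound and a ridge net.

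Three further points need repair.

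First, the low-rank part is $-\bfa\widehat\bftheta_0^\top-\widehat\bftheta_0\bfa^\top+\bar w_2\widehat\bftheta_0\widehat\bftheta_0^\top$ with $\bar w_2=n^{-1}\sum_iw_i^2$. Since $\norm{\bfa}$ and $\norm{\widehat\bftheta_0}$ are of order one, the last term has operator norm of order one and cannot be absorbed into a residual of size $O_P(\ell_n/n)$. For the same reason, $\opnorm{\widehat\bfR_s-\bfR^0_s}$ is of order one, not $n^{-1/2}$. The term is harmless once kept in the low-rank part, but this is why you need $\bfB_s^\top\bfQ^0_{\rho,s}\widehat\bftheta_0$, with the resolvent, rather than $\bfB_s^\top\widehat\bftheta_0$.

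Second, your displayed bound for (I) is one-sided and cannot deliver $O_P(\ell_n)$. Even with the rank-two structure, $\norm{(\widehat\bfR_s-\bfR^0_s)\bfQ^0_{\rho,s}\bfB_s}$ is polylogarithmic rather than small, because $\norm{\bfa}$ is of order one. Multiplying by $\norm{\sqrt{N_s}\widehat\bfDelta_s}\asymp\sqrt n$ then loses a factor $\sqrt n$. You need the two-sided identity $\bfQ_L-\bfQ^0=-\bfQ^0\bfU^{\rm lr}\bfF_\rho(\bfU^{\rm lr})^\top\bfQ^0$, with $\bfU^{\rm lr}=(\bfa,\widehat\bftheta_0)$ and $\opnorm{\bfF_\rho}$ polylogarithmic, so that both sides are projected onto $\bfB_s$. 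The residual is then handled by a separate one-sided resolvent step.

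Third, the $n^{-1}$ rates you assert do not follow from radial moment bounds. Termwise, the quadratic Taylor term in the median score is only $O_P(\ell_n^2p^{-1/2})$, which would cost a factor $\sqrt n$ in the cross term. You must exploit averaging over the conditionally independent signs instead. Likewise, uniformity in $\rho$ for the stochastic projections does not follow from the deterministic bound $\rho_0^{-1}$ alone; it needs a net argument in $\rho$. Finally, the leading term in Step 1 is $(n_a(s)e_{a,s})^{-1}\sum_{i\in I_a(s)}\bfY_i$, without the factor $p^{-1/2}$.
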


Proposition~\ref{prop:raw-score} shows that the raw local statistic is asymptotically equivalent to the oracle quadratic form \(\widetilde V_{\rho}^0(s)\), uniformly over the candidate scan intervals and over \(\rho\in[\rho_0,\rho_1]\). 
Building on this proposition, we derive the weak convergence of the studentized local statistic \(Z_\rho(s)\) in the following theorem.

\begin{theorem}[Pointwise null law at a fixed ridge-regularization parameter]
  \label{thm:pointwise}
Suppose \(H_0\) and Assumptions~\ref{ass:elliptical}--\ref{ass:grid} hold.  Let
\(s=s_n\) be any deterministic candidate sequence satisfying
\(I_1(s)\cup I_2(s)\subseteq J(s)\), \(m_s\asymp n\),
and, for some fixed \(\varepsilon_s>0\),
\(\min\{n_1(s),n_2(s)\}\ge \varepsilon_s m_s\).
Then, for every fixed \(\rho\in[\rho_0,\rho_1]\),
\begin{equation}
\label{eq:pointwise-null}
  Z_{\rho}(s)
  =\frac{V^{\rm raw}_{\rho}(s)-m_s\widehat\kappa_{\rho}(s)}
  {\{m_s\widehat\sigma_{\rho}^2(s)\}^{1/2}}
  \dto N(0,1).
\end{equation}
\end{theorem}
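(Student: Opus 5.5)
We reduce \(Z_\rho(s)\) to a studentized quadratic form in the oracle spatial signs and then apply a martingale central limit theorem conditionally on the radii. \emph{Step 1 (numerator).} By Proposition~\ref{prop:raw-score}, specialised to the single deterministic \(s\) (the same argument applies with \(m_s\asymp n\)), we have \(V^{\rm raw}_\rho(s)=\widetilde V^0_\rho(s)+O_P(\ell_n)\). We also show that \(m_s\widehat\kappa_\rho(s)-m_s\kappa^0_\rho(s)=O_P(\ell_n)\). This uses \(\widehat\bfY_{i,s}-\bfY_i=O_P(\norm{\widehat\bftheta_{0,s}-\bftheta}\,w_i)\) with \(\norm{\widehat\bftheta_{0,s}-\bftheta}=O_P(\sqrt{p/n})\), a resolvent perturbation bound for \(\widehat\bfQ_{\rho,s}-\bfQ^0_{\rho,s}\), and \(\widehat e_{a,s}-e_{a,s}=O_P(n^{-1/2})\). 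Both errors are \(o(\sqrt{m_s})\). It then suffices to prove
\[
\frac{\widetilde V^0_\rho(s)-m_s\kappa^0_\rho(s)}{\{m_s\sigma^{0,2}_\rho(s)\}^{1/2}}\dto N(0,1),\qquad \frac{\widehat\sigma^2_\rho(s)}{\sigma^{0,2}_\rho(s)}\pto1,
\]
together with the fact that \(\sigma^{0,2}_\rho(s)\) is bounded away from zero in probability.

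\emph{Step 2 (symmetry reduction).} Under \(H_0\), \(\bfvarepsilon_i\) is centrally symmetric, and \(\bfY_i\) and \(w_i\) are functions of \((R_i,\bfG_i)\). Replacing \(\bfG_i\) by \(\upsilon_i\bfG_i\) with independent Rademacher signs flips \(\bfY_i\) and leaves \(w_i\), \(\bfR^0_s\), \(\bfQ^0_{\rho,s}\) and the moduli \(|A^0_{ij}|\) unchanged. Hence \(\widetilde V^0_\rho(s)\) has the same law as the companion form \(m_s(\bfbeta_s\circ\bfupsilon)^\top\bfA^0_{\rho,s}(\bfbeta_s\circ\bfupsilon)\), where \(\bfupsilon\) is independent of \(\mathcal F=\sigma(\bfA^0_{\rho,s},\bfbeta_s)\). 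Conditionally on \(\mathcal F\), the centred statistic equals
\[
m_s\sum_{i\ne j}\beta_i\beta_jA^0_{ij}\upsilon_i\upsilon_j,
\]
a degenerate Rademacher chaos of order two with conditional variance exactly \(m_s\sigma^{0,2}_\rho(s)\). We apply the de Jong/martingale CLT for such chaoses. Writing \(\bfM=(\beta_i\beta_jA^0_{ij}\ind_{i\neq j})\), the conditional CLT holds provided
\[
\frac{\max_i\sum_j M_{ij}^2}{\norm{\bfM}_F^2}\to0
\quad\text{and}\quad
\frac{\opnorm{\bfM}}{\norm{\bfM}_F}\to0
\]
(equivalently \(\tr(\bfM^4)/\norm{\bfM}_F^4\to0\)) in probability. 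Dominated convergence of conditional characteristic functions then gives the unconditional limit.

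\emph{Step 3 (main obstacle: spectral control of \(\bfA^0\)).} The key facts are \(\opnorm{\bfA^0_{\rho,s}}\le\opnorm{\bfY_s^\top\bfY_s/m_s}/\rho_0\cdot(1+o(1))\), which is \(O_P(1)\) because \(\bfY_i\) are isotropic-after-\(\bfOmega_p\) sub-Gaussian directions with \(\opnorm{\bfOmega_p}\le\omega_+\), and \(|\beta_i|\asymp n^{-1/2}\) uniformly on \(I_1\cup I_2\) because \(e_{a,s}\to\zeta_{-1}\cdot c\) and \(\min n_a\ge\varepsilon_sm_s\). Together these give \(\opnorm{\bfM}=O_P(n^{-1})\) and \(\max_i\sum_jM^2_{ij}=O_P(n^{-2})\). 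The harder part is the lower bound \(\norm{\bfM}_F^2\gtrsim n^{-1}\), i.e. \(m_s\sigma^{0,2}_\rho\asymp n\). For this we use leave-one-out (Sherman--Morrison) identities, \(A^0_{ij}=m_s^{-1}\bfY_i^\top\bfQ_{(ij)}\bfY_j/\{(1+m_s^{-1}\bfY_i^\top\bfQ_{(i)}\bfY_i)(\cdots)\}\), and concentration of quadratic forms of spatial signs around \(p^{-1}\tr(\bfOmega_p\bfQ)\)-type quantities. These show that \(m_s\sum_{j\ne i}(A^0_{ij})^2\) concentrates at a deterministic limit determined via the Marchenko--Pastur equation with limiting spectral distribution \(H\) and \(\gamma\). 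That limit is strictly positive since \(\rho\le\rho_1\) and \(H\) is nondegenerate. Summing over \(i\in I_1\cup I_2\) with weights \(\beta_i^2\beta_j^2\) gives \(\sigma^{0,2}_\rho(s)\to\sigma^2_\rho>0\) in probability. Establishing this concentration uniformly in \(i\), with the elliptical (rather than independent-coordinate) structure, is where most of the work lies.

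\emph{Step 4 (variance ratio).} Finally, \(\widehat\sigma^2_\rho(s)-\sigma^{0,2}_\rho(s)=o_P(1)\) follows from the perturbation bounds of Step 1, using \(\max_{ij}|\widehat A_{ij}-A^0_{ij}|\) and Frobenius-norm control. Slutsky's lemma then yields \eqref{eq:pointwise-null}.
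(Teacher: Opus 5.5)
Your architecture matches the paper's: Proposition~\ref{prop:raw-score} for the numerator, replacement of $\widehat\kappa_\rho(s)$ and $\widehat\sigma_\rho^2(s)$ by their oracle versions, a conditional Rademacher chaos checked against de Jong's operator-norm and row-influence conditions, nondegeneracy of $\sigma^{0,2}_\rho(s)$, and Slutsky. However, the centering replacement in Step~1 does not follow from the ingredients you list, and this is where the real work lies.

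\emph{The weight rate is too weak.} Since $\kappa^0_\rho(s)=\sum_i\beta_i(s)^2A^0_{ii,\rho,s}\asymp1$, a relative error $r$ in $\widehat\beta_i(s)/\beta_i(s)$ shifts $m_s\widehat\kappa_\rho(s)$ by about $2r\,m_s\kappa^0_\rho(s)$. With your rate $\widehat e_{a,s}-e_{a,s}=O_P(n^{-1/2})$ this shift is $O_P(\sqrt n)$. That is the same order as the normaliser $\{m_s\sigma^{0,2}_\rho(s)\}^{1/2}$, not $O_P(\ell_n)$. You need $\widehat e_{a,s}-e_{a,s}=O_P(\ell_n/n)$, which requires averaging before bounding. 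With $\bfU_i=\bfY_i/\sqrt p$, the first-order Taylor term of $\widehat e_{a,s}-e_{a,s}$ is $p^{-1/2}(\widehat\bftheta_{a,s}-\bftheta)^\top n_a(s)^{-1}\sum_{i\in I_a(s)}w_i^2\bfU_i$. This average of sign-symmetric vectors has norm $O_P(n^{-1/2})$. The second-order term is $O_P(p^{-1}\norm{\widehat\bftheta_{a,s}-\bftheta}^2)$.

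\emph{A generic resolvent bound cannot control the companion entries.} Because $\norm{\widehat\bftheta_{0,s}-\bftheta}\asymp\sqrt{p/n}\asymp1$, recentering changes the SSCM by approximately $-\bar\bfY_w\bfdelta^\top-\bfdelta\bar\bfY_w^\top+\bar w_2\bfdelta\bfdelta^\top$. Here $\bfdelta=\widehat\bftheta_{0,s}-\bftheta$, $\bar\bfY_w=m_s^{-1}\sum_iw_i\bfY_i$ and $\bar w_2=m_s^{-1}\sum_iw_i^2$. This is of order one in operator norm: for constant $w_i$ it is just $-\bar\bfY\bar\bfY^\top$ with $\norm{\bar\bfY}^2\approx p/m_s$. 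Operator-norm bounds therefore give $O_P(1)$ errors in individual $\widehat A_{ii,\rho,s}-A^0_{ii,\rho,s}$. They also give only $\norm{\widehat\bfA_{\rho,s}-\bfA^0_{\rho,s}}_F=O_P(\sqrt{m_s})$, which is too weak for Step~1 and for Step~4.

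The missing idea is that this perturbation has rank two up to a small remainder. Apply $f_\rho(\bfK)-f_\rho(\bfL)=\rho(\bfK+\rho\bfI)^{-1}(\bfK-\bfL)(\bfL+\rho\bfI)^{-1}$, with $f_\rho(x)=x/(x+\rho)$, to the $m_s\times m_s$ Gram matrices. Then $\widehat\bfA_{\rho,s}-\bfA^0_{\rho,s}$ has nuclear norm $O_P(\ell_n)$. Consequently $\sum_i\beta_i(s)^2|\widehat A_{ii,\rho,s}-A^0_{ii,\rho,s}|\le\max_i\beta_i(s)^2\,\norm{\widehat\bfA_{\rho,s}-\bfA^0_{\rho,s}}_*=O_P(\ell_n/n)$ and $\norm{\widehat\bfA_{\rho,s}-\bfA^0_{\rho,s}}_F=O_P(\ell_n)$. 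These give both the centering and the variance replacement.

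In Step~3 the positivity claim rests on the wrong property. $H$ may be degenerate: $\bfOmega_p=\bfI_p$, the Identity design, gives $H=\delta_1$ and is allowed by Assumption~\ref{ass:spectrum}. Positivity comes from the Mar\v{c}enko--Pastur spread instead. The limit of $\sigma^{0,2}_\rho(s)$ is a positive multiple of $\Var f_\rho(X)$, where $X$ follows the companion Mar\v{c}enko--Pastur law at the (subsequential) limiting aspect ratio $y$ of $p/m_s$. Then $\Var X=y\int\lambda^2\,dH(\lambda)\ge y>0$ because $\int\lambda\,dH(\lambda)=1$. Also $f_\rho'(x)=\rho/(x+\rho)^2$ is bounded below on the compact support because $\rho\in[\rho_0,\rho_1]$.

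Two smaller points in Step~3. The row sums $\sum_{j\ne i}(A^0_{ij,\rho,s})^2$ are of order one, so the factor $m_s$ in your display is spurious. More importantly, concentration of these row sums alone does not control the $\beta_i(s)^2\beta_j(s)^2$-weighted double sum when $J(s)\supsetneq I_1(s)\cup I_2(s)$. The weights vanish off $I_1(s)\cup I_2(s)$, so you need the off-diagonal mass inside that block. This requires block-level exchangeability; the paper obtains it from Haar invariance of the companion eigenvectors after a Gaussian comparison. That yields the weighted sum as a deterministic companion constant times $(\sum_i\beta_i(s)^2)^2-\sum_i\beta_i(s)^4$. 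Your random-matrix bound on the operator norm of $\bfA^0_{\rho,s}$ is unnecessary: $0\preceq\bfA^0_{\rho,s}\preceq\bfI$ holds deterministically.
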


\subsection{Single-change inference}\label{sec:single-theory}

\subsubsection{Null limits and aggregation over regularization parameters}\label{sec:theory}

For \(s=(t_1,t_2,t_3)\), define the temporal contrast
\(\varphi_s(x)=(t_3 - t_2)^{-1}{\ind(x\in[t_2, t_3))} - (t_2 - t_1)^{-1}{\ind(x\in[t_1, t_2))}\).
For two triples \(s\) and \(r\), we put
\(\psi(s,r)=\int_0^1\varphi_s(x)\varphi_r(x)\,dx\),
\(\psi(s)=\psi(s,s)\), and 
\(K_0(s,r)={\psi(s,r)^2}/\{\psi(s)\psi(r)\}\).
The kernel \(K_0\) records the overlap between two temporal contrasts.  
The following theorem characterizes the weak convergence of the supremum of the studentized local statistics over the single-change scan set \(\calS_{\rm sc}(\varepsilon)\).

\begin{theorem}[Single-change scan limit at a fixed ridge-regularization parameter]\label{thm:sc-process}
Suppose \(H_0\), Assumptions~\ref{ass:elliptical}--\ref{ass:grid}, and Convention~\ref{conv:scan}(i) hold.  For every fixed \(\rho\in[\rho_0,\rho_1]\),
\begin{equation}
\label{eq:sc-process-limit}
  \{Z_\rho(s):s\in\calS_{\rm sc}(\varepsilon)\}
  \dto \{G_\rho^{\rm sc}(s):s\in\calS_{\rm sc}(\varepsilon)\}
  \quad\text{in }\ell^\infty\{\calS_{\rm sc}(\varepsilon)\},
\end{equation}
where \(G_\rho^{\rm sc}\) is a centered Gaussian process with almost surely continuous sample paths and covariance
\(\Cov\{G_\rho^{\rm sc}(s),G_\rho^{\rm sc}(r)\}=K_0(s,r)\).
Consequently,
\begin{equation}
\label{eq:sc-T-limit}
  T_\rho^{\rm sc}=\sup_{s\in\calS_{\rm sc}(\varepsilon)}Z_\rho(s)
  \dto \sup_{s\in\calS_{\rm sc}(\varepsilon)}G_\rho^{\rm sc}(s) := T_0^{\rm sc}.
\end{equation}
\end{theorem}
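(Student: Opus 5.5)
We prove the functional limit by the usual two-part argument: finite-dimensional convergence plus asymptotic tightness (asymptotic equicontinuity) in \(\ell^\infty\{\calS_{\rm sc}(\varepsilon)\}\). The continuous mapping theorem then gives \eqref{eq:sc-T-limit}, because \(\sup\) is continuous on \(\ell^\infty\). For \(s=(0,t,1)\) the score weights are, to first order, \(\beta_i(s)\approx \zeta_{-1}^{-1}n^{-1/2}\sqrt{t(1-t)}\{(1-t)^{-1}\ind(i>nt)-t^{-1}\ind(i\le nt)\}\). This means \(\beta_i(s)\approx n^{-1/2}\zeta_{-1}^{-1}\varphi_s(i/n)/\sqrt{\psi(s)}\), since \(\psi(s)=1/\{t(1-t)\}\). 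This scaling is what will produce the kernel \(K_0\).

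\textbf{Step 1 (reduction to an oracle degenerate U-statistic).}
By Proposition~\ref{prop:raw-score}, uniformly in \(s\),
\[
Z_\rho(s)=\frac{\widetilde V_\rho^0(s)-m_s\widehat\kappa_\rho(s)}{\{m_s\widehat\sigma^2_\rho(s)\}^{1/2}}+o_P(1),
\]
provided \(\inf_s \widehat\sigma^2_\rho(s)\) is bounded away from zero. To justify this, we show uniformly in \(s\) that
\[
\widehat\kappa_\rho(s)-\kappa^0_\rho(s)=o_P(n^{-1/2}),\qquad \widehat\sigma^2_\rho(s)/\sigma^{0,2}_\rho(s)\to1.
\]
We would use the same perturbation bounds that underlie Proposition~\ref{prop:raw-score}:
\begin{itemize}
\item the spatial median is \(O_P(n^{-1/2}\ell_n)\)-close to \(\bftheta\) uniformly over segments of length at least \(\varepsilon n\);
\item \(\widehat e_{a,s}\to\zeta_{-1}\) uniformly, using Assumption~\ref{ass:elliptical} and a maximal inequality over the one-parameter family of partial sums;
\item \(\|\widehat\bfQ_{\rho,s}-\bfQ^0_{\rho,s}\|_{\rm op}=O_P(n^{-1/2}\ell_n)\).
\end{itemize}
Under the common pool \(J(s)=\{1,\dots,n\}\), the matrix \(\bfA^0_{\rho}\) does not depend on \(s\). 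We then split
\[
\widetilde V^0_\rho(s)-m_s\kappa^0_\rho(s)=m_s\sum_{i\ne j}\beta_i\beta_jA^0_{ij,\rho}.
\]
Deterministic-equivalent (resolvent) arguments give \(A^0_{ii,\rho}\approx a_\rho\) and \(m_s\sum_{j\ne i}(A^0_{ij,\rho})^2\approx b_\rho\), where \(a_\rho\) and \(b_\rho\) are constants determined by \(\gamma\), \(H\) and \(\rho\); the leave-one-out and Sherman--Morrison identities handle the self-interaction in \(\bfQ^0\). Hence
\[
\sigma^{0,2}_\rho(s)\approx 2b_\rho\sum_i\beta_i^4\cdot(\text{scale})\to 2b_\rho\zeta_{-1}^{-4},
\]
which is a constant, positive and free of \(s\).

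\textbf{Step 2 (finite-dimensional limits).}
Fix \(s^{(1)},\dots,s^{(L)}\). We apply the Cramér--Wold device and a martingale CLT to
\[
\sum_{\ell}c_\ell\, m\sum_{i\ne j}\beta_i(s^{(\ell)})\beta_j(s^{(\ell)})A^0_{ij,\rho}.
\]
The filtration is the time order of the observations. We can condition on the signs: since the elliptical errors are symmetric, \(\bfY_i\) has the same law as \(\upsilon_i\bfY_i\), and \(\bfA^0\) is invariant under sign flips up to conjugation by \(\diag(\bfupsilon)\). Conditionally on \(\bfA^0\) the statistic is then a Rademacher chaos of order two. This is exactly the structure that motivated \eqref{eq:companion-quadratic-form}. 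For Rademacher chaos, a CLT holds once the maximal influence \(\max_i\sum_j\beta_i^2\beta_j^2A_{ij}^2\) is negligible relative to the variance and \(\tr\{(\bfB\bfA\bfB)^4\}=o\bigl((\tr\{(\bfB\bfA\bfB)^2\})^2\bigr)\). Both conditions follow from \(\|\bfA^0\|_{\rm op}=O_P(1)\) and \(\max_i|\beta_i|=O_P(n^{-1/2})\).

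The conditional covariance between \(s\) and \(r\) is
\[
2m\sum_{i\ne j}\beta_i(s)\beta_j(s)\beta_i(r)\beta_j(r)(A^0_{ij})^2\approx 2b_\rho\zeta_{-1}^{-4}\,n^{-2}\sum_{i\ne j}\frac{\varphi_s\varphi_r(i/n)\,\varphi_s\varphi_r(j/n)}{\psi(s)\psi(r)}\to 2b_\rho\zeta_{-1}^{-4}K_0(s,r).
\]
The replacement of \((A^0_{ij})^2\) by its average \(b_\rho/m\) is justified by concentration of the weighted sums \(\sum_{i\in I,j\in I'}(A^0_{ij})^2\) over blocks. After normalization by the variance, this yields the covariance \(K_0(s,r)\), and \(b_\rho\) cancels. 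This explains why the limit is free of \(\rho\).

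\textbf{Step 3 (tightness), the main obstacle.}
We need \(\sup_{|t-t'|\le\delta}|Z_\rho(0,t,1)-Z_\rho(0,t',1)|\) to be small uniformly. The plan is to write the numerator as a two-parameter partial-sum process of the chaos, namely
\[
S(u,v)=\sum_{i\le nu,\;j\le nv,\;i\ne j}A^0_{ij}\upsilon_i\upsilon_j,
\]
with weights depending on \(t\) only through piecewise-constant functions of \(i/n\). Increments in \(t\) then become increments of \(S\) over thin strips. Hypercontractivity of the Rademacher chaos bounds the fourth moment of a strip increment by \(C|t-t'|^2\) (conditionally on \(\bfA^0\), on an event of probability tending to one where the row sums of \(A^2\) are uniformly controlled). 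A Kolmogorov--Chentsov or chaining bound then gives equicontinuity. The plug-in errors from \(\widehat e_{a,s}\) are uniformly \(o_P(1)\) by Step 1. The delicate point is that the strip fourth-moment bound must hold uniformly on a single high-probability event for \(\bfA^0\). For this we would use the \(\ell_n\)-envelope maximal bounds \(\max_i m\sum_j (A^0_{ij})^2\le\ell_n\) together with the block-concentration estimates. Finally, continuity of \(K_0\) on the compact set \(\calS_{\rm sc}(\varepsilon)\) ensures that \(G^{\rm sc}_\rho\) has continuous paths.
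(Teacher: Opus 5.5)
Your architecture is the paper's. Conditional on the unsigned directions and the \(w_i\), the common-pool statistic is a Rademacher chaos whose coefficients come from one companion matrix; de Jong's conditions give the finite-dimensional limits; the covariance reduces to a constant times \(\{\sum_i\beta_i(s)\beta_i(r)\}^2\), which yields \(K_0(s,r)\); and hypercontractive increment bounds plus chaining give tightness.

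The gap is in Step~1, the uniform replacement of \(\widehat\kappa_\rho,\widehat\sigma^2_\rho\) by their oracle versions. The centering must be accurate to \(o_P(n^{-1/2})\), because it enters as \(m_s\widehat\kappa_\rho(s)/\sqrt{m_s}=\sqrt n\,\widehat\kappa_\rho(s)\). The tools you list are false at the stated rates. For \(p\asymp n\), \(\|\widehat\bftheta(I)-\bftheta\|\approx\sqrt{p/|I|}/\zeta_{-1}\) is of order one; only individual coordinates are \(O_P(n^{-1/2})\). Pooled recentering perturbs \(\bfR^0_s\) by a rank-two matrix of order-one operator norm; for constant \(w_i\) it is, to first order, \(-\bar\bfY\bar\bfY^\top\) with \(\|\bar\bfY\|^2\approx p/n\). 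So \(\|\widehat\bfQ_{\rho,s}-\bfQ^0_{\rho,s}\|_{\rm op}\) does not tend to zero.

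Even your claimed rate would not suffice. The bound \(m^{-1}|\bfY_i^\top(\widehat\bfQ_{\rho,s}-\bfQ^0_{\rho,s})\bfY_i|\le(p/m)\|\widehat\bfQ_{\rho,s}-\bfQ^0_{\rho,s}\|_{\rm op}\) only gives \(\sqrt n\sum_i\beta_i^2|\widehat A_{ii}-A^0_{ii}|=O_P(\ell_n)\).

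The missing idea is the low-rank cancellation of Lemma~\ref{lem:companion-cancellation}. Since \(\widehat\bfY_i=\bfY_i-w_i\widehat\bfdelta_0+\bfE_i^\delta\), the Gram matrix \(m^{-1}\widehat\bfY_s^\top\widehat\bfY_s\) differs from \(m^{-1}\bfY_s^\top\bfY_s\) by a rank-two matrix plus a remainder of nuclear norm \(O_P(\ell_n)\). The resolvent identity then gives \(\|\widehat\bfA_\rho-\bfA^0_\rho\|_*+\|\widehat\bfA_\rho-\bfA^0_\rho\|_F=O_P(\ell_n)\). Hence \(\sum_i\beta_i^2|\widehat A_{ii}-A^0_{ii}|\le\max_i\beta_i^2\,\|\widehat\bfA_\rho-\bfA^0_\rho\|_*=O_P(\ell_n/n)\), and Cauchy--Schwarz with the Frobenius bound gives \(\widehat\sigma^2_\rho-\sigma^{0,2}_\rho=O_P(\ell_nn^{-1/2})\).

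You also need the rate \(\sup_{s,a}|\widehat e_{a,s}-e_{a,s}|=o_P(n^{-1/2})\), not just uniform consistency of \(\widehat e_{a,s}\). The reason is that \(\widehat\kappa_\rho\) uses \(\widehat\beta_i\) whereas \(\widetilde V^0_\rho\) uses \(\beta_i\). The paper gets \(O_P(\ell_n/n)\) from the score expansion.

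Some smaller points follow.

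\begin{itemize}
\item Positivity of your \(b_\rho\) (the paper's \(\mathfrak c_{\rho,\rho}\)) is asserted, but it is needed for both de Jong's normalization and the studentization. The paper proves it from \(\mathfrak c_{\rho,\rho}=\Var\{f_\rho(X)\}\), with \(f_\rho(x)=x/(x+\rho)\), \(f_\rho'\) bounded below on the compact support, and \(\Var(X)\ge\gamma\) for the companion Mar\v{c}enko--Pastur law.
\item Several scalings are off by a factor \(m\). The row sum \(\sum_{j\ne i}(A^0_{ij})^2\) is itself of order one; it is \(\approx b_\rho\), consistent with your later \((A^0_{ij})^2\approx b_\rho/m\). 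So the envelope \(\max_i m\sum_j(A^0_{ij})^2\le\ell_n\) in Step~3 is false, and \(\sum_i\beta_i^4\) should be \((\sum_i\beta_i^2)^2\).
\item The envelope is unnecessary anyway. The contraction \(0\preceq\widetilde\bfA_\rho\preceq\bfI\) gives \(\sum_j(A^0_{ij})^2\le A^0_{ii}\le1\) surely. This is how the paper gets the conditional increment bound \(C(|t-u|+n^{-1})^2\) on a single event; only the partial sums of the \(w_i\) need a high-probability event.
\item Your exchangeability-plus-block-concentration route to the covariance is a workable alternative to the paper's Gaussian comparison with Haar-distributed eigenvectors of \(m^{-1}\bfG_m^\top\bfOmega_p\bfG_m\). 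Replacing one column changes \(\bfA^0_\rho\) by a rank-two matrix of bounded norm, so McDiarmid's inequality concentrates the block sums. Column exchangeability fixes their means.
\end{itemize}
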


Let \(F_{\rm sc}\) denote the distribution of \(T_0^{\rm sc}\).  For each regularization-parameter sequence \(\rho_n^{(k)}\), define the Gaussian-limit-calibrated p-value and its limiting counterpart by
\begin{align*}
  P_k^{\rm sc}
  =1-F_{\rm sc}(T^{\rm sc}_{\rho_n^{(k)}}),\qquad P_{\infty,k}^{\rm sc}
  =1-F_{\rm sc}\left\{
    \sup_{s\in\calS_{\rm sc}(\varepsilon)}
    G_{\rho^{(k)}}^{\rm sc}(s)\right\}.
\end{align*}
The following theorem shows the joint limit of the derived p-values \((P_1^{\rm sc},\ldots,P_K^{\rm sc})^\top\).

\begin{theorem}[Joint single-change scan limit over the finite parameter grid \(\calR_K\)]\label{thm:sc-joint}
Under the conditions of Theorem~\ref{thm:sc-process}, for the finite grid of ridge-regularization parameters \(\calR_K=\{\rho_n^{(1)},\ldots,\rho_n^{(K)}\}\), we have
\begin{equation}
\label{eq:sc-joint-process}
  \{Z_{\rho_n^{(k)}}(s):s\in\calS_{\rm sc}(\varepsilon),\ k=1,\ldots,K\}
  \dto
  \{G_{\rho^{(k)}}^{\rm sc}(s):s\in\calS_{\rm sc}(\varepsilon),\ k=1,\ldots,K\},
\end{equation}
where the right-hand side is a jointly centered Gaussian process with covariance
\begin{equation}
\label{eq:sc-joint-cov}
\Cov\{G_{\rho}^{\rm sc}(s),G_{\rho'}^{\rm sc}(r)\}
  =r_E(\rho, \rho')K_0(s,r).
\end{equation}
Here, \(r_E(\rho,\rho')\) is the full-pool correlation across regularization parameters, defined in Lemma~\ref{lem:cross-factorization} of the Supplementary Material.
Furthermore,
\begin{equation}
\label{eq:sc-pvector-limit}
  (P_1^{\rm sc},\ldots,P_K^{\rm sc})^\top
  \dto
  (P_{\infty,1}^{\rm sc},\ldots,P_{\infty,K}^{\rm sc})^\top.
\end{equation}
\end{theorem}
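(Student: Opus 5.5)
The plan is to deduce Theorem~\ref{thm:sc-joint} from the single-parameter scan limit (Theorem~\ref{thm:sc-process}) by working on the product index set $\calS_{\rm sc}(\varepsilon)\times\{1,\ldots,K\}$, in three steps.

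\textbf{Step 1: reduce to oracle statistics.} By Proposition~\ref{prop:raw-score}, uniformly in $s$ and in $\rho\in[\rho_0,\rho_1]$, we may replace $V^{\rm raw}_{\rho_n^{(k)}}(s)$ by the oracle form $\widetilde V^0_{\rho_n^{(k)}}(s)$ at cost $o_P(\sqrt n)$. We would also show that $\widehat\kappa$ and $\widehat\sigma^2$ can be replaced by $\kappa^0$ and $\sigma^{0,2}$ with the same uniformity. This is the ingredient already used for Theorem~\ref{thm:pointwise} and Theorem~\ref{thm:sc-process}; the only extra point is to note that it holds uniformly over the compact interval $[\rho_0,\rho_1]$, so the finitely many sequences $\rho_n^{(k)}$ are covered.

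Next we replace $\rho_n^{(k)}$ by its limit $\rho^{(k)}$. The resolvent identity
\[
  \bfQ^0_{\rho}-\bfQ^0_{\rho'}=(\rho'-\rho)\,\bfQ^0_{\rho}\bfQ^0_{\rho'}
\]
together with $\|\bfQ^0_\rho\|_{\rm op}\le\rho_0^{-1}$ gives
\[
  \bigl|\widetilde V^0_\rho(s)-\widetilde V^0_{\rho'}(s)\bigr|\le\rho_0^{-2}\,|\rho-\rho'|\,\|\bfB_s\|^2 .
\]
Centering $\|\bfB_s\|^2$, the centered difference is a quadratic form in the oracle signs with matrix $\bfQ^0_\rho\bfQ^0_{\rho'}$. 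Its fluctuation is therefore $O_P(|\rho-\rho'|\sqrt n)$ uniformly in $s$ by the same maximal inequality used for the scan. The normalizers are Lipschitz in $\rho$ as well. Hence
\[
  \sup_s\bigl|Z_{\rho_n^{(k)}}(s)-Z^0_{\rho^{(k)}}(s)\bigr|=o_P(1),
\]
where $Z^0_\rho$ denotes the oracle-studentized statistic.

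\textbf{Step 2: joint weak convergence of the stacked process.} We need finite-dimensional convergence plus asymptotic tightness. Tightness is immediate, because each of the $K$ marginal processes is tight by Theorem~\ref{thm:sc-process} and a finite union of tight families is tight.

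For the finite-dimensional distributions, fix points $(s_1,k_1),\ldots,(s_L,k_L)$ and apply Cram\'er--Wold. The relevant linear combination $\sum_\ell c_\ell Z^0_{\rho^{(k_\ell)}}(s_\ell)$ is, up to $o_P(1)$, a degenerate $U$-statistic
\[
  \sum_{i\ne j}\beta_i\beta_j\,\bfY_i^\top \bfM\,\bfY_j ,
\]
where $\bfM$ is a random combination of the resolvents $\bfQ^0_{\rho^{(k_\ell)}}$. To handle the randomness of $\bfM$, we would condition via leave-two-out resolvents, as done for Theorem~\ref{thm:pointwise}, and use a martingale CLT over $i$.

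The limiting covariance between $(s,\rho)$ and $(r,\rho')$ is the normalized limit of
\[
  2m\sum_{i\ne j}\beta_i(s)\beta_j(s)\beta_i(r)\beta_j(r)\,A^0_{ij,\rho}A^0_{ij,\rho'} .
\]
The key factorization (Lemma~\ref{lem:cross-factorization}) is that the off-diagonal sums $\sum_{i\ne j}A^0_{ij,\rho}A^0_{ij,\rho'}$ concentrate at a deterministic equivalent times $m^2$. This holds because the common pool $J(s)=\{1,\ldots,n\}$ makes the matrix $\bfA^0_\rho$ identical for all $s$. Meanwhile the temporal part $\sum_i\beta_i(s)\beta_i(r)$ converges to $\psi(s,r)/\sqrt{\psi(s)\psi(r)}$ after normalization, using $e_{a,s}\to\zeta_{-1}$, which follows from Assumption~\ref{ass:elliptical}. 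Dividing by $\sigma_\rho\sigma_{\rho'}$ yields
\[
  r_E(\rho,\rho')\,K_0(s,r).
\]

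\textbf{Step 3: p-values.} The map from the stacked process to $(\sup_s G_{\rho^{(k)}})_{k\le K}$ is continuous on $\ell^\infty$, so the continuous mapping theorem gives joint convergence of $(T^{\rm sc}_{\rho_n^{(k)}})_k$. The distribution function $F_{\rm sc}$ is continuous, since the supremum of a nondegenerate continuous Gaussian process has a continuous law (Tsirelson). A second application of continuous mapping then gives \eqref{eq:sc-pvector-limit}.

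\textbf{Main obstacle.} I expect the hardest step to be the cross-parameter covariance in Step 2. It requires showing that the normalized $\sum_{i\ne j}A^0_{ij,\rho}A^0_{ij,\rho'}$ has a deterministic limit despite spatial signs not being Gaussian. I would first try to express it via traces $p^{-1}\tr(\bfQ_\rho\bfR\bfQ_{\rho'}\bfR)$ and apply Marchenko--Pastur-type deterministic equivalents for sample spatial-sign covariances under Assumption~\ref{ass:spectrum}. The Lipschitz replacement $\rho_n^{(k)}\to\rho^{(k)}$ in Step 1 is secondary but also needs care.
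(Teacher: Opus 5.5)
Your outer architecture matches the paper's. You reduce to the oracle forms via Proposition~\ref{prop:raw-score} and Lemma~\ref{lem:feasible-companion}, obtain finite-dimensional limits by Cram\'er--Wold with the covariance read off from Lemma~\ref{lem:cross-factorization}, get joint tightness from the $K$ marginal moduli, and finish with the continuous mapping theorem; your appeal to Tsirelson for continuity of $F_{\rm sc}$ is fine.

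\textbf{The gap is in the finite-dimensional CLT, which is where the new cross-ridge content sits.} You propose leave-two-out resolvents and a martingale CLT over $i$ ``as done for Theorem~\ref{thm:pointwise}''. That is not how the pointwise law is obtained (see Lemma~\ref{lem:clt}), and as stated the argument does not go through. $\bfM$ is built from resolvents that depend on every column, so $\sum_{i\ne j}\beta_i\beta_j\bfY_i^\top\bfM\bfY_j$ is neither a degenerate $U$-statistic nor a martingale in $i$. For a Cram\'er--Wold combination over several $s$ the weights are not even a single product $\beta_i\beta_j$. A Bai--Silverstein-type $\E_k$-decomposition run jointly over several ridge values and scan points, with identification of the limiting variance, is a substantial argument you have not supplied.

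The missing idea is sign symmetry. $\bfR^0$ (hence $\bfQ^0_\rho$ for every $\rho$) and the $w_i$ (hence every $\beta_i(s)$) are unchanged by $\bfY_i\mapsto-\bfY_i$. So conditionally on $\calF^0$ the signs $\varsigma_i$ are i.i.d.\ Rademacher (Lemma~\ref{lem:signs}). Every centered oracle form $\widetilde V^0_\rho(s)-n\kappa^0_\rho(s)$, for all $s$ and all $\rho$ simultaneously, is then a zero-diagonal chaos $\bfvarsigma^\top\bfC^{(s,\rho)}\bfvarsigma$ in the same sign vector with $\calF^0$-measurable coefficients. Consequently:
\begin{itemize}
\item any Cram\'er--Wold combination is again such a chaos, with conditional variance exactly $2\tr(\bfC_n^2)$, the quantity Lemma~\ref{lem:cross-factorization} evaluates;
\item de Jong's conditions reduce to $\|\bfC_n\|_{\rm op}=O_P(n^{-1/2})$ and maximal row energy $O_P(n^{-1})$, both immediate from $0\preceq\widetilde\bfA_\rho\preceq\bfI$, $\max_i|\beta_i(s)|\le Cn^{-1/2}$ and the lower bound on $\sigma^{0,2}_\rho(s)$;
\item a degenerate limiting variance is handled by Chebyshev.
\end{itemize}
That is Lemma~\ref{lem:multi-clt}, and the paper's proof is essentially this lemma combined with Lemmas~\ref{lem:cross-factorization}, \ref{lem:feasible-companion}, \ref{lem:tightness} and Proposition~\ref{prop:raw-score}. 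The same device is what makes your Step~1 fluctuation bound rigorous. With the $\bfY_i$ as the random inputs, bounding the variance of the centered $\rho$-difference is as hard as the CLT itself. Conditionally on $\calF^0$ it is at most $Cn|\rho-\rho'|^2$, simply because $\|\widetilde\bfA_\rho-\widetilde\bfA_{\rho'}\|_{\rm op}\le\rho_0^{-1}|\rho-\rho'|$.

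\textbf{Two smaller points.} First, your suggested proof of the factorization via trace equivalents for $\tr(\bfQ_\rho\bfR\bfQ_{\rho'}\bfR)$ only controls the unweighted sum $\sum_{i\ne j}A_{ij,\rho}A_{ij,\rho'}$, which is of order $m$, not $m^2$. For $s=(0,t,1)$ and $r=(0,u,1)$ the covariance carries weights $c_i=\beta_i(s)\beta_i(r)$ that change sign across the three index blocks cut out by $t$ and $u$. What is needed is $m\sum_{i\ne j}c_ic_jA_{ij,\rho}A_{ij,\rho'}=\mathfrak c_{\rho,\rho',m}(\sum_ic_i)^2+o_P(1)$, i.e.\ that the off-diagonal mass is spread homogeneously across blocks. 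The paper gets this in Lemma~\ref{lem:weighted-offdiag} from two facts: the eigenvectors of the Gaussian companion $m^{-1}\bfG_m^\top\bfOmega_p\bfG_m$ are Haar-distributed, and there is a sign-conjugation-invariant operator-norm comparison between the spatial-sign and Gaussian companions. That comparison also disposes of the non-Gaussianity you flag.

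Second, replacing $\rho_n^{(k)}$ by $\rho^{(k)}$ is unnecessary. Lemma~\ref{lem:multi-clt} is stated for $\rho_n^{(k)}\to\rho^{(k)}$, and Lemma~\ref{lem:tightness} holds uniformly over $[\rho_0,\rho_1]$, so the paper works with $\rho_n^{(k)}$ throughout. If you keep the replacement, note two things. Your displayed uncentered bound is of order $|\rho-\rho'|\,n$ and therefore useless. Uniformity in $s$ of the centered bound must come from chaining rather than a union bound over the grid, since without a rate for $\rho_n^{(k)}-\rho^{(k)}$ a $\log n$ factor cannot be absorbed.
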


The joint convergence of \((P_1^{\rm sc},\ldots,P_K^{\rm sc})^\top\) yields the corresponding Cauchy-aggregated statistic \(T_{\rm CC}^{\rm sc}\) and analytic transformation \(P_{\rm CC}^{\rm sc}\) by substitution into \eqref{eq:CCT}. 
The next theorem establishes asymptotic size for calibration based on the exact joint-limit quantile and characterizes the rejection probability of the analytic Cauchy transformation.

\begin{theorem}[Single-change Cauchy aggregation over the finite parameter grid \(\calR_K\)]\label{thm:sc-cauchy}
Suppose that the conditions of Theorem~\ref{thm:sc-joint} hold.
Let 
\(T_{{\rm CC},\infty}^{\rm sc}=\sum_{k=1}^K\varpi_k
  \tan\big[\pi \{\frac12-P_{\infty,k}^{\rm sc}\}\big]\).  
If \(c_{\alpha,K}^{\rm sc}\) is a continuity point of the cumulative distribution function of \(T_{{\rm CC},\infty}^{\rm sc}\) satisfying \(\Pbb(T_{{\rm CC},\infty}^{\rm sc}>c_{\alpha,K}^{\rm sc})=\alpha\), then under \(H_0\), 
\begin{equation*}
  \Pbb(T_{\rm CC}^{\rm sc}>c_{\alpha,K}^{\rm sc})\to\alpha.
\end{equation*}
Moreover, for the analytic Cauchy transformation \(P_{\rm CC}^{\rm sc}\), at every \(\alpha\in(0,1)\) for which \(\cot(\pi\alpha)\) is a continuity point of the distribution of \(T_{{\rm CC},\infty}^{\rm sc}\), we have
\begin{equation*}
  \Pbb(P_{\rm CC}^{\rm sc}\le\alpha)
  \to
  \Pbb\{T_{{\rm CC},\infty}^{\rm sc}\ge\cot(\pi\alpha)\}.
\end{equation*}
\end{theorem}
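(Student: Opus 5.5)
The plan is to derive both claims from the joint $p$-value convergence \eqref{eq:sc-pvector-limit} of Theorem~\ref{thm:sc-joint} via the continuous mapping theorem, followed by a portmanteau/continuity-point argument. Put
\[
g(u_1,\ldots,u_K)=\sum_{k=1}^K\varpi_k\tan\left[\pi\left\{\tfrac12-u_k\right\}\right],
\]
so that $T_{\rm CC}^{\rm sc}=g(P_1^{\rm sc},\ldots,P_K^{\rm sc})$ and $T_{{\rm CC},\infty}^{\rm sc}=g(P_{\infty,1}^{\rm sc},\ldots,P_{\infty,K}^{\rm sc})$. The map $g$ is continuous on $(0,1)^K$ but diverges at the boundary, so we treat it as a map into the extended reals $[-\infty,\infty]$. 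It is then continuous at every point of $(0,1)^K$, and continuous mapping applies once we show that the limit vector lies in $(0,1)^K$ almost surely.

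\textbf{Step 1 (limit law of each $P_{\infty,k}$).} Under Assumption~\ref{ass:grid} and the covariance formula \eqref{eq:sc-joint-cov}, each marginal process $G_{\rho^{(k)}}^{\rm sc}$ has covariance $r_E(\rho^{(k)},\rho^{(k)})K_0=K_0$, since $r_E(\rho,\rho)=1$ by definition of a correlation. Hence $\sup_s G^{\rm sc}_{\rho^{(k)}}(s)$ has distribution $F_{\rm sc}$.

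\textbf{Step 2 (continuity of $F_{\rm sc}$).} We need $F_{\rm sc}$ to be continuous. The supremum of a continuous centered Gaussian process with nondegenerate unit variance, $K_0(s,s)=1$, has a continuous law except possibly at the left endpoint of its support (Tsirelson's theorem). That endpoint is $-\infty$ here, because the supremum dominates the $N(0,1)$ variable $G(s_0)$. Therefore $F_{\rm sc}$ is continuous, $P_{\infty,k}\sim\mathrm{Unif}(0,1)$, and $P_{\infty,k}\in(0,1)$ almost surely. We expect this to be the main technical point; if Tsirelson's theorem is considered too heavy, one could argue directly via absolute continuity of Gaussian suprema.

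\textbf{Step 3 (continuous mapping).} Steps 1--2 place the limit vector in the continuity set of $g$, so $T_{\rm CC}^{\rm sc}\dto T_{{\rm CC},\infty}^{\rm sc}$, possibly as extended-real variables. The limit is in fact finite almost surely. This also covers the event where some $P_k^{\rm sc}\in\{0,1\}$ at finite $n$: that event has probability tending to zero by the portmanteau theorem, because the limit puts no mass on the boundary.

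\textbf{Step 4 (conclusions).} Since $c^{\rm sc}_{\alpha,K}$ is a continuity point of the law of $T_{{\rm CC},\infty}^{\rm sc}$, weak convergence gives
\[
\Pbb(T_{\rm CC}^{\rm sc}>c^{\rm sc}_{\alpha,K})\to\Pbb(T_{{\rm CC},\infty}^{\rm sc}>c^{\rm sc}_{\alpha,K})=\alpha.
\]
For the analytic rule, $\arctan$ is strictly increasing, so $P_{\rm CC}^{\rm sc}\le\alpha$ is equivalent to $\arctan(T_{\rm CC}^{\rm sc})\ge\pi(\tfrac12-\alpha)$. This in turn is equivalent to $T_{\rm CC}^{\rm sc}\ge\tan\{\pi(\tfrac12-\alpha)\}=\cot(\pi\alpha)$, with the obvious convention at $\pm\infty$. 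Because $\cot(\pi\alpha)$ is assumed to be a continuity point, the same weak-convergence step yields
\[
\Pbb(P_{\rm CC}^{\rm sc}\le\alpha)\to\Pbb\{T_{{\rm CC},\infty}^{\rm sc}\ge\cot(\pi\alpha)\}.
\]
Note that no uniformity of $P_{\rm CC}$ is needed, since the $P_{\infty,k}$ are dependent through $r_E$.
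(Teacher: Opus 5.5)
Your route is the paper's. The paper also starts from the $p$-vector limit \eqref{eq:sc-pvector-limit}, uses the probability integral transform to place the limiting vector in $(0,1)^K$ almost surely, and applies continuous mapping to the Cauchy map. It then uses the portmanteau theorem at the continuity point and converts $P_{\rm CC}^{\rm sc}\le\alpha$ into $T_{\rm CC}^{\rm sc}\ge\cot(\pi\alpha)$ by monotonicity of $\arctan$. The paper simply asserts that $F_{\rm sc}$ is continuous.

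Your Step~2, which tries to justify that assertion, contains an invalid inference. The bound $\sup_sG(s)\ge G(s_0)\sim N(0,1)$ only gives $F_{\rm sc}(x)\le\Phi(x)$, which is an \emph{upper} bound on $F_{\rm sc}$. It says nothing about whether $F_{\rm sc}(x)>0$ for very negative $x$, i.e.\ nothing about the left endpoint. For example, $X(t)=\xi\cos t+\eta\sin t$ on $[0,2\pi]$ has unit variance, continuous paths and $\sup_tX(t)\ge X(0)\sim N(0,1)$. Yet $\sup_tX(t)=(\xi^2+\eta^2)^{1/2}\ge0$, so its left endpoint is $0$. As written, Tsirelson's theorem therefore still leaves room for an atom at a finite endpoint.

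The repair is easy, in either of two ways.
\begin{itemize}
\item Invoke Ylvisaker's theorem: for a Gaussian process with continuous paths on a compact index set and variance bounded away from zero, $\Pbb\{\sup_sG(s)=u\}=0$ for every $u$. This applies directly to $G_\rho^{\rm sc}$ on $\calS_{\rm sc}(\varepsilon)$, since $K_0(s,s)=1$, and makes the endpoint irrelevant.
\item Keep Tsirelson and show $F_{\rm sc}(x)>0$ for all $x$ via Cameron--Martin. The function $h=K_0(\cdot,s_u)$ lies in the reproducing kernel Hilbert space of $G_\rho^{\rm sc}$ and satisfies $h\ge\varepsilon^2$ on the trimmed path. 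The laws of $G_\rho^{\rm sc}-Mh$ and $G_\rho^{\rm sc}$ are equivalent. Since $\Pbb\{\sup_s(G_\rho^{\rm sc}-Mh)(s)\le x\}\ge F_{\rm sc}(x+M\varepsilon^2)>0$ for large $M$, equivalence gives $F_{\rm sc}(x)>0$.
\end{itemize}

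Once \eqref{eq:sc-pvector-limit} is granted, this theorem needs only $P_{\infty,k}^{\rm sc}\in(0,1)$ almost surely, not uniformity. Your domination bound $F_{\rm sc}\le\Phi<1$ gives exactly $P_{\infty,k}^{\rm sc}>0$. The other side, $P_{\infty,k}^{\rm sc}<1$, holds for any law, because $\Pbb\{F(Y)=0\}=0$ for every random variable $Y$ with distribution function $F$. Continuity of $F_{\rm sc}$ is really used upstream, when passing from the suprema to the $p$-values in Theorem~\ref{thm:sc-joint}.
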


\subsubsection{Local power and localization}\label{sec:theory:single:power}

To study power, we impose an additional assumption on the signal and consider separate local- and strong-signal regimes.

\begin{assumption}[Local and strong alternatives]\label{ass:alternatives}
(a) For local single-change power, the true change occurs at $\tau_\star\in[\varepsilon,1-\varepsilon]$ and \(\bfDelta_p=n^{-1/4}\bfd_p\) with \(\norm{\bfd_p}\le C\).
  The spectral signal measure
  \(G_p=\sum_{j=1}^p(\bfp_{j,p}^{\top}\bfd_p)^2\delta_{\lambda_{j,p}}\)
  converges weakly to a finite nonzero measure $G_\Delta$ on $[0,\omega_+]$.  
(b) For consistency under a strong single-change alternative, the true change occurs at \(\tau_\star\in[\varepsilon,1-\varepsilon]\), and the shift satisfies \(\ell_n\norm{\bfDelta_p}\to0\) and \(\sqrt n\norm{\bfDelta_p}^2\to\infty\).
\end{assumption}

These two regimes are used separately. Concretely, Assumption~\ref{ass:alternatives}(a) places the change at the \(n^{-1/4}\) local scale of the quadratic statistic and requires the distribution of the rescaled signal energy across the eigenspaces of \(\bfOmega_p\) to converge to a stable, nondegenerate spectral limit, while Assumption~\ref{ass:alternatives}(b) considers shrinking but detectable shifts: the condition \(\ell_n\norm{\bfDelta_p}\to0\) keeps the spatial-median and spatial-sign expansions within the null local-perturbation regime uniformly over the scan, whereas \(\sqrt n\norm{\bfDelta_p}^2\to\infty\) makes the standardized quadratic signal diverge.

Let $a_{\rho,n}$ be the positive solution of the negative-real-axis Mar\v{c}enko--Pastur equation \(a_{\rho,n}=1-\gamma_n+\gamma_n\rho m_{\rho,n}\), where \(m_{\rho,n}=\int\{a_{\rho,n}x+\rho\}^{-1}\,dH_p(x)\).
Define the deterministic equivalent of the ridge-regularized inverse
\begin{equation*}
  \bfD_{\rho,n}=(a_{\rho,n}\bfOmega_p+\rho\bfI_p)^{-1}.
\end{equation*}
Under \(G_p\Rightarrow G_\Delta\) in Assumption~\ref{ass:alternatives}(a),
Lemma~\ref{lem:ridge-stability} in the Supplementary Material gives
\begin{equation*}
  \mathfrak q_{\rho,n}(\bfd_p):=\bfd_p^{\top}\bfD_{\rho,n}\bfd_p
  \longrightarrow
  \mathfrak q_\rho(G_\Delta):=\int\frac{1}{a_\rho x+\rho}\,dG_\Delta(x),
\end{equation*}
where $a_\rho$ is the limit of $a_{\rho,n}$.
Let \(\sigma_\rho\) denote the positive square root of the full-pool variance limit, defined in Lemma~\ref{lem:oracle-variance-limit} of the Supplementary Material. 
For a single change and $s_t=(0,t,1)$, set
\begin{equation*}
  \vartheta(t;\tau_\star)=
  \begin{cases}
    (1-\tau_\star)/(1-t), & t<\tau_\star,\\[2mm]
    \tau_\star/t, & t\ge\tau_\star,
  \end{cases}
  \qquad
  \mathfrak h(t;\tau_\star)=t(1-t)\vartheta(t;\tau_\star)^2.
\end{equation*}
The following theorem gives the limiting power under the local alternative and consistency under the strong alternative.

\begin{theorem}[Single-change local power and consistency]\label{thm:power}
(a) Suppose Assumptions~\ref{ass:elliptical}--\ref{ass:grid} and~\ref{ass:alternatives}(a), together with Convention~\ref{conv:scan}(i), hold.  For a fixed \(\rho\in[\rho_0,\rho_1]\) and after piecewise-constant interpolation from the natural scan grid,
\begin{equation*}
  \{Z_\rho(s_t):t\in[\varepsilon,1-\varepsilon]\}
  \Rightarrow
  \{G_\rho^{\rm sc}(s_t)+\Lambda_\rho(G_\Delta)\mathfrak h(t;\tau_\star):
    t\in[\varepsilon,1-\varepsilon]\}
  \quad\text{in }\ell^\infty[\varepsilon,1-\varepsilon],
\end{equation*}
where \(\Lambda_\rho(G_\Delta)=\mathfrak q_\rho(G_\Delta)/\sigma_\rho\).
For the level-\(\alpha\) scan test at this regularization parameter, let \(c_{\alpha,{\rm sc}}\) be the \((1-\alpha)\)-quantile of \(\sup_{s\in\calS_{\rm sc}(\varepsilon)}G_\rho^{\rm sc}(s)\).
\begin{equation*}
  \lim_{n\to\infty}\Pbb_{H_1}(T_\rho^{\rm sc}>c_{\alpha,{\rm sc}})
  =\Pbb\left[\sup_{t\in[\varepsilon,1-\varepsilon]}\{G_\rho^{\rm sc}(s_t)+\Lambda_\rho(G_\Delta)\mathfrak h(t;\tau_\star)\}>c_{\alpha,{\rm sc}}\right].
\end{equation*}
(b) Suppose Assumptions~\ref{ass:elliptical}--\ref{ass:grid} and~\ref{ass:alternatives}(b), together with Convention~\ref{conv:scan}(i), hold.  For every fixed \(\rho\in[\rho_0,\rho_1]\),
\begin{equation*}
  T_\rho^{\rm sc}\pto\infty.
\end{equation*}
Consequently, for the finite grid in Assumption~\ref{ass:grid} with positive Cauchy weights,
\begin{equation*}
  P_{\rm CC}^{\rm sc}\pto0,
  \qquad
  \Pbb_{H_1}(P_{\rm CC}^{\rm sc}\le\alpha)\to1.
\end{equation*}
\end{theorem}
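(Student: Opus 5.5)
Under the local alternative the plan is to split the raw statistic into a null-type part and a deterministic drift, to show that the studentizers are unaffected by the shift, and then to invoke Theorem~\ref{thm:sc-process}. Write \(\bfX_i=\bftheta^{(1)}+\bfDelta_p\ind(i>\floor{n\tau_\star})+\bfvarepsilon_i\). By location equivariance of the spatial median and of spatial signs, we may take \(\bftheta^{(1)}=\mathbf0\). For each \(s_t\), the segments \(I_1,I_2\) then contain mixtures of the two location levels.

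\textbf{Step 1 (score expansion with drift).} Repeat the linearization behind Proposition~\ref{prop:raw-score}, now with the shift \(\bfDelta_p=n^{-1/4}\bfd_p\), which satisfies \(\ell_n\|\bfDelta_p\|\to0\). For a spatial sign we use \(\sqrt p\,U(\bfvarepsilon_i+\bfDelta)\approx\bfY_i+w_i(\bfI-\bfY_i\bfY_i^\top/p)\bfDelta\). Averaging the projection term gives \(\E\xi\,\bfDelta\) to leading order, and this factor cancels against the \(e_{a,s}\) Jacobian. The result should be
\[
\sqrt{N_s}\,\widehat\bfDelta_s=\bfB_s+\sqrt{N_s}\,\bar\mu_s\bfDelta_p+o_P(n^{-1/4})
\]
uniformly in \(t\). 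Here \(\bar\mu_s\) is the difference of the segment-averaged indicators \(\ind(i>\floor{n\tau_\star})\). A direct computation gives \(\bar\mu_{s_t}=\vartheta(t;\tau_\star)\), since for \(t<\tau_\star\) the first segment has no shift and the second has the fraction \((1-\tau_\star)/(1-t)\).

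The pooled centered sign matrix \(\widehat\bfR_s\) changes by a perturbation of order \(\|\bfDelta_p\|\ell_n\) in a rank-\(O(1)\)-plus-small sense. Its effect on \(\widehat\bfQ_{\rho,s}\), on \(\widehat\kappa_\rho\) and on \(\widehat\sigma_\rho\) is therefore \(o_P(\sqrt n)\) after multiplication by \(m_s\). I will handle this with the resolvent identity and the bounds \(\|\widehat\bfQ_{\rho,s}\|\le\rho_0^{-1}\). As a result the denominator still converges to its null limit \(\sqrt n\,\sigma_\rho\) (Lemma~\ref{lem:oracle-variance-limit}), and the centering is unchanged up to \(o_P(\sqrt n)\).

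\textbf{Step 2 (drift and cross term).} Expanding the quadratic form gives three pieces:
\[
V^{\rm raw}_\rho=\bfB_s^\top\bfQ^0_{\rho}\bfB_s+2\sqrt{N_s}\,\vartheta\,\bfDelta_p^\top\bfQ^0_\rho\bfB_s+N_s\vartheta^2\bfDelta_p^\top\bfQ^0_\rho\bfDelta_p.
\]
For the last term, use \(N_{s_t}=nt(1-t)\) and \(\bfDelta_p=n^{-1/4}\bfd_p\). Together with the isotropic deterministic equivalent \(\bfd_p^\top\bfQ^0_\rho\bfd_p-\bfd_p^\top\bfD_{\rho,n}\bfd_p\to0\) (Lemma~\ref{lem:ridge-stability}) and \(\mathfrak q_{\rho,n}\to\mathfrak q_\rho(G_\Delta)\), this gives \(\sqrt n\,t(1-t)\vartheta^2\mathfrak q_\rho(G_\Delta)\). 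Dividing by \(\sqrt n\sigma_\rho\) produces \(\Lambda_\rho\mathfrak h(t;\tau_\star)\).

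The cross term has conditional variance \(O(n\cdot n^{-1/2})\), because \(\bfB_s\) has covariance of order \(\bfOmega\)-bounded. The cross term is therefore \(O_P(n^{1/4})=o_P(\sqrt n)\). Uniformity in \(t\) on the natural grid follows from a chaining or maximal bound over its \(O(n)\) points with a \(\log n\) factor, using the \(4+\eta\) moments. The first piece is exactly the null object, so Theorem~\ref{thm:sc-process} gives process convergence to \(G^{\rm sc}_\rho\). Adding the deterministic uniform drift (Slutsky in \(\ell^\infty\)) yields the stated process limit.

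The power formula then follows from the continuous mapping theorem applied to the supremum. Continuity of the limiting law at \(c_{\alpha,{\rm sc}}\) holds because the supremum of a nondegenerate Gaussian process plus a continuous drift has an atomless distribution (Tsirelson).

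\textbf{Part (b).} Evaluate at \(t=\tau_\star\), where \(\vartheta=1\). The same decomposition holds because \(\ell_n\|\bfDelta_p\|\to0\) keeps the expansions valid. The drift is of order \(n\|\bfDelta_p\|^2\bfd^\top\bfD\bfd/\|\bfd\|^2\ge cn\|\bfDelta_p\|^2\), using \(\bfD_{\rho,n}\succeq(a\omega_++\rho_1)^{-1}\bfI\). The cross term is \(O_P(\sqrt n\|\bfDelta_p\|)\), which is negligible relative to the drift. The denominator is \(\asymp\sqrt n\) in probability. Hence \(Z_\rho(s_{\tau_\star})\ge c\sqrt n\|\bfDelta_p\|^2-O_P(1)\to\infty\), and therefore \(T^{\rm sc}_\rho\pto\infty\).

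For each \(k\) this gives \(P^{\rm sc}_k\pto0\), since \(F_{\rm sc}\) is a proper distribution function. Consequently \(\tan[\pi(1/2-P_k)]\to+\infty\). The remaining terms are each either going to \(+\infty\) or \(O_P(1)\) from below, because \(P_k\le1\) and \(\tan\) is bounded below away from \(P_k\approx1\). Since the \(\varpi_k\) are positive, \(T_{\rm CC}\pto\infty\), so \(P_{\rm CC}\pto0\) and the test rejects with probability tending to one.

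\textbf{Main obstacle.} I expect the hardest step to be Step 1: showing that the pooled recentering and the perturbed \(\widehat\bfR_s\) (computed on the full sample, which now contains two location levels) leave \(\widehat\kappa_\rho\) unchanged to \(o_P(1)\) after \(\sqrt n\) scaling. The spatial median of the pool sits between the two levels, so every \(\widehat\bfY_i\) carries an \(O(\|\bfDelta\|)\) bias. The centering quantity \(m_s\widehat\kappa_\rho\) sums \(n\) such diagonal terms, so first-order biases of size \(n\cdot n^{-1/4}\cdot n^{-1/2}\) must cancel. I would try to show they are mean zero by symmetry of \(\bfY_i\), which would leave only second-order terms of size \(n\|\bfDelta\|^2=\sqrt n\cdot O(1)\). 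That size is not automatically negligible, so it has to be checked that these terms cancel between \(V^{\rm raw}\) and \(m_s\widehat\kappa_\rho\).
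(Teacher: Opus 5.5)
\paragraph{Where the proposal breaks.}
Your route follows the paper's. You reduce $\widehat\bfDelta_s$ to the centered-error contrast plus $\vartheta_{s,n}(\tau_\star)\bfDelta_p$, check that the centering and studentizer barely move, split $V^{\rm raw}_\rho(s)$ into null, cross and drift pieces, and conclude with Theorem~\ref{thm:sc-process}, functional Slutsky and continuous mapping. Your Tsirelson remark on continuity at $c_{\alpha,{\rm sc}}$ supplies a point the paper leaves implicit.

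The gap is the step you treat as routine: replacing the shifted-data resolvent $\widehat\bfQ_{\rho,s}$ by the null resolvent inside $\bfB_s^\top\widehat\bfQ_{\rho,s}\bfB_s$ and $\sqrt{N_s}\,\bfDelta_p^\top\widehat\bfQ_{\rho,s}\bfB_s$. The resolvent identity with $\opnorm{\widehat\bfQ_{\rho,s}}\le\rho_0^{-1}$ only gives $\rho_0^{-2}\norm{\bfB_s}^2\opnorm{\widehat\bfR_s-\widehat\bfR_s^{(0)}}$, which is of order $n^{3/4}$. Here $\norm{\bfB_s}^2\asymp p$. The difference $\widehat\bfR_s-\widehat\bfR_s^{(0)}$ contains $\mathbf g_s\bfDelta_p^\top+\bfDelta_p\mathbf g_s^\top+\kappa_2\bfDelta_p\bfDelta_p^\top$, with $\mathbf g_s=m_s^{-1}\sum_i w_i\tilde\vartheta_i\bfY_i$ and $\norm{\mathbf g_s}\asymp1$. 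Here $\tilde\vartheta_i=\ind(i>\floor{n\tau_\star})-n^{-1}\sum_k\ind(k>\floor{n\tau_\star})$ and $\kappa_2=m_s^{-1}\sum_i w_i^2\tilde\vartheta_i^2$.

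So a Woodbury computation in the two directions $\mathbf g_s,\bfDelta_p$ is needed, and it does not give a negligible answer. Put $G_s:=\mathbf g_s^\top\bfQ^0_{\rho,s}\bfB_s=\sum_{i,j}w_i\tilde\vartheta_i\beta_j(s)A^0_{ij,\rho,s}$. Its sign-invariant diagonal part satisfies $\sum_i w_i\tilde\vartheta_i\beta_i(s)A^0_{ii,\rho,s}=(1-a_\rho)\sqrt{N_s}\,\vartheta_{s,n}(\tau_\star)+o_P(\sqrt n)$. The reason is that $\bfbeta_s$ is a contrast but $\tilde\vartheta$ is a step, and the leverages concentrate at $m_s^{-1}\tr\bfA^0_{\rho,s}\to1-a_\rho$.

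With $q:=\bfDelta_p^\top\bfQ^0_{\rho,s}\bfDelta_p\asymp n^{-1/2}$, Woodbury gives $\bfB_s^\top(\widehat\bfQ_{\rho,s}-\bfQ^0_{\rho,s})\bfB_s=qG_s^2+o_P(\sqrt n)$ and $\bfDelta_p^\top(\widehat\bfQ_{\rho,s}-\bfQ^0_{\rho,s})\bfB_s=-qG_s+o_P(1)$. Both contribute order-$\sqrt n$ terms to $V^{\rm raw}_\rho(s)$. Combined with your drift they give $q\{G_s-\sqrt{N_s}\vartheta_{s,n}(\tau_\star)\}^2=a_\rho^2N_s\vartheta_{s,n}(\tau_\star)^2q$, not $N_s\vartheta_{s,n}(\tau_\star)^2q$.

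\paragraph{A check, and the consequences.}
Take the linearized model, with signs replaced by $\bfz_i-\bar\bfz+\tilde\vartheta_i\bfDelta_p$ and $w_i\equiv1$, at $t=\tau_\star$.
\begin{itemize}
\item The ANOVA identity gives exactly $\widehat\bfR=\bfS_w+c_\star\bar\bfd\bar\bfd^\top$. Here $c_\star=\tau_\star(1-\tau_\star)$, $\bfS_w$ is the within-segment scatter (free of $\bfDelta_p$), and $\bar\bfd$ is the observed contrast.
\item Sherman--Morrison then gives $V^{\rm raw}=Nu/(1+c_\star u)$ with $u=\bar\bfd^\top(\bfS_w+\rho\bfI_p)^{-1}\bar\bfd$.
\item One has $1+c_\star u\pto a_\rho^{-1}$, and $\sqrt n\,\sigma_\rho$ is the null standard deviation of $Nu/(1+c_\star u)$ itself.
\item The shift raises $Nu$ by $N\bfDelta_p^\top\bfD_{\rho,n}\bfDelta_p\{1+o_P(1)\}+O_P(n^{1/4})$.
\end{itemize}
Hence the standardized drift is $a_\rho^2\Lambda_\rho(G_\Delta)\mathfrak h(\tau_\star;\tau_\star)$, so sharper bounds cannot rescue this step.

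The paper reaches the opposite conclusion in Lemma~\ref{lem:alternative-expansion}, eqs.~\eqref{eq:alt-null-quadratic} and~\eqref{eq:alt-mixed-resolvent-perturbation}. Its projected Walsh bounds treat the degree-zero coefficient, which contains $\sum_i\tilde\vartheta_i\beta_i(s)w_iA^0_{ii,\rho,s}\asymp\sqrt n$, as $O(r_{\Delta,n}^2)$. They also dismiss the second-order resolvent term that produces $qG_s^2$. Following the paper will therefore not close your Step~1. As far as I can see, part~(a) holds only with $\Lambda_\rho(G_\Delta)$ replaced by $a_\rho^2\Lambda_\rho(G_\Delta)$.

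Part~(b) and its Cauchy consequence survive, because $a_\rho\ge c_a>0$ by Lemma~\ref{lem:ridge-stability}. Your argument for (b) still needs this exact accounting rather than an $o_P(\sqrt n)$ claim.

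\paragraph{The obstacle you flagged is harmless.}
We have $\max_i\widehat\beta_i(s)^2\lesssim n^{-1}$, and the shift moves $\widehat\bfA_{\rho,s}$ by $O_P(\norm{\bfDelta_p})$ in nuclear norm. Hence $m_s\abs{\widehat\kappa_\rho(s)-\widehat\kappa^{(0)}_\rho(s)}=o_P(1)$, and the variance ratio tends to one. Your second-order count $n\norm{\bfDelta_p}^2$ drops the factor $m_s^{-1}$ in $\widehat A_{ii,\rho,s}$. Even your first-order count $n^{1/4}$ is already $o(\sqrt n)$, so no cancellation is needed there.
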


\begin{remark}[Comparison with covariance-based RHT]
Theorem~\ref{thm:power} characterizes the local power of ERHT, but it
does not by itself yield a high-dimensional asymptotic relative efficiency (ARE) comparison with a
covariance-based RHT\@. Under a nondegenerate radial distribution, the
sample covariance has an asymptotically separable form whose resolvent
depends on the full limiting law of \(R_i^2\), rather than only on
\(\E(R_i^2)\); see \citet{elkaroui2009}. Consequently, existing RHT
limits cannot be transferred to fixed-degree multivariate \(t\) or
nondegenerate scale-mixture observations by simply replacing
\(\bfOmega_p\) with \(\E(R_i^2)\bfOmega_p\). 
In Section~\ref{sec:simulation}, we compare these procedures numerically.
\end{remark}

For a single change, define the natural scan grid
\(\calG_n^{\rm sc}=\{k/n:\lceil n\varepsilon\rceil\le k\le\lfloor n(1-\varepsilon)\rfloor\}\),
and, for a fixed ridge-regularization parameter \(\rho\), define
\(\widehat\tau_\rho\in\argmax_{t\in\calG_n^{\rm sc}}Z_\rho(0,t,1)\), with ties broken deterministically.
We next establish its localization rate. 

\begin{theorem}[Single-change localization at a fixed ridge-regularization parameter]\label{thm:localization}
Under Assumptions~\ref{ass:elliptical}--\ref{ass:grid} and Convention~\ref{conv:scan}(i), suppose there is one change at \(\tau_\star\in[\varepsilon,1-\varepsilon]\) and the shift satisfies the uniform small-perturbation condition \(\ell_n\norm{\bfDelta_p}\to0\).  
Let \(\mathfrak s_{n,\rho}=(\sqrt n\,\bfDelta_p^{\top}\bfD_{\rho,n}\bfDelta_p)/\sigma_\rho\) and \(\mathfrak e_{\sigma,n}=\ell_n\norm{\bfDelta_p}+\norm{\bfDelta_p}^2+\ell_nn^{-1/2}\).
If $\mathfrak s_{n,\rho}\to\infty$, then
\begin{equation}
\label{eq:localization-consistency}
  |\widehat\tau_\rho-\tau_\star|
  =O_P\{\mathfrak s_{n,\rho}^{-1}+\mathfrak e_{\sigma,n}\}.
\end{equation}
Additionally, if $\mathfrak s_{n,\rho}\mathfrak e_{\sigma,n}=O(1)$, then \(|\widehat\tau_\rho-\tau_\star|=O_P(\mathfrak s_{n,\rho}^{-1})\).
\end{theorem}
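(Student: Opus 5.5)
Under the single-change alternative, I would prove the localization rate with the standard argmax argument: the process $t\mapsto Z_\rho(s_t)$ splits into a deterministic signal profile plus a uniformly controlled noise. The plan is to show the profile has a cusp-type maximum at $\tau_\star$ with linear decay, and that the noise fluctuations near $\tau_\star$ are too small to move the argmax by more than the stated rate.

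The first step is the decomposition. Write $\bfX_i=\bftheta_i+\bfvarepsilon_i$ and, for $s_t=(0,t,1)$, expand $\widehat\bfDelta_{s_t}$ around the population contrast of the segment locations. By the same linearization as in Proposition~\ref{prop:raw-score}, which the condition $\ell_n\norm{\bfDelta_p}\to0$ keeps in the local-perturbation regime uniformly in $t$, we get
\[
\sqrt{N_{s_t}}\,\widehat\bfDelta_{s_t}=\bfB_{s_t}+\sqrt{N_{s_t}}\,\vartheta(t;\tau_\star)\bfDelta_p+\bfr_t ,
\]
with a remainder $\bfr_t$ that is uniformly small. The factor $\vartheta$ arises because the segment containing the change mixes the two location levels; spatial medians contract such mixtures at first order proportionally to the segment fractions. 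Substituting into $V^{\rm raw}_\rho$ and replacing $\widehat\bfQ_{\rho,s}$ by $\bfD_{\rho,n}$ in the signal term (Lemma~\ref{lem:ridge-stability}, deterministic equivalents) gives
\[
Z_\rho(s_t)=Z^0_\rho(s_t)+\mathfrak s_{n,\rho}\,\mathfrak h(t;\tau_\star)\{1+O_P(\mathfrak e_{\sigma,n})\}+C_\rho(t).
\]
Here $Z^0_\rho$ is the null-type standardized process. $C_\rho(t)=2\sqrt{N_{s_t}}\vartheta\,\bfDelta_p^\top\widehat\bfQ\bfB_{s_t}/\sqrt{m\widehat\sigma^2}$ is the cross term. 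The factor $1+O_P(\mathfrak e_{\sigma,n})$ accounts for the perturbation of $\widehat\sigma_\rho$, $\widehat\kappa_\rho$, and $\widehat\bfR$ by the shift: the terms $\ell_n\norm{\bfDelta_p}$ and $\norm{\bfDelta_p}^2$ come from recentring the spatial signs at a pooled median under two levels, and the term $\ell_n n^{-1/2}$ comes from estimation error.

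The second step concerns the signal profile. A direct calculation shows that $\mathfrak h(t;\tau_\star)$ is maximized uniquely at $t=\tau_\star$, with $\mathfrak h(\tau_\star)-\mathfrak h(t)\ge c|t-\tau_\star|$ on $[\varepsilon,1-\varepsilon]$. For example, for $t\ge\tau_\star$, $\mathfrak h=\tau_\star^2(1-t)/t$, which is strictly decreasing with derivative bounded away from zero.

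The third step is the argmax comparison. On the event $|\widehat\tau_\rho-\tau_\star|>\delta$, we have
\[
\sup_{|t-\tau_\star|>\delta}\{Z(s_t)-Z(s_{\tau_\star})\}\ge0 .
\]
This forces
\[
c\,\mathfrak s_{n,\rho}\delta\le \sup_t|Z^0(s_t)-Z^0(s_{\tau_\star})|+\sup_t|C(t)-C(\tau_\star)|+O_P(\mathfrak s_{n,\rho}\mathfrak e_{\sigma,n}).
\]
By Theorem~\ref{thm:sc-process}, $Z^0$ is tight, so its increments are $O_P(1)$. The cross term is the main obstacle. It is a linear functional of the score $\bfB_{s_t}$ with variance of order
\[
\frac{n\,\bfDelta_p^\top\bfD\,\bfOmega_p\bfD\,\bfDelta_p}{n}\lesssim\frac{\mathfrak s_{n,\rho}}{\sqrt n}\cdot\sqrt n=O(\mathfrak s_{n,\rho}).
\]
This requires care because $\widehat\bfQ$ depends on all the data. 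I would first replace $\widehat\bfQ$ by the leave-out oracle resolvent $\bfQ^0$ and then by $\bfD_{\rho,n}$, using the resolvent bounds from the Supplementary Material. This reduces $C(t)$ to a CUSUM-type process of scalar projections $\bfDelta_p^\top\bfD\bfY_i$. A maximal inequality (Kolmogorov/Hájek–Rényi) then gives $\sup_t|C(t)|=O_P(\mathfrak s_{n,\rho}^{1/2})$, and more sharply increments of order $\mathfrak s_{n,\rho}^{1/2}|t-\tau_\star|^{1/2}$. Peeling over dyadic shells $|t-\tau_\star|\in(2^{j}\delta,2^{j+1}\delta]$ yields $\delta\asymp\mathfrak s_{n,\rho}^{-1}$ from the cross term, plus $O_P(\mathfrak s_{n,\rho}^{-1})$ from the $Z^0$ increments and $O_P(\mathfrak e_{\sigma,n})$ from the perturbation term. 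Together these give \eqref{eq:localization-consistency}, and the second claim follows immediately when $\mathfrak s_{n,\rho}\mathfrak e_{\sigma,n}=O(1)$.
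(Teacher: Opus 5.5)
Your skeleton matches the paper's proof. You split $Z_\rho(0,t,1)$ into the centered-error process, the profile $\mathfrak s_{n,\rho}\mathfrak h(t;\tau_\star)$ with a $t$-dependent relative studentization error $O_P(\mathfrak e_{\sigma,n})$, a cross term, and $o_P(1)$ remainders. You then use the linear separation of $\mathfrak h$ at $\tau_\star$ and close with an argmax contradiction.

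You diverge only on the cross term, and there your scaling is off. Since $N_{s_t}\asymp n$, after division by $\sqrt n\,\sigma_\rho$ the term $C_\rho(t)$ is of the order of $\bfDelta_p^\top\bfQ^0_{\rho,s_t}\bfB_{s_t}$. Conditionally on $\calF^0$, the resolvent is fixed and $\bfB_{s_t}=\sum_i\beta_i(s_t)\varsigma_i\widetilde\bfY_i$ is a linear Rademacher form. Its conditional variance is therefore at most $C\,\bfDelta_p^\top\bfQ^0\bfR^0\bfQ^0\bfDelta_p\le C\norm{\bfDelta_p}^2\asymp\mathfrak s_{n,\rho}/\sqrt n$. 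Your own expression $n\,\bfDelta_p^\top\bfD\bfOmega_p\bfD\bfDelta_p/n$ is already of this order; the extra factor $\sqrt n$ that turns it into $O(\mathfrak s_{n,\rho})$ has no source. Consequently $\sup_t|C_\rho(t)|=O_P(\ell_n\norm{\bfDelta_p})=o_P(1)$ by the standing assumption $\ell_n\norm{\bfDelta_p}\to0$.

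This is exactly how the paper disposes of the term. It obtains $\sup_t|Z_\rho(0,t,1)-D_{n,\rho}(t)|=O_P(1)$, and a single comparison with the grid point nearest $\tau_\star$ gives $\mathfrak s_{n,\rho}|\widehat\tau_\rho-\tau_\star|=O_P(1+\mathfrak s_{n,\rho}\mathfrak e_{\sigma,n})$, with no peeling.

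Your $O(\mathfrak s_{n,\rho})$ variance bound is still a valid, if very loose, upper bound. Dyadic peeling does convert increments of size $\mathfrak s_{n,\rho}^{1/2}|t-\tau_\star|^{1/2}$ into an $O_P(\mathfrak s_{n,\rho}^{-1})$ contribution, so your argument reaches the stated rate; it is just heavier than needed. The cross term is not the main obstacle here: in this studentized high-dimensional scan the rate is set by the $O_P(1)$ null quadratic fluctuation and the $\mathfrak e_{\sigma,n}$ studentization error.

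Note also that replacing $\bfQ^0$ by $\bfD_{\rho,n}$ inside $\bfDelta_p^\top\bfQ^0\bfB_{s_t}$ is not covered by the deterministic bilinear equivalent. That result applies to deterministic vectors, whereas $\bfB_{s_t}$ is random and built from the same columns as $\bfQ^0$. The sign-conditioning argument above makes that replacement unnecessary.
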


The appearance of the term \(\mathfrak e_{\sigma,n}\) in Theorem~\ref{thm:localization} is because the variance estimator in the studentized statistic differs from its centered-error counterpart by
\(O_P(\mathfrak e_{\sigma,n})\) under a small but non-local shift. 
This perturbation is negligible for consistency and for power, but it can affect the sharper localization rate unless \(\mathfrak s_{n,\rho}\mathfrak e_{\sigma,n}=O(1)\).  For example, if \(\norm{\bfDelta_p}=n^{-b}\), then the conditions \(\mathfrak s_{n,\rho}\to\infty\) and \(\mathfrak s_{n,\rho}\mathfrak e_{\sigma,n}=O(1)\) are simultaneously satisfied for exponents in the nonempty range \(1/6<b<1/4\), up to logarithmic factors.

\subsection{Multiple-change global testing}\label{sec:mc}

For multiple-change testing, recall that the statistic is \(T_\rho^{\rm mc}=\sup_{s\in\calS_{\rm mc}(\varepsilon)}Z_\rho(s)\).
The common full-sample scatter pool \eqref{eq:Js} is retained, so for each fixed regularization parameter the covariance kernel remains \(K_0\); the difference from the single-change problem lies in the scan domain and hence in the limiting supremum distribution. 
As in Section~\ref{sec:theory}, let \(F_{\rm mc}\) denote the distribution of the limit of \(T_\rho^{\rm mc}\). 
Define the corresponding Gaussian-limit-calibrated p-values and their limiting counterparts by
\(P_k^{\rm mc}=1-F_{\rm mc}(T_{\rho_n^{(k)}}^{\rm mc})\) and
\(P_{\infty,k}^{\rm mc}=1-F_{\rm mc}\{\sup_{s\in\calS_{\rm mc}(\varepsilon)}G_{\rho^{(k)}}^{\rm mc}(s)\}\), respectively. 
Accordingly, we define \(T_{\rm CC}^{\rm mc}\) and its analytic transformation \(P_{\rm CC}^{\rm mc}\) by \eqref{eq:CCT} with \((P_1^{\rm mc}, \ldots, P_K^{\rm mc})^\top\).  
The corresponding null limits are as follows.  

\begin{theorem}[Multiple-change scan limit at a fixed ridge-regularization parameter]\label{thm:mc-process}
Suppose \(H_0\), Assumptions~\ref{ass:elliptical}--\ref{ass:grid}, and Convention~\ref{conv:scan}(ii) hold.  For every fixed \(\rho\in[\rho_0,\rho_1]\),
\begin{equation}
\label{eq:mc-process-limit}
  \{Z_\rho(s):s\in\calS_{\rm mc}(\varepsilon)\}
  \dto \{G_\rho^{\rm mc}(s):s\in\calS_{\rm mc}(\varepsilon)\}
  \quad\text{in }\ell^\infty\{\calS_{\rm mc}(\varepsilon)\},
\end{equation}
where \(G_\rho^{\rm mc}\) is a centered Gaussian process with almost surely continuous sample paths and covariance \(\Cov\{G_\rho^{\rm mc}(s),G_\rho^{\rm mc}(r)\}=K_0(s,r)\) for \(s,r\in\calS_{\rm mc}(\varepsilon)\).
Consequently,
\begin{equation}
\label{eq:mc-T-limit}
  T_\rho^{\rm mc}=\sup_{s\in\calS_{\rm mc}(\varepsilon)}Z_\rho(s)
  \dto \sup_{s\in\calS_{\rm mc}(\varepsilon)}G_\rho^{\rm mc}(s) := T_0^{\rm mc}.
\end{equation}
\end{theorem}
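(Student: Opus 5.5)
We prove the multiple-change scan limit in the standard way for weak convergence in \(\ell^\infty\). First we establish convergence of the finite-dimensional distributions, then asymptotic tightness (asymptotic equicontinuity) of \(\{Z_\rho(s):s\in\calS_{\rm mc}(\varepsilon)\}\) with respect to a suitable semimetric. The supremum statement then follows from the continuous mapping theorem.

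\textbf{Reduction to an oracle statistic.} Under Convention~\ref{conv:scan}(ii), \(m_s=n\) for every \(s\), so Proposition~\ref{prop:raw-score} applies uniformly over \(\calS_{\rm mc}(\varepsilon)\). Hence \(\{V^{\rm raw}_\rho(s)-\widetilde V^0_\rho(s)\}/\sqrt n=o_P(1)\) uniformly in \(s\).

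Next we replace \(\widehat\kappa_\rho(s)\) and \(\widehat\sigma_\rho^2(s)\) by the oracle quantities \(\kappa_\rho^0(s)\) and \(\sigma_\rho^{0,2}(s)\), with errors that are \(o_P(n^{-1/2})\) and \(o_P(1)\) uniformly in \(s\). The same perturbation bounds used for Theorem~\ref{thm:pointwise} should give this, made uniform through the polylogarithmic envelope \(\ell_n\). We also need \(\sigma_\rho^{0,2}(s)\) to be bounded below uniformly, which holds because \(t_2-t_1,\,t_3-t_2\ge\varepsilon\).

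The weights \(e_{a,s}\) concentrate at \(\zeta_{-1}\) uniformly over the intervals of length at least \(\varepsilon n\), by Assumption~\ref{ass:elliptical}. This gives \(\beta_i(s)=\zeta_{-1}^{-1}n^{-1/2}\varphi_s(i/n)\{1+o_P(1)\}\) uniformly. The oracle centered statistic therefore becomes the off-diagonal quadratic form
\[
\bfY_s\text{-form}:\quad W_\rho(s)=n^{-1/2}\sum_{i\ne j}\beta_i(s)\beta_j(s)\,\bfY_i^\top\bfQ^0_\rho\bfY_j,
\]
normalized by its standard deviation. Since \(J(s)\) is the full pool, \(\bfQ^0_\rho\) does not depend on \(s\). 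This is the key structural simplification.

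\textbf{Finite-dimensional convergence.} Fix \(s^{(1)},\dots,s^{(L)}\) and coefficients \(c_1,\dots,c_L\). The linear combination \(\sum_l c_lW_\rho(s^{(l)})\) is again a single off-diagonal quadratic form in the signs \(\bfY_i\), with weight matrix \(\sum_l c_l\beta_i(s^{(l)})\beta_j(s^{(l)})\). We exploit the central symmetry: write \(\bfY_i=\upsilon_i\bfY_i\) in distribution, with independent Rademacher \(\upsilon_i\). Conditionally on \(\{\bfY_i\}\), we then have a Rademacher chaos of order two, and we apply a martingale CLT (or de Jong's theorem) for it.

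The conditional variance is \(2\sum_{i\ne j}b_{ij}^2(\bfY_i^\top\bfQ^0_\rho\bfY_j)^2/n\). Using the deterministic equivalents and the oracle variance limit (Lemma~\ref{lem:oracle-variance-limit}), \(n^{-1}(\bfY_i^\top\bfQ^0_\rho\bfY_j)^2\) averages to the constant \(\sigma_\rho^2\), with no dependence on \(i,j\). The conditional variance therefore converges to \(\sigma_\rho^2\sum_{l,l'}c_lc_{l'}\cdot 2\iint\varphi_{s^{(l)}}\varphi_{s^{(l')}}(x)\,\varphi_{s^{(l)}}\varphi_{s^{(l')}}(y)\,dx\,dy\). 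This equals \(\sigma_\rho^2\sum c_lc_{l'}\,2\psi(s^{(l)},s^{(l')})^2\). After normalization this yields the kernel \(K_0\). The Lyapunov/fourth-moment condition (maximal row influence \(\to0\)) follows from \(\max_i|\beta_i|=O(n^{-1/2})\) and spectral bounds on \(\bfA^0_\rho\). Cramér–Wold then completes this step.

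\textbf{Tightness, the main obstacle.} The difficulty is the three-dimensional, continuous index set together with the indicator-type contrasts. Because the scan depends only on \(\floor{nt_j}\), \(Z_\rho\) is piecewise constant on an \(O(n^3)\) grid. We decompose \(W_\rho(s)-W_\rho(r)\) through the changes in the endpoints. Each increment is an off-diagonal chaos whose conditional variance is bounded by \(C\,d(s,r)\), where \(d(s,r)=\sum_j|t_j-r_j|\) (up to a \(1/n\) term), using \(\psi\)-Lipschitz bounds valid on \(\calS_{\rm mc}(\varepsilon)\).

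We then use hypercontractivity of Rademacher chaos (conditional moments of order \(q\) bounded by \((q-1)^{q}\) times the \(q/2\) power of the variance). For chaining, we take \(q\) larger than the index dimension 3 so the moment bound is summable over dyadic grids; fourth or sixth moments should suffice. This gives a Kolmogorov–Chentsov bound \(\E|W(s)-W(r)|^q\lesssim d(s,r)^{q/2}+n^{-q/2}\), valid on the grid scale, and chaining over dyadic grids down to mesh \(1/n\) yields asymptotic equicontinuity.

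Continuity of the sample paths of \(G_\rho^{\rm mc}\) follows because \(K_0\) is Lipschitz on \(\calS_{\rm mc}(\varepsilon)\), by Dudley's entropy bound. Finally, \(\sup\) is continuous on \(\ell^\infty\), which gives \eqref{eq:mc-T-limit}.
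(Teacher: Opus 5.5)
Your architecture is the paper's. You reduce to the oracle form via Proposition~\ref{prop:raw-score} and the uniform feasible centering and variance rates. You get finite-dimensional limits from a conditional degree-two Rademacher chaos (de Jong's conditions plus Cram\'er--Wold), whose covariance factorizes into $K_0$ because the full-pool resolvent is common to all candidates. Tightness comes from hypercontractive increment bounds chained over the natural $n^{-1}$-grid, and the continuous mapping theorem handles the supremum.

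One step fails as you wrote it. Take increment bounds $\E_\varsigma|W(s)-W(r)|^q\lesssim\{d(s,r)+n^{-1}\}^{q/2}$ on a three-parameter index set. The level-$j$ union bound then has $2^{3j}$ edges, each contributing $2^{-qj/2}x_j^{-q}$, so summability needs $q/2>3$, i.e.\ $q>6$, not $q>3$. With $q=4$ the level sums grow like $2^{j}$. With $q=6$ every level contributes $x_j^{-6}$, which does not vanish, and the final union over the $O(n^3)$ grid points gives $n^3\cdot n^{-3}x_{J_n}^{-6}\not\to0$. So neither fourth nor sixth moments suffice. Take $q=8$, as the paper does in Lemma~\ref{lem:tightness-mc}, where $q_0/2-d_{\rm scan}=1$; your hypercontractivity bound supplies every order, so the repair costs nothing.

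Two other steps need more than you give.

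\emph{Increment variance bound.} The bound $C\{d(s,r)+n^{-1}\}$ for the increment chaos cannot be read off from $\beta_i(s)=\zeta_{-1}^{-1}n^{-1/2}\varphi_s(i/n)\{1+o_P(1)\}$. A uniform $o_P(1)$ relative error says nothing about increments at scale $n^{-1}$, and the factors $e_{a,s}$ are random. You can fix this in either of two ways:
\begin{itemize}
\item replace $\bfbeta_s$ by the deterministic $\mu_{w,p}^{-1}\bfb^{\rm cus}(s)$ using the rate $O_P(\ell_nn^{-1/2})$ from Lemma~\ref{lem:angular-w-xi}; the block chaoses multiplying this error are $O_P(\log n)$ uniformly over the grid, so the swap is uniformly negligible;
\item or bound $e_{a,s}-e_{a,r}$ through partial sums, as the paper does.
\end{itemize}
In either case, chain the process normalized by the common deterministic proxy $\sigma_{\rho,n}^\circ$, and only afterwards pass to the candidate-wise studentization.

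\emph{Covariance identification.} This does not follow from deterministic equivalents for $\bfd^\top\bfQ\bfe$. Nor does it follow from Lemma~\ref{lem:oracle-variance-limit}, which covers only the weights $\beta_i(s)^2\beta_j(s)^2$, i.e.\ the case $s=r$. For $s\ne r$ you need the weighted companion law $m\sum_{i\ne j}c_ic_j\widetilde A_{ij,\rho}^2=\mathfrak c_{\rho,\rho,m}(\sum_ic_i)^2+o_P(1)$ for the signed block weights $c_i=\beta_i(s)\beta_i(r)$. The paper obtains this from orthogonal invariance and Haar concentration of the Gaussian companion matrix, followed by a sign-conjugation transfer (Lemmas~\ref{lem:weighted-offdiag} and~\ref{lem:cross-factorization}); cite or reprove that. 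The per-entry constant is $\mathfrak c_{\rho,\rho}$, not $\sigma_\rho^2$, though this cancels after standardization.
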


\begin{theorem}[Joint multiple-change scan limit over the finite parameter grid \(\calR_K\)]\label{thm:mc-joint}
Under the conditions of Theorem~\ref{thm:mc-process}, for the finite grid of ridge-regularization parameters \(\calR_K\),
\begin{equation}
\label{eq:mc-joint-process}
  \{Z_{\rho_n^{(k)}}(s):s\in\calS_{\rm mc}(\varepsilon),\ k=1,\ldots,K\}
  \dto
  \{G_{\rho^{(k)}}^{\rm mc}(s):s\in\calS_{\rm mc}(\varepsilon),\ k=1,\ldots,K\},
\end{equation}
where the right-hand side is a jointly centered Gaussian process with covariance
\begin{equation}
\label{eq:mc-joint-cov}
  \Cov\{G_{\rho}^{\rm mc}(s),G_{\rho'}^{\rm mc}(r)\}
  =r_E(\rho,\rho')K_0(s,r).
\end{equation}
Furthermore,
\begin{equation}
\label{eq:mc-pvector-limit}
  (P_1^{\rm mc},\ldots,P_K^{\rm mc})^\top
  \dto
  (P_{\infty,1}^{\rm mc},\ldots,P_{\infty,K}^{\rm mc})^\top.
\end{equation}
\end{theorem}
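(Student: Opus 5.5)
The proof follows the same route as the joint single-change limit in Theorem~\ref{thm:sc-joint}; only the index set is larger. By Proposition~\ref{prop:raw-score}, which holds uniformly over \(s\in\calS_{\rm mc}(\varepsilon)\) and \(\rho\in[\rho_0,\rho_1]\), each \(Z_{\rho_n^{(k)}}(s)\) can be replaced by its oracle version
\[
Z^0_{\rho}(s)=\frac{\widetilde V^0_\rho(s)-m_s\kappa^0_\rho(s)}{\{m_s\sigma^{0,2}_\rho(s)\}^{1/2}} .
\]
The replacement is justified by two uniform facts. First, \(\widehat\kappa_\rho(s)-\kappa^0_\rho(s)=o_P(m_s^{-1/2})\). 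Second, \(\widehat\sigma^2_\rho(s)/\sigma^{0,2}_\rho(s)\to1\). Both follow from the same auxiliary expansions (with envelope \(\ell_n\)) used for Theorem~\ref{thm:mc-process}. We then replace \(\rho_n^{(k)}\) by \(\rho^{(k)}\). This costs only \(o_P(1)\): the resolvent \(\bfQ^0_{\rho,s}\) is Lipschitz in \(\rho\) on \([\rho_0,\rho_1]\) with operator-norm constant \(\rho_0^{-2}\), so the centered quadratic form and \(\sigma^0_\rho\) move by \(O_P(|\rho_n^{(k)}-\rho^{(k)}|)\) after standardization, uniformly in \(s\). Because \(J(s)=\{1,\dots,n\}\), the matrices \(\bfA^0_{\rho,s}\) do not depend on \(s\). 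Only \(\bfbeta_s\) depends on \(s\), and \(\beta_i(s)\) is, up to the uniformly consistent factors \(e_{a,s}^{-1}\to\zeta_{-1}^{-1}\), a step function in \(i/n\) determined by \(\varphi_s\).

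The joint finite-dimensional distributions come from a martingale CLT for the off-diagonal quadratic form. For fixed \(s_1,\dots,s_L\in\calS_{\rm mc}(\varepsilon)\) and \(k_1,\dots,k_L\), consider any linear combination \(\sum_l c_l Z^0_{\rho^{(k_l)}}(s_l)\). Its leading part is
\[
\sum_{i\ne j}\Big\{\sum_l c_l\,\beta_i(s_l)\beta_j(s_l)\,A^0_{ij,\rho^{(k_l)}}/\sigma_{\rho^{(k_l)}}\Big\},
\]
where the diagonal has been removed by the centering. We condition on the radial variables, use the symmetry \(\bfY_i\overset{d}{=}-\bfY_i\), which is exactly the Rademacher structure behind \(\widehat\kappa\) and \(\widehat\sigma\), and apply a martingale CLT along \(i=1,\dots,n\). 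The conditional variance is the limit of
\[
2\sum_{i\ne j}\beta_i(s)^2\beta_j(r)^2\,A^0_{ij,\rho}A^0_{ij,\rho'},
\]
suitably scaled. By the deterministic-equivalent lemmas (Lemma~\ref{lem:cross-factorization}), this factorizes as \(\{\text{temporal part}\}\times\{\text{spectral part}\}\). The temporal part is \(\psi(s,r)^2\) up to normalization. The spectral part is \(\lim p^{-1}\tr(\bfOmega\bfD_{\rho}\bfOmega\bfD_{\rho'})\)-type cross traces. After dividing by \(\sigma_\rho\sigma_{\rho'}\psi(s)\psi(r)\), the covariance is \(r_E(\rho,\rho')K_0(s,r)\). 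Mixed pairs with \(s=r\) and \(\rho\ne\rho'\) are handled by the same computation. The Lindeberg condition uses the \(4+\eta\) moments of \(\xi_i\) and \(\operatorname{tr}(\bfA^4)=o(\operatorname{tr}(\bfA^2)^2)\).

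Tightness over the product index set \(\calS_{\rm mc}(\varepsilon)\times\{1,\dots,K\}\) is immediate from the marginal tightness in Theorem~\ref{thm:mc-process}, because \(K\) is finite. Together with the finite-dimensional convergence this gives \eqref{eq:mc-joint-process}. The map from \((\ell^\infty)^K\) to \(\R^K\) sending the process to its vector of suprema is continuous. Hence the vector \((T^{\rm mc}_{\rho_n^{(k)}})_k\) converges jointly to \((\sup_s G^{\rm mc}_{\rho^{(k)}}(s))_k\). \(F_{\rm mc}\) is the law of the supremum of a continuous centered Gaussian process with nondegenerate variance. Such a law is continuous by Tsirelson's theorem, so \(x\mapsto 1-F_{\rm mc}(x)\) is continuous, and the continuous mapping theorem yields \eqref{eq:mc-pvector-limit}.

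The main obstacle is the cross-parameter covariance factorization. It must be shown that \(n^{-1}\sum_{i\ne j}\) of products of entries of \(\bfA^0_{\rho}\) and \(\bfA^0_{\rho'}\) concentrates, uniformly in the step weights, around a product of a temporal integral and \(r_E(\rho,\rho')\). This requires leave-one/two-out resolvent identities for the spatial-sign sample matrix with two different ridge parameters. I would invoke Lemma~\ref{lem:cross-factorization} directly rather than reprove it.
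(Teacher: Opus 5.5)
Your skeleton is the paper's: feasible-to-oracle replacement, finite-dimensional limits from the multivariate Rademacher quadratic-form CLT (Lemma~\ref{lem:multi-clt}) combined with Lemma~\ref{lem:cross-factorization}, joint tightness from the $K$ marginal moduli, the Lipschitz supremum map, and continuity of $F_{\rm mc}$. However, the step you add, replacing $\rho_n^{(k)}$ by $\rho^{(k)}$, is not justified by the argument you give, and your tightness step depends on it. The operator-norm bound $\opnorm{\bfQ^0_{\rho}-\bfQ^0_{\rho'}}\le\rho_0^{-2}|\rho-\rho'|$ only gives $|\bfB_s^\top(\bfQ^0_{\rho_n}-\bfQ^0_{\rho})\bfB_s|\le\rho_0^{-2}|\rho_n-\rho|\,\|\bfB_s\|^2$ with $\|\bfB_s\|^2\asymp n$. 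The centering $n\kappa^0_\rho(s)$ moves by the same order $n|\rho_n-\rho|$. After division by $\sqrt n$ you therefore only get $O_P(\ell_n^2\sqrt n\,|\rho_n-\rho|)$ uniformly in $s$, and Assumption~\ref{ass:grid} gives no rate for $\rho_n^{(k)}\to\rho^{(k)}$. Your claimed $O_P(|\rho_n-\rho|)$ needs two further ingredients: the cancellation inside the centered chaos $\bfvarsigma^\top(\bfC_{\rho_n,s}-\bfC_{\rho,s})\bfvarsigma$, whose coefficient matrix has Frobenius norm $O(\sqrt n\,|\rho_n-\rho|)$, and a chaining argument for uniformity in $s$. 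Theorem~\ref{thm:mc-process} is a fixed-$\rho$ statement, so it cannot supply tightness at $\rho_n^{(k)}$ without this. The paper never makes the replacement. Lemma~\ref{lem:multi-clt} is already stated for deterministic $\rho_n^{(k)}\to\rho^{(k)}$. The increment bounds of Lemma~\ref{lem:tightness-mc} use only $0\preceq\widetilde\bfA_\rho\preceq\bfI$ and $\inf_\rho\sigma^\circ_{\rho,n}>0$, so they hold uniformly over $\rho\in[\rho_0,\rho_1]$, and a union bound over $k$ gives joint tightness directly.

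Two further problems are in the CLT step. First, conditioning on the radial variables is not enough; you must condition on $\calF^0$, the unsigned directions and inverse distances. The mechanism is that $\bfR^0_s=n^{-1}\widetilde\bfY_s\widetilde\bfY_s^{\top}$ is invariant under sign flips. Hence $\widetilde\bfA_\rho$ is $\calF^0$-measurable, and the centered statistic is a Rademacher chaos in $\bfvarsigma$ with fixed coefficients (Lemmas~\ref{lem:signs} and \ref{lem:companion-contraction}). Given only the radii, $A^0_{ij,\rho}$ depends on every $\bfY_l$ through the resolvent, so there is no martingale structure along $i$. The de Jong (or martingale) conditions then follow from $0\preceq\widetilde\bfA_\rho\preceq\bfI$ and $\max_i|\beta_i(s)|\le Cn^{-1/2}$, not from the radial $4+\eta$ moments. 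Second, the conditional cross-covariance is $2n\sum_{i\ne j}\beta_i(s)\beta_i(r)\beta_j(s)\beta_j(r)\widetilde A_{ij,\rho}\widetilde A_{ij,\rho'}$, not the expression with weights $\beta_i(s)^2\beta_j(r)^2$. With your weights the Haar average gives $\mathfrak c_{\rho,\rho',n}\|\bfbeta_s\|^2\|\bfbeta_r\|^2$, and $K_0$ disappears after standardization. With the correct weights one gets $\mathfrak c_{\rho,\rho',n}\{\sum_i\beta_i(s)\beta_i(r)\}^2$. Since $\sum_i b_i^{\rm cus}(s)b_i^{\rm cus}(r)=\psi(s,r)/\{\psi(s)\psi(r)\}^{1/2}$, this yields $r_E(\rho,\rho')K_0(s,r)$ after standardization.
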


\begin{theorem}[Multiple-change Cauchy aggregation over the finite parameter grid \(\calR_K\)]\label{thm:mc-cauchy}
Suppose that the conditions of Theorem~\ref{thm:mc-joint} hold. Define
\(T_{{\rm CC},\infty}^{\rm mc}=\sum_{k=1}^K\varpi_k
  \tan[\pi\{\frac12-P_{\infty,k}^{\rm mc}\}]\).
If \(c_{\alpha,K}^{\rm mc}\) is a continuity point of the distribution of \(T_{{\rm CC},\infty}^{\rm mc}\) satisfying \(\Pbb(T_{{\rm CC},\infty}^{\rm mc}>c_{\alpha,K}^{\rm mc})=\alpha\), then under \(H_0\),
\begin{equation*}
  \Pbb(T_{\rm CC}^{\rm mc}>c_{\alpha,K}^{\rm mc})\to\alpha.
\end{equation*}
Moreover, for the analytic Cauchy transformation \(P_{\rm CC}^{\rm mc}\), at every \(\alpha\in(0,1)\) for which \(\cot(\pi\alpha)\) is a continuity point of the distribution of \(T_{{\rm CC},\infty}^{\rm mc}\), we have
\begin{equation*}
  \Pbb(P_{\rm CC}^{\rm mc}\le\alpha)
  \to
  \Pbb\{T_{{\rm CC},\infty}^{\rm mc}\ge\cot(\pi\alpha)\}.
\end{equation*}
\end{theorem}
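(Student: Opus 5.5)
The argument is a continuous-mapping reduction from Theorem~\ref{thm:mc-joint}, in parallel with Theorem~\ref{thm:sc-cauchy}. Define $g:(0,1)^K\to\R$ by $g(u_1,\ldots,u_K)=\sum_{k=1}^K\varpi_k\tan[\pi\{\frac12-u_k\}]$. Then $T_{\rm CC}^{\rm mc}=g(P_1^{\rm mc},\ldots,P_K^{\rm mc})$ and $T_{{\rm CC},\infty}^{\rm mc}=g(P_{\infty,1}^{\rm mc},\ldots,P_{\infty,K}^{\rm mc})$. The map $g$ is continuous on the open cube $(0,1)^K$ but blows up at its boundary.

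\textbf{Step 1: the limiting p-values lie in the open interval.} I need $P_{\infty,k}^{\rm mc}\in(0,1)$ almost surely for each $k$. By \eqref{eq:mc-joint-cov} with $\rho=\rho'$, each marginal process $G_{\rho^{(k)}}^{\rm mc}$ has covariance kernel $K_0$, so its supremum has law $F_{\rm mc}$. Assume $F_{\rm mc}$ is continuous; this holds by Tsirelson-type results on suprema of continuous Gaussian processes with positive variance (here $K_0(s,s)=1$), since the density of the supremum exists except possibly at the left endpoint of its support, where no atom occurs for a unit-variance process. Then $P_{\infty,k}^{\rm mc}=1-F_{\rm mc}(\sup G)$ is Uniform$(0,1)$ and lies in $(0,1)$ almost surely. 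Consequently the limit vector lies in $(0,1)^K$ with probability one. The set where $g$ is discontinuous, namely the boundary of the closed cube, has limit probability zero.

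\textbf{Step 2: convergence of the aggregate.} Extend $g$ arbitrarily to $[0,1]^K$. The extended map is continuous at every point of $(0,1)^K$, and by Step 1 the limit law charges only such points. The extended continuous mapping theorem applied to \eqref{eq:mc-pvector-limit} then gives $T_{\rm CC}^{\rm mc}\dto T_{{\rm CC},\infty}^{\rm mc}$. The finite-sample p-values may equal $0$ or $1$ with small probability; this is harmless because weak convergence only uses continuity at the support points of the limit.

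\textbf{Step 3: the two rejection probabilities.} For the first claim, $c_{\alpha,K}^{\rm mc}$ is a continuity point of the limit distribution function, so the Portmanteau theorem gives $\Pbb(T_{\rm CC}^{\rm mc}>c_{\alpha,K}^{\rm mc})\to\Pbb(T_{{\rm CC},\infty}^{\rm mc}>c_{\alpha,K}^{\rm mc})=\alpha$. For the second claim, the map $x\mapsto\frac12-\pi^{-1}\arctan x$ is strictly decreasing, and $\frac12-\pi^{-1}\arctan(\cot\pi\alpha)=\alpha$ for $\alpha\in(0,1)$. Hence $\{P_{\rm CC}^{\rm mc}\le\alpha\}=\{T_{\rm CC}^{\rm mc}\ge\cot(\pi\alpha)\}$. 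Since $\cot(\pi\alpha)$ is a continuity point, Portmanteau again gives convergence to $\Pbb\{T_{{\rm CC},\infty}^{\rm mc}\ge\cot(\pi\alpha)\}$.

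\textbf{Main obstacle.} The only delicate point is Step 1: continuity of $F_{\rm mc}$, equivalently that the limiting p-values do not sit at the boundary where $\tan$ diverges. If no clean citation is available, I would argue directly. An atom of $\sup_s G(s)$ at a finite value $x$ cannot occur, because for fixed $s$ the event $\{\sup G=x\}$ forces $G(s)\le x$, and conditioning along a Karhunen--Lo\`eve direction with positive variance spreads mass continuously. Alternatively, I would avoid the issue entirely by observing that only almost-sure finiteness of $\tan[\pi(\frac12-P_{\infty,k}^{\rm mc})]$ is needed, and this follows because $\sup G$ is almost surely finite by sample continuity on the compact set $\overline{\calS_{\rm mc}(\varepsilon)}$.
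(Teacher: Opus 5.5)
Your proposal is correct and follows essentially the same route as the paper's proof. The Cauchy map is continuous on the open cube, and the limiting p-values are uniform, hence a.s.\ interior, by continuity of $F_{\rm mc}$. The continuous mapping theorem applied to \eqref{eq:mc-pvector-limit} then gives $T_{\rm CC}^{\rm mc}\dto T_{{\rm CC},\infty}^{\rm mc}$, and Portmanteau together with the exact equivalence $\{P_{\rm CC}^{\rm mc}\le\alpha\}=\{T_{\rm CC}^{\rm mc}\ge\cot(\pi\alpha)\}$ gives both claims.

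The paper simply asserts continuity of the Gaussian-supremum law. Your Tsirelson-type justification is therefore extra care rather than a deviation, and your Karhunen--Lo\`eve heuristic is not needed. Your fallback is the cleaner point: it suffices that $P_{\infty,k}^{\rm mc}\in(0,1)$ a.s., which holds because $\sup G$ is a.s.\ finite and its law has unbounded upper support.
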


These theorems are the multiple-change analogues of Theorems~\ref{thm:sc-process}--\ref{thm:sc-cauchy}; the only change is the scan domain and the resulting supremum distribution.

\section{Multiple Change-Point Estimation}\label{sec:estimation}

\subsection{WBS-ERHT segmentation algorithm}\label{sec:wbs}
To estimate all change locations, we combine the ERHT score with wild binary segmentation (WBS) \citep{fryzlewicz2014}. 
The WBS intervals provide multiscale localization, and the ERHT statistic quantifies evidence for a change within each sampled interval.

In particular, write \(|I|=r-\ell+1\) for an integer interval \(I=[\ell,r]=\{\ell,\ell+1,\ldots,r\}\).  For a fixed trimming fraction \(\varepsilon\in(0,1/2)\), define the integer adjacent-triple collection inside \(I\) by
\begin{equation*}
  \mathcal A(I,\varepsilon)=\{(a,b,c):\ell\le a\le b<c\le r,
  \ b-a+1\ge \varepsilon |I|,
  \ c-b\ge \varepsilon |I|\}.
\end{equation*}
The triple \((a,b,c)\) compares \(\{a,\ldots,b\}\) with \(\{b+1,\ldots,c\}\), and the common scatter pool is the whole WBS interval, i.e., \(J(I)=I\).  Let \(Z_{\rho,I}(a,b,c)\)
be the standardized statistic in Section~\ref{sec:calibration}, computed after replacing \(I_1(s),I_2(s),J(s)\) by these two adjacent windows and the pool \(J(I)\).  
Define the interval score and its estimated split by
\begin{equation*}
  \mathcal T_\rho(I)=\max_{(a,b,c)\in \mathcal A(I,\varepsilon)}Z_{\rho,I}(a,b,c),
\end{equation*}
and choose
\[
(\widehat a_\rho(I),\widehat k_\rho(I),\widehat c_\rho(I))
  \in\argmax_{(a,b,c)\in \mathcal A(I,\varepsilon)}Z_{\rho,I}(a,b,c).
\]
Ties are broken by a fixed deterministic rule.
For a finite grid of ridge-regularization parameters, define
\begin{equation}
\label{eq:wbs-grid-score}
  \mathcal T(I)=\max_{1\le k\le K}\mathcal T_{\rho_n^{(k)}}(I),
  \qquad
  \widehat k_{\calR}(I)=\widehat k_{\rho_n^{(\widehat\jmath(I))}}(I),
\end{equation}
where \(\widehat\jmath(I)\) is the smallest index of a regularization parameter attaining the maximum.  
Having defined the aggregated interval score \(\mathcal T(I)\), we generate WBS intervals \(\mathcal I_{M_n}=\{[\ell_v^{\rm WBS},u_v^{\rm WBS}]:v=1,\ldots,M_n\}\) independently of the observations.  
The algorithm below uses the narrowest-over-threshold selection rule of \citet{baranowski2019not} within the WBS framework \citep{fryzlewicz2014}.

\begin{algorithm}[WBS-ERHT segmentation]\label{alg:wbs-erht}
Fix a finite grid \(\calR_K\) of ridge-regularization parameters, a trimming parameter \(\varepsilon\), a threshold \(\mathfrak t_n^{\rm WBS}\), a minimum interval length \(m_{\min}\), a refinement radius \(g_n\), a deletion radius \(h_n\), and WBS intervals \(\mathcal I_{M_n}\).  Initialize \(\widehat{\mathcal K}=\varnothing\), and define the recursive routine \(\operatorname{WBS}(\ell,r)\) as follows.
\begin{enumerate}[label=\arabic*.]
\item If \(r-\ell+1<m_{\min}\), stop.
\item Form
\(\mathcal I(\ell,r)=\{I\in\mathcal I_{M_n}:I\subset[\ell,r],\ |I|\ge m_{\min}\}\cup\{[\ell,r]\}\).
For every \(I\in\mathcal I(\ell,r)\), compute \(\mathcal T(I)\) and \(\widehat k_{\calR}(I)\) from \eqref{eq:wbs-grid-score}.
\item Let \(\mathcal I^+(\ell,r)=\{I\in\mathcal I(\ell,r):\mathcal T(I)>\mathfrak t_n^{\rm WBS}\}\).  If \(\mathcal I^+(\ell,r)=\varnothing\), stop.
\item Choose
\(I^*\in\argmin_{I\in\mathcal I^+(\ell,r)}|I|\) and 
\(\widetilde k=\widehat k_{\calR}(I^*)\) 
breaking ties by the larger score and any remaining ties deterministically.  Form the local refinement interval
\[
  B(\widetilde k;\ell,r)=
  [\max\{\ell,\widetilde k-g_n\},\min\{r,\widetilde k+g_n\}].
\]
If \(|B(\widetilde k;\ell,r)|\ge m_{\min}\), recompute the score on this interval and set
\(\widehat k=\widehat k_{\calR}(B(\widetilde k;\ell,r))\);
otherwise set \(\widehat k=\widetilde k\).  Add \(\widehat k\) to the estimated change set \(\widehat{\mathcal K}\) and run
\[
  \operatorname{WBS}(\ell,\widehat k-h_n),
  \qquad
  \operatorname{WBS}(\widehat k+h_n+1,r).
\]
\end{enumerate}

The minimum-length restriction excludes unstable short-window statistics, the refinement step provides a locally balanced window, and the deletion radius prevents repeated selection of the same change-point.
Call \(\operatorname{WBS}(1,n)\).  The final estimator \(\widehat{\mathcal K}\) is the sorted set of all selected boundaries.    
\end{algorithm}

\subsection{Theory for WBS-ERHT segmentation}\label{sec:wbs-theory}

We next establish the consistency of WBS-ERHT segmentation.
Let the true integer change-points be \(k_j=\lfloor n\tau_j\rfloor\), \(j=1,\ldots,q\), and put \(k_0=0\), \(k_{q+1}=n\).  
Assume the minimum-spacing condition
\begin{equation}
\label{eq:wbs-spacing}
  0=\tau_0<\tau_1<\cdots<\tau_q<\tau_{q+1}=1,
  \qquad
  \min_{0\le j\le q}(\tau_{j+1}-\tau_j)\ge \delta_0>0.
\end{equation}
Let 
\(\bfDelta_j=\bftheta^{(j+1)}-\bftheta^{(j)}\),
\(r_n=\max_{1\le j\le q}\|\bfDelta_j\|\), and 
\(d_n=\min_{1\le j\le q}\|\bfDelta_j\|\).
We define
\begin{equation}
\label{eq:wbs-envelope-scales}
\begin{aligned}
  \overline{\mathfrak s}_n^{\rm WBS}&=\sqrt n\,r_n^2,\qquad
  \underline{\mathfrak s}_n^{\rm WBS}=\sqrt n\,d_n^2,\\
  \mathfrak e_n^{\rm WBS}&=\ell_nr_n+r_n^2+\ell_nn^{-1/2}.
\end{aligned}
\end{equation}
We impose two additional assumptions on the random-interval design, jump sizes, and tuning hierarchy.

\begin{assumption}[WBS design and geometry]\label{ass:wbs-design}
The number of change-points \(q\) is fixed and \eqref{eq:wbs-spacing} holds.  The
intervals \([\ell_v^{\rm WBS},u_v^{\rm WBS}]\), \(v=1,\ldots,M_n\), are
independent of the observations and independent across \(v\); each is obtained
by drawing two distinct endpoints uniformly from \(\{1,\ldots,n\}\) and
ordering them.  Moreover, \(M_n\to\infty\).  There exist fixed constants
\(\delta_{\min},\delta_R>0\) such that
\(0<\delta_{\min}<(1-2\varepsilon)\delta_R\),
\(\delta_R+\delta_{\min}< {\delta_0}/{2}\),
and the algorithmic window parameters satisfy
\({m_{\min}}/{n}\to\delta_{\min}\) and 
\(g_n=\lfloor\delta_Rn\rfloor\).
\end{assumption}

\begin{assumption}[Jump size and tuning hierarchy]\label{ass:wbs-signal-tuning}
The jumps satisfy \(\ell_nr_n\to0\).  The deterministic threshold
\(\mathfrak t_n^{\rm WBS}>0\) and integer deletion radius \(h_n\ge0\)
satisfy
\[
\mathfrak t_n^{\rm WBS}=o(\underline{\mathfrak s}_n^{\rm WBS}),
\qquad
\sqrt{\log n}
  +\overline{\mathfrak s}_n^{\rm WBS}\mathfrak e_n^{\rm WBS}
  =o(\mathfrak t_n^{\rm WBS}),
\qquad
h_n=o(n),
\qquad
\frac{n\mathfrak t_n^{\rm WBS}}
     {\underline{\mathfrak s}_n^{\rm WBS}}
  =o(h_n).
\]
\end{assumption}

Assumption~\ref{ass:wbs-design} fixes the random-interval design and the deterministic
geometry needed for isolation and refinement.  Assumption~\ref{ass:wbs-signal-tuning} requires the threshold to dominate the stochastic and feasible-expansion errors while remaining smaller than the weakest jump signal; it also makes the deletion radius larger than the resulting localization scale but asymptotically smaller than \(n\).

\begin{theorem}[Consistency of WBS-ERHT segmentation]\label{thm:wbs-consistency}
Suppose Assumptions~\ref{ass:elliptical}--\ref{ass:grid} and
Assumptions~\ref{ass:wbs-design}--\ref{ass:wbs-signal-tuning} hold.  Let
\(\widehat{\mathcal K}=\{\widehat k_1<\cdots<\widehat k_{\widehat q}\}\)
be the output of Algorithm~\ref{alg:wbs-erht}.  Then
\begin{equation*}
  \Pbb(\widehat q=q)\to1.
\end{equation*}
On the event \(\{\widehat q=q\}\), order the estimates and true change-points increasingly.  With this convention,
\begin{equation*}
  \max_{1\le j\le q}\frac{|\widehat k_j-k_j|}{n}
  =O_P\left(
    \frac{\mathfrak t_n^{\rm WBS}}
         {\underline{\mathfrak s}_n^{\rm WBS}}
  \right)
  =O_P\left(
    \frac{\mathfrak t_n^{\rm WBS}}
         {\sqrt n\min_{1\le j\le q}\|\bfDelta_j\|^2}
  \right).
\end{equation*}
\end{theorem}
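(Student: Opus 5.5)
We follow the standard WBS/NOT template of \citet{fryzlewicz2014} and \citet{baranowski2019not}. The argument is deterministic on a high-probability event~$\mathcal E_n$, built from two ingredients: a \emph{uniform signal-plus-noise decomposition} of the interval scores and a \emph{random-interval coverage} event. The first step is to define $\mathcal E_n$ so that, simultaneously over all integer intervals $I$ with $|I|\ge m_{\min}$, all triples in $\mathcal A(I,\varepsilon)$ and all $\rho\in\calR_K$,
\[
Z_{\rho,I}(a,b,c)=\mu_{\rho,I}(a,b,c)+\mathcal N_{\rho,I}(a,b,c),\qquad
\sup|\mathcal N_{\rho,I}|=O_P\bigl(\sqrt{\log n}+\overline{\mathfrak s}_n^{\rm WBS}\mathfrak e_n^{\rm WBS}\bigr).
\]
Here $\mu_{\rho,I}(a,b,c)=\sqrt{m_I}\,\bfdelta^\top\bfD_{\rho,n}\bfdelta\,N/(m_I\sigma_\rho)$, with $\bfdelta$ the difference of the population segment-mean locations on $\{a,\ldots,b\}$ and $\{b+1,\ldots,c\}$, so $\mu$ is a piecewise-deterministic CUSUM-type signal of order $\sqrt n\,\|\bfDelta_j\|^2$.

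To establish this decomposition, we rerun the arguments behind Proposition~\ref{prop:raw-score} and Theorem~\ref{thm:localization} with pool $J(I)=I$. The condition $\ell_nr_n\to0$ keeps the spatial-median and spatial-sign linearizations in the small-perturbation regime. The deterministic equivalent $\bfD_{\rho,n}$ (Lemma~\ref{lem:ridge-stability}) handles the signal quadratic term, and the cross term is bounded through the $\mathfrak e_n^{\rm WBS}$ variance perturbation, as in the discussion after Theorem~\ref{thm:localization}. For the pure-noise part, the pointwise Gaussian approximation is not enough, since the interval family has polynomially many elements. Instead we use a maximal inequality: Hanson--Wright/Bernstein-type concentration of the off-diagonal $U$-statistic conditional on the pooled signs (via the Rademacher symmetry of elliptical errors), combined with a union bound over $O(n^5)$ integer quintuples. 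This yields the $\sqrt{\log n}$ term.

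We expect this uniform noise bound over \emph{all} subintervals, including those straddling changes where the pooled recentering is contaminated, to be the \textbf{main obstacle}. Our first attempt will be to control the contaminated pooled spatial median by a deterministic shift of order $r_n$ and absorb its effect into $\mathfrak e_n^{\rm WBS}$.

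The second step is coverage. Since $M_n\to\infty$ and the endpoints are uniform, with probability tending to one, for each $j$ some drawn interval contains $k_j$, has no other change, has length between $m_{\min}$ and $\delta_0 n$, and has $k_j$ at least $\varepsilon$-proportionally interior (the inequalities $\delta_{\min}<(1-2\varepsilon)\delta_R$ and $\delta_R+\delta_{\min}<\delta_0/2$ make this set of admissible endpoint pairs have positive proportion). Each additional independent draw hits it with fixed positive probability, so a failure has probability $(1-c)^{M_n}\to0$.

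The third step is the deterministic analysis on $\mathcal E_n$. First, any interval containing no change has $\mu\equiv0$, hence $\mathcal T(I)=o(\mathfrak t_n^{\rm WBS})$, so it is never selected; this yields no false positives and stops recursions on change-free segments. Second, a covering interval has $\mathcal T(I)\gtrsim\underline{\mathfrak s}_n^{\rm WBS}\gg\mathfrak t_n^{\rm WBS}$, so $\mathcal I^+$ is nonempty whenever an undetected change remains away from the boundaries. Third, the narrowest-over-threshold choice $I^*$ must contain a change. Comparing $\mu$ at the argmax with its peak, using the linear decay of the CUSUM-type signal away from $k_j$ (the multiple-change analogue of the $\mathfrak h(t;\tau_\star)$ shape in Theorem~\ref{thm:power}), gives $|\widetilde k-k_j|\lesssim n\,\mathfrak t_n^{\rm WBS}/\underline{\mathfrak s}_n^{\rm WBS}$. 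The refinement window $B$ of radius $g_n=\lfloor\delta_Rn\rfloor<\delta_0n/2$ then contains only $k_j$ and is roughly balanced, and repeating the same comparison there gives $|\widehat k-k_j|=O(n\mathfrak t_n^{\rm WBS}/\underline{\mathfrak s}_n^{\rm WBS})$. Because $h_n$ exceeds this error and $h_n=o(n)$, the deletion removes $k_j$ from both subproblems without destroying the isolation of the remaining changes. Induction over the $q$ changes on the recursion tree gives $\widehat q=q$ and the stated rate.
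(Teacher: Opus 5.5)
Your overall architecture matches the paper's: a uniform signal-plus-noise envelope with a $\sqrt{\log n}$ union bound over $O(n^5)$ candidates, random-interval coverage, narrowest-over-threshold selection, refinement, deletion and induction. The third step, however, relies on a claim that is false. You assert that on the narrowest significant interval $I^*$, comparing the signal at the argmax with its peak gives $|\widetilde k-k_j|\lesssim n\,\mathfrak t_n^{\rm WBS}/\underline{\mathfrak s}_n^{\rm WBS}$. Nothing makes $I^*=[L,R]$ balanced around $k_j$. Suppose $k_j-L+1=\varepsilon|I^*|/2$. Then every triple in $\mathcal A(I^*,\varepsilon)$ has $b\ge L+\varepsilon|I^*|-1$, hence $b-k_j\ge\varepsilon|I^*|/2\asymp n$. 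Yet at $(L,L+\lceil\varepsilon|I^*|\rceil-1,R)$ the population contrast is, up to rounding, $\bfDelta_j/2$. So $I^*$ is significant, with signal of order $\sqrt n\|\bfDelta_j\|^2\gg\mathfrak t_n^{\rm WBS}$. Under uniform endpoint sampling with $M_n\to\infty$, the relative position of $k_j$ inside the shortest significant drawn interval is essentially uncontrolled, so this configuration occurs with non-vanishing probability. The linear-decay comparison only works when $k_j$ is $\varepsilon_0$-interior to the window, which is why the paper never localizes $\widetilde k$ finely.

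The missing idea is that the coverage intervals must be \emph{short}, not merely change-isolating.
\begin{itemize}
\item The paper's class $\mathfrak G_j$ requires $m_{\min}\le|I|\le\delta_Wn$ with $\delta_{\min}<\delta_W<\min\{(1-2\varepsilon_0)\delta_R,\delta_0/2-\delta_R\}$. This, rather than positivity of the sampling proportion, is where $\delta_{\min}<(1-2\varepsilon)\delta_R$ and $\delta_R+\delta_{\min}<\delta_0/2$ are used.
\item Narrowest-over-threshold then forces $|I^*|\le\delta_Wn$, so $I^*$ contains exactly one change and only the crude bound $|\widetilde k-k_j|\le|I^*|\le\delta_Wn$ is needed.
\item The window $B=[\widetilde k-g_n,\widetilde k+g_n]$ then has both distances to $k_j$ at least $(\delta_R-\delta_W)n\ge\varepsilon_0|B|$ and contains no other change. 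The fine rate is obtained on $B$ alone (Lemmas~\ref{lem:wbs-refinement-geometry} and~\ref{lem:wbs-population-separation}).
\item With your lengths up to $\delta_0n>2g_n$, the crude bound does not even guarantee $k_j\in B$.
\end{itemize}

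Your induction also needs two ingredients you do not supply. First, $B$ must not be clipped by the current segment $[\ell,r]$. The paper secures this through the invariant that each segment endpoint is $1$, $n$, or within $2h_n+o(h_n)$ of an already-detected change. Second, the good intervals must keep a buffer $\delta_Bn$ from neighbouring changes. Otherwise a good interval for $k_{j+1}$ starting within $h_n$ of $k_j$ is cut by the deletion.

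Finally, your signal $\mu_{\rho,I}$ must be normalized at the local aspect ratio $p/|I|$, i.e.\ with $\bfD_{\rho,|I|}$ and $\sigma^{\circ}_{\rho,|I|}$ instead of $\bfD_{\rho,n}$ and $\sigma_\rho$. Otherwise $Z-\mu$ contains an error of order $\overline{\mathfrak s}_n^{\rm WBS}$ that your noise bound does not absorb. Every property of $\mu$ you actually use survives this correction.
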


Theorem~\ref{thm:wbs-consistency} establishes simultaneous model-selection and localization
consistency: WBS-ERHT recovers the number of change-points with probability
tending to one, and its largest normalized localization error vanishes at
rate \(\mathfrak t_n^{\rm WBS}/\underline{\mathfrak s}_n^{\rm WBS}\).
This rate is governed by the weakest jump, while the lower bound on the
threshold prevents stochastic and feasible-expansion errors from producing
spurious detections.

\section{Simulation study}\label{sec:simulation}

\subsection{Common design, methods, and experimental protocol}
\label{sec:sim-design}

Throughout, observations are generated from
\(\bfX_i=\bftheta_i+\bfvarepsilon_i\) for \(i=1,\ldots,n\),
where every shape matrix is standardized by \(\tr(\bfOmega_p)=p\), and the null
model sets \(\bftheta_i=\mathbf 0\) for all \(i\).
Let \(\bfO_p\) be an orthogonal rotation matrix and write
\(
  \mathcal D(d_1,\ldots,d_p)
  ={p\,\bfO_p\big\{\diag(d_1,\ldots,d_p)/\sum_{j=1}^p d_j\big\}\bfO_p^\top}\).
We consider three different shape designs as follows.  
\begin{enumerate}[label=(\roman*)]
\item \textit{Identity}: \(\bfOmega_p=\bfI_p\).
\item \textit{Polynomial decay}: \(\bfOmega_p=\mathcal D(d_1^{\rm poly},\ldots,d_p^{\rm poly})\) with \(d_j^{\rm poly}=0.01+(p-j+0.1)^2\).
\item \textit{Exponential decay}: \(\bfOmega_p=\mathcal D(d_1^{\rm exp},\ldots,d_p^{\rm exp})\) with \(d_j^{\rm exp}=\exp(-3j/p)\).
\end{enumerate}
These designs are labelled Identity, Poly, and Exp, respectively, in the subsequent tables and figures.  
The Identity and Exp designs have eigenvalues bounded away from zero and infinity, whereas the Poly design has its smallest eigenvalues approaching zero.  This is allowed by Assumption~\ref{ass:spectrum}, which requires only a uniform upper spectral bound.
For each shape design, we consider three elliptically symmetric error distributions.  Let \(\bfZ_i\sim N_p(\mathbf 0,\bfOmega_p)\), \(S_i\sim\chi_3^2\), and \(B_i\sim\operatorname{Bernoulli}(0.2)\) be mutually independent.  We generate
\begin{enumerate}[label=(\roman*)]
\item \textit{Gaussian errors}: \(\bfvarepsilon_i=\bfZ_i\), equivalently \(\bfvarepsilon_i\sim N_p(\mathbf 0,\bfOmega_p)\).
\item \textit{Multivariate \(t_3\) errors}: 
\(\bfvarepsilon_i={\bfZ_i}/{(S_i/3)^{1/2}}\).
\item \textit{Gaussian-mixture errors}:
\(\bfvarepsilon_i\sim 0.8\,N_p(\mathbf 0,\bfOmega_p) +0.2\,N_p(\mathbf 0,100\bfOmega_p)\),
which is equivalently generated as \(\bfvarepsilon_i=(1+9B_i)\bfZ_i\).
\end{enumerate}
In Section~\ref{app:assumption1-examples} of the Supplementary Material, we give the radial representations explicitly and verify the radial moment and maximal-truncation conditions in Assumption~\ref{ass:elliptical} for these three error distributions.

Two shift profiles, Uniform and Constant, are considered across the power and localization
experiments.  
Writing \(c_{\rm sig}>0\) as the signal strength,  
we set
\begin{enumerate}[label=(\roman*)]
\item \textit{Uniform-direction shift (Uniform)}:
\(\bfDelta_p^{\rm unif}(c_{\rm sig})
  =(\delta_1,\ldots,\delta_p)^\top\)
with
\(\delta_j{\sim}
  N\big(0, {c_{\rm sig}^2}/{(np)}\big)\) independently for \(j = 1, \ldots, p\),
so that
\(\E\{\|\bfDelta_p^{\rm unif}(c_{\rm sig})\|^2\}=c_{\rm sig}^2/n\).
\item \textit{Constant shift}:
\(\bfDelta_p^{\rm const}(c_{\rm sig})=c_{\rm sig}\,\bfone_p\),
which gives an equal-coordinate location change.
\end{enumerate}

For the single-change testing and localization experiments, we compare four methods: the proposed ERHT Cauchy-combination test (ERHT-CC), the covariance-based RHT Cauchy-combination test (RHT-CC) \citep{li2026change,zhao2026rhtcp}, the sum-type data-adaptive mean-shift test (DMS0) \citep{wangfeng2023}, and the spatial-sign change-point test (SSCPD0) \citep{liu2025sscpd}. 
The suffix ``0'' indicates implementations without regularized inverse-shape normalization.
The nominal level is \(\alpha=0.05\).  For ERHT-CC, the regularization ratios \(\rho/\gamma\) take values in \(\{0.05,0.10,\ldots,0.50\}\), where \(\gamma=p/n\); RHT-CC uses the analogous grid for its covariance-regularization parameter.  Both Cauchy combinations use equal weights.
For each regularization-parameter value, the single-change p-value is computed using the Gaussian-supremum calibration in Theorem~\ref{thm:sc-process}; ERHT-CC reports the analytic Cauchy aggregation of these p-values.  

For the multiple-change global testing experiment, we compare the adjacent-triple RHT scan (RHT-MC) \citep{li2026change} with the proposed adjacent-triple ERHT Cauchy-combination scan (ERHT-MC).  
The multiple-change localization experiment below additionally compares the WBS implementation of the proposed statistic with INSPECT \citep{wangsamworth2018}, HDBINSEG \citep{cho2015sbs}, and ECP \citep{matteson2014nonparametric,james2014ecp}.

\subsection{Single-change global testing}
\label{sec:sim-single-testing}

We first examine empirical size under \(H_0\).  Tables~\ref{tab:size-n200} and~\ref{tab:size-n400} report rejection percentages for \(n=200\) and \(n=400\), respectively, with \(p\in\{100,200,400\}\).  Each entry is based on 1000 Monte Carlo replications, so the Monte Carlo standard error near a five-percent rejection probability is about 0.69 percentage points.  
The proposed ERHT-CC test is close to the nominal level across most settings.  The covariance-based RHT-CC test is reasonably calibrated under Gaussian errors but becomes markedly oversized under the heavy-tailed \(t_3\) and Gaussian-mixture distributions.  DMS0 tends to be conservative under heavy tails, whereas SSCPD0 is often liberal when \(p\) is small or the shape matrix is nontrivial.

\begin{table}[tb]
  \centering
  \small
  \caption{Empirical sizes of the single-change global tests for \(n=200\).
  Gaussian-supremum p-values computed separately at each regularization-parameter value are followed by analytic Cauchy
  aggregation.  Entries are rejection percentages at the nominal
  five-percent level based on 1000 replications.}
  \label{tab:size-n200}
  \begin{tabular}{llrrrrr}
    \toprule
    Shape & Error & \(p\) & ERHT-CC & RHT-CC & DMS0 & SSCPD0 \\
    \midrule
    Identity & Normal & 100 & 4.5 & 5.0 & 6.7 & 7.0 \\
    Identity & Normal & 200 & 4.1 & 4.3 & 8.0 & 8.4 \\
    Identity & Normal & 400 & 8.1 & 6.4 & 6.4 & 6.3 \\
    \addlinespace
    Identity & $t_3$ & 100 & 5.1 & 9.4 & 3.1 & 9.9 \\
    Identity & $t_3$ & 200 & 5.0 & 9.9 & 3.1 & 7.9 \\
    Identity & $t_3$ & 400 & 7.3 & 13.2 & 2.0 & 6.3 \\
    \addlinespace
    Identity & Mixture & 100 & 4.8 & 17.9 & 1.3 & 9.8 \\
    Identity & Mixture & 200 & 5.8 & 19.8 & 1.2 & 9.1 \\
    Identity & Mixture & 400 & 5.7 & 19.4 & 1.7 & 7.8 \\
    \addlinespace
    Poly & Normal & 100 & 4.8 & 6.7 & 8.6 & 11.4 \\
    Poly & Normal & 200 & 5.0 & 6.0 & 7.5 & 9.0 \\
    Poly & Normal & 400 & 6.2 & 5.6 & 8.8 & 7.5 \\
    \addlinespace
    Poly & $t_3$ & 100 & 3.9 & 9.6 & 3.6 & 10.1 \\
    Poly & $t_3$ & 200 & 5.1 & 9.3 & 2.5 & 8.2 \\
    Poly & $t_3$ & 400 & 5.5 & 12.0 & 2.3 & 6.3 \\
    \addlinespace
    Poly & Mixture & 100 & 6.1 & 17.5 & 1.7 & 11.4 \\
    Poly & Mixture & 200 & 4.4 & 19.8 & 1.9 & 11.3 \\
    Poly & Mixture & 400 & 5.3 & 23.1 & 1.2 & 8.4 \\
    \addlinespace
    Exp & Normal & 100 & 4.9 & 6.5 & 8.7 & 11.4 \\
    Exp & Normal & 200 & 5.7 & 5.2 & 7.7 & 8.8 \\
    Exp & Normal & 400 & 6.2 & 5.6 & 8.7 & 8.4 \\
    \addlinespace
    Exp & $t_3$ & 100 & 5.9 & 9.5 & 2.6 & 11.3 \\
    Exp & $t_3$ & 200 & 5.9 & 9.5 & 3.3 & 8.9 \\
    Exp & $t_3$ & 400 & 5.2 & 10.7 & 2.7 & 7.0 \\
    \addlinespace
    Exp & Mixture & 100 & 3.9 & 18.6 & 1.8 & 12.8 \\
    Exp & Mixture & 200 & 5.2 & 21.8 & 1.3 & 10.7 \\
    Exp & Mixture & 400 & 5.8 & 20.9 & 1.5 & 7.5 \\
    \bottomrule
  \end{tabular}
\end{table}

\begin{table}[tb]
  \centering
  \small
  \caption{Empirical sizes of the single-change global tests for \(n=400\).
  Gaussian-supremum p-values computed separately at each regularization-parameter value are followed by analytic Cauchy
  aggregation.  Entries are rejection percentages at the nominal
  five-percent level based on 1000 replications.}
  \label{tab:size-n400}
  \begin{tabular}{llrrrrr}
    \toprule
    Shape & Error & \(p\) & ERHT-CC & RHT-CC & DMS0 & SSCPD0 \\
    \midrule
    Identity & Normal & 100 & 4.5 & 7.6 & 7.3 & 7.5 \\
    Identity & Normal & 200 & 5.8 & 6.3 & 6.8 & 5.7 \\
    Identity & Normal & 400 & 5.9 & 4.1 & 6.2 & 5.3 \\
    \addlinespace
    Identity & $t_3$ & 100 & 5.1 & 9.3 & 4.1 & 8.1 \\
    Identity & $t_3$ & 200 & 6.1 & 9.5 & 2.1 & 5.1 \\
    Identity & $t_3$ & 400 & 4.5 & 8.1 & 2.4 & 4.7 \\
    \addlinespace
    Identity & Mixture & 100 & 5.0 & 18.7 & 2.2 & 6.4 \\
    Identity & Mixture & 200 & 4.1 & 19.9 & 1.5 & 6.8 \\
    Identity & Mixture & 400 & 4.3 & 20.7 & 1.5 & 5.2 \\
    \addlinespace
    Poly & Normal & 100 & 5.2 & 9.2 & 9.6 & 10.9 \\
    Poly & Normal & 200 & 5.2 & 6.8 & 7.9 & 8.6 \\
    Poly & Normal & 400 & 3.9 & 4.2 & 6.8 & 5.0 \\
    \addlinespace
    Poly & $t_3$ & 100 & 3.7 & 9.6 & 3.5 & 8.3 \\
    Poly & $t_3$ & 200 & 5.4 & 8.6 & 2.4 & 7.7 \\
    Poly & $t_3$ & 400 & 4.2 & 9.1 & 1.8 & 6.5 \\
    \addlinespace
    Poly & Mixture & 100 & 4.8 & 15.7 & 2.8 & 9.1 \\
    Poly & Mixture & 200 & 3.3 & 16.7 & 2.1 & 6.7 \\
    Poly & Mixture & 400 & 4.6 & 20.1 & 1.7 & 6.3 \\
    \addlinespace
    Exp & Normal & 100 & 6.3 & 9.4 & 8.6 & 9.7 \\
    Exp & Normal & 200 & 4.7 & 5.1 & 6.9 & 7.3 \\
    Exp & Normal & 400 & 5.6 & 5.4 & 7.4 & 7.2 \\
    \addlinespace
    Exp & $t_3$ & 100 & 5.5 & 9.8 & 3.8 & 9.4 \\
    Exp & $t_3$ & 200 & 5.1 & 10.1 & 2.6 & 7.7 \\
    Exp & $t_3$ & 400 & 4.8 & 7.9 & 2.6 & 6.0 \\
    \addlinespace
    Exp & Mixture & 100 & 4.5 & 18.8 & 2.8 & 7.8 \\
    Exp & Mixture & 200 & 5.0 & 18.7 & 2.2 & 6.9 \\
    Exp & Mixture & 400 & 4.6 & 20.5 & 1.6 & 6.5 \\
    \bottomrule
  \end{tabular}
\end{table}

For the power experiment, the change occurs at the middle of the sample, so
that \(\tau_\star=1/2\), and the data are generated under \(H_1\) as \(\bfX_i=\bfvarepsilon_i\) for \(i\le \lfloor n\tau_\star\rfloor\) and \(\bfX_i=\bfDelta_p^{\nu_{\rm sh}}(c_{\rm sig})+\bfvarepsilon_i\) for \(i>\lfloor n\tau_\star\rfloor\).
Here \(\nu_{\rm sh} \in \{{\rm unif}, {\rm const}\}\), 
\(c_{\rm sig}\) is the signal strength displayed on the horizontal axis,
and the Uniform and Constant alternatives are those defined in
Section~\ref{sec:sim-design}.
For the reported power curves, the critical value for each method and
data-generating model is the empirical 95th percentile from 1000 independent null
simulations; empirical powers are computed from 1000 independent Monte Carlo
replications.  
We fix \(n=200\) and \(p=100\), use the three shape designs and the three error distributions described above, and consider the two location-shift profiles defined above.  The Uniform profile represents a random dense direction whose total signal norm is controlled by \(c_{\rm sig}/\sqrt n\), whereas the Constant profile represents an equal-coordinate location shift.  
The displayed values of \(c_{\rm sig}\) are chosen in two stages.  First, preliminary ERHT runs over candidate signal strengths are used to select values spanning a broad range of rejection probabilities.  The same selected values are then used for all four methods.  This construction makes the curves comparable within each panel and avoids choosing method-specific alternatives.

\begin{figure}[tbp]
  \centering
  \includegraphics[width=.95\textwidth]{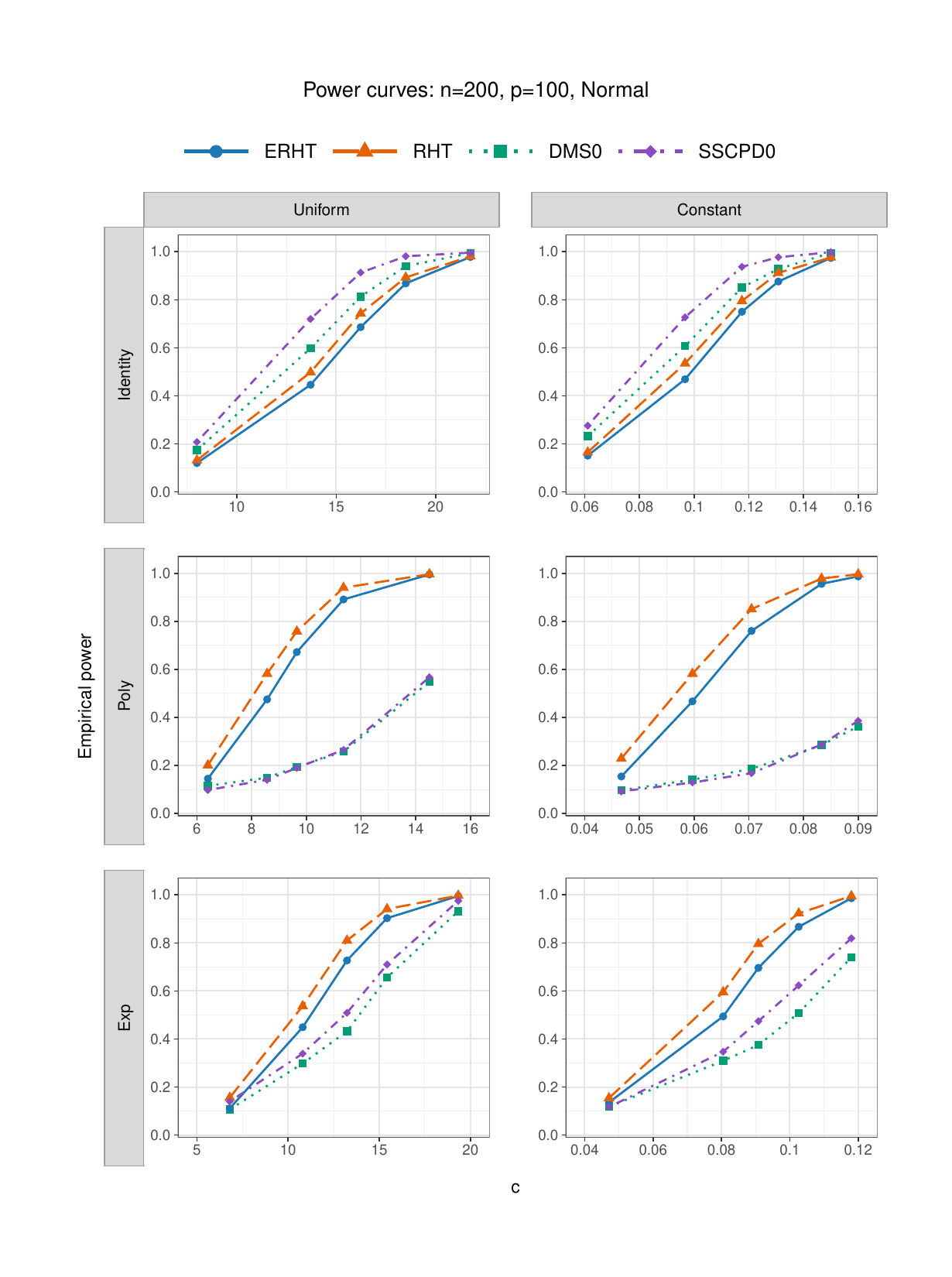}
  \caption{Size-corrected empirical power under Gaussian errors with
  \(n=200\) and \(p=100\).  The two columns correspond to the Uniform and
  Constant shift profiles, and the three rows correspond to the identity,
  polynomial-decay and exponential-decay shape designs.}
  \label{fig:power-normal}
\end{figure}

\begin{figure}[tbp]
  \centering
  \includegraphics[width=.95\textwidth]{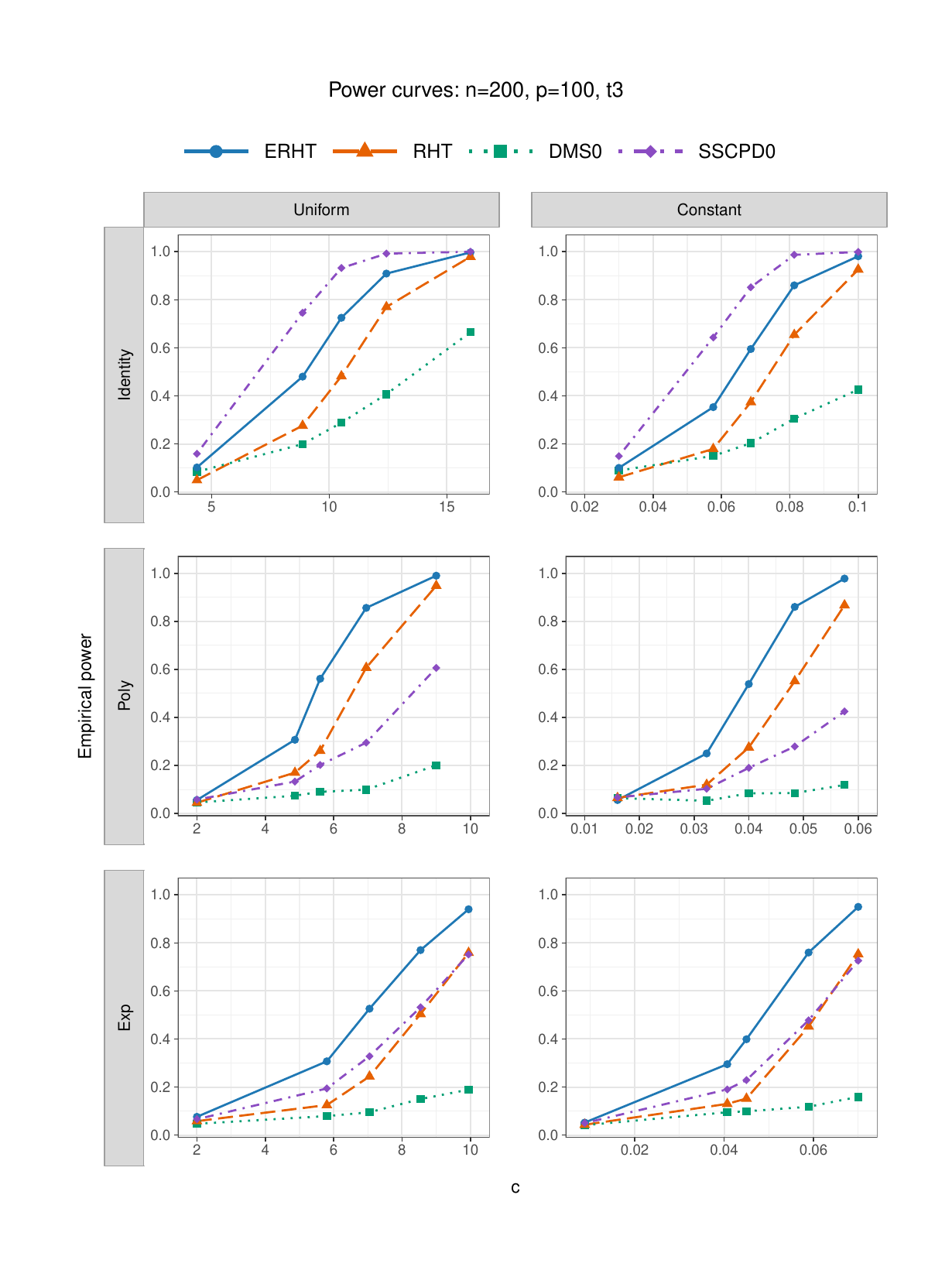}
  \caption{Size-corrected empirical power under multivariate \(t_3\)
  errors with \(n=200\) and \(p=100\).  Panel labels are the same as in
  Figure~\ref{fig:power-normal}.}
  \label{fig:power-t3}
\end{figure}

Figure~\ref{fig:power-normal} reports the Gaussian case.  In these settings, covariance-based normalization entails no material loss of power.  In the identity design, SSCPD0 is highly competitive because no nontrivial cross-sectional adjustment is needed.  In the polynomial- and exponential-decay designs, ERHT and RHT dominate the two competitors without regularized shape normalization over most of the signal range, with RHT slightly ahead.  This pattern is consistent with the fact that, under Gaussian errors, the sample covariance retains magnitude information that a sign-based method deliberately removes.

Figure~\ref{fig:power-t3} shows a different pattern under the heavy-tailed \(t_3\) distribution.  The proposed ERHT test is substantially more powerful than RHT in the polynomial- and exponential-decay designs for both shift profiles.  The low power of DMS0 in these settings is consistent with the sensitivity of mean-based aggregation to heavy tails.  SSCPD0 performs well in the identity design but loses power under the two nonidentity shapes.  These results show the benefit of combining spatial-median robustness with normalization by a ridge-regularized inverse of the centered spatial-sign covariance matrix.

\begin{figure}[tbp]
  \centering
  \includegraphics[width=.95\textwidth]{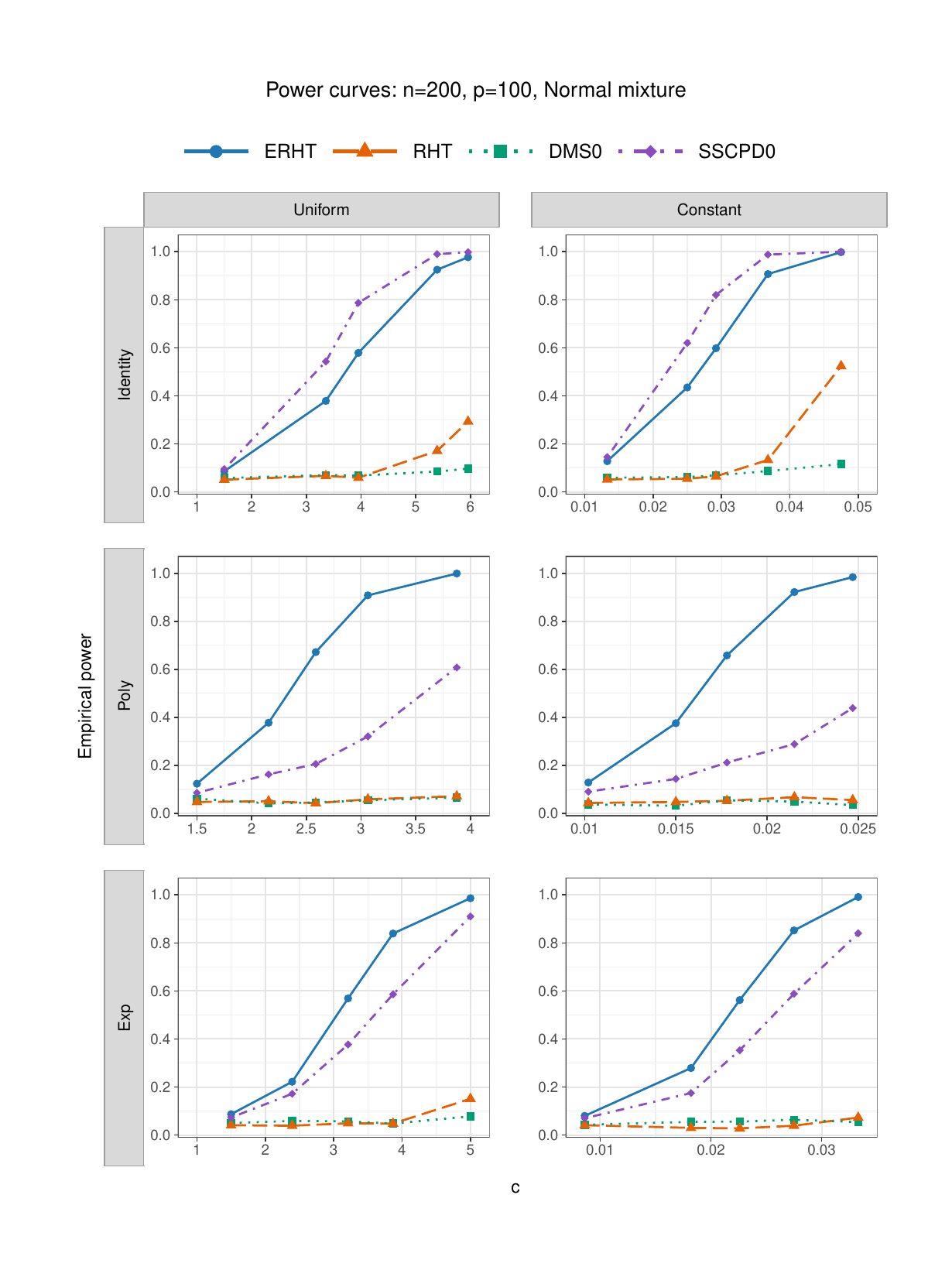}
  \caption{Size-corrected empirical power under Gaussian-mixture errors
  with \(n=200\) and \(p=100\).  Panel labels are the same as in
  Figure~\ref{fig:power-normal}.}
  \label{fig:power-mixture}
\end{figure}

The Gaussian-mixture experiment in Figure~\ref{fig:power-mixture} is the most challenging for covariance-based procedures.  ERHT gives the largest power in nearly all panels and reaches high rejection probabilities as the signal increases.  By contrast, RHT and DMS0 often remain close to the nominal level in the polynomial- and exponential-decay designs, even at the largest displayed signal strengths.  SSCPD0 is more robust than the mean-based competitors, but it does not use the spectral information in the centered spatial-sign covariance matrix and is therefore less effective than ERHT when the shape matrix is nontrivial.  Overall, the three power experiments provide an empirical comparison: covariance-based RHT can be competitive under light tails, whereas ERHT has a clear advantage in these heavy-tailed designs when cross-sectional dependence is informative.

\subsection{Multiple-change global testing}
\label{sec:sim-multiple-testing}

We next study the multiple-change global testing problem.  
In all multiple-change testing experiments, we set \(n=300\), \(p=100\), \(\varepsilon=0.1\), and use the same Identity, Poly, and Exp shape designs.  To make the signal scales comparable across radial distributions, the non-Gaussian errors are variance-standardized before the shape matrix is applied.  Equivalently, with \(\bfG_i\sim N_p(\mathbf 0,\bfI_p)\), \(S_i\sim\chi_3^2\), and \(B_i\sim\operatorname{Bernoulli}(0.2)\), the three noise cases are generated as
\[
  \bfOmega_p^{1/2}\bfG_i,
  \qquad
  \bfOmega_p^{1/2}\frac{\bfG_i}{\sqrt{S_i/3}\sqrt{3}},
  \qquad
  \bfOmega_p^{1/2}\frac{(1+9B_i)\bfG_i}{\sqrt{0.8+100\cdot0.2}},
\]
respectively.  
For RHT-MC we set its covariance-regularization parameter \(\lambda\) such that \(\lambda/\gamma=0.1\), where \(\gamma=p/n\). 
For ERHT-MC we use \(\rho/\gamma\in\{0.05,0.10,\ldots,0.50\}\) and combine the parameter-specific p-values by the Cauchy rule.

We first report empirical sizes under the null model.  
For this diagnostic experiment, each simulated data matrix is recalibrated by randomly
permuting the time order \(B=99\) times.  
Table~\ref{tab:mc-size-permutation}
reports only these permutation-calibrated rejection percentages. 
For ERHT-MC, the listed value is the final Cauchy-combined test, not any individual parameter-specific test.
Table~\ref{tab:mc-size-permutation} shows that permutation calibration gives stable null rejection rates for the adjacent-triple multiple-change scan.
Across the nine distribution--shape settings, ERHT-MC stays between 3.4\% and
5.5\%, while RHT-MC ranges from 2.7\% to 4.6\%.

\begin{table}[tbp]
  \centering
  \small
  \caption{Empirical sizes of the multiple-change global tests using
  time-permutation calibration.  ERHT-MC uses p-values computed separately at each ridge-regularization parameter, followed by analytic Cauchy aggregation; RHT-MC uses a permutation p-value at \(\lambda/\gamma=0.1\).
  Entries are rejection percentages at the nominal five-percent level,
  based on 1000 null replications; each replication uses \(B=99\)
  permutations.}
  \label{tab:mc-size-permutation}
  \begin{tabular}{llrr}
    \toprule
    Error & Shape & RHT-MC & ERHT-MC \\
    \midrule
    Normal & Identity & 4.6 & 4.6 \\
    Normal & Poly & 2.7 & 4.0 \\
    Normal & Exp & 3.9 & 4.1 \\
    \addlinespace
    $t_3$ & Identity & 3.5 & 4.8 \\
    $t_3$ & Poly & 4.2 & 3.7 \\
    $t_3$ & Exp & 3.5 & 3.8 \\
    \addlinespace
    Mixture & Identity & 4.6 & 3.4 \\
    Mixture & Poly & 4.3 & 4.6 \\
    Mixture & Exp & 2.9 & 5.5 \\
    \bottomrule
  \end{tabular}
\end{table}

We then consider the epidemic alternative with two change-points
\(k_{1,\mathrm{test}}=\lfloor 0.35n\rfloor\) and
\(k_{2,\mathrm{test}}=\lfloor 0.65n\rfloor\).
For each \(\nu_{\rm sh}\in\{\mathrm{unif},\mathrm{const}\}\), the observations are
\begin{equation}\label{eq:epidemic}
  \bfX_i=
  \begin{cases}
  \bfvarepsilon_i, & 1\le i\le k_{1,\mathrm{test}},\\
  \bfDelta_p^{\nu_{\rm sh}}(c_{\rm sig})+\bfvarepsilon_i,
    & k_{1,\mathrm{test}}<i\le k_{2,\mathrm{test}},\\
  \bfvarepsilon_i, & k_{2,\mathrm{test}}<i\le n.
  \end{cases}
\end{equation}
The Uniform and Constant profiles are
\(\bfDelta_p^{\rm unif}(c_{\rm sig})\) and
\(\bfDelta_p^{\rm const}(c_{\rm sig})\), respectively, as defined in
Section~\ref{sec:sim-design}.  For
Figures~\ref{fig:mc-power-normal}--\ref{fig:mc-power-mixture}, the critical
value for each method, shape, and radial distribution is the empirical 95th
percentile from 1000 independent null simulations; empirical powers are computed from
1000 independent Monte Carlo replications.

\begin{figure}[tbp]
  \centering
  \includegraphics[width=.95\textwidth]{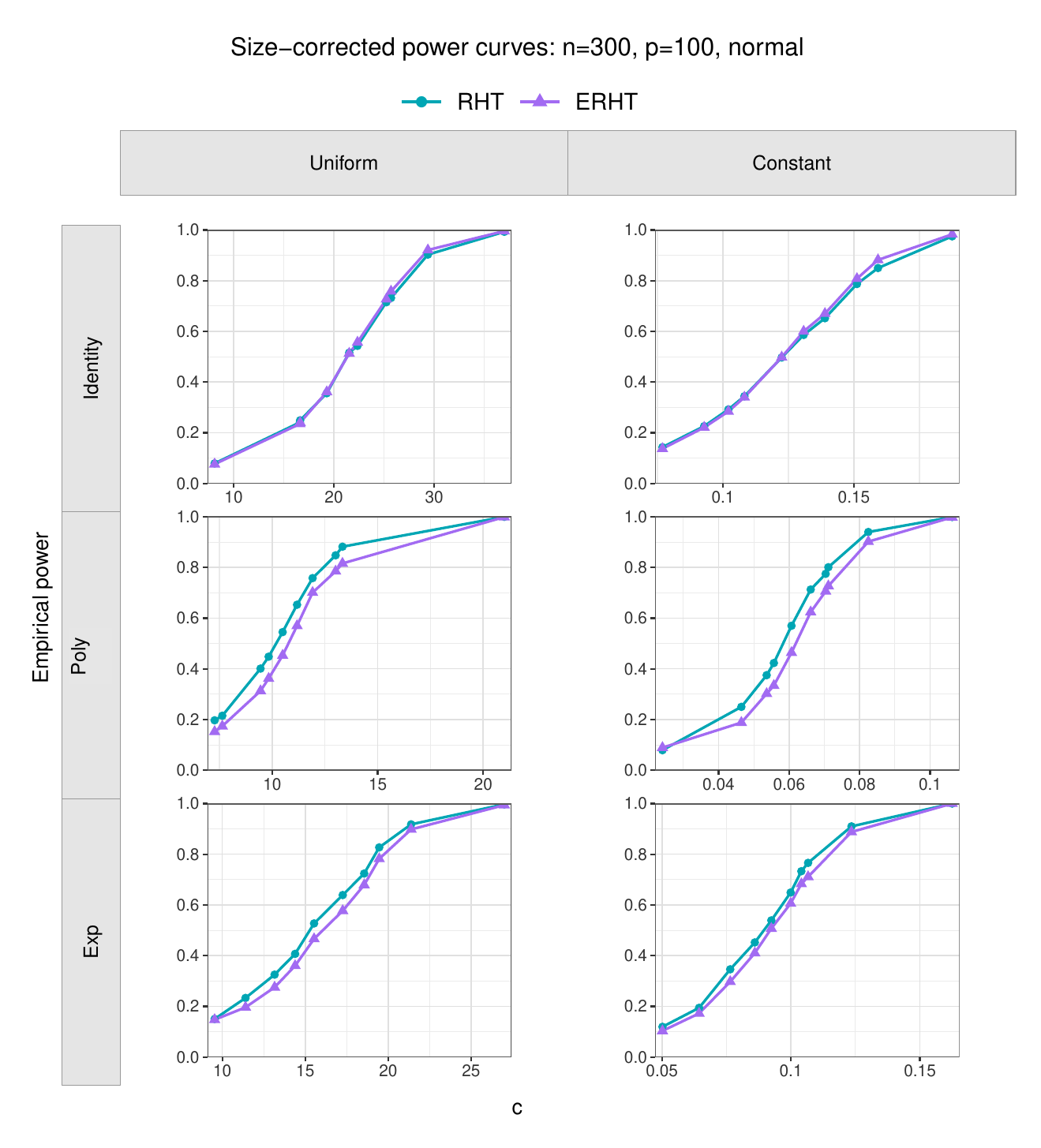}
  \caption{Size-corrected empirical power of the multiple-change
  global tests under Gaussian errors with \(n=300\) and \(p=100\).  The two
  columns correspond to the Uniform and Constant epidemic shifts, and the
  three rows correspond to the Identity, Poly and Exp shape designs.}
  \label{fig:mc-power-normal}
\end{figure}

\begin{figure}[tbp]
  \centering
  \includegraphics[width=.95\textwidth]{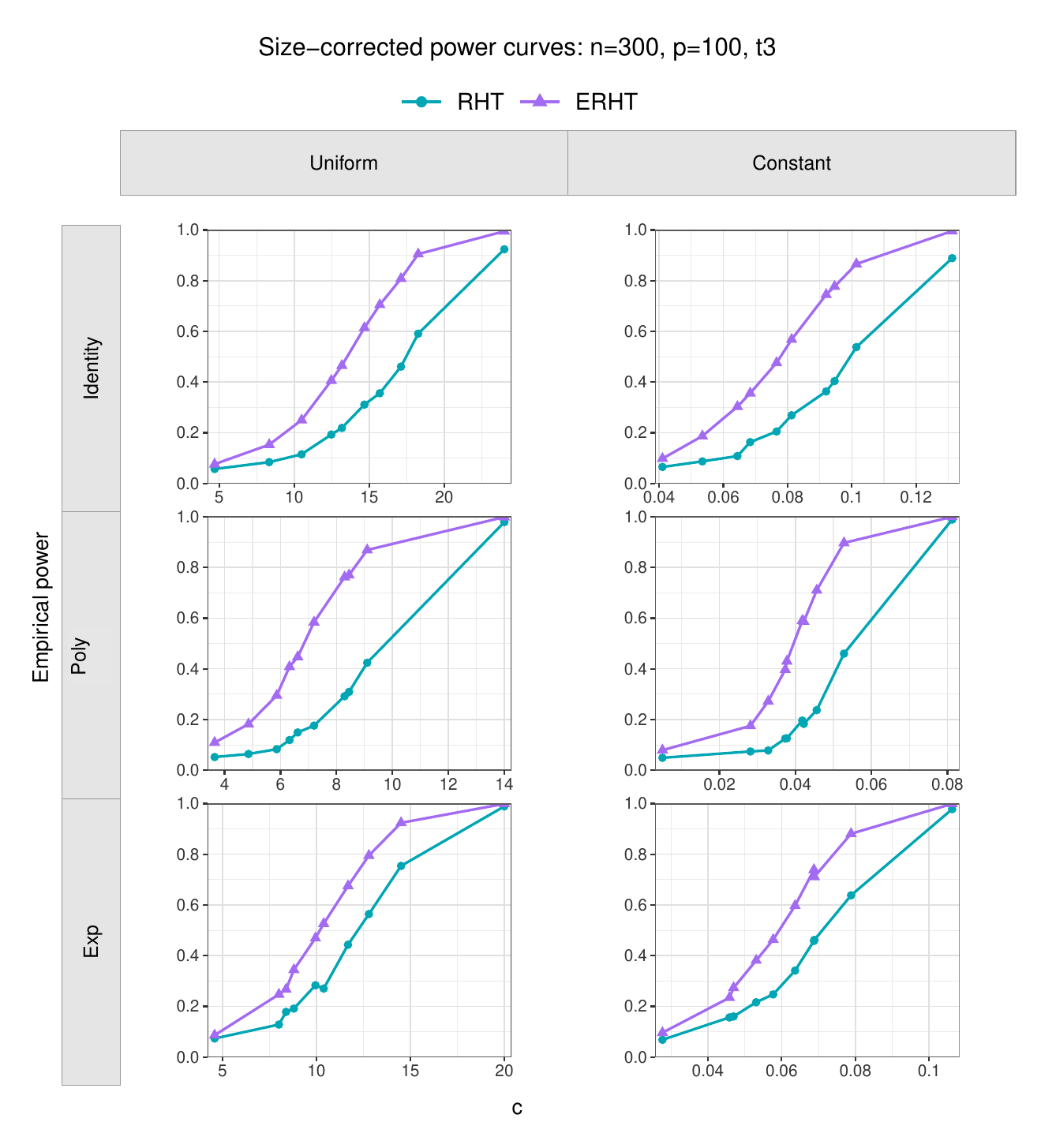}
  \caption{Size-corrected empirical power of the multiple-change
  global tests under variance-standardized multivariate \(t_3\) errors with
  \(n=300\) and \(p=100\).  Panel labels are the same as in
  Figure~\ref{fig:mc-power-normal}.}
  \label{fig:mc-power-t3}
\end{figure}

\begin{figure}[tbp]
  \centering
  \includegraphics[width=.95\textwidth]{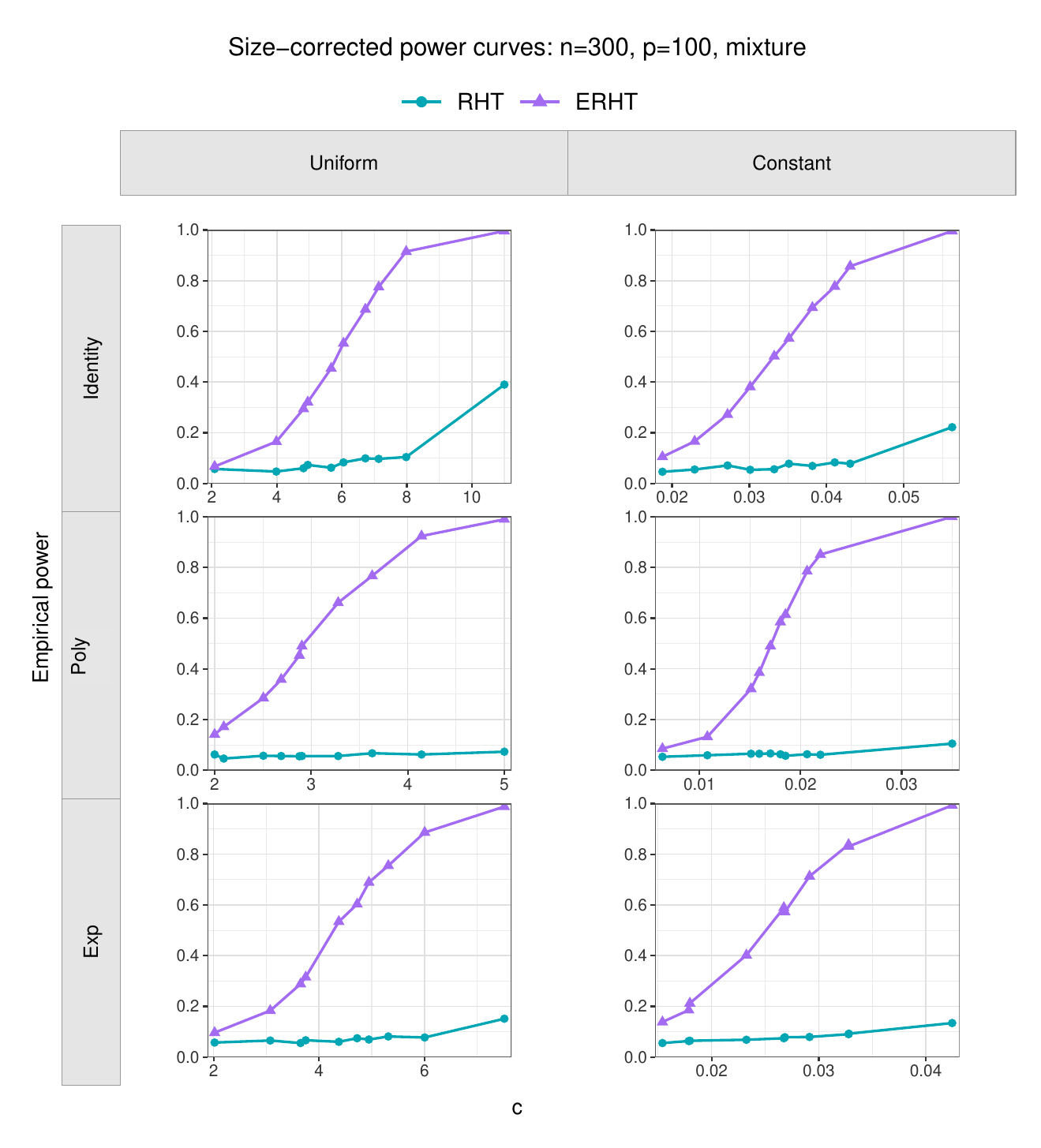}
  \caption{Size-corrected empirical power of the multiple-change
  global tests under variance-standardized Gaussian-mixture errors with
  \(n=300\) and \(p=100\).  Panel labels are the same as in
  Figure~\ref{fig:mc-power-normal}.}
  \label{fig:mc-power-mixture}
\end{figure}

Figures~\ref{fig:mc-power-normal}--\ref{fig:mc-power-mixture} show that the relative behavior of the two global scans is consistent with the single-change evidence.  Under Gaussian errors, RHT-MC and ERHT-MC are very close in the identity design, while RHT-MC can be slightly more powerful in the nonidentity shape designs because sample-covariance normalization is efficient under light tails.  Under variance-standardized \(t_3\) errors, ERHT-MC dominates RHT-MC in all six panels, with especially large gains in the Poly design and under the Constant shift.  Under Gaussian-mixture errors, the contrast is sharper: RHT-MC remains close to the size level over much of the displayed signal-strength range, whereas ERHT-MC increases rapidly to high power.  These results indicate that the spatial-median contrast and normalization by a ridge-regularized inverse of the SSCM remain beneficial for the full adjacent-triple multiple-change scan, not only for the single-change path.

\subsection{Change-point localization}
\label{sec:sim-localization}

We first examine single-change localization accuracy.  The true change location is again \(\tau_\star=1/2\).  
For each method and replication \(r=1,\ldots,1000\), let
\(\widehat k^{(r)}\) denote the estimated integer change-point location and
let \(k_\star=\lfloor n\tau_\star\rfloor\) denote the true integer
change-point location.  We report
\[
  100\times\frac{1}{1000}\sum_{r=1}^{1000}
  \frac{|\widehat k^{(r)}-k_\star|}{n},
\]
so smaller values indicate more accurate localization.
As in the global-testing experiments, a preliminary ERHT experiment is used to select \(c_{\rm sig}\) so that the settings span a comparable range of localization difficulty; the selected value is then used for all competing methods.

\begin{table}[H]
  \centering
  \small
  \caption{Single-change localization accuracy for \(n=200\) and \(p=100\).  Entries are \(100\times \mathrm{MAE}/n\), based on 1000 replications; smaller values are better.}
  \label{tab:location-single}
  \begin{tabular}{lllrrrr}
    \toprule
    Error & Shape & Shift & ERHT & RHT & DMS0 & SSCPD0 \\
    \midrule
    Normal & Identity & Uniform & 5.23 & 4.13 & 3.10 & 2.03 \\
    Normal & Identity & Constant & 4.46 & 3.57 & 2.88 & 1.85 \\
    Normal & Poly & Uniform & 1.67 & 1.42 & 17.08 & 8.40 \\
    Normal & Poly & Constant & 1.40 & 1.08 & 16.78 & 7.80 \\
    Normal & Exp & Uniform & 3.90 & 3.14 & 9.67 & 4.24 \\
    Normal & Exp & Constant & 5.00 & 3.95 & 10.26 & 5.28 \\
    \addlinespace
    $t_3$ & Identity & Uniform & 3.86 & 5.87 & 22.42 & 1.58 \\
    $t_3$ & Identity & Constant & 5.11 & 7.93 & 24.43 & 2.02 \\
    $t_3$ & Poly & Uniform & 2.57 & 4.94 & 27.74 & 8.12 \\
    $t_3$ & Poly & Constant & 2.25 & 5.06 & 29.08 & 8.58 \\
    $t_3$ & Exp & Uniform & 7.03 & 8.73 & 26.98 & 5.82 \\
    $t_3$ & Exp & Constant & 7.10 & 8.93 & 28.03 & 6.32 \\
    \addlinespace
    Mixture & Identity & Uniform & 3.87 & 10.13 & 32.54 & 1.59 \\
    Mixture & Identity & Constant & 3.78 & 10.93 & 33.56 & 1.65 \\
    Mixture & Poly & Uniform & 2.21 & 19.36 & 34.78 & 7.96 \\
    Mixture & Poly & Constant & 2.36 & 19.44 & 35.93 & 8.56 \\
    Mixture & Exp & Uniform & 5.18 & 14.89 & 34.74 & 5.25 \\
    Mixture & Exp & Constant & 5.47 & 14.17 & 35.04 & 5.07 \\
    \bottomrule
  \end{tabular}
\end{table}

Table~\ref{tab:location-single} shows that ERHT gives small localization errors in the nonidentity shape designs and is especially robust relative to the covariance-based and mean-based competitors under heavy-tailed errors.  Under Gaussian errors, covariance-based RHT can be slightly more accurate because the sample covariance uses magnitude information that is valid in this light-tailed case.  Under \(t_3\) and Gaussian-mixture errors, RHT and DMS0 deteriorate substantially, whereas ERHT remains accurate in the polynomial-decay design and competitive in the exponential-decay design.  SSCPD0 can localize well in the identity design and occasionally in the exponential-decay design, but it is less competitive in the polynomial-decay design because it does not use regularized inverse-shape normalization.

We finally examine multiple-change localization accuracy.  We set \(n=200\), \(p=100\),
\(q=2\), and place the two changes at
\(k_{1,\mathrm{loc}}=\lfloor 0.3n\rfloor\) and 
\(k_{2,\mathrm{loc}}=\lfloor 0.7n\rfloor\).
For each \(\nu_{\rm sh}\in\{\mathrm{unif},\mathrm{const}\}\), the data follow the epidemic form in \eqref{eq:epidemic}, with \(k_{1,\mathrm{test}}\) and \(k_{2,\mathrm{test}}\) replaced by \(k_{1,\mathrm{loc}}\) and \(k_{2,\mathrm{loc}}\), respectively.
The competing methods are the proposed ERHT-WBS estimator, the projection-based estimator \citep[INSPECT,][]{wangsamworth2018}, sparsified binary segmentation as implemented in HDBINSEG \citep{cho2015sbs}, and the energy-distance method ECP \citep{matteson2014nonparametric}.  For this localization-only comparison, every procedure is configured to return exactly \(q=2\) estimated locations.  Again, \(c_{\rm sig}\) is selected separately for each error--shape--shift setting through preliminary ERHT runs over candidate signal strengths so that the settings have comparable moderate-to-high signal strength; after selection, the same \(c_{\rm sig}\) is used for all methods.
In this experiment, the multivariate \(t_3\) and Gaussian-mixture errors are variance-standardized before multiplication by the shape matrix, so the three cases have comparable marginal noise scales.
Let \(\widehat{\mathcal K}^{(r)}\) be the estimated change set in replication \(r\).  We report
\[
  100\times \frac{1}{1000}\sum_{r=1}^{1000}
  \frac{1}{q}\sum_{j=1}^{q}
  \frac{\min_{\widehat k\in\widehat{\mathcal K}^{(r)}}|\widehat k-k_{j,\mathrm{loc}}|}{n}.
\]
This criterion is the mean distance from each true change-point to its nearest estimate; smaller values indicate more accurate multiple-change localization.

\begin{table}[tbp]
  \centering
  \small
  \setlength{\tabcolsep}{5pt}
  \renewcommand{\arraystretch}{1.08}
  \caption{Multiple-change localization accuracy for \(n=200\), \(p=100\),
  with \(q=2\) supplied to every method.  Entries are the normalized mean
  nearest-estimate errors, multiplied by 100 and averaged over 1000 replications; smaller values
  are better.}
  \label{tab:location-multiple}
  \begin{tabular}{lllrrrr}
    \toprule
    Error & Shape & Shift & ERHT-WBS & INSPECT & HDBINSEG & ECP \\
    \midrule
    Normal & Identity & Uniform & 8.77 & 5.56 & 10.93 & 4.59 \\
    Normal & Identity & Constant & 8.42 & 6.05 & 10.61 & 4.20 \\
    Normal & Poly & Uniform & 10.55 & 14.54 & 11.68 & 12.42 \\
    Normal & Poly & Constant & 10.37 & 14.33 & 11.64 & 12.05 \\
    Normal & Exp & Uniform & 8.96 & 11.38 & 11.23 & 9.32 \\
    Normal & Exp & Constant & 9.32 & 12.37 & 10.96 & 10.35 \\
    \addlinespace
    \(t_3\) & Identity & Uniform & 7.80 & 24.83 & 12.88 & 9.99 \\
    \(t_3\) & Identity & Constant & 8.52 & 24.88 & 12.59 & 11.35 \\
    \(t_3\) & Poly & Uniform & 10.29 & 24.54 & 13.15 & 14.14 \\
    \(t_3\) & Poly & Constant & 10.58 & 25.02 & 13.40 & 14.03 \\
    \(t_3\) & Exp & Uniform & 9.92 & 24.66 & 12.60 & 13.64 \\
    \(t_3\) & Exp & Constant & 9.86 & 24.29 & 13.63 & 13.69 \\
    \addlinespace
    Mixture & Identity & Uniform & 7.89 & 17.94 & 13.56 & 14.70 \\
    Mixture & Identity & Constant & 7.96 & 17.54 & 14.30 & 14.67 \\
    Mixture & Poly & Uniform & 10.05 & 18.57 & 13.15 & 14.99 \\
    Mixture & Poly & Constant & 10.40 & 18.59 & 13.21 & 14.85 \\
    Mixture & Exp & Uniform & 9.27 & 18.39 & 13.10 & 15.01 \\
    Mixture & Exp & Constant & 9.50 & 17.88 & 13.12 & 15.01 \\
    \bottomrule
  \end{tabular}
\end{table}

Table~\ref{tab:location-multiple} shows that the relative performance depends strongly on both the radial distribution and the shape structure.  Under Gaussian errors with the identity shape, ECP and INSPECT have the smallest errors, while ERHT-WBS remains more accurate than HDBINSEG\@.  Under the rotated polynomial- and exponential-decay shapes, INSPECT is less accurate and ERHT-WBS becomes the best or nearly best method.  Under \(t_3\) errors, ERHT-WBS has the smallest localization error in every shape and shift configuration; INSPECT is particularly sensitive to the heavy-tailed radial component, and both HDBINSEG and ECP are less accurate than ERHT-WBS\@.  Under Gaussian-mixture errors, ERHT-WBS again gives the smallest errors across all settings and remains stable as the shape changes.  These results support the use of the proposed WBS-ERHT estimator when non-Gaussian radial variation or nontrivial cross-sectional dependence occurs.

\section{Real data analysis}\label{sec:realdata}

\subsection{Data and preprocessing}\label{sec:realdata-data}

We apply the proposed procedures to the monthly value-weighted returns of the Fama--French 49 Industry Portfolios from the Kenneth R. French Data Library \citep{frenchdata}.  The 49 coordinates represent broad U.S. industry portfolios, including consumer goods, energy, utilities, finance, software, and semiconductors.  This panel is well suited to the present setting because industry returns exhibit substantial contemporaneous dependence and occasional extreme monthly movements, while aggregate shocks can induce heterogeneous shifts across sectors.  After converting the original missing-value codes to missing observations and retaining months with complete returns for all industries, the sample runs from July 1969 through May 2026.  Thus, the analysis contains \(n=683\) monthly observations and \(p=49\) portfolios; returns are recorded in percentage points.

Because the inferential target is a change in the multivariate location vector, we do not demean the individual return series.  We instead equalize their marginal scales.  For industry \(j\), let \(\bar X_j\) and \(s_j\) be its full-sample mean and sample standard deviation, respectively, and transform
\[
  \widetilde X_{i,j}
  =
  \bar X_j+\frac{X_{i,j}-\bar X_j}{s_j},
  \qquad i=1,\ldots,n,\quad j=1,\ldots,p.
\]
This coordinatewise transformation gives every series unit sample variance while preserving its original full-sample mean.  The time-invariant additive mean vector has no effect on either segment contrasts or centered spatial signs, whereas the scale adjustment reduces the influence of intrinsically volatile industries on the multivariate change-point statistic.

\subsection{Global and rolling-window testing}\label{sec:realdata-testing}

We first apply the four single-change procedures used in Section~\ref{sec:simulation}: ERHT-CC, RHT-CC, DMS0, and SSCPD0.  For ERHT, the regularization ratios satisfy \(\rho/\gamma\in\{0.05,0.10,\ldots,0.50\}\),
and RHT uses the analogous grid for its covariance-regularization parameter.  At each regularization-parameter value, the monthly return vectors are randomly permuted as whole vectors, so contemporaneous cross-sectional dependence is preserved while the time ordering is removed.  The resulting parameter-specific permutation p-values are combined by the equal-weight analytic Cauchy transformation.  DMS0 and SSCPD0 are calibrated directly by the same time-order permutation scheme, and the full-sample calculation uses \(B=1000\) permutations.  This calibration treats the monthly vectors as exchangeable under the null and therefore does not account for possible serial dependence.  Table~\ref{tab:ff49-full} reports the resulting p-values.

\begin{table}[H]
  \centering
  \small
  \caption{Full-sample p-values for the Fama--French 49 industry
  portfolios.  ERHT-CC and RHT-CC use time-permutation p-values computed separately at each regularization-parameter value,
  followed by analytic Cauchy aggregation; DMS0 and SSCPD0 use direct
  time-permutation p-values.}
  \label{tab:ff49-full}
  \begin{tabular}{lcc}
    \toprule
    Method & p-value & Reject at the 5\% level \\
    \midrule
    ERHT-CC & 0.0034 & Yes \\
    RHT-CC & 0.0020 & Yes \\
    DMS0 & 0.5604 & No \\
    SSCPD0 & 0.2298 & No \\
    \bottomrule
  \end{tabular}
\end{table}

Under the time-permutation calibration, both regularized procedures reject stability of the multivariate location vector, whereas DMS0 and SSCPD0 do not.  The contrast is consistent with dependence-adjusted quadratic evidence being important for this data set.  Since the full sample covers more than five decades and may contain several breaks, we supplement the global test with a rolling-window analysis that measures how persistently the evidence appears across subperiods.

We use window lengths of 30, 35, 40, 45 and 50 years and move each window forward by one month.  Within a window of length \(m\), the ERHT and RHT grids are scaled by the window-specific aspect ratio \(p/m\), and all four procedures are calibrated with \(B=200\) time-order permutations.  
Table~\ref{tab:ff49-rolling} gives the proportion of windows rejected at the 5\% level, with the corresponding number of rejections in parentheses.
ERHT-CC has the largest rejection rate for every window length, and its rate rises from 0.392 for 30-year windows to 0.690 for 50-year windows.  Thus, the evidence detected by ERHT is not confined to a small collection of short subperiods; it remains visible in a large fraction of long windows.  RHT-CC rejects frequently for the shortest windows but its rejection rate decreases sharply as the window length grows, while DMS0 and SSCPD0 rarely reject and have no rejections in the longest-window settings.  Because adjacent rolling windows overlap heavily and longer windows may contain more than one break, these proportions should be interpreted as descriptive stability measures rather than independent binomial rejection frequencies.  Nevertheless, the comparison consistently favors the robust regularized statistic.

\begin{table}[H]
  \centering
  \small
  \setlength{\tabcolsep}{4.5pt}
  \renewcommand{\arraystretch}{1.08}
  \caption{Rolling-window rejection rates for the Fama--French 49 industry portfolios.  The number of rejected windows is reported in parentheses.}
  \label{tab:ff49-rolling}
  \begin{tabular}{lrrrrrr}
    \toprule
    Window length & Months & Windows & ERHT-CC & RHT-CC & DMS0 & SSCPD0 \\
    \midrule
    30 years & 360 & 324 & 0.392 (127) & 0.312 (101) & 0.028 (9) & 0.025 (8) \\
    35 years & 420 & 264 & 0.496 (131) & 0.277 (73) & 0.038 (10) & 0.000 (0) \\
    40 years & 480 & 204 & 0.539 (110) & 0.113 (23) & 0.000 (0) & 0.000 (0) \\
    45 years & 540 & 144 & 0.590 (85) & 0.056 (8) & 0.000 (0) & 0.000 (0) \\
    50 years & 600 & 84 & 0.690 (58) & 0.095 (8) & 0.000 (0) & 0.000 (0) \\
    \bottomrule
  \end{tabular}
\end{table}

\subsection{WBS-ERHT segmentation}\label{sec:realdata-wbs}

We next estimate multiple changes with the WBS-ERHT procedure in Algorithm~\ref{alg:wbs-erht}.  The implementation uses \(M_n=200\) random WBS intervals, threshold \(2.5\), and the same grid of ridge-regularization ratios as in the full-sample analysis.  Maximization over the corresponding parameter values and local refinement yield four estimated boundaries, reported in chronological order in Table~\ref{tab:ff49-wbs}.

\begin{table}[H]
  \centering
  \small
  \renewcommand{\arraystretch}{1.12}
  \caption{WBS-ERHT change-point estimates for the Fama--French 49 industry portfolios, using threshold 2.5.}
  \label{tab:ff49-wbs}
  \begin{tabular}{@{}clp{9.6cm}@{}}
    \toprule
    Break & Estimated boundary & Associated market environment \\
    \midrule
    1 & Dec. 1974/Jan. 1975 & Late stage of the 1973--1975 recession, the first oil shock, and the accompanying stagflationary reallocation across energy-sensitive and defensive industries. \\
    2 & Jan. 1993/Feb. 1993 & Consolidation of the recovery from the 1990--1991 recession and transition into the broad U.S. expansion of the 1990s. \\
    3 & Sep. 2012/Oct. 2012 & Improvement in Euro-area sovereign-risk conditions and the policy environment surrounding the Federal Reserve's third round of quantitative easing. \\
    4 & Oct. 2020/Nov. 2020 & COVID-19 vaccine announcements, the U.S. election, and the associated rotation among reopening-sensitive, technology, financial, and defensive industries. \\
    \bottomrule
  \end{tabular}
\end{table}

The first estimated break lies in the late phase of the 1973--1975 recession and the aftermath of the first oil shock, when inflation and energy-price movements produced pronounced differences across industries.  The second boundary occurs in early 1993, during the transition from the post-recession recovery to the sustained expansion of the 1990s.  The third boundary is located in September--October 2012, a period of changing Euro-area risk perceptions and major unconventional monetary-policy announcements.  The final boundary, in October--November 2020, aligns with a sharp sectoral rotation around vaccine news and the U.S. election.  These event associations are descriptive and are not intended as causal identification of the estimated breaks.

Taken together, the full-sample rejection, the rolling-window evidence, and the WBS estimates provide a coherent empirical picture.  The location structure of the 49 industry returns is not stable over the full sample, and the proposed ERHT procedure produces substantially more persistent evidence than DMS0 and SSCPD0 and, for the longer windows, RHT-CC\@.  The multiple-change analysis further suggests that the instability is concentrated around a small number of economically recognizable periods rather than being attributable to a single isolated boundary.

\FloatBarrier

\section{Discussion}\label{sec:conclusion}

We have developed an elliptical regularized Hotelling procedure for high-dimensional location change-point testing.  The statistic uses segment spatial medians for the local contrast and a ridge-regularized inverse of the pooled centered spatial-sign covariance matrix for shape normalization.  Its feasible studentization uses inverse-distance Jacobian weights and companion resolvent quantities, while its null calibration uses the corresponding Gaussian-supremum limits.  The resulting scan statistic admits pointwise and process limits, joint convergence over a finite grid of ridge-regularization parameters, and local power and localization guarantees.  We also develop a WBS multiple-change extension under the same elliptical model and explicit spacing, signal, and tuning conditions.

The theory distinguishes calibration by the exact joint limit from the closed-form analytic Cauchy transformation and does not apply covariance-based RHT limits to nondegenerate elliptical radial mixtures without justification.  In the Fama--French industry application, the robust regularized procedure gives the most persistent evidence across long rolling windows, and the WBS implementation identifies four breaks associated with major episodes of sectoral reallocation.  The permutation analysis treats observations as exchangeable, so extending the theory and calibration to serially dependent observations is particularly important.  Other natural directions include sparsity-adaptive maximum-type combinations and robust covariance or graphical change-point inference.

\FloatBarrier
\clearpage
\phantomsection
\pdfbookmark[0]{Supplementary Material}{supplementary-material}
\begin{center}
  {\LARGE\bfseries Supplementary Material for ``Elliptical Regularized Hotelling Tests for High-Dimensional Change-Point Detection''\par}
  \vspace{1.5em}
  {\large Fengyi Song, Mengtao Wen and Long Feng\par}
  \vspace{0.35em}
  {\normalsize School of Statistics and Data Science, Nankai University\par}
\end{center}
\vspace{1.5em}

\setcounter{section}{0}
\setcounter{subsection}{0}
\setcounter{subsubsection}{0}
\setcounter{equation}{0}
\setcounter{table}{0}
\setcounter{figure}{0}
\setcounter{lemma}{0}
\setcounter{corollary}{0}
\renewcommand{\thesection}{S\arabic{section}}
\renewcommand{\theequation}{S\arabic{equation}}
\renewcommand{\thetable}{S\arabic{table}}
\renewcommand{\thefigure}{S\arabic{figure}}
\renewcommand{\thelemma}{\arabic{lemma}}
\renewcommand{\thecorollary}{\thesection.\arabic{corollary}}
\makeatletter
\providecommand*{\theHsection}{}
\providecommand*{\theHsubsection}{}
\providecommand*{\theHsubsubsection}{}
\providecommand*{\theHequation}{}
\providecommand*{\theHtable}{}
\providecommand*{\theHfigure}{}
\providecommand*{\theHlemma}{}
\providecommand*{\theHcorollary}{}
\renewcommand*{\theHsection}{supp.\arabic{section}}
\renewcommand*{\theHsubsection}{supp.\arabic{section}.\arabic{subsection}}
\renewcommand*{\theHsubsubsection}{supp.\arabic{section}.\arabic{subsection}.\arabic{subsubsection}}
\renewcommand*{\theHequation}{supp.\arabic{equation}}
\renewcommand*{\theHtable}{supp.\arabic{table}}
\renewcommand*{\theHfigure}{supp.\arabic{figure}}
\renewcommand*{\theHlemma}{supp.\arabic{lemma}}
\renewcommand*{\theHcorollary}{supp.\arabic{section}.\arabic{corollary}}
\makeatother

\section{Proofs and auxiliary results}\label{app:proofs}

\subsection{Verification of the radial conditions for Gaussian, \texorpdfstring{$t_\nu$}{t-nu}, and finite scale-mixture normal distributions}\label{app:assumption1-examples}

We verify the radial part of Assumption~\ref{ass:elliptical} for distribution classes, rather than only for the particular numerical choices used in Section~\ref{sec:simulation}.  The shape-matrix conditions are imposed separately in Assumption~\ref{ass:spectrum}.  The issue here is whether the error vector admits the representation
\begin{equation*}
  \bfvarepsilon=\sqrt p\,R\bfOmega_p^{1/2}\frac{\bfG}{\norm{\bfG}},
  \qquad R\perp \bfG,
\end{equation*}
and whether \(\xi=R^{-1}\) satisfies
\begin{equation*}
  \E\xi\to\zeta_{-1}\in(0,\infty),\qquad
  \limsup_{n\to\infty}\E\xi^{4+\eta}<\infty,
  \qquad
  \Pbb\left\{\max_{1\le i\le n}\xi_i>(\log n)^{c_0}\right\}\to0
\end{equation*}
for at least one fixed exponent \(c_0>0\), as required in
Assumption~\ref{ass:elliptical}.  We verify this original formulation
directly; the larger final envelope \(\ell_n=(\log n)^{c_\ell}\) used in
the subsequent proofs then follows by taking \(c_\ell\ge c_0\).
Throughout this verification,
\(p/n\to\gamma\in(0,\infty)\) is used.

Let \(\bfG_0\sim N(\mathbf 0,\bfI_p)\), \(\chi_p=\norm{\bfG_0}\), and \(\bfU=\bfG_0/\norm{\bfG_0}\).  The polar decomposition of the multivariate normal distribution implies that \(\bfU\) is uniform on the unit sphere, \(\chi_p\) is independent of \(\bfU\), and \(\bfU\) has the same distribution as \(\bfG/\norm{\bfG}\).  Hence any \(\bfZ\sim N_p(\mathbf 0,\bfOmega_p)\) has the representation
\begin{equation}
\label{eq:normal-polar-verify}
  \bfZ=\chi_p\bfOmega_p^{1/2}\bfU
       =\sqrt p\,R_G\bfOmega_p^{1/2}\bfU,
  \qquad R_G=\frac{\chi_p}{\sqrt p}.
\end{equation}
Put
\begin{equation*}
  C_p=R_G^{-1}=\frac{\sqrt p}{\chi_p}.
\end{equation*}
For every fixed \(r_{\rm mom}>0\) and every \(p>r_{\rm mom}\),
\begin{equation*}
  \E C_p^{r_{\rm mom}}
  =p^{r_{\rm mom}/2}\E(\chi_p^{-r_{\rm mom}})
  =p^{r_{\rm mom}/2}2^{-r_{\rm mom}/2}
    \frac{\Gamma\{(p-r_{\rm mom})/2\}}{\Gamma(p/2)}
  =1+O(p^{-1}),
\end{equation*}
where the last equality follows from Stirling's formula.  Consequently, for every fixed \(r_{\rm mom}>0\),
\begin{equation}
\label{eq:Cp-fixed-moment-verify}
  \E C_p^{r_{\rm mom}}\to1,
  \qquad
  \sup_{n\ge n_0}\E C_p^{r_{\rm mom}}<\infty .
\end{equation}
Moreover, for \(x>1\),
\[
  \Pbb(C_p>x)=\Pbb(\chi_p^2<px^{-2}).
\]
The lower-tail Chernoff bound for \(\chi_p^2\) gives, for \(0<u<1\),
\begin{equation*}
  \Pbb(\chi_p^2\le pu)
  \le \exp\left[-\frac p2\{u-1-\log u\}\right].
\end{equation*}
Taking \(u=x^{-2}\), we obtain, for all sufficiently large \(x\),
\begin{equation}
\label{eq:Cp-tail}
  \Pbb(C_p>x)
  \le \exp\{-c p\log x\}
\end{equation}
for a constant \(c>0\).  Since \(p/n\to\gamma\in(0,\infty)\), \eqref{eq:Cp-tail} implies
\begin{equation}
\label{eq:Cp-max}
  n\Pbb\{C_p>b(\log n)^a\}\to0
\end{equation}
for every pair of fixed constants \(a>0\) and \(b>0\).  This bound is the common ingredient in the following three cases.

For the Gaussian distribution, \(\bfvarepsilon\sim N_p(\mathbf 0,\bfOmega_p)\), \eqref{eq:normal-polar-verify} gives
\[
  R=R_G=\frac{\chi_p}{\sqrt p},
  \qquad
  \xi=R^{-1}=C_p.
\]
Therefore \eqref{eq:Cp-fixed-moment-verify} gives \(\E\xi\to1\) and \(\limsup_{n\to\infty}\E\xi^{4+\eta}<\infty\) for every fixed \(\eta>0\), while, for every fixed \(c_0>0\), \eqref{eq:Cp-max} gives
\[
  \Pbb\left\{\max_{1\le i\le n}\xi_i>(\log n)^{c_0}\right\}
  \le n\Pbb\{C_p>(\log n)^{c_0}\}\to0.
\]
Thus Gaussian errors satisfy the radial part of
Assumption~\ref{ass:elliptical} with \(\zeta_{-1}=1\).

Next consider the general multivariate \(t_\nu\) family with fixed degrees of freedom \(\nu>0\).  Let \(S_\nu\sim\chi_\nu^2\) be independent of \((\chi_p,\bfU)\).  Allowing an arbitrary fixed scale normalization \(c_\nu\in(0,\infty)\), write
\begin{equation}
\label{eq:t-general-model}
  \bfvarepsilon
  =c_\nu\frac{\bfZ}{(S_\nu/\nu)^{1/2}} .
\end{equation}
The usual multivariate \(t_\nu\) distribution corresponds to \(c_\nu=1\); the covariance-standardized version, when \(\nu>2\), corresponds to \(c_\nu=\{(\nu-2)/\nu\}^{1/2}\).  Combining \eqref{eq:normal-polar-verify} and \eqref{eq:t-general-model},
\[
  \bfvarepsilon
  =\sqrt p\,R_{t,\nu}\bfOmega_p^{1/2}\bfU,
  \qquad
  R_{t,\nu}=c_\nu\frac{\chi_p}{\sqrt p}(S_\nu/\nu)^{-1/2}.
\]
Hence
\begin{equation*}
  \xi=R_{t,\nu}^{-1}=c_\nu^{-1}C_pD_\nu,
  \qquad
  D_\nu=(S_\nu/\nu)^{1/2},
\end{equation*}
where \(C_p\) and \(D_\nu\) are independent.  Since \(S_\nu\) has finite positive moments of all fixed orders, for every fixed \(r_{\rm mom}>0\),
\begin{equation*}
  \E\xi^{r_{\rm mom}}
  =c_\nu^{-r_{\rm mom}}\E C_p^{r_{\rm mom}}\,\E D_\nu^{r_{\rm mom}}
  \to c_\nu^{-r_{\rm mom}}\E D_\nu^{r_{\rm mom}}<\infty .
\end{equation*}
In particular,
\begin{equation*}
  \E\xi
  \to c_\nu^{-1}\E(S_\nu/\nu)^{1/2}
  =c_\nu^{-1}\nu^{-1/2}2^{1/2}
    \frac{\Gamma\{(\nu+1)/2\}}{\Gamma(\nu/2)}
  =:\zeta_{-1,t}(\nu)\in(0,\infty),
\end{equation*}
and \(\limsup_{n\to\infty}\E\xi^{4+\eta}<\infty\) for any fixed \(\eta>0\).  Fix any \(c_0>1\) and put \(L_n=(\log n)^{c_0}\).  For the maximum condition, using \(ab>L_n\Rightarrow a>L_n^{1/2}\) or \(b>L_n^{1/2}\),
\begin{align}
  \Pbb\left(\max_{1\le i\le n}\xi_i>L_n\right)
  &\le n\Pbb(c_\nu^{-1}C_pD_\nu>L_n)\notag\\
  &\le n\Pbb(C_p>c_\nu L_n^{1/2})
     +n\Pbb(D_\nu>L_n^{1/2}). \label{eq:t-max-split}
\end{align}
The first term in \eqref{eq:t-max-split} tends to zero by \eqref{eq:Cp-max}.  For the second term,
\[
  \Pbb(D_\nu>L_n^{1/2})
  =\Pbb(S_\nu>\nu L_n)
  \le C_\nu L_n^{\nu/2-1}\exp(-\nu L_n/4)
\]
for all large \(n\), and therefore
\[
  n\Pbb(D_\nu>L_n^{1/2})\to0
\]
because \(c_0>1\).  Thus every fixed-\(\nu\) multivariate \(t_\nu\)
error distribution satisfies the radial part of
Assumption~\ref{ass:elliptical}.  The case \(\nu=3\) used in the simulations
is obtained by setting \(\nu=3\) and \(c_\nu=1\).

Finally consider a finite Gaussian scale mixture with the same shape matrix \(\bfOmega_p\).  This common-shape condition is essential: an arbitrary mixture of normal distributions with non-proportional component covariance matrices is generally not elliptically symmetric and is not covered by Assumption~\ref{ass:elliptical} without additional assumptions.  The scale-mixture class is
\begin{equation*}
  \bfvarepsilon=S_{\rm mix}\bfZ,
  \qquad
  \Pbb(S_{\rm mix}=a_\ell)=p_\ell,
  \quad \ell=1,\ldots,L,
\end{equation*}
where \(L<\infty\), \(p_\ell>0\), \(\sum_{\ell=1}^Lp_\ell=1\), and the fixed scale constants satisfy
\begin{equation}
\label{eq:finite-mixture-scale-condition}
  0<a_-\le \min_{1\le\ell\le L}a_\ell
  \le \max_{1\le\ell\le L}a_\ell\le a_+<\infty .
\end{equation}
Here \(S_{\rm mix}\) is independent of \((\chi_p,\bfU)\).  Then
\[
  \bfvarepsilon
  =S_{\rm mix}\chi_p\bfOmega_p^{1/2}\bfU
  =\sqrt p\,R_{\rm mix}\bfOmega_p^{1/2}\bfU,
  \qquad
  R_{\rm mix}=S_{\rm mix}\frac{\chi_p}{\sqrt p},
\]
and
\begin{equation*}
  \xi=R_{\rm mix}^{-1}=S_{\rm mix}^{-1}C_p.
\end{equation*}
For every fixed \(r_{\rm mom}>0\),
\begin{equation*}
  \E\xi^{r_{\rm mom}}=\E C_p^{r_{\rm mom}}\,\E S_{\rm mix}^{-r_{\rm mom}}
  \to \sum_{\ell=1}^Lp_\ell a_\ell^{-r_{\rm mom}}<\infty .
\end{equation*}
In particular,
\begin{equation*}
  \E\xi\to \E S_{\rm mix}^{-1}=\sum_{\ell=1}^Lp_\ell a_\ell^{-1}
  =:\zeta_{-1,{\rm mix}}\in(0,\infty),
\end{equation*}
and \(\limsup_{n\to\infty}\E\xi^{4+\eta}<\infty\).  Since \(S_{\rm mix}^{-1}\le a_-^{-1}\), for every fixed \(c_0>0\),
\[
  \Pbb\left\{\max_{1\le i\le n}\xi_i>(\log n)^{c_0}\right\}
  \le n\Pbb\{C_p>a_-(\log n)^{c_0}\}\to0
\]
by \eqref{eq:Cp-max}.  Hence every finite Gaussian scale mixture satisfying
\eqref{eq:finite-mixture-scale-condition} satisfies the radial part of
Assumption~\ref{ass:elliptical}.  The two-component mixture used in
Section~\ref{sec:simulation}, with \(\Pbb(S_{\rm mix}=1)=0.8\) and
\(\Pbb(S_{\rm mix}=10)=0.2\), is a special case.

Thus Gaussian, every fixed-degree multivariate \(t_\nu\), and every finite common-shape Gaussian scale mixture satisfying \eqref{eq:finite-mixture-scale-condition} meet the radial assumptions used by the ERHT null and local-shift theory.

\subsection{Technical lemmas}\label{app:auxiliary}
For bookkeeping, choose once an intermediate exponent \(c_*\ge c_0\) large enough
for every truncation and polynomial scan union, put
\(\ell_{0,n}=(\log n)^{c_*}\), and take the exponent in
\(\ell_n=(\log n)^{c_\ell}\) so that \(c_\ell\ge10c_*\).  The algebra below
keeps \(\ell_{0,n}\) whenever quantities are multiplied; final displayed
rates may be enlarged to \(\ell_n\).  This makes each polylogarithmic
enlargement explicit and avoids redefining a sequence inside a proof.
Throughout the appendix, define the error-oracle direction, scaled sign,
inverse-distance weight, and tangent projector by
\begin{equation*}
  \bfU_i=U(\bfvarepsilon_i),
  \qquad
  \bfY_i=\sqrt p\,\bfU_i,
  \qquad
  w_i=\frac{\sqrt p}{\norm{\bfvarepsilon_i}},
  \qquad
  \bfP_i=\bfI_p-\bfU_i\bfU_i^\top.
\end{equation*}
Under \(H_0\), \(\bfvarepsilon_i=\bfX_i-\bftheta\), so these definitions
coincide with the null-oracle quantities used in the scan statistic.  Under an
alternative, the same symbols always refer to the centered errors, not to the
shifted observations.  Accordingly, whenever a null auxiliary result is
invoked for a superscript-\((0)\) quantity under an alternative, it is applied
to the error sample \(\{\bfvarepsilon_i\}\); its distribution is exactly the
centered model in Assumption~\ref{ass:elliptical}, so no additional condition
is being used.
For every nonempty finite index set \(I\), define the centered-error spatial
median by
\begin{equation}
\label{eq:centered-error-spatial-median}
  \widehat\bftheta^{(0)}(I)
  \in\argmin_{\bfu\in\mathbb R^p}
       \sum_{i\in I}\norm{\bfvarepsilon_i-\bfu}.
\end{equation}
Because
\[
  w_i=R_i^{-1}\left\{\frac{p^{-1}\bfG_i^{\top}\bfG_i}{p^{-1}\bfG_i^{\top}\bfOmega_p\bfG_i}\right\}^{1/2},
\]
the absence of a uniform lower eigenvalue bound requires one extra concentration check.  Since \(\opnorm{\bfOmega_p}\le\omega_+\) and \(p^{-1}\tr(\bfOmega_p)=1\), the Hanson--Wright inequality for Gaussian quadratic forms gives
\[
  \Pbb\left(\left|p^{-1}\bfG_i^\top\bfOmega_p\bfG_i-1\right|>1/2\right)
  \le 2\exp(-c p),
  \qquad
  \Pbb\left(\left|p^{-1}\bfG_i^\top\bfG_i-1\right|>1/2\right)
  \le 2\exp(-c p).
\]
A union bound over \(i=1,\ldots,n\), together with \(p\asymp n\), shows that both quadratic-form events hold simultaneously with probability tending to one.  Combining this angular bound with Assumption~\ref{ass:elliptical} yields
\begin{equation}
\label{eq:w-bound}
  \max_{1\le i\le n} w_i\le C \ell_{0,n}
\end{equation}
on an event with probability tending to one.
For \(s\in\calS\), define the oracle segment derivative averages and oracle score-CUSUM weights
\begin{equation}
\label{eq:oracle-e-beta}
  e_{a,s}=\frac1{n_a(s)}\sum_{i\in I_a(s)}w_i,
  \qquad
  \beta_i(s)=\sqrt{N_s}
  \left\{
  \frac{\ind(i\in I_2(s))}{n_2(s)e_{2,s}}
  -\frac{\ind(i\in I_1(s))}{n_1(s)e_{1,s}}
  \right\}.
\end{equation}
For later comparison, define the ordinary adjacent-segment CUSUM weights
\begin{equation*}
  b_i^{\rm cus}(s)=\sqrt{N_s}
  \left\{
  \frac{\ind(i\in I_2(s))}{n_2(s)}
  -\frac{\ind(i\in I_1(s))}{n_1(s)}
  \right\},\qquad i\in J(s).
\end{equation*}
They satisfy, exactly,
\begin{equation*}
  \sum_{i\in J(s)}b_i^{\rm cus}(s)=0,\qquad
  \sum_{i\in J(s)}b_i^{\rm cus}(s)^2=1,\qquad
  \max_{i\in J(s)}|b_i^{\rm cus}(s)|\le C m_s^{-1/2}
\end{equation*}
uniformly over every trimmed global or local scan family considered below.
Let \(\bfY_s\) be the \(p\times m_s\) matrix with columns \(\bfY_i\), \(i\in J(s)\), and set
\begin{equation*}
  \bfR_s^0=\frac1{m_s}\sum_{i\in J(s)}\bfY_i\bfY_i^{\top},
  \qquad
  \bfQ_{\rho,s}^0=(\bfR_s^0+\rho\bfI_p)^{-1},
  \qquad
  \bfA_{\rho,s}^0=\frac1{m_s}\bfY_s^{\top}\bfQ_{\rho,s}^0\bfY_s.
\end{equation*}
Let \(A_{ij,\rho,s}^0\) denote the entries of \(\bfA_{\rho,s}^0\).  Put
\begin{equation*}
  \kappa_\rho^0(s)=\sum_{i\in J(s)}\beta_i(s)^2 A_{ii,\rho,s}^0,
  \qquad
  \sigma_\rho^{0,2}(s)=2m_s\sum_{\substack{i,j\in J(s)\\ i\ne j}}
  \beta_i(s)^2\beta_j(s)^2(A_{ij,\rho,s}^0)^2.
\end{equation*}
Throughout, \(\sigma_\rho^0(s)\) denotes the nonnegative square root of
\(\sigma_\rho^{0,2}(s)\).
Let \(m_*=\min_{s\in\calS}m_s\), and use the single polylogarithmic
envelope \(\ell_n\) defined at the beginning of this appendix.  Every bound
below is stated directly in terms of this envelope; no power of \(\ell_n\) is
silently redefined.  For the continuous single-change and multiple-change
scans, all uniform bounds are first proved on their natural \(n^{-1}\)-grids
and then transferred to the explicitly interpolated processes by the
tightness lemmas below.  For the grid version
\(\calS_{\rm mc}^*(\varepsilon)\), the same bounds are ordinary maxima over
the grid candidates.
For every pooled sample size \(m\asymp n\), define
\begin{equation}
\label{eq:generic-D-rho-m}
\begin{aligned}
 \gamma_m&=\frac{p}{m},&
 a_{\rho,m}&=1-\gamma_m+\gamma_m\rho m_{\rho,m},&
 m_{\rho,m}&=\int\frac{1}{a_{\rho,m}x+\rho}\,dH_p(x),\\
 \bfD_{\rho,m}&=(a_{\rho,m}\bfOmega_p+\rho\bfI_p)^{-1},
\end{aligned}
\end{equation}
where \(a_{\rho,m}\) is the unique positive solution.  Thus the full-pool
single-change scan has \(m=n\), whereas a local WBS pool retains its own
aspect ratio \(p/m\).

The finite-sample companion-resolvent constants are defined through a
Gaussian comparison.  Let \(\bfG_m\) be a \(p\times m\) matrix of independent
standard normal variables, and set
\[
 \bfR_m^G=m^{-1}\bfOmega_p^{1/2}\bfG_m\bfG_m^\top\bfOmega_p^{1/2},
 \qquad
 \bfA_{\rho,m}^G
 =m^{-1}\bfG_m^\top\bfOmega_p^{1/2}
   (\bfR_m^G+\rho\bfI_p)^{-1}\bfOmega_p^{1/2}\bfG_m.
\]
For two distinct column indices, put
\[
 \mathfrak c_{\rho,\rho',m}
 =m\,\E\!\left(A_{12,\rho,m}^G A_{12,\rho',m}^G\right),
 \qquad \rho,\rho'\in[\rho_0,\rho_1].
\]
Column exchangeability makes this definition independent of the chosen
pair.  With \(\mu_{w,p}=\E w_i\), define
\begin{equation}
\label{eq:finite-sigma-proxy}
 \sigma_{\rho,m}^{\circ2}
 =2\mu_{w,p}^{-4}\mathfrak c_{\rho,\rho,m}.
\end{equation}
Throughout, \(\sigma_{\rho,m}^{\circ}\) denotes the positive square root of
\eqref{eq:finite-sigma-proxy}.  This is a finite-\((p,m)\) deterministic
variance proxy, not a limiting quantity when \(m/n\) varies.  Its limit and
the corresponding cross-parameter correlations are established below.

\begin{lemma}[Ridge deterministic-equivalent stability and convergence]\label{lem:ridge-stability}
Under Assumption~\ref{ass:spectrum}, for every
\(\rho\in[\rho_0,\rho_1]\), the equation
\begin{equation*}
  a=1-\gamma_n+\gamma_n\rho\int\frac{1}{a x+\rho}\,dH_p(x)
\end{equation*}
has a unique positive solution \(a_{\rho,n}\).  Moreover, there exist
constants \(0<c_a<C_a<\infty\) and \(0<c_D<C_D<\infty\), independent of
\(n\) and \(\rho\), such that
\begin{equation}
\label{eq:a-D-stability}
  c_a\le a_{\rho,n}\le C_a,
  \qquad
  c_D\|\bfx\|^2\le
  \bfx^\top\bfD_{\rho,n}\bfx
  \le C_D\|\bfx\|^2
\end{equation}
for all deterministic \(\bfx\in\mathbb R^p\), where
\(\bfD_{\rho,n}=(a_{\rho,n}\bfOmega_p+\rho\bfI_p)^{-1}\).  The same
bounds hold uniformly after replacing
\(\gamma_n,a_{\rho,n},\bfD_{\rho,n}\) by
\(\gamma_m,a_{\rho,m},\bfD_{\rho,m}\) whenever \(m\asymp n\).

For \(c_{\rm pool}\in(0,\infty)\), let \(a_\rho(c_{\rm pool})\) be the unique positive solution of
\begin{equation*}
 a=1-\frac{\gamma}{c_{\rm pool}}
   +\frac{\gamma}{c_{\rm pool}}\rho\int\frac{1}{ax+\rho}\,dH(x).
\end{equation*}
On every compact \([\underline c_{\rm pool},\overline c_{\rm pool}]\subset(0,\infty)\),
\begin{equation}
\label{eq:a-rho-uniform-convergence}
 \sup_{\underline c_{\rm pool}n\le m\le \overline c_{\rm pool}n}\sup_{\rho\in[\rho_0,\rho_1]}
 \left|a_{\rho,m}-a_\rho(m/n)\right|\longrightarrow0,
\end{equation}
and \((c_{\rm pool},\rho)\mapsto a_\rho(c_{\rm pool})\) is continuous on
\([\underline c_{\rm pool},\overline c_{\rm pool}]\times[\rho_0,\rho_1]\).  In particular,
\(a_{\rho,n}\to a_\rho(1)\) uniformly in \(\rho\); below
\(a_\rho=a_\rho(1)\).

Finally, let \(m_n/n\to c_{\rm pool}\in(0,\infty)\), \(\rho_n\to\rho\), and let
\(\bfd_p\) be deterministic with \(\sup_p\|\bfd_p\|<\infty\).  If
\begin{equation*}
 G_{\bfd,p}=\sum_{j=1}^p
   (\bfp_{j,p}^\top\bfd_p)^2\delta_{\lambda_{j,p}}
 \Rightarrow G_{\bfd},
\end{equation*}
then
\begin{equation}
\label{eq:spectral-signal-de-limit}
 \bfd_p^\top\bfD_{\rho_n,m_n}\bfd_p
 \longrightarrow
 \int\frac{1}{a_\rho(c_{\rm pool})x+\rho}\,dG_{\bfd}(x).
\end{equation}
\end{lemma}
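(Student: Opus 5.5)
The plan is to treat the fixed-point equation as a scalar root-finding problem. Set
\[
F_n(a)=a-1+\gamma_n-\gamma_n\rho\int\frac{dH_p(x)}{ax+\rho}.
\]
Since \(\rho\int(ax+\rho)^{-1}dH_p=1-a\int x(ax+\rho)^{-1}dH_p\), the equation rewrites as
\[
a\Bigl\{1+\gamma_n\int\frac{x}{ax+\rho}\,dH_p(x)\Bigr\}=1 .
\]
The left side \(g_n(a)\) is continuous and strictly increasing on \([0,\infty)\). Its derivative is \(1+\gamma_n\int x\rho(ax+\rho)^{-2}dH_p>0\), with \(g_n(0)=0\) and \(g_n(\infty)=\infty\). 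This gives a unique positive root.

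For the two-sided bounds on the root:
\begin{itemize}
\item Since \(g_n(a)\ge a\), we get \(a_{\rho,n}\le1=:C_a\).
\item Since \(\int x(ax+\rho)^{-1}dH_p\le \int x\,dH_p/\rho=1/\rho\) (using \(p^{-1}\tr\bfOmega_p=1\)), we get \(a_{\rho,n}\ge(1+\gamma_n/\rho_0)^{-1}\), which is bounded below once \(\gamma_n\) is bounded.
\item The bounds on \(\bfD_{\rho,n}\) then follow from \(\rho_0\le a\lambda_j+\rho\le C_a\omega_++\rho_1\). This gives \(c_D=(C_a\omega_++\rho_1)^{-1}\) and \(C_D=\rho_0^{-1}\).
\item The same argument with \(\gamma_m=p/m\) bounded above works whenever \(m\asymp n\).
\end{itemize}

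For the uniform convergence \eqref{eq:a-rho-uniform-convergence}, define the limit function
\[
g(a;c,\rho)=a\Bigl\{1+(\gamma/c)\int x(ax+\rho)^{-1}dH(x)\Bigr\}.
\]
The approach is as follows.
\begin{itemize}
\item Because \(x\mapsto x/(ax+\rho)\) is bounded and continuous on \([0,\omega_+]\), and \(H_p\Rightarrow H\) with supports in the common compact \([0,\omega_+]\), the integrals converge.
\item This convergence is uniform over \((a,\rho)\) in the compact \([c_a,C_a]\times[\rho_0,\rho_1]\), since the family of integrands is equicontinuous and uniformly bounded.
\item Together with \(\sup_m|\gamma_m-\gamma/(m/n)|\to0\), this gives \(\sup|g_{m}(a;\rho)-g(a;m/n,\rho)|\to0\) uniformly.
\item Strict monotonicity with derivative at least \(1\) in \(a\) converts the uniform closeness of the functions into closeness of the roots. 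Indeed, \(|a_{\rho,m}-a_\rho(m/n)|\le\sup|g_m-g|\), because \(|g(a_1)-g(a_2)|\ge|a_1-a_2|\).
\item Continuity of \((c,\rho)\mapsto a_\rho(c)\) follows from the implicit function theorem, since \(\partial_ag\ge1\) and \(g\) is jointly continuous. Equivalently, it follows from the same Lipschitz-inverse argument.
\end{itemize}
The main care point is that \(a_\rho(c)\) is also confined to a compact set bounded away from \(0\). This holds by the same two bounds applied to the limiting equation, so all integrands stay uniformly controlled.

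For \eqref{eq:spectral-signal-de-limit}, write
\[
\bfd_p^\top\bfD_{\rho_n,m_n}\bfd_p=\int(a_{\rho_n,m_n}x+\rho_n)^{-1}dG_{\bfd,p}(x).
\]
The argument then proceeds in three steps.
\begin{itemize}
\item By the previous step and continuity, \(a_{\rho_n,m_n}\to a_\rho(c_{\rm pool})\).
\item The integrands \(f_n(x)=(a_{\rho_n,m_n}x+\rho_n)^{-1}\) converge uniformly on \([0,\omega_+]\) to \(f(x)=(a_\rho(c_{\rm pool})x+\rho)^{-1}\), because the denominators are bounded below by \(\rho_0/2\).
\item The total masses \(G_{\bfd,p}([0,\omega_+])=\|\bfd_p\|^2\) are bounded. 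Hence \(|\int f_n\,dG_{\bfd,p}-\int f\,dG_{\bfd,p}|\le\sup|f_n-f|\,\sup_p\|\bfd_p\|^2\to0\).
\end{itemize}
Finally, \(\int f\,dG_{\bfd,p}\to\int f\,dG_{\bfd}\) by weak convergence, since \(f\) is bounded and continuous on the common compact support. The only potentially delicate point in the whole lemma is uniformity across \(m\) and \(\rho\) in the second part. It is handled by equicontinuity on compacts, so I expect no real obstacle.
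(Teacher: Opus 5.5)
Your proof is correct and follows essentially the same route as the paper. Your $g_n(a)=a\{1+\gamma_n\int x(ax+\rho)^{-1}dH_p(x)\}$ is exactly $1-F_{\rho,n}(a)$ for the paper's $F_{\rho,n}$, so uniqueness, the bound $\partial_a g_n\ge1$ that turns uniform closeness of the functions into closeness of the roots, the equicontinuity argument, and the final uniform-integrand/weak-convergence step all coincide with the paper's argument; the only cosmetic difference is that your explicit lower bound $a_{\rho,n}\ge(1+\gamma_n/\rho_0)^{-1}$ (via $\int x\,dH_p=1$) replaces the paper's mean-value bound $a_{\rho,m}\ge(2C_F+2)^{-1}$.
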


\begin{proof}
For a pool of size \(m\asymp n\), define
\[
 F_{\rho,m}(a)
 =1-\gamma_m+\gamma_m\rho\int\frac{1}{ax+\rho}\,dH_p(x)-a,
 \qquad a\ge0.
\]
Its endpoint values and derivative satisfy
\begin{align*}
 F_{\rho,m}(0)&=1,\qquad
 \lim_{a\to\infty}F_{\rho,m}(a)=-\infty,\\
 \partial_aF_{\rho,m}(a)
 &=-1-\gamma_m\rho\int\frac{x}{(ax+\rho)^2}\,dH_p(x)\le-1.
\end{align*}
Hence the equation \(F_{\rho,m}(a)=0\) has a unique solution
\(a_{\rho,m}\).  Moreover,
\[
 0\le \rho\int\frac{1}{a_{\rho,m}x+\rho}\,dH_p(x)\le1
 \quad\Longrightarrow\quad
 0<a_{\rho,m}\le1.
\]
On every fixed pool range \(m/n\in[\underline c_{\rm pool},\overline
c_{\rm pool}]\),
\[
 \sup_{a\in[0,1],\rho\in[\rho_0,\rho_1]}
 |\partial_aF_{\rho,m}(a)|
 \le 1+C\rho_1\omega_+\rho_0^{-2}=:C_F.
\]
Since \(F_{\rho,m}(0)=1\), the mean-value theorem gives
\[
 F_{\rho,m}(a)\ge1-C_Fa,
 \qquad
 a_{\rho,m}\ge c_a:=(2C_F+2)^{-1}.
\]
Consequently, for every eigenvalue \(\lambda\in[0,\omega_+]\),
\[
 \frac{1}{\omega_++\rho_1}
 \le \frac{1}{a_{\rho,m}\lambda+\rho}
 \le \frac{1}{\rho_0},
\]
which proves \eqref{eq:a-D-stability} and the corresponding local-pool
bounds.

For the limiting fixed-point equation, set \(c=m/n\) and write
\[
 F_{\rho,c}(a)
 =1-\frac{\gamma}{c}
 +\frac{\gamma}{c}\rho\int\frac{1}{ax+\rho}\,dH(x)-a.
\]
The class
\[
 \mathcal K
 =\left\{x\mapsto(ax+\rho)^{-1}:
 a\in[c_a,1],\ \rho\in[\rho_0,\rho_1]\right\}
\]
is uniformly bounded and equicontinuous on \([0,\omega_+]\).  Therefore,
\begin{align*}
 &\sup_{a,\rho}
 \left|\int\frac{1}{ax+\rho}\,dH_p(x)
       -\int\frac{1}{ax+\rho}\,dH(x)\right|\longrightarrow0,\\
 &\sup_{\underline c_{\rm pool}n\le m\le\overline c_{\rm pool}n}
 \left|\gamma_m-\frac{\gamma}{m/n}\right|\longrightarrow0,
\end{align*}
and hence
\[
 \sup_{\substack{m/n\in[\underline c_{\rm pool},\overline c_{\rm pool}]\\
                   a\in[c_a,1],\ \rho\in[\rho_0,\rho_1]}}
 |F_{\rho,m}(a)-F_{\rho,m/n}(a)|\longrightarrow0.
\]
Because both functions have \(a\)-derivative at most \(-1\), evaluating
\(F_{\rho,m}\) at the limiting root yields
\[
 |a_{\rho,m}-a_\rho(m/n)|
 \le \sup_{a\in[c_a,1]}|F_{\rho,m}(a)-F_{\rho,m/n}(a)|,
\]
uniformly in \(m/n\) and \(\rho\).  This is
\eqref{eq:a-rho-uniform-convergence}.  The same inequality applied to two
pairs \((c,\rho)\) and \((c',\rho')\), together with uniform continuity of
\(F_{\rho,c}(a)\), proves joint continuity of \(a_\rho(c)\).

Finally, if \(m_n/n\to c_{\rm pool}\) and \(\rho_n\to\rho\), then
\[
 \sup_{x\in[0,\omega_+]}
 \left|\frac{1}{a_{\rho_n,m_n}x+\rho_n}
       -\frac{1}{a_\rho(c_{\rm pool})x+\rho}\right|\longrightarrow0.
\]
Since
\[
 G_{\bfd,p}([0,\omega_+])=\|\bfd_p\|^2=O(1),
 \qquad G_{\bfd,p}\Rightarrow G_{\bfd},
\]
uniform convergence of the integrands and weak convergence of the finite
measures give \eqref{eq:spectral-signal-de-limit}.
\end{proof}

\medskip
\noindent\textbf{Auxiliary estimates.} The following lemmas are consequences of Assumptions \ref{ass:elliptical}--\ref{ass:grid}.  The spatial-median inputs use the differentiability of the spatial sign map and Bahadur-type expansions for spatial medians under elliptical symmetry \citep{oja2010,magyar2011,li2022spatialmedian}.  The SSCM spectral bounds use \citet{li2022sscm}.  The fixed-ridge deterministic equivalents and bilinear resolvent estimates use \citet{silverstein1995,bai2010spectral,hachem2007deterministic,hachem2013bilinear}.  The quadratic-form normal approximations use \citet{dejong1987}.

For nonzero $\bfy$, let $o(\bfy)$ be the unique vector among $\{\bfy,-\bfy\}$ whose first nonzero coordinate is positive.  For $i\in J(s)$, write
\begin{equation*}
  \bfY_i=\varsigma_i\widetilde\bfY_i,
  \qquad
  \widetilde\bfY_i=o(\bfY_i),
  \qquad
  \varsigma_i\in\{-1,1\}.
\end{equation*}
Let
\begin{equation*}
  \calF_s^0=\sigma\{w_i,\widetilde\bfY_i:i\in J(s)\}.
\end{equation*}
For any finite index set \(I\subseteq\{1,\ldots,n\}\), use the corresponding interval sigma-field
\begin{equation*}
  \calF_I^0=\sigma\{w_i,\widetilde\bfY_i:i\in I\}.
\end{equation*}
When finitely many scan points are considered simultaneously, we use the global conditioning sigma-field
\[
  \calF^0=\sigma\{w_i,\widetilde\bfY_i:1\le i\le n\},
\]
and extend all coefficient arrays by zero outside their corresponding pooled segments.  Conditional on \(\calF^0\), the full sign vector \((\varsigma_1,\ldots,\varsigma_n)\) is independent Rademacher; the segmentwise statement below is its restriction to \(J(s)\).

\begin{lemma}[Conditional Rademacher signs]\label{lem:signs}
Under Assumption \ref{ass:elliptical}, conditionally on $\calF_s^0$, the variables $\{\varsigma_i:i\in J(s)\}$ are independent Rademacher variables.
\end{lemma}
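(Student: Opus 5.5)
The argument reduces to the central symmetry of the elliptical law and the independence of the observations. I would first show that each error vector \(\bfvarepsilon_i\) is sign-symmetric: \(\bfvarepsilon_i\stackrel{d}{=}-\bfvarepsilon_i\). This holds because \(\bfG_i\stackrel{d}{=}-\bfG_i\), \(R_i\) is independent of \(\bfG_i\), and \(\bfG_i/\norm{\bfG_i}\) is an odd function of \(\bfG_i\). Next, since \(\norm{\bfG_i}>0\) almost surely and \(\bfOmega_p\) is positive definite, \(\bfvarepsilon_i\neq\mathbf 0\) almost surely. Hence \(\bfY_i\ne\mathbf 0\), so the representation \(\bfY_i=\varsigma_i\widetilde\bfY_i\) with \(\widetilde\bfY_i=o(\bfY_i)\) is well defined and unique almost surely.

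The main step is a single-observation decomposition. The map \(\bfx\mapsto(\varsigma(\bfx),o(\bfx))\) on \(\R^p\setminus\{\mathbf 0\}\) is a measurable bijection onto \(\{-1,1\}\times\mathcal H\), where \(\mathcal H\) is the half-space set of vectors whose first nonzero coordinate is positive. Also, \(w_i=\sqrt p/\norm{\bfvarepsilon_i}\) and \(\widetilde\bfY_i\) are even functions of \(\bfvarepsilon_i\), while \(\varsigma_i\) is odd.

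For any bounded measurable \(f\) and sign value \(\varepsilon\in\{-1,1\}\), I would compute \(\E[\ind(\varsigma_i=1)f(w_i,\widetilde\bfY_i)]\). Replacing \(\bfvarepsilon_i\) by \(-\bfvarepsilon_i\), which preserves the law, flips \(\varsigma_i\) and leaves \((w_i,\widetilde\bfY_i)\) unchanged. This gives
\[
\E[\ind(\varsigma_i=1)f(w_i,\widetilde\bfY_i)]=\E[\ind(\varsigma_i=-1)f(w_i,\widetilde\bfY_i)].
\]
Summing the two sides yields \(\E[\ind(\varsigma_i=1)f]=\tfrac12\E f\). Therefore \(\varsigma_i\) is independent of \((w_i,\widetilde\bfY_i)\) and is Rademacher. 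The only subtle point is that \(\varsigma_i\) is uniquely defined, which is why we need \(\Pbb(\bfvarepsilon_i=\mathbf 0)=0\).

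Finally, the pairs \((R_i,\bfG_i)\) are independent across \(i\), so the triples \((\varsigma_i,w_i,\widetilde\bfY_i)\), \(i\in J(s)\), are mutually independent. Combining this with the single-observation factorization, for bounded measurable \(f\) and any sign pattern \((\varepsilon_i)\) we get
\[
\E\Big[\prod_{i\in J(s)}\ind(\varsigma_i=\varepsilon_i)\,f\big((w_i,\widetilde\bfY_i)_{i\in J(s)}\big)\Big]=2^{-m_s}\E f.
\]
It suffices to verify this for product functions \(f=\prod_i f_i\), using independence across \(i\), and then extend by a monotone class argument. The identity says exactly that, conditionally on \(\calF_s^0\), the \(\varsigma_i\) are i.i.d.\ Rademacher. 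The same argument applied to \(\{1,\ldots,n\}\) gives the statement for \(\calF^0\).
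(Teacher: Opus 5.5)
Your proposal is correct and is essentially the paper's argument. The reflection $\bfG_i\mapsto-\bfG_i$ (equivalently $\bfvarepsilon_i\mapsto-\bfvarepsilon_i$) leaves $(w_i,\widetilde\bfY_i)$ unchanged and flips $\varsigma_i$, which gives the single-observation factor $\tfrac12$; independence across $i$, product test functions and a monotone-class argument then give the joint conditional Rademacher law. Your explicit check that $\bfvarepsilon_i\ne\mathbf 0$ almost surely, so that $\varsigma_i$ is well defined, is a small point the paper leaves implicit.
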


\begin{proof}
For one observation, the transformation $\bfG_i\mapsto-\bfG_i$ preserves the Gaussian distribution.  It leaves
\[
  w_i=\frac{\sqrt p}{\norm{\bfvarepsilon_i}}
\]
and $\widetilde\bfY_i=o(\bfY_i)$ unchanged, and it replaces $\varsigma_i$ by $-\varsigma_i$.  Hence, for every bounded measurable function $g$,
\begin{align*}
  \E\{g(w_i,\widetilde\bfY_i)\ind(\varsigma_i=1)\}
  &=\E\{g(w_i,\widetilde\bfY_i)\ind(\varsigma_i=-1)\}\\
  &=\frac12\E\{g(w_i,\widetilde\bfY_i)\}.
\end{align*}
This identity is equivalent to
\[
  \Pbb(\varsigma_i=1\mid w_i,\widetilde\bfY_i)
  =\Pbb(\varsigma_i=-1\mid w_i,\widetilde\bfY_i)=\frac12.
\]
For arbitrary sign assignments $c_i^{\rm sgn}\in\{-1,1\}$ and bounded measurable functions $g_i$,
\begin{align*}
  \E\prod_{i\in J(s)}g_i(w_i,\widetilde\bfY_i)\ind(\varsigma_i=c_i^{\rm sgn})
  &=\prod_{i\in J(s)}\E\{g_i(w_i,\widetilde\bfY_i)\ind(\varsigma_i=c_i^{\rm sgn})\}\\
  &=2^{-m_s}\prod_{i\in J(s)}\E\{g_i(w_i,\widetilde\bfY_i)\}\\
  &=2^{-m_s}\E\prod_{i\in J(s)}g_i(w_i,\widetilde\bfY_i).
\end{align*}
A monotone-class argument gives
\[
  \Pbb(\varsigma_i=c_i^{\rm sgn},\forall i\in J(s)\mid\calF_s^0)=2^{-m_s},
\]
which proves conditional independence and the Rademacher law.
\end{proof}

\begin{lemma}[Companion sign representation and contraction]\label{lem:companion-contraction}
For the error-oracle sample under Assumption~\ref{ass:elliptical}, conditionally on \(\calF_s^0\), the oracle companion matrix admits the sign representation
\begin{equation}
\label{eq:A-sign}
  \bfA_{\rho,s}^0=\bfW_{\varsigma,s}\widetilde\bfA_{\rho,s}\bfW_{\varsigma,s},
  \qquad
  A_{ij,\rho,s}^0=\varsigma_i\varsigma_j\widetilde A_{ij,\rho,s},
\end{equation}
where
\begin{equation*}
  \widetilde\bfA_{\rho,s}=\frac1{m_s}\widetilde\bfY_s^{\top}\bfQ_{\rho,s}^0\widetilde\bfY_s.
\end{equation*}
Moreover,
\begin{equation}
\label{eq:A-contraction}
  \mathbf{0}\preceq \widetilde\bfA_{\rho,s}
  \preceq \bfI_{m_s},
\end{equation}
and hence
\begin{equation}
\label{eq:A-row-bounds}
  0\le \widetilde A_{ii,\rho,s}\le1,
  \qquad
  \sum_{j\in J(s)}\widetilde A_{ij,\rho,s}^{2}
  = (\widetilde\bfA_{\rho,s}^2)_{ii}\le \widetilde A_{ii,\rho,s}\le1.
\end{equation}
The same contraction and entrywise domination hold for every feasible companion matrix built from centered signs.
\end{lemma}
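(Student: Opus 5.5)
The plan is purely algebraic: first the sign factorization, then the contraction through a spectral argument, and finally the row bounds as corollaries.

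\emph{Step 1: the representation \eqref{eq:A-sign}.} Since \(\bfY_i=\varsigma_i\widetilde\bfY_i\), write \(\bfY_s=\widetilde\bfY_s\bfW_{\varsigma,s}\) with \(\bfW_{\varsigma,s}=\diag(\varsigma_i:i\in J(s))\). Because \(\varsigma_i^2=1\),
\[
\bfR_s^0=m_s^{-1}\sum_{i\in J(s)}\widetilde\bfY_i\widetilde\bfY_i^\top .
\]
Hence \(\bfR_s^0\) and \(\bfQ_{\rho,s}^0\) are \(\calF_s^0\)-measurable and do not depend on the signs. Substituting into \(\bfA_{\rho,s}^0=m_s^{-1}\bfY_s^\top\bfQ_{\rho,s}^0\bfY_s\) gives \(\bfW_{\varsigma,s}\widetilde\bfA_{\rho,s}\bfW_{\varsigma,s}\) directly. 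Entrywise this is \(A^0_{ij}=\varsigma_i\varsigma_j\widetilde A_{ij}\). No conditioning argument beyond this measurability observation is needed.

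\emph{Step 2: the contraction \eqref{eq:A-contraction}.} Set \(\bfM=m_s^{-1/2}\widetilde\bfY_s\), so that \(\bfR_s^0=\bfM\bfM^\top\) and \(\widetilde\bfA_{\rho,s}=\bfM^\top(\bfM\bfM^\top+\rho\bfI_p)^{-1}\bfM\). Positive semidefiniteness follows because \((\bfM\bfM^\top+\rho\bfI_p)^{-1}\succ0\).

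For the upper bound, take the singular value decomposition \(\bfM=\bfU\bfS\bfV^\top\). Then
\[
\widetilde\bfA_{\rho,s}=\bfV\,\bfS^\top(\bfS\bfS^\top+\rho\bfI)^{-1}\bfS\,\bfV^\top ,
\]
whose eigenvalues are \(s_k^2/(s_k^2+\rho)\in[0,1)\) or \(0\). Equivalently, use the push-through identity \(\bfM^\top(\bfM\bfM^\top+\rho\bfI)^{-1}\bfM=\bfM^\top\bfM(\bfM^\top\bfM+\rho\bfI)^{-1}\), which is a function \(x/(x+\rho)\le 1\) of the positive semidefinite matrix \(\bfM^\top\bfM\). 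Either way \(\widetilde\bfA_{\rho,s}\preceq\bfI_{m_s}\), with strict inequality since \(\rho\ge\rho_0>0\).

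\emph{Step 3: the row bounds \eqref{eq:A-row-bounds}.} Diagonal entries of a matrix satisfying \(\mathbf0\preceq\bfA\preceq\bfI\) lie in \([0,1]\); evaluate the quadratic form at the basis vectors. By symmetry, \(\sum_j\widetilde A_{ij}^2=(\widetilde\bfA^2)_{ii}\). Since the eigenvalues \(\mu\) of \(\widetilde\bfA\) lie in \([0,1]\), we have \(\mu^2\le\mu\), so \(\widetilde\bfA^2\preceq\widetilde\bfA\). Taking the \((i,i)\) entry gives \((\widetilde\bfA^2)_{ii}\le\widetilde A_{ii}\le1\).

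\emph{Step 4: the feasible matrices.} The final sentence follows verbatim. Any feasible companion matrix has the form \(m_s^{-1}\widehat\bfY_s^\top(m_s^{-1}\widehat\bfY_s\widehat\bfY_s^\top+\rho\bfI_p)^{-1}\widehat\bfY_s\), so Steps 2 and 3 apply with \(\bfM=m_s^{-1/2}\widehat\bfY_s\); they used no property of the columns. Entrywise domination \(|A^0_{ij}|=|\widetilde A_{ij}|\) is immediate from Step 1.

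There is no real obstacle. The only point needing care is noting that \(\bfQ_{\rho,s}^0\) is sign-invariant; that is what makes \(\widetilde\bfA_{\rho,s}\) \(\calF_s^0\)-measurable and the factorization meaningful conditionally.
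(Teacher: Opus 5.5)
Your proposal is correct and follows the paper's proof essentially step for step. The paper uses the same factorization $\bfY_s=\widetilde\bfY_s\bfW_{\varsigma,s}$, which makes $\bfQ^0_{\rho,s}$ sign-free; the same SVD giving eigenvalues $\lambda^2/(\lambda^2+\rho)\in[0,1)$; the same bound $\widetilde\bfA^2\preceq\widetilde\bfA$ for the row bounds; and the same observation that nothing about the columns is used, so the argument covers feasible companions. One nuance: for feasible matrices the paper reads ``entrywise domination'' as $|A_{ij}|\le(A_{ii}A_{jj})^{1/2}\le1$, which follows from positive semidefiniteness, not as your oracle identity $|A^0_{ij}|=|\widetilde A_{ij}|$; your Step 3 applied to the feasible matrix already gives $|\widehat A_{ij}|\le1$, so nothing is missing.
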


\begin{proof}
Let $\widetilde\bfY_s$ be the $p\times m_s$ matrix with columns $\widetilde\bfY_i$, and let $\bfW_{\varsigma,s}=\diag(\varsigma_i:i\in J(s))$.  Then
\[
  \bfY_s=\widetilde\bfY_s\bfW_{\varsigma,s},
  \qquad
  \bfR_s^0=\frac1{m_s}\widetilde\bfY_s\widetilde\bfY_s^{\top},
\]
so $\bfQ_{\rho,s}^0$ is $\calF_s^0$-measurable and \eqref{eq:A-sign} follows.  Since $\bfR_s^0$ is positive semidefinite, write a nonzero singular-value decomposition of $m_s^{-1/2}\widetilde\bfY_s$ as $\bfU_s^{\rm svd}\bfLambda_s(\bfV_s^{\rm svd})^{\top}$, so that $\bfR_s^0=\bfU_s^{\rm svd}\bfLambda_s^2(\bfU_s^{\rm svd})^{\top}$.  Then
\[
  \frac1{m_s}\widetilde\bfY_s^{\top}(\bfR_s^0+\rho\bfI_p)^{-1}\widetilde\bfY_s
  =\bfV_s^{\rm svd}\bfLambda_s^2(\bfLambda_s^2+\rho\bfI)^{-1}(\bfV_s^{\rm svd})^{\top}.
\]
All eigenvalues in the last display lie in $[0,1]$, proving \eqref{eq:A-contraction}.  Since $0\preceq \widetilde\bfA_{\rho,s}^2\preceq\widetilde\bfA_{\rho,s}$, \eqref{eq:A-row-bounds} follows from taking the $i$th diagonal element.  The same singular-value calculation applies to centered feasible sign matrices; for any such companion $\bfA$,
\[
  0\preceq \bfA \preceq \bfI,
  \qquad 0\le A_{ii}\le1,
  \qquad |A_{ij}|\le (A_{ii}A_{jj})^{1/2}\le1 .
\]
\end{proof}

\begin{lemma}[Spatial-sign Taylor formula]\label{lem:spatial-sign-taylor}
Let \(W_n=C\ell_{0,n}\) be the high-probability bound in \eqref{eq:w-bound}.  Under \(H_0\) and Assumptions \ref{ass:elliptical} and \ref{ass:spectrum}, on the event \(\max_iw_i\le W_n\), for every vector \(\bfd\in\R^p\) satisfying \(\norm{\bfd}\le \sqrt p/(2W_n)\),
\begin{align}
  \sqrt p\,U(\bfX_i-\bftheta-\bfd)
  &=\bfY_i-w_i\bfP_i\bfd+\bfH_i(\bfd)+\bfr_i(\bfd),\notag\\
  \bfH_i(\bfd)
  &=\frac{w_i^2}{2\sqrt p}\{-2\bfd(\bfU_i^\top\bfd)-\bfU_i\norm{\bfd}^2
      +3\bfU_i(\bfU_i^\top\bfd)^2\},\notag\\
  \norm{\bfr_i(\bfd)}&\le C p^{-1}w_i^3\norm{\bfd}^3.\label{eq:sign-taylor-prelim}
\end{align}
Here \(\bfU_i\) and \(\bfP_i\) are the error-oracle objects defined at the
beginning of this appendix; under \(H_0\), they equal
\(U(\bfX_i-\bftheta)\) and
\(\bfI_p-\bfU_i\bfU_i^\top\), respectively.
\end{lemma}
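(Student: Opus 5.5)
This is a deterministic Taylor expansion of the map \(\bfx\mapsto\sqrt p\,U(\bfx)\) around \(\bfvarepsilon_i\). Under \(H_0\) we have \(\bfX_i-\bftheta-\bfd=\bfvarepsilon_i-\bfd\). Write \(r=\norm{\bfvarepsilon_i}=\sqrt p/w_i\) and \(\bfv=\bfd/r\). On the event \(\max_iw_i\le W_n\), the constraint \(\norm{\bfd}\le\sqrt p/(2W_n)\) gives \(\norm{\bfv}=w_i\norm{\bfd}/\sqrt p\le1/2\). Factoring out \(r\), we get
\[
\sqrt p\,U(\bfvarepsilon_i-\bfd)=\sqrt p\,\frac{\bfU_i-\bfv}{\norm{\bfU_i-\bfv}}.
\]
The problem therefore reduces to expanding \(f(\bfv)=(\bfU-\bfv)\norm{\bfU-\bfv}^{-1}\) for a unit vector \(\bfU\) and \(\norm{\bfv}\le1/2\). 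Multiplying the result by \(\sqrt p\) turns the \(k\)th-order term into \(\sqrt p\,(w_i/\sqrt p)^k\) times a homogeneous form in \(\bfd\). This produces \(w_i\) at first order, \(w_i^2/\sqrt p\) at second order, and \(w_i^3/p\) at third order, which are exactly the prefactors in \eqref{eq:sign-taylor-prelim}.

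The main step is an exact scalar expansion. We have \(\norm{\bfU-\bfv}^2=1-2u+\norm{\bfv}^2\) with \(u=\bfU^\top\bfv\), and \(|-2u+\norm{\bfv}^2|\le 2\cdot\tfrac12+\tfrac14<\infty\). Note, however, that this quantity need not be small enough for a plain binomial series. Instead I would use Taylor's theorem with integral remainder for the smooth function \(g(t)=f(t\bfv)\) on \(t\in[0,1]\). This works because \(\norm{\bfU-t\bfv}\ge1-\norm{\bfv}\ge1/2\) along the segment, so all derivatives of \(f\) up to order three are bounded by absolute constants times \(\norm{\bfv}^k\) there.

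The orders are computed as follows.
\begin{itemize}
\item For the derivative at \(\bfv=0\), we have \((1-2u+\norm{\bfv}^2)^{-1/2}=1+u-\tfrac12\norm{\bfv}^2+\tfrac32u^2+O(\norm{\bfv}^3)\). Multiplying by \(\bfU-\bfv\) gives the first-order term \(-\bfv+\bfU u=-\bfP\bfv\).
\item The second-order term is \(-\bfv u-\tfrac12\bfU\norm{\bfv}^2+\tfrac32\bfU u^2\), which equals \(\tfrac12\{-2\bfv(\bfU^\top\bfv)-\bfU\norm{\bfv}^2+3\bfU(\bfU^\top\bfv)^2\}\).
\end{itemize}
Substituting \(\bfv=w_i\bfd/\sqrt p\) and multiplying by \(\sqrt p\) reproduces \(-w_i\bfP_i\bfd\) and \(\bfH_i(\bfd)\) exactly.

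For the remainder, I bound the third derivative of \(f\) uniformly on the ball \(\norm{\bfv}\le1/2\) away from the singularity at \(\bfU\), which gives \(\norm{f(\bfv)-\text{(quadratic Taylor polynomial)}}\le C\norm{\bfv}^3\). Here \(C\) is absolute because \(\bfU\) is a unit vector and \(\norm{\bfU-\bfv}\ge1/2\). Multiplying by \(\sqrt p\) gives \(\norm{\bfr_i(\bfd)}\le C\sqrt p\,w_i^3\norm{\bfd}^3p^{-3/2}=Cp^{-1}w_i^3\norm{\bfd}^3\).

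The only delicate point is ensuring that the constant in the third-derivative bound is dimension-free. I would handle this by noting that \(f\) depends on \(\bfv\) only through the two-dimensional span of \(\bfU\) and \(\bfv\) together with the scalar norms, so its derivatives along the line \(t\bfv\) are rational functions of \(u\), \(\norm{\bfv}\) and \(\norm{\bfU-t\bfv}^{-1}\), all bounded on the ball. No probabilistic input beyond the event \(\max_iw_i\le W_n\) from \eqref{eq:w-bound} is needed.
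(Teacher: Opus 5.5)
Your proposal is correct and follows essentially the same route as the paper. Both use Taylor's formula with integral remainder for the spatial-sign map along the segment from \(\bfvarepsilon_i\) to \(\bfvarepsilon_i-\bfd\), controlling the third derivative by \(C\norm{\cdot}^{-3}\) because that segment stays at distance at least \(\norm{\bfvarepsilon_i}/2=\sqrt p/(2w_i)\) from the origin. Your rescaling to the unit sphere, and your identification of the first two coefficients through the scalar expansion of \((1-2u+\norm{\bfv}^2)^{-1/2}\), are only cosmetic variants of the paper's direct computation of \(DU_\bfx\) and \(D^2U_\bfx\).
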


\begin{proof}
For \(\bfx\ne0\),
\[
  DU_\bfx[\mathbf h]=\norm{\bfx}^{-1}\{\bfI_p-U(\bfx)U(\bfx)^\top\}\mathbf h,
\]
\[
  D^2U_\bfx[\mathbf h,\mathbf h]
  =\norm{\bfx}^{-2}\{-2\mathbf h(U^\top\mathbf h)-U\norm{\mathbf h}^2+3U(U^\top\mathbf h)^2\},
  \qquad U=U(\bfx).
\]
The third derivative of \(U\) is bounded by \(C\norm{\bfx}^{-3}\).  Since \(\norm{\bfX_i-\bftheta}=\sqrt p/w_i\) and \(w_i\le W_n\) on the localization event, Taylor's formula with integral remainder at \(\bfx=\bfX_i-\bftheta\) and perturbation \(-\bfd\) gives \eqref{eq:sign-taylor-prelim}.  The complement of this localization event has probability \(o(1)\).
\end{proof}

\begin{lemma}[Jacobian and SSCM operator inputs]\label{lem:jacobian-sscm-input}
Under \(H_0\) and Assumptions \ref{ass:elliptical}--\ref{ass:grid}, uniformly over every trimmed pooled segment of size \(m\asymp n\), for \(\bfP_i=\bfI_p-\bfU_i\bfU_i^\top\),
\begin{equation}
\label{eq:jacobian-input}
  \left\|m^{-1}\sum_{i=1}^m w_i\bfU_i\bfU_i^\top
  -\E\{w_i\bfU_i\bfU_i^\top\}\right\|_{\rm op}=O_P(\ell_np^{-1}),
  \qquad
  \opnorm{\E(w_i\bfU_i\bfU_i^\top)}\le Cp^{-1}.
\end{equation}
Consequently,
\[
  m^{-1}\sum_{i=1}^m w_i\bfP_i
  =\left(m^{-1}\sum_{i=1}^mw_i\right)\bfI_p+O_P(\ell_np^{-1})
\]
in operator norm.  Moreover,
if
\[
 \bfR_m^G=m^{-1}\sum_{i=1}^m
 \bfOmega_p^{1/2}\bfG_i\bfG_i^\top\bfOmega_p^{1/2},
\]
then
\begin{equation}
\label{eq:sscm-gaussian-transfer}
 \sup_s\|\bfR_s^0-\bfR_{m_s}^G\|_{\rm op}
 =O_P(\ell_np^{-1/2}),
\end{equation}
and
\begin{equation}
\label{eq:sscm-op-input}
  \sup_s\opnorm{\bfR_s^0}=O_P(1),
  \qquad
  \sup_s\sup_\rho\opnorm{\bfQ_{\rho,s}^0}\le \rho_0^{-1}.
\end{equation}
\end{lemma}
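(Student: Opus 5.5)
The plan is to reduce everything to the Gaussian representation \(\bfvarepsilon_i=\sqrt p\,R_i\bfOmega_p^{1/2}\bfG_i/\norm{\bfG_i}\). Under it,
\[
  \bfU_i=\frac{\bfOmega_p^{1/2}\bfG_i}{\norm{\bfOmega_p^{1/2}\bfG_i}},
  \qquad
  w_i=\xi_i\left\{\frac{p^{-1}\bfG_i^\top\bfG_i}{p^{-1}\bfG_i^\top\bfOmega_p\bfG_i}\right\}^{1/2},
  \qquad
  \bfY_i\bfY_i^\top=c_i\,\bfOmega_p^{1/2}\bfG_i\bfG_i^\top\bfOmega_p^{1/2},
\]
with \(c_i=p/(\bfG_i^\top\bfOmega_p\bfG_i)\). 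All claims should then follow from three ingredients: Hanson--Wright concentration of Gaussian quadratic forms, the high-probability bound \eqref{eq:w-bound} \(\max_iw_i\le C\ell_{0,n}\), and standard operator-norm bounds for Gaussian sample covariance matrices. Since \(p\asymp n\), exponential tail bounds in \(p\) survive union bounds over the \(n\) observations and the \(O(n^2)\) candidate segments.

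\emph{Expectation bound in \eqref{eq:jacobian-input}.} I would write \(\E(w_i\bfU_i\bfU_i^\top)=\E\xi_i\cdot\E\{\phi(\bfG_i)\bfU_i\bfU_i^\top\}\), using independence of \(\xi_i\) and \(\bfG_i\), where \(\phi(\bfG)=(\bfG^\top\bfG/\bfG^\top\bfOmega_p\bfG)^{1/2}\). On the Hanson--Wright event where both quadratic forms lie in \([p/2,3p/2]\), \(\phi\le C\) and \(\bfU\bfU^\top\preceq 2p^{-1}\bfOmega_p^{1/2}\bfG\bfG^\top\bfOmega_p^{1/2}\), whose expectation is \(2p^{-1}\bfOmega_p\), of operator norm at most \(2\omega_+/p\). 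On the complement, of probability \(e^{-cp}\), I would bound \(\phi\) through a moment of \(\bfG^\top\bfG/\bfG^\top\bfOmega_p\bfG\) and use \(\norm{\bfU\bfU^\top}=1\); this contribution is \(o(p^{-1})\). With \(\E\xi_i\to\zeta_{-1}\), this gives \(\opnorm{\E(w_i\bfU_i\bfU_i^\top)}\le Cp^{-1}\).

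\emph{Concentration in \eqref{eq:jacobian-input}.} I would first truncate, replacing \(w_i\) by \(w_i\ind(w_i\le C\ell_{0,n})\). This changes nothing on the event of \eqref{eq:w-bound}, and the change in the expectation is controlled by the \(4+\eta\) moment via Markov's inequality. I would then apply matrix Bernstein to the centered truncated summands. Each summand has operator norm at most \(C\ell_{0,n}\), and the variance proxy is
\[
  \norm{\textstyle\sum_i\E\{w_i^2(\bfU_i\bfU_i^\top)^2\}}_{\rm op}
  =m\,\opnorm{\E(w_i^2\bfU_i\bfU_i^\top)}\le Cm\ell_{0,n}^2p^{-1},
\]
by the same argument as for the expectation bound. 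Bernstein then gives a deviation of order
\[
  \ell_{0,n}\sqrt{\log n/(mp)}+\ell_{0,n}\log n/m=O(\ell_np^{-1})
\]
with failure probability \(n^{-3}\), say. A union bound over the polynomially many segments yields uniformity, with the logarithms absorbed into \(\ell_n\). Since \(\bfP_i=\bfI_p-\bfU_i\bfU_i^\top\), the consequence
\[
  m^{-1}\textstyle\sum w_i\bfP_i=(m^{-1}\textstyle\sum w_i)\bfI_p-m^{-1}\textstyle\sum w_i\bfU_i\bfU_i^\top
\]
follows, and the last term is \(O_P(\ell_np^{-1})\) in operator norm by the two displays just established.

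\emph{SSCM transfer.} I would construct \(\bfR_m^G\) from the same \(\bfG_i\) as in the model, which gives a natural coupling. Then
\[
  \bfR_s^0-\bfR_{m_s}^G=m_s^{-1}\textstyle\sum_{i\in J(s)}(c_i-1)\bfOmega_p^{1/2}\bfG_i\bfG_i^\top\bfOmega_p^{1/2},
\]
so \(\opnorm{\bfR_s^0-\bfR_{m_s}^G}\le\max_i|c_i-1|\cdot\opnorm{\bfR_{m_s}^G}\). Hanson--Wright with a union bound over \(i\) gives \(\max_i|c_i-1|=O_P(\sqrt{\log n/p})\subseteq O_P(\ell_np^{-1/2})\). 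For a trimmed segment with \(m_s\ge cn\), \(\opnorm{\bfR_{m_s}^G}\le c^{-1}\opnorm{\bfR_n^G}\le c^{-1}\omega_+\,\opnorm{n^{-1}\bfG\bfG^\top}=O_P(1)\) by the Bai--Yin/Gordon bound \((1+\sqrt\gamma)^2+o(1)\); this bound is uniform over segments. Together these prove \eqref{eq:sscm-gaussian-transfer} and the first half of \eqref{eq:sscm-op-input}. The resolvent bound \(\opnorm{\bfQ_{\rho,s}^0}\le\rho^{-1}\le\rho_0^{-1}\) is deterministic, since \(\bfR_s^0\succeq0\).

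The main obstacle is the Jacobian concentration. Heavy radial tails force the truncation step, the \(1/p\) scale must come from the variance proxy rather than from a crude bound on the summands, and everything must be uniform over segments. I expect the matrix-Bernstein-after-truncation route, with polylogarithmic bookkeeping through \(\ell_{0,n}\le\ell_n\), to handle all three.
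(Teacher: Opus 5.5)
Your proposal is correct and follows essentially the same route as the paper. For the Jacobian term, both truncate \(w_i\) at \(C\ell_{0,n}\), apply matrix Bernstein with an \(O(mp^{-1})\)-type variance proxy, and take a polynomial union bound. For the SSCM transfer, both couple \(\bfR_m^G\) to the model's own Gaussian vectors and concentrate the angular factor \(c_i=p/\bfG_i^\top\bfOmega_p\bfG_i\) by Hanson--Wright.

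The one small simplification is how you make \(\opnorm{\bfR^G_{m_s}}\) uniform over segments. You take it directly from the PSD domination \(\bfR^G_{m_s}\preceq (n/m_s)\bfR^G_n\), whereas the paper uses a Gaussian covariance-norm tail with a union bound over the \(O(n^2)\) intervals.
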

\begin{proof}
Fix a trimmed segment \(I\) with \(m=|I|\asymp n\).  The truncation event and
centered summands are
\[
 \mathcal E_i^{\rm tr}
 =\left\{w_i\le C\ell_{0,n},\ 
 \left|p^{-1}\bfG_i^\top\bfOmega_p\bfG_i-1\right|\le\frac12\right\},
\]
\[
 \bfZ_i^\dagger
 =w_i\bfU_i\bfU_i^\top\ind_{\mathcal E_i^{\rm tr}}
  -\E\{w_i\bfU_i\bfU_i^\top\ind_{\mathcal E_i^{\rm tr}}\}.
\]
Assumption~\ref{ass:elliptical}, the angular exponential bound, and
Markov's inequality imply
\[
 \sup_I\left\|
 |I|^{-1}\sum_{i\in I}w_i\bfU_i\bfU_i^\top
 \ind_{(\mathcal E_i^{\rm tr})^c}
 \right\|_{\rm op}
 =o_P(\ell_np^{-1}).
\]
Thus only the truncated centered average remains.

For a unit vector \(\bfu\), let
\[
 Z_{u,i}=\bfu^\top\bfOmega_p^{1/2}\bfG_i,
 \qquad
 \mathcal E_i^+
 =\left\{\bfG_i^\top\bfOmega_p\bfG_i\ge p/2,\ 
          \bfG_i^\top\bfG_i\le Cp\right\}.
\]
On \(\mathcal E_i^+\),
\begin{align*}
 (\bfu^\top\bfU_i)^2
 &=\frac{Z_{u,i}^2}{\bfG_i^\top\bfOmega_p\bfG_i}
 \le \frac{2Z_{u,i}^2}{p},\\
 w_i(\bfu^\top\bfU_i)^2
 &=\xi_i\frac{\|\bfG_i\|Z_{u,i}^2}
 {(\bfG_i^\top\bfOmega_p\bfG_i)^{3/2}}
 \le C\xi_i\frac{Z_{u,i}^2}{p},\\
 w_i^2(\bfu^\top\bfU_i)^2
 &=\xi_i^2\frac{\|\bfG_i\|^2Z_{u,i}^2}
 {(\bfG_i^\top\bfOmega_p\bfG_i)^2}
 \le C\xi_i^2\frac{Z_{u,i}^2}{p}.
\end{align*}
Because
\[
 \E Z_{u,i}^2=\bfu^\top\bfOmega_p\bfu\le\omega_+,
 \qquad
 \Pbb((\mathcal E_i^+)^c)\le Ce^{-cp},
 \qquad
 \E\xi_i^{4+\eta}<\infty,
\]
Cauchy's inequality gives
\begin{align*}
 \sup_{\|\bfu\|=1}\E(\bfu^\top\bfU_i)^2&\le Cp^{-1},\\
 \left\|\E(w_i\bfU_i\bfU_i^\top)\right\|_{\rm op}&\le Cp^{-1},\\
 \left\|\E(w_i^2\bfU_i\bfU_i^\top)\right\|_{\rm op}&\le Cp^{-1}.
\end{align*}
Consequently,
\[
 \|\bfZ_i^\dagger\|_{\rm op}\le C\ell_{0,n},
 \qquad
 \left\|\sum_{i=1}^m\E(\bfZ_i^\dagger)^2\right\|_{\rm op}
 \le Cmp^{-1}.
\]
For \(t=A\ell_n/p\), matrix Bernstein's inequality gives
\begin{align*}
 \Pbb\left(
 \left\|m^{-1}\sum_{i=1}^m\bfZ_i^\dagger\right\|_{\rm op}>t
 \right)
 &\le2p\exp\left\{-\frac{cm^2t^2}{m/p+\ell_{0,n}mt}\right\}\\
 &\le2p\exp\{-c\min(\ell_n^2,\ell_n/\ell_{0,n})\}.
\end{align*}
Since \(c_\ell\ge10c_*\), the last probability is \(O(n^{-A_0})\) for every
fixed \(A_0>0\).  A union bound over every polynomial family of trimmed
segments therefore yields
\[
 \sup_I\left\||I|^{-1}\sum_{i\in I}\bfZ_i^\dagger\right\|_{\rm op}
 =O_P(\ell_np^{-1}).
\]
Combining the centered term, its expectation, and the truncated tail gives
\[
 \sup_I\left\||I|^{-1}\sum_{i\in I}
 w_i\bfU_i\bfU_i^\top
 -\E(w_i\bfU_i\bfU_i^\top)\right\|_{\rm op}
 =O_P(\ell_np^{-1}),
\]
and hence \eqref{eq:jacobian-input}.  In particular,
\begin{align*}
 |I|^{-1}\sum_{i\in I}w_i(\bfI_p-\bfU_i\bfU_i^\top)
 &=e_I\bfI_p-
 |I|^{-1}\sum_{i\in I}w_i\bfU_i\bfU_i^\top\\
 &=e_I\bfI_p+O_P(\ell_np^{-1})
 \quad\text{in operator norm}.
\end{align*}

For the SSCM comparison, write
\[
 \bfY_i
 =\frac{\sqrt p\,\bfOmega_p^{1/2}\bfG_i}
 {(\bfG_i^\top\bfOmega_p\bfG_i)^{1/2}}
 =\bfOmega_p^{1/2}\bfG_i(1+r_{i,p}^{\rm ang}).
\]
Gaussian quadratic-form concentration and a union bound give
\[
 \max_{1\le i\le n}|r_{i,p}^{\rm ang}|
 =O_P(\ell_{0,n}p^{-1/2}).
\]
Let
\[
 \bfR_I^G=|I|^{-1}\sum_{i\in I}
 \bfOmega_p^{1/2}\bfG_i\bfG_i^\top\bfOmega_p^{1/2}.
\]
The Gaussian covariance-norm tail and a union bound over the
\(O(n^2)\) integer intervals imply
\[
 \sup_I\|\bfR_I^G\|_{\rm op}=O_P(1).
\]
Therefore,
\begin{align*}
 \sup_I\|\bfR_I^0-\bfR_I^G\|_{\rm op}
 &\le
 \max_{i\le n}|(1+r_{i,p}^{\rm ang})^2-1|
 \sup_I\|\bfR_I^G\|_{\rm op}\\
 &=O_P(\ell_np^{-1/2}),\\
 \sup_I\|\bfR_I^0\|_{\rm op}&=O_P(1),\\
 \sup_{I,\rho}\|(\bfR_I^0+\rho\bfI_p)^{-1}\|_{\rm op}
 &\le\rho_0^{-1}.
\end{align*}
These are \eqref{eq:sscm-gaussian-transfer} and
\eqref{eq:sscm-op-input}.
\end{proof}

\begin{lemma}[Angular inverse-distance averages and finite-sample centers]\label{lem:angular-w-xi}
Under Assumptions \ref{ass:elliptical} and \ref{ass:spectrum}, define
\[
  \mathfrak a_i=\left\{\frac{p^{-1}\bfG_i^\top\bfG_i}{p^{-1}\bfG_i^\top\bfOmega_p\bfG_i}\right\}^{1/2},
  \qquad
  w_i=\xi_i\mathfrak a_i,
  \qquad
  \xi_i=R_i^{-1},
\]
and put
\[
  \mu_{w,p}=\E w_i,
  \qquad
  \zeta_{-1,p}=\E\xi_i .
\]
Then, for every fixed \(r_{\rm mom}>0\), \(\sup_p\E \mathfrak a_i^{r_{\rm mom}}<\infty\), \(\mathfrak a_i\to1\) in \(L^1\), and
\begin{equation}
\label{eq:Ew-E-xi}
  \mu_{w,p}=(\E\xi_i)(\E \mathfrak a_i)
  =\zeta_{-1,p}\{1+O(p^{-1/2})\},
  \qquad
  \mu_{w,p}\to\zeta_{-1}.
\end{equation}
Moreover, uniformly over all trimmed intervals \(I\) used in the scan with \(|I|\ge m_*\),
\begin{align}
  \left| |I|^{-1}\sum_{i\in I}(w_i-\xi_i)\right|
  &=O_P(\ell_np^{-1/2}),\label{eq:w-xi-empirical-transfer}\\
  \left| |I|^{-1}\sum_{i\in I}w_i-\mu_{w,p}\right|
  &=O_P(\ell_n|I|^{-1/2}),\notag\\
  \left| |I|^{-1}\sum_{i\in I}w_i-\zeta_{-1}\right|
  &=O_P(\ell_n|I|^{-1/2})+o(1).\label{eq:w-average-uniform-zeta}
\end{align}
and \(\inf_I |I|^{-1}\sum_{i\in I}w_i\ge \zeta_{-1}/2\) with probability tending to one.  For the single-change split \(s_t=(0,t,1)\), let \(e_{a,t}\) be the average of \(w_i\) over the \(a\)th segment, \(a=1,2\).  Uniformly over the natural scan grid and all \(t,u\in[\varepsilon,1-\varepsilon]\),
\begin{equation}
\label{eq:w-average-increment}
  \max_{a=1,2}|e_{a,t}-e_{a,u}|
  \le C|t-u|+O_P\left\{\ell_n n^{-1/2}(|t-u|+n^{-1})^{1/2}\right\}.
\end{equation}
Consequently,
\begin{equation}
\label{eq:w-average-increment-second}
  \E|e_{a,t}-e_{a,u}|^2
  \le C|t-u|^2+C n^{-1}|t-u|+C n^{-2}+o(n^{-1}) .
\end{equation}
\end{lemma}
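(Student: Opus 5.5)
We begin with the exact factorisation \(w_i=\xi_i\mathfrak a_i\). Since \(\mathfrak a_i\) is a function of \(\bfG_i\) alone, it is independent of \(\xi_i=R_i^{-1}\), which gives \(\mu_{w,p}=(\E\xi_i)(\E\mathfrak a_i)\) directly. For the angular factor we write \(\mathfrak a_i=\{N_i/D_i\}^{1/2}\) with \(N_i=p^{-1}\bfG_i^\top\bfG_i\) and \(D_i=p^{-1}\bfG_i^\top\bfOmega_p\bfG_i\), both of mean one. By Hanson--Wright, using \(\opnorm{\bfOmega_p}\le\omega_+\) and \(\tr(\bfOmega_p^2)\le\omega_+p\), both \(N_i\) and \(D_i\) are \(1+O_P(p^{-1/2})\) with sub-exponential tails at scale \(p^{-1/2}\).

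The moment bound \(\sup_p\E\mathfrak a_i^{r}<\infty\) requires controlling \(D_i^{-r/2}\) near zero, since no lower eigenvalue bound is available. We split on the event \(\{D_i\ge1/2\}\). On this event the ratio is bounded by \(C N_i^{r/2}\). On the complement, whose probability is at most \(2e^{-cp}\), we use the crude bound \(N_i/D_i\le 1/\lambda_{\min}\). That bound is not uniform, so we instead apply Hölder with a negative moment of \(D_i\). Because \(D_i\ge p^{-1}\sum_{j}\lambda_j g_j^2\) and at least a fixed fraction of the \(\lambda_j\) exceed \(1/2\) (from \(p^{-1}\tr\bfOmega_p=1\) and \(\lambda_j\le\omega_+\)), \(D_i\) stochastically dominates \(c\chi^2_{cp}/p\). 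Its negative moments of fixed order are therefore \(O(1)\). A delta-method expansion of \(\sqrt{N_i/D_i}\) around \(1\), with this moment control on the remainder, gives \(\E\mathfrak a_i=1+O(p^{-1/2})\) and \(\mathfrak a_i\to1\) in \(L^1\). This proves \eqref{eq:Ew-E-xi} together with Assumption~\ref{ass:elliptical}.

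For the uniform empirical statements we truncate using \eqref{eq:w-bound} and the event \(\max_i|\mathfrak a_i-1|\le C\ell_{0,n}p^{-1/2}\), which holds with probability tending to one by a union bound.

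\begin{itemize}
\item \eqref{eq:w-xi-empirical-transfer}: on this event \(|w_i-\xi_i|\le\xi_i|\mathfrak a_i-1|\), so \(\bigl||I|^{-1}\sum_{i\in I}(w_i-\xi_i)\bigr|\le C\ell_{0,n}p^{-1/2}\,|I|^{-1}\sum_{i\in I}\xi_i\). The average of the \(\xi_i\) is \(O_P(1)\) uniformly in \(I\), by a Bernstein bound on truncated \(\xi_i\) and the \(4+\eta\) moment.
\item Second display: we apply Bernstein's inequality to \(w_i\ind(w_i\le C\ell_{0,n})\), which has variance \(O(1)\), at deviation level \(\ell_n|I|^{-1/2}\). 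The \(O(n^2)\) intervals are handled by a union bound, which is affordable because \(\ell_n\gg\ell_{0,n}\). The truncation bias is \(o(|I|^{-1/2})\) by the moment condition.
\item \eqref{eq:w-average-uniform-zeta}: this follows from the second display plus \(\mu_{w,p}\to\zeta_{-1}\).
\item The lower bound \(\inf_I|I|^{-1}\sum_{i\in I}w_i\ge\zeta_{-1}/2\) follows from \eqref{eq:w-average-uniform-zeta}, since \(|I|\ge m_*\asymp n\).
\end{itemize}

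For the increment bound \eqref{eq:w-average-increment}, take \(t<u\) and segment \(a=1\), so that \(I_1(s_t)=\{1,\dots,\floor{nt}\}\). We write
\[
e_{1,u}-e_{1,t}=\Bigl(\frac{\floor{nt}}{\floor{nu}}-1\Bigr)e_{1,t}+\frac1{\floor{nu}}\sum_{i=\floor{nt}+1}^{\floor{nu}}w_i .
\]
We then center each \(w_i\) at \(\mu_{w,p}\).
\begin{itemize}
\item The deterministic parts combine to \(O(|t-u|+n^{-1})\) times \(O_P(1)\); since \(e_{1,t}\) is bounded, this is the term \(C|t-u|\).
\item The centered block sum over \(k=\floor{nu}-\floor{nt}\) indices is \(O_P(\ell_n\sqrt{k})\) uniformly over all blocks, by the same truncated Bernstein and union argument. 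Dividing by \(\floor{nu}\asymp n\) gives \(\ell_n n^{-1/2}(|t-u|+n^{-1})^{1/2}\).
\end{itemize}
Segment \(a=2\) is symmetric. For the second-moment version \eqref{eq:w-average-increment-second} we do not use the polylogarithmic union bound. Instead we compute \(\E\) of the squared decomposition directly: the centered block contributes \(\Var(w_i)k/n^2\le Cn^{-1}|t-u|+Cn^{-2}\), and the deterministic part contributes \(C|t-u|^2\). The \(o(n^{-1})\) term absorbs the cross terms and the \(O(p^{-1/2})\) discrepancy between \(\mu_{w,p}\) and the centering.

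The main obstacle is the lower tail of \(D_i\) without a lower spectral bound. Everything else is routine once we show that a positive fraction of the spectrum lies above a fixed constant. This follows from the trace normalization and the upper bound \(\omega_+\): if \(\lambda_j<1/2\) for more than a fraction \(1-\delta\) of indices, then the trace is below \(p\{1/2+\delta\omega_+\}<p\) for small \(\delta\). That fraction is what gives a \(\chi^2\) lower bound with order-\(p\) degrees of freedom.
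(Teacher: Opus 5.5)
Your proposal is correct in substance, and for the angular factor it matches the paper's argument. Trace normalization forces a positive fraction of eigenvalues above $1/2$, so $p^{-1}\bfG_i^\top\bfOmega_p\bfG_i$ dominates a scaled $\chi^2$ with order-$p$ degrees of freedom and has bounded negative moments. Independence of $\mathfrak a_i$ and $\xi_i$ gives the factorization of $\mu_{w,p}$, and Hanson--Wright gives $\E|\mathfrak a_i-1|=O(p^{-1/2})$. Your increment decomposition is also the paper's.

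The real difference is in the uniform interval averages. The paper writes $w_i=\xi_i+\xi_i(\mathfrak a_i-1)$. It bounds the angular part on the event $\max_i\xi_i\le\ell_{0,n}$, $\max_i|\mathfrak a_i-1|\le C\ell_{0,n}p^{-1/2}$. It handles the radial part with a maximal Rosenthal inequality for partial sums of $\xi_i-\E\xi_i$, using that every scanned segment is a difference of two partial sums and has length at least $m_*\asymp n$. No truncation and no union bound over intervals is needed there. You instead apply truncated Bernstein with a union bound over $O(n^2)$ intervals directly to $w_i$. Your route uses a single tool for both the long intervals and the short blocks in \eqref{eq:w-average-increment}. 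Your explicit second-moment computation for \eqref{eq:w-average-increment-second} is also more transparent than the paper's one-line remark; there the cross term between $e_{1,t}$ and the disjoint block sum vanishes by independence.

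One justification is wrong as written. The truncation bias $\E\{w_1\ind(w_1>C\ell_{0,n})\}$ is not $o(n^{-1/2})$ ``by the moment condition''. At a polylogarithmic truncation level, the $4+\eta$ moment only yields $O\{(\log n)^{-c_*(3+\eta)}\}$.

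What you need is the maximal part of Assumption~\ref{ass:elliptical}. For i.i.d.\ variables, $\Pbb(\max_i w_i\le C\ell_{0,n})\to1$ is equivalent to $n\Pbb(w_1>C\ell_{0,n})\to0$. Cauchy--Schwarz then gives $\E\{w_1\ind(w_1>C\ell_{0,n})\}\le(\E w_1^2)^{1/2}\Pbb(w_1>C\ell_{0,n})^{1/2}=o(n^{-1/2})$. This also supplies the block bias $k\cdot o(n^{-1/2})=o(k^{1/2})$ needed in the increment bound.

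For the long-interval display you could avoid truncation altogether. Apply Kolmogorov's maximal inequality to the partial sums of $w_i-\mu_{w,p}$, which is effectively what the paper does for $\xi_i$.
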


\begin{proof}
Let \(\lambda_{1,p},\ldots,\lambda_{p,p}\) be the eigenvalues of \(\bfOmega_p\).  Since
\(p^{-1}\sum_j\lambda_{j,p}=1\) and \(\lambda_{j,p}\le\omega_+\), at least \(c_\omega p\) eigenvalues are no smaller than \(1/2\) for some \(c_\omega>0\).  Hence, after relabeling,
\[
  \bfG_i^\top\bfOmega_p\bfG_i
  \ge \frac12\sum_{j=1}^{\lfloor c_\omega p\rfloor}Z_j^2 .
\]
The negative moments of \(p^{-1}\bfG_i^\top\bfOmega_p\bfG_i\) of every fixed order are uniformly bounded.  Together with the usual chi-square moment bounds for \(p^{-1}\bfG_i^\top\bfG_i\), this gives \(\sup_p\E \mathfrak a_i^{r_{\rm mom}}<\infty\) for every fixed \(r_{\rm mom}>0\).  The Hanson--Wright inequality gives
\[
  p^{-1}\bfG_i^\top\bfG_i=1+O_P(p^{-1/2}),
  \qquad
  p^{-1}\bfG_i^\top\bfOmega_p\bfG_i=1+O_P(p^{-1/2}),
\]
and the preceding uniform integrability implies \(\mathfrak a_i\to1\) in \(L^1\), indeed \(\E|\mathfrak a_i-1|=O(p^{-1/2})\).  Since \(\mathfrak a_i\perp\xi_i\), \eqref{eq:Ew-E-xi} follows.

On the event
\[
  \max_i\xi_i\le \ell_{0,n},
  \qquad
  \max_i|\mathfrak a_i-1|\le C \ell_{0,n}p^{-1/2},
\]
whose probability tends to one by the definition of \(\ell_{0,n}\),
\[
  \sup_I\left| |I|^{-1}\sum_{i\in I}(w_i-\xi_i)\right|
  \le \ell_{0,n}\max_i|\mathfrak a_i-1|
  =O_P(\ell_{0,n}^2p^{-1/2})=O_P(\ell_np^{-1/2}).
\]
Put \(\xi_i^{\circ}=\xi_i-\E\xi_i\) and \(r_{\rm mom}=4+\eta\).  The uniform \(r_{\rm mom}\)-moment bound
and the maximal Rosenthal inequality give
\[
  \E\max_{1\le k\le n}\left|\sum_{i=1}^k\xi_i^{\circ}\right|^{r_{\rm mom}}
  \le C_{r_{\rm mom}}n^{r_{\rm mom}/2}.
\]
Every scanned segment is the difference of two partial sums and has length at
least \(m_*\asymp n\).  Hence
\[
  \sup_I\left||I|^{-1}\sum_{i\in I}\xi_i-\E\xi_i\right|
  =O_P(m_*^{-1/2})=O_P(\ell_nm_*^{-1/2}).
\]
Combining the preceding displays proves \eqref{eq:w-xi-empirical-transfer}--\eqref{eq:w-average-uniform-zeta}; the positive lower bound follows from \(\mu_{w,p}\to\zeta_{-1}\in(0,\infty)\).

For \(t<u\), let \(k_t=\lfloor nt\rfloor\), \(k_u=\lfloor nu\rfloor\), and \(\Delta k_{t,u}=k_u-k_t\).  For the left segment,
\[
  e_{1,u}-e_{1,t}
  =\left(\frac1{k_u}-\frac1{k_t}\right)\sum_{i\le k_t}w_i
    +\frac1{k_u}\sum_{k_t<i\le k_u}w_i .
\]
The deterministic mean contribution is \(O(\Delta k_{t,u}/n)\).  For
\(w_i^{\circ}=w_i-\mu_{w,p}\), the maximal inequality for the truncated partial-sum process gives, uniformly over the natural grid,
\[
  \left|\sum_{k_t<i\le k_u}w_i^{\circ}\right|
  =O_P\{\ell_n(\Delta k_{t,u}+1)^{1/2}\}.
\]
Since \(\Delta k_{t,u}\le n|t-u|+1\), \eqref{eq:w-average-increment} follows for \(a=1\).  The right segment is identical, and \eqref{eq:w-average-increment-second} follows from the same decomposition and the second-moment bound for the truncated increments.
\end{proof}

\begin{lemma}[Uniform spatial-median and weight linearization]\label{lem:uniform-linearization}
Under \(H_0\) and Assumptions \ref{ass:elliptical}--\ref{ass:grid}, uniformly over \(s\in\calS\),
\begin{equation}
\label{eq:uniform-median-expansion}
  \widehat\bfDelta_s
  =N_s^{-1/2}\sum_{i\in J(s)}\beta_i(s)\bfY_i+\bfr_{\Delta,s}^{(0)},
  \qquad
  \sup_{s\in\calS}\norm{\bfr_{\Delta,s}^{(0)}}=O_P(\ell_n m_*^{-1}).
\end{equation}
The feasible derivative weights satisfy
\begin{equation}
\label{eq:beta-rate}
  \sup_{s\in\calS}\max_{i\in J(s)}\abs{\widehat\beta_i(s)-\beta_i(s)}
  =O_P(\ell_n m_*^{-3/2}),
  \qquad
  \sup_{s\in\calS}\sum_{i\in J(s)}\{\widehat\beta_i(s)-\beta_i(s)\}^2
  =O_P(\ell_n^2m_*^{-2}).
\end{equation}
More precisely, for \(i\in I_a(s)\), \(a=1,2\),
\begin{equation}
\label{eq:beta-multiplicative}
  \widehat\beta_i(s)-\beta_i(s)=\beta_i(s)r_{a,s}^{\rm wt},
  \qquad
  \sup_{s\in\calS}\max_{a=1,2}|r_{a,s}^{\rm wt}|=O_P(\ell_nm_*^{-1}).
\end{equation}
\end{lemma}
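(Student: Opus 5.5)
We would prove the lemma segment by segment and then assemble the contrast. Fix a segment \(I\in\{I_1(s),I_2(s)\}\) of size \(n_I\asymp n\), work under \(H_0\) with \(\bftheta=\mathbf 0\) (translation equivariance of the spatial median), and write \(\widehat\bfd_I=\widehat\bftheta(I)\). The first step is a crude uniform consistency bound \(\sup_I\norm{\widehat\bfd_I}=O_P(\ell_{0,n})\). This is an \(O(1)\)-size bound in \(\R^p\), small compared with \(\sqrt p/(2W_n)\). We would obtain it from the convexity of \(\bfu\mapsto\sum_{i\in I}\norm{\bfvarepsilon_i-\bfu}\): its directional derivative at \(\bfu\) in direction \(\bfu/\norm{\bfu}\) is \(-\sum_i U(\bfvarepsilon_i-\bfu)^\top\bfu/\norm{\bfu}\). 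Lemma~\ref{lem:spatial-sign-taylor} expands this as \(-p^{-1/2}\{\sum_i\bfY_i^\top\bfu/\norm{\bfu}-\sum_iw_i\bfu^\top\bfP_i\bfu/\norm{\bfu}\}\) plus smaller terms. We then use two bounds. First, \(\norm{\sum_{i\in I}\bfY_i}=O_P(\sqrt{n_I p}\,)\), uniformly up to \(\ell_{0,n}\) by Bernstein's inequality on the Rademacher-signed sum (Lemma~\ref{lem:signs}). Second, Lemma~\ref{lem:jacobian-sscm-input} gives the Jacobian bound \(n_I^{-1}\sum_iw_i\bfP_i\succeq(\zeta_{-1}/4)\bfI_p\) on a high-probability event (Lemma~\ref{lem:angular-w-xi}). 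Together these give a positive directional derivative on the sphere \(\norm{\bfu}=C\ell_{0,n}\), hence the minimizer lies inside it.

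The second step is the Bahadur expansion. At the minimizer the subgradient condition gives \(\sum_{i\in I}U(\bfvarepsilon_i-\widehat\bfd_I)=\mathbf 0\), since ties have probability zero. Inserting Lemma~\ref{lem:spatial-sign-taylor} with \(\bfd=\widehat\bfd_I\) yields
\[
\mathbf 0=\sum_{i\in I}\bfY_i-\Big(\sum_{i\in I}w_i\bfP_i\Big)\widehat\bfd_I+\sum_{i\in I}\bfH_i(\widehat\bfd_I)+\sum_{i\in I}\bfr_i(\widehat\bfd_I).
\]
By Lemma~\ref{lem:jacobian-sscm-input}, \(\sum_iw_i\bfP_i=n_Ie_I\bfI_p+n_I\bfE_I\) with \(\opnorm{\bfE_I}=O_P(\ell_np^{-1})\). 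Hence \(\widehat\bfd_I=(n_Ie_I)^{-1}\sum_i\bfY_i+\bfr_I\), where \(\norm{\bfr_I}\) is bounded by \(C\{\opnorm{\bfE_I}\norm{\widehat\bfd_I}+n_I^{-1}\norm{\sum_i\bfH_i}+n_I^{-1}\sum_i\norm{\bfr_i}\}\). The third-order remainder is \(O_P(p^{-1}\ell_{0,n}^3\norm{\widehat\bfd}^3)\), and the first term is \(O_P(\ell_n^2 p^{-1})\). For the quadratic term, \(\bfH_i\) involves \(p^{-1/2}w_i^2\{\bfd\,\bfU_i^\top\bfd,\ \bfU_i\norm{\bfd}^2,\ \bfU_i(\bfU_i^\top\bfd)^2\}\). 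Since \(\bfU_i\) is sign-symmetric given \(\calF^0\), the averages of \(w_i^2\bfU_i\) and \(w_i^2\bfU_i(\bfU_i^\top\bfd)^2\) are \(O_P(\ell n_I^{-1/2})\) in norm. The term \(n_I^{-1}\sum_iw_i^2\bfU_i^\top\bfd\) is \(O_P(\ell p^{-1/2})\) using \(\E(\bfu^\top\bfU_i)^2\le Cp^{-1}\). Each is multiplied by \(p^{-1/2}\norm{\bfd}^2\), so altogether \(\norm{\bfr_I}=O_P(\ell_np^{-1})=O_P(\ell_nm_*^{-1})\).

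Uniformity over \(s\) comes from a union bound over the \(O(n^2)\) integer segments. The conditional Rademacher structure gives subgaussian tails, and \(\ell_n\) absorbs \(\sqrt{\log n}\). Continuous triples reduce to integer endpoints, so nothing more is needed there. Taking differences gives \(\widehat\bfDelta_s=\sum_a\pm(n_ae_{a,s})^{-1}\sum_{I_a}\bfY_i+\bfr_{\Delta,s}^{(0)}=N_s^{-1/2}\sum\beta_i(s)\bfY_i+\bfr^{(0)}_{\Delta,s}\), which is \eqref{eq:uniform-median-expansion}.

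For the weights, \(\widehat w_{i,a,s}=\sqrt p/\norm{\bfvarepsilon_i-\widehat\bfd_{I_a}}\). Expanding the norm gives \(\widehat w_i=w_i\{1+p^{-1/2}w_i\bfU_i^\top\widehat\bfd+O(p^{-1}w_i^2\norm{\widehat\bfd}^2)\}\). Averaging, \(\widehat e_{a,s}-e_{a,s}=p^{-1/2}n_a^{-1}\sum w_i^2\bfU_i^\top\widehat\bfd+O_P(\ell p^{-1})\). The first term is \(O_P(\ell p^{-1})\) by the same bound \(n^{-1}\norm{\sum w_i^2\bfU_i}=O_P(\ell n^{-1/2})\) used above. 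Hence \(r_{a,s}^{\rm wt}=e_{a,s}/\widehat e_{a,s}-1=O_P(\ell_nm_*^{-1})\) uniformly, using \(\inf e\ge\zeta_{-1}/2\). This is \eqref{eq:beta-multiplicative}. Then \eqref{eq:beta-rate} follows from \(|\beta_i|\le Cm_*^{-1/2}\) and \(\sum\beta_i^2=O_P(1)\).

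The main obstacle is the quadratic Taylor term. A naive bound \(\norm{\bfH_i}\lesssim p^{-1/2}w_i^2\norm{\bfd}^2\) gives only \(O(p^{-1/2})\), not \(O(p^{-1})\). The plan therefore relies on the conditional sign-symmetry of Lemma~\ref{lem:signs}. The difficulty is that \(\widehat\bfd_I\) itself depends on the signs, so we would first replace \(\bfd\) by its leading term \((n_Ie_I)^{-1}\sum\bfY_i\) and bound the resulting cubic sign chaos via its second moment.
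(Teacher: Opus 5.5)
Your plan is essentially the paper's proof. It has the same steps: a crude $O_P(\ell_{0,n})$ radius for each segment median from convexity and the Taylor-expanded score, then the score equation with the Neumann inversion $\bfJ_I^{-1}=e_I^{-1}\bfI_p+O_P(\ell_np^{-1})$, then substitution of $e_I^{-1}\bar\bfY_I$ into the quadratic Taylor term with a second-moment bound on the resulting degree-one/three Rademacher chaos, and finally the multiplicative weight correction through $\norm{|I|^{-1}\sum_{i\in I}w_i^2\bfU_i}=O_P(m_*^{-1/2})$.

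Two things to tighten, plus an optional simplification:
\begin{enumerate}
\item The subgradient condition becomes the score equation not because ties have probability zero (a spatial median can coincide with a data point with positive probability). It holds because $\min_i\norm{\bfvarepsilon_i}\ge\sqrt p/W_n$ exceeds your crude radius with probability tending to one.
\item After bounding the quadratic term at $e_I^{-1}\bar\bfY_I$, you must transfer back to $\widehat\bftheta(I)-\bftheta$. Use $\norm{\bfH_i(\bfd)-\bfH_i(\bfe)}\le Cp^{-1/2}w_i^2(\norm{\bfd}+\norm{\bfe})\norm{\bfd-\bfe}$; its averaged constant is $o_P(1)$, so the difference is absorbed.
\item The substitution can in fact be skipped. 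Two of the three pieces of $|I|^{-1}\sum_i\bfH_i(\bfd)$ factor through the $\bfd$-free vector $|I|^{-1}\sum_iw_i^2\bfU_i$. The third is at most $C\max_iw_i^2\,p^{-3/2}\opnorm{\bfR_I^0}\norm{\bfd}^2$, uniformly in $\bfd$.
\end{enumerate}
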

\begin{proof}
For every scanned segment \(I\), set
\begin{align*}
 \bar\bfY_I&=|I|^{-1}\sum_{i\in I}\bfY_i,
 &e_I&=|I|^{-1}\sum_{i\in I}w_i,\\
 \widehat\bfdelta_I&=\widehat\bftheta(I)-\bftheta,
 &\bfP_i&=\bfI_p-\bfU_i\bfU_i^\top,\\
 \bfJ_I&=|I|^{-1}\sum_{i\in I}w_i\bfP_i,
 &\bfS_{2,I}(\bfd)&=|I|^{-1}\sum_{i\in I}\bfH_i(\bfd),\\
 &&\bfS_{3,I}(\bfd)&=|I|^{-1}\sum_{i\in I}\bfr_i(\bfd).
\end{align*}
Lemma~\ref{lem:angular-w-xi} gives
\begin{equation}
\label{eq:eI-uniform-prelim}
 \sup_I|e_I-\mu_{w,p}|=O_P(\ell_nm_*^{-1/2}),
 \qquad
 \sup_I|e_I-\zeta_{-1}|=O_P(\ell_nm_*^{-1/2})+o(1),
 \qquad
 \inf_Ie_I\ge\zeta_{-1}/2
\end{equation}
with probability tending to one.  Moreover,
\[
 \sup_{\|\bfu\|=1}\E\{w_i(\bfu^\top\bfU_i)^2\}\le Cp^{-1}.
\]
Matrix Bernstein's inequality and the polynomial scan union therefore imply
\begin{equation}
\label{eq:H-I-prelim}
 \sup_I\left\||I|^{-1}\sum_{i\in I}
 w_i\bfU_i\bfU_i^\top\right\|_{\rm op}
 =O_P(\ell_np^{-1}).
\end{equation}
Since
\[
 \bfJ_I=e_I\bfI_p-|I|^{-1}\sum_{i\in I}w_i\bfU_i\bfU_i^\top,
\]
\eqref{eq:eI-uniform-prelim}--\eqref{eq:H-I-prelim} and the Neumann
expansion yield
\begin{equation}
\label{eq:J-inv-prelim}
 \bfJ_I^{-1}=e_I^{-1}\bfI_p+\bfR_{J,I},
 \qquad
 \sup_I\|\bfR_{J,I}\|_{\rm op}=O_P(\ell_np^{-1}).
\end{equation}
The underlying bound is
\(O_P(\ell_{0,n}^2p^{-1})\); the displayed rate uses the final envelope
\(\ell_n\).

Conditionally on \(\calF_I^0\),
\[
 \E(\|\bar\bfY_I\|^2\mid\calF_I^0)
 =|I|^{-2}\sum_{i\in I}\|\bfY_i\|^2
 =p/|I|=O(1).
\]
Rademacher concentration and the scan union give
\[
 \sup_I\|\bar\bfY_I\|=O_P(\ell_{0,n}).
\]

On the event \(\max_iw_i\le W_n=C\ell_{0,n}\),
Lemma~\ref{lem:spatial-sign-taylor} applies whenever
\(\|\bfd\|\le\sqrt p/(2W_n)\).  Define
\[
 \bfS_I^{(0)}(\bfd)
 =|I|^{-1}\sum_{i\in I}\sqrt p\,
 U(\bfX_i-\bftheta-\bfd).
\]
Then
\[
 \bfS_I^{(0)}(\bfd)
 =\bar\bfY_I-\bfJ_I\bfd+\bfS_{2,I}(\bfd)+\bfS_{3,I}(\bfd).
\]
The moment bounds
\[
 \sup_I|I|^{-1}\sum_{i\in I}w_i^r=O_P(1),
 \qquad r=2,3,
\]
combined with \eqref{eq:eI-uniform-prelim}--\eqref{eq:H-I-prelim} imply,
for \(\|\bfd\|=M\ell_{0,n}\),
\begin{align*}
 \bfd^\top\bfS_I^{(0)}(\bfd)
 &\le M\ell_{0,n}\sup_I\|\bar\bfY_I\|
   -\frac{\zeta_{-1}}4M^2\ell_{0,n}^2\\
 &\quad
   +O_P\{p^{-1/2}M^3\ell_{0,n}^3
          +p^{-1}M^4\ell_{0,n}^4\}.
\end{align*}
For sufficiently large fixed \(M\),
\[
 \sup_{\|\bfd\|=M\ell_{0,n}}
 \bfd^\top\bfS_I^{(0)}(\bfd)<0
\]
with arbitrarily high probability.  Since
\[
 \nabla_{\bfd}\left\{|I|^{-1}\sum_{i\in I}
 \|\bfX_i-\bftheta-\bfd\|\right\}
 =-p^{-1/2}\bfS_I^{(0)}(\bfd),
\]
convexity implies
\begin{equation}
\label{eq:median-radius-prelim}
 \sup_I\|\widehat\bfdelta_I\|=O_P(\ell_{0,n}).
\end{equation}

The angular law gives non-collinear general position almost surely.  In
addition,
\[
 \min_{1\le i\le n}\|\bfX_i-\bftheta\|
 =\frac{\sqrt p}{\max_{i\le n}w_i}
 \ge\frac{\sqrt p}{W_n},
 \qquad
 W_n\ell_{0,n}=o(\sqrt p).
\]
Together with \eqref{eq:median-radius-prelim}, this yields
\[
 \Pbb\left[\widehat\bftheta(I)\notin\{\bfX_i:i\in I\}
 \text{ for every scanned }I\right]\longrightarrow1.
\]
Thus the score equation holds uniformly:
\[
 \mathbf0=|I|^{-1}\sum_{i\in I}\sqrt p\,
 U\{\bfX_i-\widehat\bftheta(I)\}.
\]
Substitution of the Taylor expansion gives
\begin{equation}
\label{eq:score-expanded-prelim}
 \bar\bfY_I-\bfJ_I\widehat\bfdelta_I
 +\bfS_{2,I}(\widehat\bfdelta_I)
 +\bfS_{3,I}(\widehat\bfdelta_I)=\mathbf0.
\end{equation}

The cubic remainder satisfies
\begin{equation}
\label{eq:S3-prelim}
 \sup_I\|\bfS_{3,I}(\widehat\bfdelta_I)\|
 \le Cp^{-1}\sup_I\left(|I|^{-1}\sum_{i\in I}w_i^3\right)
       \|\widehat\bfdelta_I\|^3
 =O_P(\ell_{0,n}^3p^{-1})
 =O_P(\ell_np^{-1}).
\end{equation}

For the quadratic term, define the linear approximation
\[
 \bfdelta_I^L=e_I^{-1}\bar\bfY_I
 =\sum_{j\in I}\varsigma_j\bfc_{j,I},
 \qquad
 \bfc_{j,I}=e_I^{-1}|I|^{-1}\widetilde\bfY_j.
\]
Then
\[
 \sum_{j\in I}\|\bfc_{j,I}\|^2=O_P(1),
 \qquad
 \max_{j\in I}\|\bfc_{j,I}\|=O_P(|I|^{-1/2}).
\]
Writing
\[
 \bfU_i=\varsigma_i\widetilde\bfU_i,
 \qquad
 \bfY_i=\varsigma_i\widetilde\bfY_i,
\]
each coordinate of \(|I|^{-1}\sum_i\bfH_i(\bfdelta_I^L)\) is a Walsh
polynomial of reduced degree one or three.  For the representative term,
\begin{align*}
 |I|^{-1}\sum_iw_i^2\bfdelta_I^L
 (\bfU_i^\top\bfdelta_I^L)
 &=|I|^{-1}\sum_{i,j,k\in I}
 \varsigma_i\varsigma_j\varsigma_k\bfM_{ijk,I},\\
 \bfM_{ijk,I}
 &=\frac{w_i^2}{|I|\sqrt p}\,
 \bfc_{j,I}(\widetilde\bfU_i^\top\bfc_{k,I}).
\end{align*}
Put
\[
 \bfGamma_I=\sum_{j\in I}\bfc_{j,I}\bfc_{j,I}^\top
 =e_I^{-2}|I|^{-1}\bfR_I^0.
\]
On the common high-probability event,
\begin{equation*}
 \tr(\bfGamma_I)=\sum_j\|\bfc_{j,I}\|^2\le C,
 \qquad
 \|\bfGamma_I\|_{\rm op}\le C|I|^{-1},
 \qquad
 \max_j\|\bfc_{j,I}\|^2\le C|I|^{-1}.
\end{equation*}
After reducing repeated indices and symmetrizing distinct triples,
\begin{align*}
 &\sum_j\|\bfM^{(1)}_{j,I}\|^2
  +\sum_{j<k<\ell}\|\bfM^{(3)}_{jk\ell,I}\|^2\\
 &\quad\le
 \frac{C}{|I|p}
 \left(|I|^{-1}\sum_{i\in I}w_i^4\right)
 \{1+|I|\|\bfGamma_I\|_{\rm op}\}
 \left(\sum_{j\in I}\|\bfc_{j,I}\|^2\right)^2
 \le\frac{C}{|I|p}.
\end{align*}
The remaining two quadratic Taylor terms have the same coefficient
contractions.  Orthogonality of the reduced Walsh monomials and fixed-degree
hypercontractivity therefore give
\[
 \E\left[
 \left\||I|^{-1}\sum_{i\in I}\bfH_i(\bfdelta_I^L)\right\|^2
 \middle|\calF_I^0\right]
 \le C(|I|p)^{-1}=O(|I|^{-2}),
\]
and hence
\[
 \sup_I\|\bfS_{2,I}(\bfdelta_I^L)\|
 =O_P(\ell_{0,n}^4m_*^{-1}).
\]
Furthermore,
\[
 \|\bfH_i(\bfd)-\bfH_i(\bfe)\|
 \le Cp^{-1/2}w_i^2(\|\bfd\|+\|\bfe\|)\|\bfd-\bfe\|,
\]
so the averaged Lipschitz coefficient on the preliminary radius is
\[
 O_P(\ell_{0,n}^3p^{-1/2})=o_P(1).
\]
Subtracting the score equations at \(\widehat\bfdelta_I\) and
\(\bfdelta_I^L\), and using \eqref{eq:J-inv-prelim}, yields
\begin{align*}
 \sup_I\|\widehat\bfdelta_I-\bfdelta_I^L\|
 &\le C\sup_I\Bigl\{
 \|\bfS_{2,I}(\bfdelta_I^L)\|
 +\|\bfS_{3,I}(\widehat\bfdelta_I)\|
 +\ell_{0,n}^2p^{-1}\|\widehat\bfdelta_I\|\Bigr\}\\
 &=O_P(\ell_{0,n}^4m_*^{-1})
 =O_P(\ell_nm_*^{-1}).
\end{align*}
Consequently,
\begin{equation}
\label{eq:S2-prelim}
 \sup_I\|\bfS_{2,I}(\widehat\bfdelta_I)\|
 =O_P(\ell_{0,n}^4m_*^{-1})
 =O_P(\ell_nm_*^{-1}).
\end{equation}
Solving \eqref{eq:score-expanded-prelim} with
\eqref{eq:J-inv-prelim}, \eqref{eq:S3-prelim}, and
\eqref{eq:S2-prelim} gives
\[
 \widehat\bftheta(I)-\bftheta
 =\frac{1}{|I|e_I}\sum_{i\in I}\bfY_i+\bfr_I,
 \qquad
 \sup_I\|\bfr_I\|
 =O_P(\ell_{0,n}^4m_*^{-1})
 =O_P(\ell_nm_*^{-1}).
\]
Taking \(I=I_1(s)\) and \(I=I_2(s)\), and subtracting, proves
\eqref{eq:uniform-median-expansion}.

For the feasible inverse-distance weights, Taylor expansion gives
\[
 \widehat w_{i,I}
 =w_i+\frac{w_i^2}{\sqrt p}\bfU_i^\top
 (\widehat\bftheta(I)-\bftheta)
 +O\left\{p^{-1}w_i^3
 \|\widehat\bftheta(I)-\bftheta\|^2\right\}.
\]
With \(\widehat e_I=|I|^{-1}\sum_{i\in I}\widehat w_{i,I}\),
\begin{align*}
 |\widehat e_I-e_I|
 &\le p^{-1/2}\|\widehat\bftheta(I)-\bftheta\|
 \left\||I|^{-1}\sum_{i\in I}w_i^2\bfU_i\right\|
 +O_P(\ell_nm_*^{-1}).
\end{align*}
Conditionally on \(\calF^0\), Doob's inequality yields
\[
 \E\left[
 \max_{k\le n}\left\|\sum_{i=1}^kw_i^2\bfU_i\right\|^2
 \middle|\calF^0\right]
 \le4\sum_{i=1}^nw_i^4=O_P(n),
\]
so every segment difference satisfies
\[
 \sup_I\left\||I|^{-1}\sum_{i\in I}w_i^2\bfU_i\right\|
 =O_P(m_*^{-1/2}).
\]
Therefore,
\[
 \sup_I|\widehat e_I-e_I|=O_P(\ell_nm_*^{-1}).
\]
For \(i\in I_a(s)\),
\begin{align*}
 \widehat\beta_i(s)
 &=\beta_i(s)\frac{e_{a,s}}{\widehat e_{a,s}},\\
 \widehat\beta_i(s)-\beta_i(s)
 &=\beta_i(s)\left(\frac{e_{a,s}}{\widehat e_{a,s}}-1\right),\\
 \sup_{s,a}\left|\frac{e_{a,s}}{\widehat e_{a,s}}-1\right|
 &\le C\sup_{s,a}|\widehat e_{a,s}-e_{a,s}|
 =O_P(\ell_nm_*^{-1}).
\end{align*}
This is \eqref{eq:beta-multiplicative}; together with
\[
 \max_i|\beta_i(s)|\le Cm_s^{-1/2},
 \qquad
 \sum_i\beta_i(s)^2\le C,
\]
it proves \eqref{eq:beta-rate}.
\end{proof}

\begin{lemma}[Negative-axis bilinear deterministic equivalent]\label{lem:bilinear-de}
Under $H_0$ and Assumptions \ref{ass:elliptical}--\ref{ass:grid}, consider any
deterministic collection of at most polynomially many triples
$(s,\bfd_p,\bfe_p)$, where $s$ is a trimmed pooled segment and
$\|\bfd_p\|+\|\bfe_p\|\le C$.  Uniformly over this collection and
$\rho\in[\rho_0,\rho_1]$,
\begin{equation}
\label{eq:bilinear-de-rate}
  \left|\bfd_p^\top\{\bfQ_{\rho,s}^0-\bfD_{\rho,m_s}\}\bfe_p\right|
  =O_P(\ell_n m_*^{-1/2}).
\end{equation}
In particular, for every bounded deterministic vector $\bfd_p$,
\begin{equation}
\label{eq:bilinear-de-input}
  \bfd_p^\top\bfQ_{\rho,s}^0\bfd_p
  =\bfd_p^\top\bfD_{\rho,m_s}\bfd_p+O_P(\ell_n m_*^{-1/2}).
\end{equation}
\end{lemma}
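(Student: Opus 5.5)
We will prove \eqref{eq:bilinear-de-rate} by a Gaussian transfer followed by a standard fixed-ridge bilinear deterministic equivalent, with a polarization step and a net argument to handle the uniformity. By polarization, $\bfd^\top\bfM\bfe=\frac14\{(\bfd+\bfe)^\top\bfM(\bfd+\bfe)-(\bfd-\bfe)^\top\bfM(\bfd-\bfe)\}$ for symmetric $\bfM$. Hence it suffices to treat quadratic forms $\bfd_p^\top(\bfQ_{\rho,s}^0-\bfD_{\rho,m_s})\bfd_p$ with $\|\bfd_p\|\le 2C$, over a polynomially large deterministic family. We split the difference into two pieces:
\[
\bfd_p^\top(\bfQ_{\rho,s}^0-\bfQ_{\rho,s}^G)\bfd_p+\bfd_p^\top(\bfQ_{\rho,s}^G-\bfD_{\rho,m_s})\bfd_p,
\]
where $\bfQ_{\rho,s}^G=(\bfR_{m_s}^G+\rho\bfI_p)^{-1}$ is built from the same Gaussian vectors $\bfG_i$, $i\in J(s)$.

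\emph{First piece (SSCM-to-Gaussian transfer).} The resolvent identity gives
\[
\bfQ_{\rho,s}^0-\bfQ_{\rho,s}^G=\bfQ_{\rho,s}^0(\bfR_{m_s}^G-\bfR_s^0)\bfQ_{\rho,s}^G.
\]
Lemma~\ref{lem:jacobian-sscm-input} only yields $O_P(\ell_np^{-1/2})$ in operator norm via \eqref{eq:sscm-gaussian-transfer}, which already matches the target rate because $p\asymp m_*$; combined with $\|\bfQ\|_{\rm op}\le\rho_0^{-1}$, the first piece is $O_P(\ell_nm_*^{-1/2})$ uniformly over $s$ and $\rho$. This bound is deterministic given the uniform operator-norm statement, so no union over the family is needed here.

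\emph{Second piece (Gaussian bilinear deterministic equivalent).} We use the classical negative-axis result of \citet{hachem2013bilinear} (see also \citet{bai2010spectral}). For sample covariance matrices $\bfR_m^G$ with population $\bfOmega_p$ of bounded norm and $z=-\rho$, $\rho\ge\rho_0$, the resolvent satisfies
\[
\E\,\bfd^\top\bfQ^G\bfd-\bfd^\top\bfD_{\rho,m}\bfd=O(m^{-1}).
\]
Note that $\bfD_{\rho,m}=(a_{\rho,m}\bfOmega_p+\rho\bfI)^{-1}$ is exactly the deterministic equivalent, since $a_{\rho,m}=1-\gamma_m+\gamma_m\rho m_{\rho,m}$ is the Mar\v{c}enko--Pastur fixed point $1/(1+\gamma_m\delta)$ written on the negative axis; the stability bounds of Lemma~\ref{lem:ridge-stability} make the constants uniform in $\rho$ and $m\asymp n$. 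For the fluctuation, the map $\bfG\mapsto\bfd^\top\bfQ^G\bfd$ is Lipschitz with constant $O(m^{-1/2})$ on the event $\|\bfR^G\|_{\rm op}\le C$, which holds with probability $1-e^{-cn}$. Gaussian concentration (after a Lipschitz extension) then gives a tail bound of $\exp(-cmt^2)$, and $t=\ell_nm_*^{-1/2}$ absorbs a union over polynomially many $(s,\bfd_p)$.

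\emph{Uniformity in $\rho$.} To pass from a grid to all $\rho\in[\rho_0,\rho_1]$, we take a grid of mesh $n^{-1}$ and use the Lipschitz property in $\rho$. Indeed, $\|\partial_\rho\bfQ\|_{\rm op}\le\rho_0^{-2}$, and $\partial_\rho\bfD_{\rho,m}$ is bounded because $a_{\rho,m}$ is Lipschitz in $\rho$ by the implicit function theorem, using $\partial_aF\le-1$ from the proof of Lemma~\ref{lem:ridge-stability}.

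The main obstacle is the first piece. It works only because the operator-norm transfer \eqref{eq:sscm-gaussian-transfer} is already at rate $\ell_np^{-1/2}$. If one wanted the sharper rate $m^{-1/2}$ without logs, a finer argument would be needed, namely a quadratic-form comparison using the angular factors $r_{i,p}^{\rm ang}$ in place of the crude operator bound. This is not required for the stated envelope.
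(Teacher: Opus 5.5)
Your proposal is correct and follows essentially the same route as the paper: couple with the Gaussian vectors, use the resolvent identity together with the operator-norm SSCM transfer \eqref{eq:sscm-gaussian-transfer} for the first piece, then invoke the negative-axis Gaussian bilinear deterministic equivalent of \citet{hachem2013bilinear} (bias $O(m^{-1})$, concentration tail $\exp(-cmx^2)$) with a union bound over the polynomial family and a ridge net. Your explicit check that $\rho\mapsto\bfD_{\rho,m}$ is Lipschitz via $\partial_aF\le-1$ fills in a step that the paper leaves implicit when it extends the bound from the net to the whole ridge interval.
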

\begin{proof}
Work first with one pooled segment $I$, put $m=|I|$, and define
\[
  \bfR_I^G=m^{-1}\sum_{i\in I}
  \bfOmega_p^{1/2}\bfG_i\bfG_i^\top\bfOmega_p^{1/2},
  \qquad \bfQ_{\rho,I}^G=(\bfR_I^G+\rho\bfI_p)^{-1}.
\]
The negative-axis Gaussian bilinear-resolvent bounds of
\citet{hachem2007deterministic,hachem2013bilinear} give, uniformly for
bounded deterministic $\bfd_p,\bfe_p$,
\begin{equation}
\label{eq:gaussian-bilinear-rate}
 \bfd_p^\top\{\bfQ_{\rho,I}^G-\bfD_{\rho,m}\}\bfe_p
 =O_P(\ell_{0,n}m^{-1/2})
\end{equation}
on a polynomial ridge net.  The concentration part of the cited
negative-axis bound, obtained from the Gaussian resolvent derivative, has a
fixed-segment tail of the form $2\exp(-c m x^2)$ for the centered bilinear
form; its finite-$p$ deterministic-equivalent bias is $O(m^{-1})$.  Taking
$x=\ell_{0,n}m^{-1/2}$ permits a union bound over every polynomial
collection of segments and deterministic vector pairs.  The derivative
identity
\[
 \partial_\rho\bfQ_{\rho,I}^G=-(\bfQ_{\rho,I}^G)^2,
 \qquad \|\bfQ_{\rho,I}^G\|_{\rm op}\le\rho_0^{-1},
\]
extends the bound from the net to the compact ridge interval.

Lemma~\ref{lem:jacobian-sscm-input} and the resolvent identity yield
\begin{align*}
 &\left|\bfd_p^\top(\bfQ_{\rho,I}^0-\bfQ_{\rho,I}^G)\bfe_p\right|\\
 &\qquad\le \rho_0^{-2}\|\bfd_p\|\,\|\bfe_p\|
       \|\bfR_I^0-\bfR_I^G\|_{\rm op}
 =O_P(\ell_n p^{-1/2}).
\end{align*}
Since $p\asymp m\asymp n$, combining this display with
\eqref{eq:gaussian-bilinear-rate} proves \eqref{eq:bilinear-de-rate}.  The
quadratic form \eqref{eq:bilinear-de-input} is the special case
$\bfe_p=\bfd_p$; alternatively, all bilinear statements follow from
polarization.
\end{proof}

\begin{lemma}[Weighted companion law]\label{lem:weighted-offdiag}
Under $H_0$ and Assumptions \ref{ass:elliptical}--\ref{ass:grid}, let
$m\asymp n$.  For any deterministic collection of at most polynomially many
pairs $\bfb(s),\bfb(r)$, uniformly over that collection and over
$\rho,\rho'\in[\rho_0,\rho_1]$, suppose
\[
  \max_i\{|b_i(s)|+|b_i(r)|\}\le Cm^{-1/2},
  \qquad \|\bfb(s)\|_2+\|\bfb(r)\|_2\le C.
\]
Then
\begin{align}
&m\sum_{i\ne j} b_i(s)b_i(r)b_j(s)b_j(r)
 \widetilde A_{ij,\rho}\widetilde A_{ij,\rho'}\notag\\
&\qquad=\mathfrak c_{\rho,\rho',m}
 \left\{\sum_i b_i(s)b_i(r)\right\}^2
 +O_P(\ell_nm^{-1/2}).\label{eq:weighted-offdiag-input}
\end{align}
The same conclusion holds for the oracle and feasible CUSUM weights after
their inverse-distance factors are included.

If $m/n\to c_{\rm pool}\in(0,\infty)$, then, uniformly on the compact ridge square,
\[
 \mathfrak c_{\rho,\rho',m}
 \longrightarrow\mathfrak c_{\rho,\rho'}(c_{\rm pool}),
\]
where, with \(f_\rho(x)=x/(x+\rho)\),
\begin{equation}
\label{eq:companion-cov-local-aspect}
\begin{aligned}
 \mathfrak c_{\rho,\rho'}(c_{\rm pool})
 &=
 \int f_\rho(x)f_{\rho'}(x)\,
 d\underline F_{\gamma/c_{\rm pool},H}(x)\\
 &\quad-
 \prod_{\lambda\in\{\rho,\rho'\}}
 \int f_\lambda(x)\,
 d\underline F_{\gamma/c_{\rm pool},H}(x),
\end{aligned}
\end{equation}
and \(\underline F_{\gamma/c_{\rm pool},H}\) is the companion
Mar\v{c}enko--Pastur law at aspect ratio \(\gamma/c_{\rm pool}\).  We write
$\mathfrak c_{\rho,\rho'}=\mathfrak c_{\rho,\rho'}(1)$ for the full-sample global scans.
For every fixed compact range $0<\underline c_{\rm pool}\le m/n\le \overline c_{\rm pool}<\infty$,
\begin{equation}
\label{eq:companion-cov-uniform-bounds}
  0<c_{\rm comp}\le\inf_{m,\rho}\mathfrak c_{\rho,\rho,m}
  \le\sup_{m,\rho}\mathfrak c_{\rho,\rho,m}\le C_{\rm comp}<\infty
\end{equation}
for all sufficiently large $n$.
\end{lemma}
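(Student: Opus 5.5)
The plan is to reduce the weighted off-diagonal sum to a quadratic functional of the companion matrix that depends only on the CUSUM weights through $\sum_i b_i(s)b_i(r)$. Then I will identify that functional with its Gaussian counterpart $\mathfrak c_{\rho,\rho',m}$. Write $c_i=b_i(s)b_i(r)$, so that $\max_i|c_i|\le Cm^{-1}$ and $\sum_i|c_i|\le C$. The target is
\[
S_m=m\sum_{i\ne j}c_ic_j\widetilde A_{ij,\rho}\widetilde A_{ij,\rho'}.
\]
The signs play no role here, since $A^0_{ij,\rho}A^0_{ij,\rho'}=\widetilde A_{ij,\rho}\widetilde A_{ij,\rho'}$ by Lemma~\ref{lem:companion-contraction}.

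\textbf{Step 1: replace the spatial signs by Gaussian columns.} Using $\bfY_i=\bfOmega_p^{1/2}\bfG_i(1+r^{\rm ang}_{i,p})$, the bound $\max_i|r^{\rm ang}_{i,p}|=O_P(\ell_{0,n}p^{-1/2})$, and \eqref{eq:sscm-gaussian-transfer} together with the resolvent identity, I get $A^0_{ij,\rho}=A^G_{ij,\rho}+\Delta_{ij}$.

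The crude entrywise bound on $\Delta_{ij}$ is too weak, because off-diagonal entries are only $O(m^{-1/2})$. So I will bound the perturbation in the weighted Frobenius form $m\sum_{i\ne j}|c_ic_j|\,|A_{ij}\Delta_{ij}|$. Applying Cauchy--Schwarz and the row bound \eqref{eq:A-row-bounds}, this is at most
\[
Cm\cdot m^{-2}\,\|\bfA\|_F\,\|\bfDelta\|_F .
\]
Here $\|\bfDelta\|_F\le \sqrt m\,\opnorm{\bfDelta}\,O(1)=O_P(\ell_n)$, and $\|\bfA\|_F\le\sqrt m$. The resulting bound $O_P(\ell_nm^{-1/2})$ is exactly the target rate.

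\textbf{Step 2: concentration for the Gaussian analogue.} Set
\[
S^G_m=m\sum_{i\ne j}c_ic_jA^G_{ij,\rho}A^G_{ij,\rho'},
\]
and compute its mean. Column exchangeability gives $\E(A^G_{ij,\rho}A^G_{ij,\rho'})=m^{-1}\mathfrak c_{\rho,\rho',m}$ for $i\ne j$. Hence
\[
\E S^G_m=\mathfrak c_{\rho,\rho',m}\Bigl\{\bigl(\textstyle\sum_ic_i\bigr)^2-\textstyle\sum_ic_i^2\Bigr\},
\]
and $\sum_ic_i^2=O(m^{-1})$ is negligible.

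For the fluctuation I will use Gaussian Poincar\'e (or log-Sobolev) concentration. The map $\bfG_m\mapsto S^G_m$ has gradient of order $m^{-1/2}$ on the event $\opnorm{\bfR^G}\le C$. This is because $\partial A^G$ involves $\bfQ^G$, which is bounded by $\rho_0^{-1}$, and the weights contribute $m^{-2}$ against $m$ terms. This yields $S^G_m-\E S^G_m=O_P(\ell_{0,n}m^{-1/2})$ with sub-Gaussian tails. Those tails permit the union bound over the polynomial collection and over a polynomial $\rho$-net, and the extension to the full ridge square uses $\partial_\rho\bfQ=-\bfQ^2$.

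Steps 1 and 2 together give \eqref{eq:weighted-offdiag-input}. For the inverse-distance-weighted version, the weights $b_i/e_{a}$ differ from deterministic ones only by segment-constant factors. Those factors pull out of each block sum, so the same argument applies conditionally on $\calF^0$ (the $w_i$ are independent of the Gaussian direction once the angle is handled), and the feasible version then follows from \eqref{eq:beta-multiplicative}.

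\textbf{Step 3: the limit and the bounds \eqref{eq:companion-cov-uniform-bounds}.} Write $\bfA^G=\bfF(\bfF+\rho)^{-1}$, where $\bfF=m^{-1}\bfG^\top\bfOmega\bfG$. Exchangeability expresses the trace identity
\[
\sum_{i\ne j}A_{ij,\rho}A_{ij,\rho'}=\tr(f_\rho(\bfF)f_{\rho'}(\bfF))-\sum_iA_{ii,\rho}A_{ii,\rho'}.
\]
The diagonal entries concentrate around $m^{-1}\tr f_\rho(\bfF)$, so
\[
\mathfrak c_{\rho,\rho',m}=\E\bigl[m^{-1}\tr f_\rho f_{\rho'}(\bfF)\bigr]-\E\bigl[m^{-1}\tr f_\rho\bigr]\,\E\bigl[m^{-1}\tr f_{\rho'}\bigr]+o(1).
\]
The companion Mar\v{c}enko--Pastur law applied to these bounded continuous $f$ then gives \eqref{eq:companion-cov-local-aspect}, uniformly via equicontinuity in $(\rho,\rho',m/n)$, as in Lemma~\ref{lem:ridge-stability}.

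The upper bound in \eqref{eq:companion-cov-uniform-bounds} is immediate from $0\le f\le1$. The lower bound is the variance of $f_\rho$ under the companion law. That law has an atom at $0$ (when $\gamma/c_{\rm pool}>1$) or a nondegenerate continuous part, because $H$ is nondegenerate with $\int x\,dH=1$; either way $f_\rho$ is nonconstant, so the variance is positive. Continuity and compactness then make it uniform.

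\textbf{Main obstacle.} I expect the hardest part to be Step 1's perturbation control at the sharp rate $m^{-1/2}$. Entrywise errors must be aggregated in Frobenius norm rather than summed absolutely, and I will need $\|\bfDelta\|_F=O_P(\ell_n)$ rather than the naive $\sqrt m\,\ell_np^{-1/2}$ times extra factors. If the direct bound fails, the fallback is to represent $\bfY_i$ exactly as $\bfOmega^{1/2}\bfG_i$ times a scalar and treat $\bfA^0$ as $\bfA^G$ under a diagonal rescaling $\bfD_r=\diag(1+r_i)$. The difference is then $\bfD_r\bfA'\bfD_r-\bfA^G$ with $\opnorm{\bfD_r-\bfI}=O_P(\ell_{0,n}p^{-1/2})$, which is directly controllable.
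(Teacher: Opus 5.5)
Your proof is correct, but its core Gaussian step follows a different route from the paper's.

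\textbf{The paper's route.} The paper conditions on $\operatorname{spec}(\bfK_m^G)$ and writes $\bfA^G_{\rho,m}=\bfO\bfLambda_\rho\bfO^\top$ with $\bfO$ Haar on $\mathbb O_m$. It gets the fluctuation from Haar concentration. It computes the conditional mean with second- and fourth-order Haar moment formulas, which produce $m^{-1}\tr\{f_\rho(\bfK_m^G)f_{\rho'}(\bfK_m^G)\}-m^{-2}\tr f_\rho(\bfK_m^G)\tr f_{\rho'}(\bfK_m^G)$ up to $O(m^{-1})$. It then freezes these traces with a Poincar\'e inequality, and only afterwards identifies the constant with $\mathfrak c_{\rho,\rho',m}$ via exchangeability.

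\textbf{Your route and what each buys.} You obtain the mean exactly and unconditionally from exchangeability and the definition $\mathfrak c_{\rho,\rho',m}=m\E(A^G_{12,\rho,m}A^G_{12,\rho',m})$. You get the fluctuation from Gaussian concentration in $\bfG_m$. This avoids Weingarten calculus and is more elementary. In fact the map is globally Lipschitz, since $\|(\bfK_m^G+\rho\bfI)^{-1}m^{-1/2}\bfG_m^\top\bfOmega_p^{1/2}\|_{\rm op}\le(2\sqrt\rho)^{-1}$, so your truncation event is not needed. The paper's route yields the finite-$m$ spectral form of the constant as a by-product, which leads directly to \eqref{eq:companion-cov-local-aspect}. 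You must instead supply the separate trace-identity and diagonal-concentration argument of your Step~3; this is standard, via the leave-one-out representation of $A_{11}$. Your sign reduction, comparing $f_\rho(\bfD_m^{\rm ang}\bfK_m^G\bfD_m^{\rm ang})$ with $f_\rho(\bfK_m^G)$, is equivalent to the paper's sign-conjugated comparison. Your segment-factor treatment of the inverse-distance weights is exact, hence cleaner than the paper's expansion $\beta_i(s)=\mu_{w,p}^{-1}b_i^{\rm cus}(s)+r_{\beta,i}(s)$.

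\textbf{Two small repairs.}

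First, drop ``conditionally on $\calF^0$''. That conditioning freezes $\widetilde\bfA_\rho$ and leaves nothing to concentrate. Moreover, $w_i=\xi_i\mathfrak a_i$ is not independent of the direction when $\bfOmega_p\ne\bfI_p$. The factor argument should be unconditional. On each of the finitely many cells cut out by the blocks of $s$ and $r$, $\beta_i(s)\beta_i(r)$ is an $O_P(1)$ random scalar times $b_i^{\rm cus}(s)b_i^{\rm cus}(r)$. So apply the deterministic-weight result, in its polarized bilinear form with mean $\mathfrak c_{\rho,\rho',m}\{\sum_iu_i\sum_iv_i-\sum_iu_iv_i\}$, to the cell-restricted deterministic vectors.

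Second, the rank deficiency of the $m\times m$ companion gives an atom at zero when $\gamma/c_{\rm pool}<1$, not $>1$. Also, $H$ need not be nondegenerate, since $\bfOmega_p=\bfI_p$ is allowed. The companion law is nondegenerate regardless: for $X_y\sim\underline F_{y,H}$ with $y=\gamma/c_{\rm pool}$, one has $\Var(X_y)=y\int\lambda^2\,dH(\lambda)\ge y$. Combined with $f_\rho'(x)=\rho/(x+\rho)^2\ge d_*>0$ on the common compact support, this yields the explicit bound $\mathfrak c_{\rho,\rho}(c_{\rm pool})\ge d_*^2\gamma/\overline c_{\rm pool}$ used in the paper.
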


\begin{proof}
For the Gaussian comparison matrix, put
\[
 \bfK_m^G=m^{-1}\bfG_m^\top\bfOmega_p\bfG_m,
 \qquad
 \bfA_{\rho,m}^G=f_\rho(\bfK_m^G),
 \qquad
 f_\rho(x)=\frac{x}{x+\rho}.
\]
Conditionally on \(\operatorname{spec}(\bfK_m^G)\), write
\[
 \bfA_{\rho,m}^G=\bfO\bfLambda_\rho\bfO^\top,
 \qquad
 \bfO\sim{\rm Haar}(\mathbb O_m),
 \qquad
 0\preceq\bfLambda_\rho\preceq\bfI_m.
\]
For \(c_i=b_i(s)b_i(r)\) and \(\bfW_c=\diag(c_1,\ldots,c_m)\),
\[
 \|\bfW_c\|_{\rm op}\le Cm^{-1},
 \qquad
 \|\bfW_c\|_F\le Cm^{-1/2},
 \qquad
 \sum_i c_i^2\le Cm^{-1}.
\]
Define
\begin{align*}
 \mathcal H_{\rho,\rho'}(\bfO)
 =m\bigg[&\tr\{\bfW_c(\bfO\bfLambda_\rho\bfO^\top)
                   \bfW_c(\bfO\bfLambda_{\rho'}\bfO^\top)\}\\
 &-\sum_i c_i^2
   (\bfO\bfLambda_\rho\bfO^\top)_{ii}
   (\bfO\bfLambda_{\rho'}\bfO^\top)_{ii}\bigg].
\end{align*}
For \(\lambda\in\{\rho,\rho'\}\),
\[
 \|\bfO\bfLambda_\lambda\bfO^\top
   -\bfO'\bfLambda_\lambda{\bfO'}^\top\|_F
 \le2\|\bfO-\bfO'\|_F.
\]
The trace and diagonal terms therefore satisfy
\[
 |\mathcal H_{\rho,\rho'}(\bfO)
  -\mathcal H_{\rho,\rho'}(\bfO')|
 \le C\|\bfO-\bfO'\|_F.
\]
Haar concentration gives
\begin{equation}
\label{eq:Haar-functional-concentration}
 \Pbb_{\bfO}\left(
 |\mathcal H_{\rho,\rho'}-
 \E_{\bfO}\mathcal H_{\rho,\rho'}|>x
 \middle|\operatorname{spec}(\bfK_m^G)\right)
 \le2e^{-cmx^2}.
\end{equation}
Thus \(x=\ell_{0,n}m^{-1/2}\) permits a polynomial scan and ridge union.
The second- and fourth-order Haar formulas yield
\begin{align*}
 \E_{\bfO}(\mathcal H_{\rho,\rho'}
 \mid\operatorname{spec}\bfK_m^G)
 &=\mathfrak c_{\rho,\rho',m}^{G}
 \left\{\left(\sum_i c_i\right)^2-\sum_i c_i^2\right\}
 +O(m^{-1}),\\
 \mathfrak c_{\rho,\rho',m}^{G}
 &=m^{-1}\tr\{f_\rho(\bfK_m^G)f_{\rho'}(\bfK_m^G)\}\\
 &\quad-m^{-2}\tr f_\rho(\bfK_m^G)\tr f_{\rho'}(\bfK_m^G)
 +O(m^{-1}).
\end{align*}
The negative-axis Gaussian Poincar\'e inequality gives
\[
 m^{-1}\tr f_\rho(\bfK_m^G)
 -\E\{m^{-1}\tr f_\rho(\bfK_m^G)\}
 =O_P(\ell_{0,n}m^{-1})
\]
uniformly on a polynomial ridge net, and similarly for the product trace.
Column exchangeability identifies
\[
 \mathfrak c_{\rho,\rho',m}
 =m\E(A_{12,\rho,m}^GA_{12,\rho',m}^G).
\]
Since \(\sum_i c_i^2=O(m^{-1})\), the preceding displays and
\eqref{eq:Haar-functional-concentration} prove the Gaussian version of
\eqref{eq:weighted-offdiag-input}.  The derivative bound
\[
 \sup_{\rho\in[\rho_0,\rho_1]}
 \|\partial_\rho f_\rho(\bfK)\|_{\rm op}
 =\sup_\rho\|\bfK(\bfK+\rho\bfI)^{-2}\|_{\rm op}
 \le C_{\rho_0}
\]
extends the result from the net to the compact ridge square.

For the spatial-sign columns, couple \(\bfG_m\) with the Gaussian vectors in
the elliptical representation; this does not change the distribution of
\(\bfA_{\rho,m}^G\).  After relabelling the pooled indices as
\(1,\ldots,m\), put
\[
 \bfD_m^{\rm ang}=\diag(1+r_{1,p}^{\rm ang},\ldots,1+r_{m,p}^{\rm ang}),
 \qquad
 \bfW_{\varsigma,m}=\diag(\varsigma_1,\ldots,\varsigma_m).
\]
Since \(\widetilde\bfY_i=\varsigma_i\bfY_i\),
\[
 \widetilde\bfY_m
 =\bfOmega_p^{1/2}\bfG_m\bfD_m^{\rm ang}\bfW_{\varsigma,m},
 \qquad
 \max_{i\le n}|r_{i,p}^{\rm ang}|
 =O_P(\ell_{0,n}p^{-1/2}).
\]
Define
\[
 \bfK_m^{\rm ang}
 =\bfD_m^{\rm ang}\bfK_m^G\bfD_m^{\rm ang}.
\]
The companion identity and orthogonal equivariance of \(f_\rho\) give
\[
 \widetilde\bfA_\rho
 =\bfW_{\varsigma,m}f_\rho(\bfK_m^{\rm ang})\bfW_{\varsigma,m},
 \qquad
 \bfW_{\varsigma,m}\bfA_{\rho,m}^G\bfW_{\varsigma,m}
 =\bfW_{\varsigma,m}f_\rho(\bfK_m^G)\bfW_{\varsigma,m}.
\]
Moreover,
\begin{align*}
 \|\bfK_m^{\rm ang}-\bfK_m^G\|_{\rm op}
 &\le
 \|\bfD_m^{\rm ang}-\bfI_m\|_{\rm op}
 \|\bfK_m^G\|_{\rm op}
 \{\|\bfD_m^{\rm ang}\|_{\rm op}+1\}\\
 &=O_P(\ell_{0,n}p^{-1/2}),
\end{align*}
because \(\|\bfK_m^G\|_{\rm op}=O_P(1)\).  For positive semidefinite
\(\bfB,\bfC\),
\[
 f_\rho(\bfB)-f_\rho(\bfC)
 =\rho(\bfB+\rho\bfI)^{-1}(\bfB-\bfC)
       (\bfC+\rho\bfI)^{-1},
\]
so, uniformly in \(\rho\in[\rho_0,\rho_1]\),
\[
 \left\|\widetilde\bfA_\rho
 -\bfW_{\varsigma,m}\bfA_{\rho,m}^G\bfW_{\varsigma,m}
 \right\|_{\rm op}
 \le \rho_0^{-1}\|\bfK_m^{\rm ang}-\bfK_m^G\|_{\rm op}
 =O_P(\ell_np^{-1/2}).
\]
The comparison must therefore be made with the sign-conjugated Gaussian
companion, rather than directly with \(\bfA_{\rho,m}^G\).  This conjugation
has no effect on the weighted functional.  Indeed,
\(\bfW_c\bfW_{\varsigma,m}=\bfW_{\varsigma,m}\bfW_c\) and
\(\bfW_{\varsigma,m}^2=\bfI_m\), whence, for arbitrary matrices
\(\bfA,\bfA'\),
\begin{align*}
 &\tr\{\bfW_c(\bfW_{\varsigma,m}\bfA\bfW_{\varsigma,m})
          \bfW_c(\bfW_{\varsigma,m}\bfA'\bfW_{\varsigma,m})\}
 =\tr(\bfW_c\bfA\bfW_c\bfA'),\\
 &(\bfW_{\varsigma,m}\bfA\bfW_{\varsigma,m})_{ii}
  (\bfW_{\varsigma,m}\bfA'\bfW_{\varsigma,m})_{ii}
 =A_{ii}A'_{ii}.
\end{align*}
Equivalently, for every \(i\ne j\),
\[
 (\bfW_{\varsigma,m}\bfA\bfW_{\varsigma,m})_{ij}
 (\bfW_{\varsigma,m}\bfA'\bfW_{\varsigma,m})_{ij}
 =A_{ij}A'_{ij}.
\]
Finally, let
\[
 \bfE_\rho
 =\widetilde\bfA_\rho
  -\bfW_{\varsigma,m}\bfA_{\rho,m}^G\bfW_{\varsigma,m}.
\]
Since
\[
 \|\bfW_c\|_F^2=O(m^{-1}),
 \qquad
 \|\bfW_c\|_{\rm op}=O(m^{-1}),
 \qquad
 \sum_i c_i^2=O(m^{-1}),
\]
and every companion matrix in this comparison has operator norm at most one,
for either such matrix \(\bfB\),
\begin{align*}
 m\left|\tr(\bfW_c\bfE_\rho\bfW_c\bfB)\right|
 &\le m\|\bfW_c\|_F^2\|\bfE_\rho\|_{\rm op}\|\bfB\|_{\rm op}
 =O_P(\ell_nm^{-1/2}),\\
 m\sum_i c_i^2|E_{ii,\rho}B_{ii}|
 &\le m\sum_i c_i^2\|\bfE_\rho\|_{\rm op}\|\bfB\|_{\rm op}
 =O_P(\ell_nm^{-1/2}).
\end{align*}
Applying these bounds successively at \(\rho\) and \(\rho'\) gives a total
replacement error of \(O_P(\ell_nm^{-1/2})\).  Exact conjugation invariance
then proves the spatial-sign version of \eqref{eq:weighted-offdiag-input}.

The inverse-distance expansion is
\begin{equation}
\label{eq:beta-b0-intermediate}
 \beta_i(s)=\mu_{w,p}^{-1}b_i^{\rm cus}(s)+r_{\beta,i}(s),
 \qquad
 \|\bfr_{\beta,s}\|_2=O_P(\ell_{0,n}^2m^{-1/2}),
 \qquad
 \|\bfr_{\beta,s}\|_\infty=O_P(\ell_{0,n}^2m^{-1}).
\end{equation}
Expanding the four weight factors gives
\begin{align*}
 \text{one remainder factor}
 &\;=O_P(\ell_{0,n}^2m^{-1/2}),\\
 \text{at least two remainder factors}
 &\;=O_P(\ell_{0,n}^4m^{-1})=O_P(\ell_nm^{-1}).
\end{align*}
The feasible multiplicative correction in
\eqref{eq:beta-multiplicative} is smaller.  This proves both oracle and
feasible versions of \eqref{eq:weighted-offdiag-input}.

Let \(y=\gamma/c_{\rm pool}\),
\(X_y\sim\underline F_{y,H}\), and suppose
\(c_{\rm pool}\in[\underline c_{\rm pool},\overline c_{\rm pool}]\).
The companion support lies in a common interval \([0,M]\), and
\begin{align*}
 \E X_y&=y,\\
 \E X_y^2&=y^2+y\int\lambda^2\,dH(\lambda),\\
 \Var(X_y)&=y\int\lambda^2\,dH(\lambda)\ge y,
\end{align*}
where \(\int\lambda\,dH(\lambda)=1\) and therefore
\(\int\lambda^2\,dH(\lambda)\ge1\).  In particular,
\begin{equation*}
 \Var(X_y)\ge
 \inf_{c_{\rm pool}\in[\underline c_{\rm pool},\overline c_{\rm pool}]}
 \frac{\gamma}{c_{\rm pool}}>0.
\end{equation*}
Furthermore,
\[
 d_*:=\inf_{\substack{x\in[0,M]\\\rho\in[\rho_0,\rho_1]}}
 f_\rho'(x)
 =\inf_{x,\rho}\frac{\rho}{(x+\rho)^2}>0.
\]
For an independent copy \(X_y'\),
\begin{align*}
 \mathfrak c_{\rho,\rho}(c_{\rm pool})
 &=\Var\{f_\rho(X_y)\}\\
 &=\frac12\E\{f_\rho(X_y)-f_\rho(X_y')\}^2\\
 &\ge d_*^2\Var(X_y)
 \ge d_*^2
 \inf_{c_{\rm pool}}\frac{\gamma}{c_{\rm pool}}>0.
\end{align*}
The upper bound follows from \(0\le f_\rho\le1\).  The companion
Stieltjes equation and bounded convergence give joint continuity in
\((c_{\rm pool},\rho)\).  Hence
\[
 \mathfrak c_{\rho,\rho',m}
 \longrightarrow\mathfrak c_{\rho,\rho'}(c_{\rm pool})
 \qquad(m/n\to c_{\rm pool})
\]
uniformly on compact pool-ratio and ridge sets, which proves
\eqref{eq:companion-cov-local-aspect}.  Compactness and the strictly positive
limit imply the finite-sample bounds in
\eqref{eq:companion-cov-uniform-bounds}.
\end{proof}

\begin{lemma}[Feasible companion cancellation]\label{lem:companion-cancellation}
Under \(H_0\) and Assumptions \ref{ass:elliptical}--\ref{ass:grid}, the
weighted law in Lemma~\ref{lem:weighted-offdiag} remains valid when the oracle
companion matrix is replaced by the feasible centered-sign companion matrix.
More precisely, for a pool \(J(s)\), write \(m=m_s=|J(s)|\) and suppress
its index \(s\) on companion entries only within this lemma.  Let
\(\bfb=(b_i:i\in J(s))^\top\) be deterministic or data-dependent and suppose,
on an event with probability tending to one, that
\[
  \max_i|b_i|\le Cm^{-1/2},\qquad \sum_i b_i^2\le C .
\]
Then, uniformly over the trimmed pools and the compact ridge interval,
\begin{align}
&m\sum_{i\ne j} b_i^2b_j^2
 \{\widehat A_{ij,\rho}^2-(A_{ij,\rho}^0)^2\}
   =O_P(\ell_nm^{-1/2}),\notag\\
&\sum_i b_i^2|\widehat A_{ii,\rho}-A_{ii,\rho}^0|
   =O_P(\ell_nm^{-1}).\label{eq:cancellation-input}
\end{align}
The first line also holds with a cross-ridge product in place of the two
squares.
\end{lemma}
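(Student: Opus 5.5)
We compare the feasible companion matrix \(\widehat\bfA_{\rho}=m^{-1}\widehat\bfY^\top(\widehat\bfR+\rho\bfI_p)^{-1}\widehat\bfY\) with the oracle \(\bfA_\rho^0\) through the pooled spatial-median perturbation. Under Convention~\ref{conv:scan} the pool is an interval \(J\), and \(\widehat\bfY_i=\sqrt p\,U(\bfX_i-\widehat\bftheta_0)\) with \(\widehat\bfdelta_J=\widehat\bftheta_0-\bftheta\). By Lemma~\ref{lem:uniform-linearization}, \(\widehat\bfdelta_J=\bfdelta_J^L+\bfr_J\) with \(\bfdelta_J^L=e_J^{-1}m^{-1}\sum_{i\in J}\bfY_i\), \(\|\bfdelta_J^L\|=O_P(\ell_{0,n})\) and \(\|\bfr_J\|=O_P(\ell_nm^{-1})\). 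Lemma~\ref{lem:spatial-sign-taylor} then gives the column expansion
\[
\widehat\bfY_i=\bfY_i-w_i\bfP_i\widehat\bfdelta_J+\bfH_i(\widehat\bfdelta_J)+\bfr_i(\widehat\bfdelta_J),
\]
so \(\widehat\bfY=\bfY+\bfE\). The leading error \(\bfE^{(1)}\) has columns \(-w_i\widehat\bfdelta_J+w_i\bfU_i(\bfU_i^\top\widehat\bfdelta_J)\), and the remaining columns have norm \(O_P(\ell_{0,n}^{2}p^{-1/2})\). The key structural point is that the dominant part of \(\bfE^{(1)}\) is the rank-one matrix \(-\widehat\bfdelta_J\bfw^\top\), with \(\bfw=(w_i)\). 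Its Frobenius norm is \(O_P(\ell_{0,n}\sqrt m)\), comparable to \(\|\bfY\|_{\rm op}\), so a crude perturbation bound fails and the rank-one structure has to be exploited.

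Step 1 handles the scatter. We write \(\widehat\bfR=\bfR^0+\bfDelta_R\), where \(\bfDelta_R\) equals \(m^{-1}(\bfY\bfE^\top+\bfE\bfY^\top+\bfE\bfE^\top)\) up to a rank-bounded part. Because \(\bfY\bfw/m\) equals \(e_J\) times \(\bfdelta_J^L\) plus a small Jacobian term (Lemma~\ref{lem:jacobian-sscm-input}), the cross term is \(-\widehat\bfdelta_J\bfdelta_J^{L\top}e_J\) plus a small remainder. The quadratic term \(\widehat\bfdelta_J\widehat\bfdelta_J^\top m^{-1}\sum_i w_i^2\) is also rank one. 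Hence \(\bfDelta_R=\bfF+\bfN\), where \(\bfF\) has rank at most three with bounded operator norm after \(\ell\)-factors, and \(\|\bfN\|_{\rm op}=O_P(\ell_np^{-1/2})\).

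For the rank-one pieces we use Sherman--Morrison/Woodbury. For \(\bfN\) we use the resolvent identity of Lemma~\ref{lem:bilinear-de}. Together these give \(\widehat\bfQ_\rho=\bfQ_\rho^0+\bfQ_\rho^0\bfV\bfC\bfV^\top\bfQ_\rho^0+\bfE_Q\), with \(\bfV\) spanning \(\{\widehat\bfdelta_J,\bfdelta_J^L\}\) and \(\|\bfE_Q\|_{\rm op}=O_P(\ell_np^{-1/2})\).

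Step 2 substitutes \(\widehat\bfY\) and \(\widehat\bfQ\) into \(\widehat\bfA\). This yields
\[
\widehat\bfA_\rho=\bfA_\rho^0+\bfL+\bfM,
\]
where \(\bfL\) is a sum of finitely many rank-one matrices \(\bfu\bfv^\top\) whose factors are among \(m^{-1/2}\bfY^\top\bfQ^0\widehat\bfdelta_J\), \(m^{-1/2}\bfw\) and \(m^{-1/2}\bfY^\top\bfQ^0\bfdelta^L_J\), and \(\|\bfM\|_{\rm op}=O_P(\ell_nm^{-1/2})\).

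To control the rank-one factors we need \(\ell_\infty\) bounds on their entries, which is where the Rademacher structure enters. Conditionally on \(\calF^0\) (Lemma~\ref{lem:signs}), the leave-one-out identity \(\bfQ^0\bfY_i=\bfQ^0_{(i)}\bfY_i/(1+m^{-1}\bfY_i^\top\bfQ^0_{(i)}\bfY_i)\) combined with Rademacher concentration gives \(\max_i|m^{-1/2}\bfY_i^\top\bfQ^0\bfdelta^L_J|=O_P(\ell_nm^{-1/2})\). In addition, \(\max_i m^{-1/2}w_i=O_P(\ell_{0,n}m^{-1/2})\). So every vector \(\bfu\) appearing in \(\bfL\) satisfies \(\|\bfu\|_\infty=O_P(\ell_nm^{-1/2})\) and \(\|\bfu\|_2=O_P(\ell_n)\).

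Step 3 bounds the two displays in \eqref{eq:cancellation-input}. For the diagonal statement,
\[
\sum_ib_i^2|\widehat A_{ii}-A^0_{ii}|\le\sum_ib_i^2\bigl(|L_{ii}|+|M_{ii}|\bigr).
\]
Each \(|L_{ii}|\le\|\bfu\|_\infty\|\bfv\|_\infty=O_P(\ell_nm^{-1})\). For \(M_{ii}\), we refine \(\bfM\) by column: its diagonal consists of quadratic forms \(m^{-1}\bfE_i^\top\bfQ\bfY_i\) with \(\|\bfE_i\|=O_P(\ell p^{-1/2})\). We apply Lemma~\ref{lem:bilinear-de} conditionally, which should give entrywise \(O_P(\ell_nm^{-1})\) rather than the operator-norm rate.

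For the off-diagonal statement we write \(\widehat A_{ij}^2-(A^0_{ij})^2=\Delta_{ij}(2A^0_{ij}+\Delta_{ij})\) with \(\Delta=\bfL+\bfM\). With \(\bfW_b=\diag(b_i^2)\), the cross term is controlled by \(m|\tr(\bfW_b\Delta\bfW_b\bfA^0)|\) plus diagonal corrections, exactly as in the replacement step of Lemma~\ref{lem:weighted-offdiag}. The \(\bfM\) part contributes at most \(m\|\bfW_b\|_F^2\|\bfM\|_{\rm op}=O_P(\ell_nm^{-1/2})\). The rank-one part contributes
\[
m\,|\bfv^\top\bfW_b\bfA^0\bfW_b\bfu|\le m\|\bfW_b\|_{\rm op}^2\|\bfu\|_2\|\bfv\|_2=O_P(\ell_nm^{-1}).
\]
The squared term \(m\sum b_i^2b_j^2\Delta_{ij}^2\le m\|\bfW_b\|_{\rm op}^2\|\Delta\|_F^2\) requires \(\|\Delta\|_F^2=O_P(\ell_n\sqrt m)\). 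This holds because \(\|\bfL\|_F=O_P(\ell_n)\) and \(\|\bfM\|_F^2\le m\|\bfM\|_{\rm op}^2=O_P(\ell_n^2)\). The cross-ridge version follows identically with \(\widehat A_{ij,\rho}\widehat A_{ij,\rho'}-A^0_{ij,\rho}A^0_{ij,\rho'}\).

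Uniformity over the polynomially many trimmed pools and a ridge net comes from the union bounds already built into the cited lemmas. The \(\rho\)-Lipschitz bound \(\|\partial_\rho\widehat\bfQ\|_{\rm op}\le\rho_0^{-2}\) then extends the result to the full ridge interval. The main obstacle is Step 1/Step 2: showing that the non-small part of the perturbation of \(\widehat\bfR\) and \(\widehat\bfY\) is genuinely finite-rank with delocalized factors. If the \(\ell_\infty\) bound on \(\bfY^\top\bfQ^0\bfdelta^L_J\) proves delicate, the first fallback is to replace \(\bfdelta^L_J\) by the leave-one-out version \(\bfdelta^L_{J,(i)}\), which is independent of \(\varsigma_i\).
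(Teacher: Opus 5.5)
Granting Step 1 (see below), your off-diagonal bound goes through. The diagonal half of \eqref{eq:cancellation-input}, however, does not follow from what you set up. You control the non-low-rank scatter remainder $\mathbf N$, and hence $\bfE_Q$, only in operator norm, at rate $\ell_np^{-1/2}$. But $\bfM$ contains the sandwich $m^{-1}\bfY^\top\bfE_Q\bfY$, whose diagonal entries $m^{-1}\bfY_i^\top\bfE_Q\bfY_i$ are then bounded only by $(p/m)\opnorm{\bfE_Q}=O_P(\ell_np^{-1/2})$. So $\sum_ib_i^2|M_{ii}|$ is only $O_P(\ell_nm^{-1/2})$, a factor $\sqrt m$ short of the target.

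Your column refinement lists only the terms $m^{-1}\bfE_i^\top\bfQ\bfY_i$, which are $O_P(\ell_nm^{-1})$ by Cauchy--Schwarz alone, and it overlooks this resolvent term. Lemma~\ref{lem:bilinear-de} cannot be applied to it either: that lemma concerns deterministic vectors and the deterministic equivalent $\bfD_{\rho,m}$, not a data-dependent remainder sandwiched between the pooled columns.

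The missing idea is trace-norm bookkeeping. Set $\bfY_L=\bfY-\widehat\bfdelta_J\bfw^\top$ and $\bfE^{(2)}=\widehat\bfY-\bfY_L$. Then $\|\bfE^{(2)}\|_F\le\sqrt m\max_i\|\bfE^{(2)}_i\|=O_P(\ell_n)$ and $\|\bfY\|_F=(mp)^{1/2}$, so $\|\mathbf N\|_*\le 2m^{-1}\|\bfY\|_F\|\bfE^{(2)}\|_F$ plus smaller terms, which is $O_P(\ell_n)$. Take $\bfF$ to be exactly the contribution of $\bfY_L$, so that $\bfR^0+\bfF=m^{-1}\bfY_L\bfY_L^\top\succeq0$. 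The resolvent identity then gives $\|\bfE_Q\|_*\le\rho_0^{-2}\|\mathbf N\|_*$. Hence $\|m^{-1}\bfY^\top\bfE_Q\bfY\|_*\le\opnorm{\bfR^0}\|\bfE_Q\|_*=O_P(\ell_n)$ and $\sum_ib_i^2|M_{ii}|\le\|\bfb\|_\infty^2\|\bfM\|_*=O_P(\ell_nm^{-1})$.

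The paper does this more economically in the $m\times m$ Gram space, with $f_\rho(x)=x/(x+\rho)$:
\begin{itemize}
\item It writes $\bfA^0_\rho=f_\rho(\bfK_0)$ and $\widehat\bfA_\rho=f_\rho(\widehat\bfK)$, where $\bfK_0=m^{-1}\bfY^\top\bfY$, $\widehat\bfK=m^{-1}\widehat\bfY^\top\widehat\bfY$ and $\bfK_L=m^{-1}\bfY_L^\top\bfY_L$.
\item The rank-two change $\bfK_L-\bfK_0$ induces a rank-two change of $f_\rho$ with operator norm at most $2$ (because $0\preceq f_\rho\preceq\bfI$), hence nuclear norm at most $4$. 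There are no Woodbury coefficients to track.
\item The remainder satisfies $\|\widehat\bfK-\bfK_L\|_*\le 2m^{-1}\|\bfY_L\|_F\|\bfE^{(2)}\|_F+m^{-1}\|\bfE^{(2)}\|_F^2=O_P(\ell_n)$, and the resolvent representation makes $f_\rho$ $\rho_0^{-1}$-Lipschitz in nuclear norm.
\end{itemize}
Both displays then follow from $\|\widehat\bfA_\rho-\bfA^0_\rho\|_*+\|\widehat\bfA_\rho-\bfA^0_\rho\|_F=O_P(\ell_n)$. Your $\ell_\infty$ delocalization step also becomes unnecessary, since $\sum_ib_i^2|L_{ii}|\le\|\bfb\|_\infty^2\|\bfL\|_*$ already handles the low-rank part.

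A second, fixable error sits in Step 1. The vector $m^{-1}\bfY\bfw=m^{-1}\sum_iw_i\bfY_i$ is not equal to $e_J\bfdelta^L_J=m^{-1}\sum_i\bfY_i$ up to a small term. Lemma~\ref{lem:jacobian-sscm-input} is a statement about the matrix $m^{-1}\sum_iw_i\bfP_i$, not about this vector. The difference $m^{-1}\sum_i(w_i-e_J)\bfY_i$ is a conditional Rademacher sum with second moment $(p/m)\,m^{-1}\sum_i(w_i-e_J)^2$, which is of order one for any nondegenerate radial law (e.g.\ the $t_3$ and mixture designs). Absorbing it into $\mathbf N$ would destroy your claimed bound $\opnorm{\mathbf N}=O_P(\ell_np^{-1/2})$. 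The finite-rank part must instead be spanned by $\widehat\bfdelta_J$ and $\bar{\bfY}_w=m^{-1}\sum_iw_i\bfY_i$, as in the paper's $\bfU_s^{\rm lr}$ in Lemma~\ref{lem:score-resolvent}. The rank count survives, but the factors in your Step 2 change accordingly. Their delocalization still holds and needs no leave-one-out argument, since $\bfQ^0_\rho$ is $\calF^0$-measurable.
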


\begin{proof}
Fix a trimmed pool \(J(s)\), reindex it as \(\{1,\ldots,m\}\), and define
\begin{align*}
 \bfY&=(\bfY_1,\ldots,\bfY_m),
 &\widehat\bfY&=(\widehat\bfY_{1,s},\ldots,\widehat\bfY_{m,s}),\\
 \bfw&=(w_1,\ldots,w_m)^\top,
 &\bar\bfY&=m^{-1}\sum_{i=1}^m\bfY_i,\\
 &&e_0&=m^{-1}\sum_{i=1}^mw_i.
\end{align*}
Lemma~\ref{lem:uniform-linearization} gives
\[
 \widehat\bfdelta_0
 =\widehat\bftheta_{0,s}-\bftheta
 =e_0^{-1}\bar\bfY+\bfr_0,
 \qquad
 \|\widehat\bfdelta_0\|=O_P(\ell_{0,n}),
 \qquad
 \|\bfr_0\|=O_P(\ell_nm^{-1}).
\]
The conditional sign bound further gives
\[
 \max_{i\le m}|\bfU_i^\top\widehat\bfdelta_0|
 =O_P(\ell_{0,n}^2p^{-1/2}).
\]
Applying the spatial-sign Taylor formula columnwise,
\begin{align*}
 \widehat\bfY_i
 &=\bfY_i-w_i\widehat\bfdelta_0+\bfE_i^\delta,\\
 \max_{i\le m}\|\bfE_i^\delta\|
 &=O_P(\ell_{0,n}^4p^{-1/2}),\\
 \|\bfE^\delta\|_{\rm op}
 &=O_P(\ell_{0,n}^4m^{-1/2}),
 &\|\bfE^\delta\|_F&=O_P(\ell_{0,n}^4).
\end{align*}
where \(\bfE^\delta=(\bfE_1^\delta,\ldots,\bfE_m^\delta)\).  The operator
bound follows by substituting
\(\widehat\bfdelta_0=e_0^{-1}\bar\bfY+\bfr_0\), reducing the resulting
Walsh matrices to degrees one and three, and applying matrix Bernstein:
\[
 \max_i\|\text{summand}_i\|_{\rm op}=O(\ell_{0,n}m^{-1}),
 \qquad
 \left\|\sum_i\E(\text{summand}_i)^2\right\|_{\rm op}
 =O(\ell_{0,n}^2m^{-1}).
\]
The terms containing \(\bfr_0\) are of smaller order.

Put
\begin{align*}
 \bfY_L&=\bfY-\widehat\bfdelta_0\bfw^\top,
 &\bfK_0&=m^{-1}\bfY^\top\bfY,\\
 \bfK_L&=m^{-1}\bfY_L^\top\bfY_L,
 &\widehat\bfK&=m^{-1}\widehat\bfY^\top\widehat\bfY,\\
 \bfR_L&=m^{-1}\bfY_L\bfY_L^\top.
\end{align*}
Then
\begin{align}
 \widehat\bfR&=\bfR_L+\bfE_R,\notag\\
 \bfE_R&=m^{-1}\{\bfY_L(\bfE^\delta)^\top
 +\bfE^\delta\bfY_L^\top
 +\bfE^\delta(\bfE^\delta)^\top\},\notag\\
 \|\bfE_R\|_{\rm op}
 &=O_P(\ell_{0,n}^5m^{-1})
 =O_P(\ell_nm^{-1}).
 \label{eq:centered-scatter-small-remainder}
\end{align}
Indeed,
\[
 \|\bfY_L\|_{\rm op}=O_P(\ell_{0,n}\sqrt m),
 \qquad
 \|\bfE^\delta\|_{\rm op}=O_P(\ell_{0,n}^4m^{-1/2}).
\]

For \(f_\rho(x)=x/(x+\rho)\),
\[
 \bfA_\rho^0=f_\rho(\bfK_0),
 \qquad
 \bfA_{L,\rho}=f_\rho(\bfK_L),
 \qquad
 \widehat\bfA_\rho=f_\rho(\widehat\bfK).
\]
Moreover,
\[
 \bfK_L-\bfK_0
 =-m^{-1}\bfY^\top\widehat\bfdelta_0\bfw^\top
  -m^{-1}\bfw\widehat\bfdelta_0^\top\bfY
  +m^{-1}\|\widehat\bfdelta_0\|^2\bfw\bfw^\top,
\]
so
\[
 \operatorname{rank}(\bfK_L-\bfK_0)\le2.
\]
The resolvent representation
\[
 f_\rho(\bfK)-f_\rho(\bfL)
 =\rho(\bfK+\rho\bfI)^{-1}
       (\bfK-\bfL)(\bfL+\rho\bfI)^{-1}
\]
therefore implies
\begin{equation}
\label{eq:linearized-companion-nuclear}
 \sup_\rho\|\bfA_{L,\rho}-\bfA_\rho^0\|_*
 +\sup_\rho\|\bfA_{L,\rho}-\bfA_\rho^0\|_F^2
 =O_P(1).
\end{equation}

Next,
\[
 \widehat\bfK-\bfK_L
 =m^{-1}\{\bfY_L^\top\bfE^\delta
 +(\bfE^\delta)^\top\bfY_L
 +(\bfE^\delta)^\top\bfE^\delta\}.
\]
Since \(\|\bfY_L\|_F=O_P(m)\), \(p\asymp m\), and
\(\|\bfE^\delta\|_F=O_P(\ell_{0,n}^4)\),
\[
 \|\widehat\bfK-\bfK_L\|_*
 \le2m^{-1}\|\bfY_L\|_F\|\bfE^\delta\|_F
   +m^{-1}\|\bfE^\delta\|_F^2
 =O_P(\ell_{0,n}^5).
\]
Hence
\begin{equation}
\label{eq:companion-nuclear-remainder}
 \sup_{\rho\in[\rho_0,\rho_1]}
 \|\widehat\bfA_\rho-\bfA_{L,\rho}\|_*
 \le\rho_0^{-1}\|\widehat\bfK-\bfK_L\|_*
 =O_P(\ell_{0,n}^5)=O_P(\ell_n).
\end{equation}
With \(\bfE_\rho=\widehat\bfA_\rho-\bfA_\rho^0\),
\eqref{eq:linearized-companion-nuclear}--
\eqref{eq:companion-nuclear-remainder} give
\[
 \sup_\rho\|\bfE_\rho\|_*=O_P(\ell_{0,n}^5),
 \qquad
 \sup_\rho\|\bfE_\rho\|_F=O_P(\ell_{0,n}^5).
\]
For every admissible weight vector \(b\),
\begin{align*}
 \sum_i b_i^2|E_{ii,\rho}|
 &\le\|b\|_\infty^2\|\bfE_\rho\|_*
 =O_P(\ell_nm^{-1}),\\
 m\sum_{i\ne j}b_i^2b_j^2E_{ij,\rho}^2
 &\le m\|b\|_\infty^4\|\bfE_\rho\|_F^2
 =O_P(\ell_{0,n}^{10}m^{-1}).
\end{align*}
The contraction \(0\preceq\bfA_\rho^0\preceq\bfI\) gives
\[
 m\sum_{i\ne j}b_i^2b_j^2(A_{ij,\rho}^0)^2=O_P(1).
\]
Using
\[
 \widehat A_{ij,\rho}^2-(A_{ij,\rho}^0)^2
 =2A_{ij,\rho}^0E_{ij,\rho}+E_{ij,\rho}^2
\]
and Cauchy's inequality,
\begin{align*}
 m\sum_{i\ne j}b_i^2b_j^2
 |A_{ij,\rho}^0E_{ij,\rho}|
 &=O_P(\ell_{0,n}^5m^{-1/2}),\\
 m\sum_{i\ne j}b_i^2b_j^2E_{ij,\rho}^2
 &=O_P(\ell_{0,n}^{10}m^{-1}).
\end{align*}
Because \(\ell_{0,n}^{10}=O(\ell_n)\) and \(\ell_n=o(m^{1/2})\), both
terms are \(O_P(\ell_nm^{-1/2})\) or smaller.  The same expansion with
\((\rho,\rho')\) proves the cross-ridge statement.  These bounds are exactly
\eqref{eq:cancellation-input}.
\end{proof}

\begin{lemma}[Segment leverage contrasts]\label{lem:leverage-contrast}
Under the null assumptions, consider any polynomial-size family of trimmed
pools \(J(s)\) with \(m_s=|J(s)|\asymp n\), and let
\(\widetilde\bfA_{\rho,s}=(\widetilde A_{ij,\rho,s})\) be the corresponding
unsigned oracle companion matrix.  For \(v_i\in\{1,w_i\}\), put
\[
 L_{a,v,\rho}(s)=\frac1{n_a(s)}
   \sum_{i\in I_a(s)}v_iA_{ii,\rho,s}^0,\qquad a=1,2.
\]
There are deterministic numbers \(\mu_{v,\rho,m_s}\), depending on a candidate
only through its pool size and independent of \(a\), such that
\begin{equation}
\label{eq:segment-leverage-average}
 \sup_{s,\rho}\max_{a=1,2}\max_{v\in\{1,w\}}
 |L_{a,v,\rho}(s)-\mu_{v,\rho,m_s}|
 =O_P(\ell_{0,n}^2m_*^{-1/2}).
\end{equation}
Consequently,
\begin{align}
 \sup_{s,\rho}\left|\sum_i v_i\beta_i(s)A_{ii,\rho,s}^0\right|
 &=O_P(\ell_{0,n}^2),\label{eq:diagonal-leverage-contrast}\\
 \sup_{s,\rho}\left|\sum_{i\ne j}
   v_i\beta_j(s)A_{ij,\rho,s}^0\right|
 &=O_P(\ell_{0,n}^3),\qquad v\in\{1,w\}.
 \label{eq:offdiagonal-leverage-contrast}
\end{align}
\end{lemma}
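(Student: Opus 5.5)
We first prove the segment-average statement \eqref{eq:segment-leverage-average}, and then derive both contrast bounds from it by summation by parts.

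\textbf{Reduction to a leave-one-out form.} Since $A_{ii,\rho,s}^0=\widetilde A_{ii,\rho,s}=m_s^{-1}\bfY_i^\top\bfQ_{\rho,s}^0\bfY_i$, the sign representation of Lemma~\ref{lem:companion-contraction} does not matter for the diagonal. We apply the Sherman--Morrison identity with the leave-one-out resolvent $\bfQ_{\rho,s,(i)}$:
\[
A_{ii,\rho,s}^0=\frac{q_i}{1+q_i},\qquad q_i=m_s^{-1}\bfY_i^\top\bfQ_{\rho,s,(i)}\bfY_i .
\]
Conditionally on the other columns, $\bfY_i=\sqrt p\,\bfOmega_p^{1/2}\bfG_i/\|\bfOmega_p^{1/2}\bfG_i\|$, and $\bfY_i$ is independent of $\bfQ_{\rho,s,(i)}$. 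We then argue as follows.

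1. Hanson--Wright, the angular correction $r^{\rm ang}_{i,p}$ from Lemma~\ref{lem:jacobian-sscm-input}, and a union bound over $i$, $s$ and a polynomial $\rho$-net give
\[
q_i=m_s^{-1}\tr(\bfOmega_p\bfQ_{\rho,s,(i)})+O_P(\ell_{0,n}p^{-1/2}).
\]
2. The rank-one perturbation bound, together with the Gaussian comparison and deterministic equivalent (Lemmas~\ref{lem:jacobian-sscm-input}, \ref{lem:bilinear-de}, with trace concentration via Poincar\'e), replaces this trace by $\tau_{\rho,m_s}=m_s^{-1}\tr(\bfOmega_p\bfD_{\rho,m_s})$ up to $O_P(\ell_{0,n}m_*^{-1/2})$.
3. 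We set $\mu_{1,\rho,m}=\tau_{\rho,m}/(1+\tau_{\rho,m})$. Step 2 gives $\max_i|A^0_{ii,\rho,s}-\mu_{1,\rho,m_s}|=O_P(\ell_{0,n}m_*^{-1/2})$ uniformly, and averaging over $I_a(s)$ yields \eqref{eq:segment-leverage-average} for $v=1$.

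For $v=w$, we take $\mu_{w,\rho,m}=\mu_{w,p}\mu_{1,\rho,m}$. We bound $L_{a,w,\rho}(s)-\mu_{w,\rho,m_s}$ by two pieces:
\[
\max_i w_i\cdot\max_i|A_{ii}^0-\mu_{1,\rho,m_s}|+\mu_{1,\rho,m_s}\,|e_{a,s}-\mu_{w,p}| .
\]
The first piece is $O_P(\ell_{0,n}^2m_*^{-1/2})$ by \eqref{eq:w-bound}, and the second is handled by Lemma~\ref{lem:angular-w-xi}. This explains the $\ell_{0,n}^2$ in the stated rate. (Here $w_i$ also enters $q_i$ only through $\|\bfvarepsilon_i\|$, which is independent of the direction, so no extra coupling arises.)

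\textbf{The diagonal contrast.} Since $\beta_i(s)$ is constant on each segment, we write
\[
\sum_i v_i\beta_i(s)A^0_{ii}=\sqrt{N_s}\Bigl\{\frac{L_{2,v,\rho}(s)}{e_{2,s}}-\frac{L_{1,v,\rho}(s)}{e_{1,s}}\Bigr\}.
\]
Inserting $\mu_{v,\rho,m_s}$ in each term, the leading part becomes $\sqrt{N_s}\,\mu_{v,\rho,m_s}(e_{2,s}^{-1}-e_{1,s}^{-1})$. This is $O_P(\ell_n)$ because $|e_{2,s}-e_{1,s}|=O_P(\ell_nm_*^{-1/2})$ and $\inf e\ge\zeta_{-1}/2$. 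The fluctuation part is $\sqrt{N_s}\,O_P(\ell_{0,n}^2m_*^{-1/2})=O_P(\ell_{0,n}^2)$. If the $\ell_n$ from Lemma~\ref{lem:angular-w-xi} is too coarse, we instead use the Rosenthal bound there, which is really $O_P(m_*^{-1/2})$ up to $\ell_{0,n}$. This gives \eqref{eq:diagonal-leverage-contrast}.

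\textbf{The off-diagonal contrast.} We write
\[
\sum_{i\ne j}v_i\beta_jA^0_{ij}=\bfv^\top\bfA^0\bfbeta-\sum_iv_i\beta_iA^0_{ii}.
\]
The second term was handled above, so it remains to bound $\bfv^\top\bfA^0\bfbeta$. Using $\bfA^0=m^{-1}\bfY^\top\bfQ^0\bfY$, this equals $m^{-1}(\bfY\bfv)^\top\bfQ^0(\bfY\bfbeta)$, which has modulus at most
\[
\rho_0^{-1}m^{-1}\|\bfY\bfv\|\,\|\bfY\bfbeta\| .
\]
By conditional Rademacher concentration (Lemma~\ref{lem:signs}) and the scan union, $\|\bfY\bfbeta\|=O_P(\ell_{0,n}\sqrt p)$ and $\|\bfY\bfv\|=O_P(\ell_{0,n}^{2}\sqrt{mp})$. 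This gives $O_P(\ell_{0,n}^3)$ after $p\asymp m$. The $\bfbeta$ factor is harmless, but the crude $\bfv$ bound already loses a factor $\sqrt m$: it gives $\rho_0^{-1}m^{-1}\cdot\ell_{0,n}^2\sqrt{mp}\cdot\ell_{0,n}\sqrt p=O_P(\ell_{0,n}^3\sqrt m)$. It therefore yields only $O_P(\ell_{0,n}^3\sqrt m)$, not the claimed $O_P(\ell_{0,n}^3)$.

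\textbf{Main obstacle.} The obstacle is exactly this step: the vector $\bfv$ (with $v_i\in\{1,w_i\}$) is not centered and not sign-randomized. The fix I would try is the same Sherman--Morrison decoupling, now applied to each $\bfY_i^\top\bfQ^0\bfY_j$, summed over $i$ with $j$ fixed. This produces $\sum_i v_i\bfY_i=\bfY\bfv$, whose squared norm is conditionally centered given signs, so Lemma~\ref{lem:signs} makes $\|\bfY\bfv\|=O_P(\ell_{0,n}^2\sqrt{p})$ rather than $\sqrt{mp}$, because cross terms vanish in conditional mean. Plugging in, $\rho_0^{-1}m^{-1}\cdot\ell_{0,n}^2\sqrt p\cdot\ell_{0,n}\sqrt p=O_P(\ell_{0,n}^3)$, and the polylog factors combine to $\ell_{0,n}^3$ as stated. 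Uniformity in $\rho$ then follows from $\partial_\rho\bfQ=-\bfQ^2$ on a polynomial net.
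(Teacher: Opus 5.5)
There is a genuine gap in your proof of \eqref{eq:offdiagonal-leverage-contrast}. Your repair rests on the claim that $\|\bfY\bfv\|=O_P(\ell_{0,n}^2\sqrt p)$ because ``cross terms vanish in conditional mean''. Only the cross terms vanish. Conditionally on $\calF^0$ you have $\bfY\bfv=\sum_i v_i\varsigma_i\widetilde\bfY_i$, and
\[
\E\bigl(\|\bfY\bfv\|^2\mid\calF^0\bigr)=\sum_i v_i^2\|\bfY_i\|^2=p\sum_i v_i^2 .
\]
This equals $pm$ for $v\equiv1$ and is of the same order for $v=w$. So $\|\bfY\bfv\|\asymp\sqrt{pm}$ with high probability. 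Re-deriving $\bfY\bfv$ through Sherman--Morrison does not change its norm. Any bound of the form $\rho_0^{-1}m^{-1}\|\bfY\bfv\|\,\|\bfY\bfbeta\|$ is therefore of order $\sqrt m$ up to polylogarithms. The Cauchy--Schwarz/operator-norm step discards exactly the sign cancellation that makes the off-diagonal sum small, so the loss cannot be recovered within that approach.

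The missing idea is to keep the bilinear structure and randomize over the signs directly, as the paper does. Conditionally on $\calF^0$, the quantities $v_i$, $\beta_j(s)$ and $\widetilde A_{ij,\rho,s}$ are all measurable, and $A^0_{ij,\rho,s}=\varsigma_i\varsigma_j\widetilde A_{ij,\rho,s}$. Hence
\[
\sum_{i\ne j}v_i\beta_j(s)A^0_{ij,\rho,s}=\sum_{i<j}\{v_i\beta_j(s)+v_j\beta_i(s)\}\widetilde A_{ij,\rho,s}\varsigma_i\varsigma_j
\]
is a homogeneous degree-two Rademacher chaos. Its coefficient energy is at most $C\ell_{0,n}^2m_s^{-1}\tr(\widetilde\bfA_{\rho,s}^2)\le C\ell_{0,n}^2$. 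This uses $\max_i v_i\le C\ell_{0,n}$, $\max_j|\beta_j(s)|\le Cm_s^{-1/2}$, and the contraction $0\preceq\widetilde\bfA_{\rho,s}\preceq\bfI$. Hypercontractivity (or a conditional Hanson--Wright bound), the polynomial union, and the ridge net then give $O_P(\ell_{0,n}^3)$.

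The rest of your proposal is sound. For \eqref{eq:segment-leverage-average} you take a different route from the paper:
\begin{itemize}
\item The paper uses only that replacing one unsigned column changes $\widetilde\bfA_{\rho,s}$ by a rank-two matrix of nuclear norm at most four. McDiarmid's inequality and column exchangeability then give concentration around the implicit center $\E(v_1A^0_{11,\rho,s})$.
\item Your Sherman--Morrison/Hanson--Wright/deterministic-equivalent argument needs more random-matrix input. In return it gives entrywise control of the leverages and an explicit center, $\tau_{\rho,m}/(1+\tau_{\rho,m})=1-a_{\rho,m}$.
\end{itemize}
Your remark that $w_i$ is independent of the direction is false when $\bfOmega_p\ne\bfI_p$, since $w_i=\xi_i\mathfrak a_i$; your max bound never uses it, so nothing breaks. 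The diagonal contrast then follows as in the paper, once $e_{a,s}$ is controlled at rate $\ell_{0,n}^2m_*^{-1/2}$.
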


\begin{proof}
Work on the truncation event \(\max_iw_i\le C\ell_{0,n}\), replacing
\(w_i\) by its clipped version outside this event.  If one unsigned column
of \(\widetilde\bfY_s\) is replaced, then
\[
 \operatorname{rank}(\Delta\bfK_s)\le2,
 \qquad
 \bfK_s=m_s^{-1}\widetilde\bfY_s^\top\widetilde\bfY_s.
\]
Because
\[
 \widetilde\bfA_{\rho,s}=f_\rho(\bfK_s),
 \qquad
 f_\rho(x)=\frac{x}{x+\rho},
\]
the resolvent identity gives
\[
 \operatorname{rank}(\Delta\widetilde\bfA_{\rho,s})\le2,
 \qquad
 \|\Delta\widetilde\bfA_{\rho,s}\|_{\rm op}\le2,
 \qquad
 \|\Delta\widetilde\bfA_{\rho,s}\|_ *\le4.
\]
Hence the bounded differences of the segment leverage averages satisfy
\[
 |\Delta L_{a,1,\rho}|\le Cn_a^{-1},
 \qquad
 |\Delta L_{a,w,\rho}|\le C\ell_{0,n}n_a^{-1}.
\]
McDiarmid's inequality therefore yields, for \(v_i\in\{1,w_i\}\),
\[
 \Pbb\{|L_{a,v,\rho}-\E L_{a,v,\rho}|>x\}
 \le2\exp\{-cm_sx^2/\ell_{0,n}^2\}.
\]
Column exchangeability gives
\[
 \E L_{a,v,\rho}
 =\E(v_1A_{11,\rho,s}^0)
 =:\mu_{v,\rho,m_s},
\]
independently of \(a\).  A polynomial scan union and the derivative bound
\[
 \sup_\rho\|\partial_\rho\widetilde\bfA_{\rho,s}\|_{\rm op}
 \le C_{\rho_0}
\]
prove \eqref{eq:segment-leverage-average}; the clipped and original arrays
coincide with probability tending to one.

The same argument applied to
\(e_{a,s}=n_a^{-1}\sum_{i\in I_a(s)}w_i\) gives
\[
 \sup_{s,a}|e_{a,s}-\mu_{w,p}|
 =O_P(\ell_{0,n}^2m_*^{-1/2}),
 \qquad
 \inf_{s,a}e_{a,s}\ge\zeta_{-1}/2.
\]
For the diagonal contrast,
\[
 \sum_i v_i\beta_i(s)A_{ii,\rho,s}^0
 =\sqrt{N_s}\left\{
 \frac{L_{2,v,\rho}(s)}{e_{2,s}}
 -\frac{L_{1,v,\rho}(s)}{e_{1,s}}
 \right\}.
\]
Since the two ratios have the same deterministic center,
\[
 \left|
 \frac{L_{2,v,\rho}(s)}{e_{2,s}}
 -\frac{L_{1,v,\rho}(s)}{e_{1,s}}
 \right|
 =O_P(\ell_{0,n}^2m_*^{-1/2}),
 \qquad N_s\le m_s/4,
\]
which proves \eqref{eq:diagonal-leverage-contrast}.

For the off-diagonal contrast, condition on the unsigned columns and write
\[
 A_{ij,\rho,s}^0
 =\varsigma_i\varsigma_j\widetilde A_{ij,\rho,s}.
\]
After grouping ordered pairs, the coefficient of
\(\varsigma_i\varsigma_j\), \(i<j\), is
\[
 (v_i\beta_j+v_j\beta_i)\widetilde A_{ij,\rho,s}.
\]
The contraction \(0\preceq\widetilde\bfA_{\rho,s}\preceq\bfI\) and
\(\max_i|\beta_i|\le Cm_s^{-1/2}\) give
\begin{align*}
 \sum_{i<j}(v_i\beta_j+v_j\beta_i)^2
 \widetilde A_{ij,\rho,s}^2
 &\le C\ell_{0,n}^2m_s^{-1}
 \tr(\widetilde\bfA_{\rho,s}^2)\\
 &\le C\ell_{0,n}^2.
\end{align*}
Fixed-degree Rademacher hypercontractivity, the scan union, and the ridge-net
argument prove \eqref{eq:offdiagonal-leverage-contrast}.
\end{proof}

\begin{lemma}[Score norm and centered-SSCM resolvent replacement]\label{lem:score-resolvent}
Under \(H_0\) and Assumptions \ref{ass:elliptical}--\ref{ass:grid}, uniformly
over \(s\in\calS\) and \(\rho\in[\rho_0,\rho_1]\), with
\[
  \bfB_s=\sum_{i\in J(s)}\beta_i(s)\bfY_i,
\]
one has
\begin{equation}
\label{eq:Bnorm-rate}
  \sup_{s\in\calS}\|\bfB_s\|=O_P(\ell_n m_*^{1/2})
\end{equation}
and
\begin{equation}
\label{eq:resolvent-quadratic-rate}
  \sup_{s,\rho}
  |\bfB_s^\top(\widehat\bfQ_{\rho,s}-\bfQ_{\rho,s}^0)\bfB_s|
  =O_P(\ell_n).
\end{equation}
\end{lemma}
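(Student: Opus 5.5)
The plan is to treat the two displays separately. The score bound \eqref{eq:Bnorm-rate} will follow from conditional Rademacher concentration. The resolvent bound \eqref{eq:resolvent-quadratic-rate} will follow by splitting \(\widehat\bfQ_{\rho,s}-\bfQ^0_{\rho,s}\) into a small-operator-norm part and a low-rank recentering part. The low-rank part is then controlled through the leverage contrasts of Lemma~\ref{lem:leverage-contrast} rather than through crude norm bounds. Under Convention~\ref{conv:scan} the pool is \(J(s)=\{1,\ldots,n\}\) for every \(s\). Hence \(\bfQ^0_{\rho,s}\), \(\widehat\bfQ_{\rho,s}\) and the recentering quantities below do not depend on \(s\). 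Only \(\bfbeta_s\) varies, which keeps all unions polynomial.

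\emph{Step 1 (score norm).} Write \(\bfB_s=\sum_i\beta_i(s)\varsigma_i\widetilde\bfY_i\). By Lemma~\ref{lem:signs}, conditionally on \(\calF^0\) this is a Rademacher sum with \(\beta_i(s)\) being \(\calF^0\)-measurable. Its conditional mean-square norm is \(p\sum_i\beta_i(s)^2=O(m)\). The weak variance is \(\sup_{\|\bfu\|=1}\sum_i\beta_i^2(\bfu^\top\bfY_i)^2\le\max_i\beta_i^2\, m\opnorm{\bfR^0}=O_P(1)\), using Lemma~\ref{lem:jacobian-sscm-input}. The Talagrand/Ledoux concentration for norms of Rademacher sums then gives \(\|\bfB_s\|\le C\sqrt m+x\) except on a set of conditional probability \(e^{-cx^2}\). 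Taking \(x=\ell_{0,n}\) and a union bound over the \(O(n^3)\) grid candidates yields \eqref{eq:Bnorm-rate}; the interpolated candidates are handled by the tightness convention.

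\emph{Step 2 (decomposition).} From the proof of Lemma~\ref{lem:companion-cancellation} we have \(\widehat\bfR=\bfR_L+\bfE_R\) with \(\opnorm{\bfE_R}=O_P(\ell_{0,n}^5m^{-1})\). Moreover
\[
\bfDelta:=\bfR_L-\bfR^0=-\bfu\widehat\bfdelta_0^\top-\widehat\bfdelta_0\bfu^\top+c_w\widehat\bfdelta_0\widehat\bfdelta_0^\top,
\]
where \(\bfu=m^{-1}\sum_iw_i\bfY_i\) and \(c_w=m^{-1}\|\bfw\|^2=O_P(1)\). Let \(\bfQ_L=(\bfR_L+\rho\bfI_p)^{-1}\). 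The first piece satisfies \(|\bfB_s^\top(\widehat\bfQ-\bfQ_L)\bfB_s|\le\rho_0^{-2}\|\bfB_s\|^2\opnorm{\bfE_R}=O_P(\ell_{0,n}^7)\). For the second piece I use the symmetric second-order resolvent identity
\[
\bfQ_L-\bfQ^0=-\bfQ^0\bfDelta\bfQ^0+\bfQ^0\bfDelta\bfQ_L\bfDelta\bfQ^0 .
\]
This leaves only bilinear forms with \(\bfQ^0\) adjacent to \(\bfB_s\).

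\emph{Step 3 (key bilinear forms; main obstacle).} Crude bounds such as \(\|\bfB_s\|\|\bfu\|\asymp\ell\sqrt m\) are too large, so I need \(\bfB_s^\top\bfQ^0\bfu\) and \(\bfB_s^\top\bfQ^0\widehat\bfdelta_0\) to be polylogarithmic. By the definition of the companion matrix,
\[
\bfB_s^\top\bfQ^0\bfu=\sum_{i,j}\beta_i(s)w_jA^0_{ij,\rho}.
\]
This splits into the diagonal and off-diagonal leverage contrasts with \(v=w\), so it is \(O_P(\ell_{0,n}^3)\) uniformly by Lemma~\ref{lem:leverage-contrast}. Likewise, with \(\widehat\bfdelta_0=e_0^{-1}\bar\bfY+\bfr_0\), the main term \(e_0^{-1}\sum_{i,j}\beta_i(s)A^0_{ij,\rho}\) is handled by the same lemma with \(v=1\). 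The remainder \(|\bfB_s^\top\bfQ^0\bfr_0|\le\rho_0^{-1}\|\bfB_s\|\|\bfr_0\|=O_P(\ell^2m^{-1/2})\). Together with \(\|\bfu\|,\|\widehat\bfdelta_0\|=O_P(\ell_{0,n})\) (the first by the same Rademacher norm argument), the first-order term \(\bfB_s^\top\bfQ^0\bfDelta\bfQ^0\bfB_s\) is a bounded combination of products of two such scalars, hence \(O_P(\ell_{0,n}^6)\). For the second-order term, \(\bfDelta\bfQ^0\bfB_s\) is a combination of \(\bfu\) and \(\widehat\bfdelta_0\) with these scalar coefficients, so its norm is \(O_P(\ell_{0,n}^4)\). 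Since \(\opnorm{\bfQ_L}\le\rho_0^{-1}\), that term is \(O_P(\ell_{0,n}^8)\).

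\emph{Step 4 (uniformity and envelope).} Uniformity in \(\rho\) will come from the ridge net combined with \(\partial_\rho\bfQ=-\bfQ^2\). All bounds above are polylogarithmic in \(\ell_{0,n}\) with exponent at most \(8<10\), so they are absorbed into \(\ell_n\) by the convention \(c_\ell\ge10c_*\). This gives \eqref{eq:resolvent-quadratic-rate}.
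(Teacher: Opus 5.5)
Your proposal is correct and follows essentially the paper's own route. It uses the same split $\widehat\bfR_s=\bfR_s^0+\bfL_s+\bfE_{R,s}$, the same crude bound $\rho_0^{-2}\|\bfB_s\|^2\opnorm{\bfE_{R,s}}=O_P(\ell_{0,n}^7)$ for the small remainder, and the same leverage-contrast control (Lemma~\ref{lem:leverage-contrast}) of $\bar\bfY_{w,s}^\top\bfQ_{\rho,s}^0\bfB_s$ and $\widehat\bfdelta_{0,s}^\top\bfQ_{\rho,s}^0\bfB_s$, reaching the same $O_P(\ell_{0,n}^8)$ for the rank-two part. Your second-order resolvent identity is exactly the paper's two-sided low-rank identity before factoring out $\bfF_{s,\rho}=\bfF_s-\bfF_s(\bfU_s^{\rm lr})^\top\bfQ_{L,\rho,s}\bfU_s^{\rm lr}\bfF_s$, and your Talagrand-type concentration in Step~1 replaces the paper's Pinelis inequality (giving, if anything, the slightly sharper $O_P(m_*^{1/2})$).
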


\begin{proof}
Conditionally on \(\calF_s^0\),
\[
 \E(\|\bfB_s\|^2\mid\calF_s^0)
 =\sum_i\beta_i(s)^2\|\bfY_i\|^2\le Cp.
\]
Pinelis' Hilbert-space martingale inequality and the polynomial scan union
therefore give
\[
 \sup_s\|\bfB_s\|
 =O_P(\ell_{0,n}m_*^{1/2})
 =O_P(\ell_nm_*^{1/2}),
\]
which is \eqref{eq:Bnorm-rate}.

For a candidate \(s\), put
\begin{align*}
 \bar\bfY_s&=m_s^{-1}\sum_{i\in J(s)}\bfY_i,
 &e_{0,s}&=m_s^{-1}\sum_{i\in J(s)}w_i,\\
 \widehat\bfdelta_{0,s}
 &=e_{0,s}^{-1}\bar\bfY_s+\bfr_{0,s},
 &\sup_s\|\bfr_{0,s}\|&=O_P(\ell_nm_*^{-1}),\\
 \bar\bfY_{w,s}&=m_s^{-1}\sum_{i\in J(s)}w_i\bfY_i,
 &\bar w_{2,s}&=m_s^{-1}\sum_{i\in J(s)}w_i^2.
\end{align*}
Define
\[
 \bfL_s
 =-\bar\bfY_{w,s}\widehat\bfdelta_{0,s}^\top
  -\widehat\bfdelta_{0,s}\bar\bfY_{w,s}^\top
  +\bar w_{2,s}\widehat\bfdelta_{0,s}
                 \widehat\bfdelta_{0,s}^\top,
 \qquad
 \bfR_{L,s}=\bfR_s^0+\bfL_s,
\]
\[
 \bfQ_{L,\rho,s}=(\bfR_{L,s}+\rho\bfI_p)^{-1}.
\]
Equation~\eqref{eq:centered-scatter-small-remainder} gives
\[
 \widehat\bfR_s=\bfR_{L,s}+\bfE_{R,s},
 \qquad
 \sup_s\|\bfE_{R,s}\|_{\rm op}
 =O_P(\ell_{0,n}^5m_*^{-1}).
\]
The exact resolvent identity is
\[
 \widehat\bfQ_{\rho,s}-\bfQ_{\rho,s}^0
 =(\bfQ_{L,\rho,s}-\bfQ_{\rho,s}^0)
 -\bfQ_{L,\rho,s}\bfE_{R,s}\widehat\bfQ_{\rho,s}.
\]
Since every ridge inverse has norm at most \(\rho_0^{-1}\),
\begin{align*}
 &\sup_{s,\rho}
 |\bfB_s^\top\bfQ_{L,\rho,s}\bfE_{R,s}
 \widehat\bfQ_{\rho,s}\bfB_s|\\
 &\qquad\le
 \rho_0^{-2}
 \sup_s\|\bfB_s\|^2
 \sup_s\|\bfE_{R,s}\|_{\rm op}
 =O_P(\ell_{0,n}^7)=O_P(\ell_n).
\end{align*}

For the finite-rank term, set
\[
 \bfU_s^{\rm lr}=(\bar\bfY_{w,s},\widehat\bfdelta_{0,s}),
 \qquad
 \bfF_s=\begin{pmatrix}0&-1\\-1&\bar w_{2,s}\end{pmatrix},
 \qquad
 \bfL_s=\bfU_s^{\rm lr}\bfF_s(\bfU_s^{\rm lr})^\top.
\]
The two-sided identity, which does not require \(\bfF_s^{-1}\), is
\[
 \bfQ_{L,\rho,s}-\bfQ_{\rho,s}^0
 =-\bfQ_{\rho,s}^0\bfU_s^{\rm lr}\bfF_{s,\rho}
   (\bfU_s^{\rm lr})^\top\bfQ_{\rho,s}^0,
\]
\[
 \bfF_{s,\rho}
 =\bfF_s-\bfF_s(\bfU_s^{\rm lr})^\top
 \bfQ_{L,\rho,s}\bfU_s^{\rm lr}\bfF_s.
\]
The coefficient bounds are
\[
 \sup_s\|\bfF_s\|_{\rm op}=O_P(1),
 \qquad
 \sup_s\|\bfU_s^{\rm lr}\|_{\rm op}=O_P(\ell_{0,n}),
 \qquad
 \sup_{s,\rho}\|\bfF_{s,\rho}\|_{\rm op}
 =O_P(\ell_{0,n}^2).
\]
It remains to bound the two projections in
\((\bfU_s^{\rm lr})^\top\bfQ_{\rho,s}^0\bfB_s\).  First,
\[
 \bar\bfY_{w,s}^\top\bfQ_{\rho,s}^0\bfB_s
 =\sum_{i,j}w_i\beta_j(s)A_{ij,\rho,s}^0,
\]
whose diagonal and off-diagonal parts are, by
\eqref{eq:diagonal-leverage-contrast} and
\eqref{eq:offdiagonal-leverage-contrast},
\[
 O_P(\ell_{0,n}^2)
 \quad\text{and}\quad
 O_P(\ell_{0,n}^3),
\]
respectively.  Next,
\[
 \bar\bfY_s^\top\bfQ_{\rho,s}^0\bfB_s
 =\sum_{i,j}\beta_j(s)A_{ij,\rho,s}^0
 =O_P(\ell_{0,n}^3).
\]
Finally,
\[
 |\bfr_{0,s}^\top\bfQ_{\rho,s}^0\bfB_s|
 \le\rho_0^{-1}\|\bfr_{0,s}\|\,\|\bfB_s\|
 =O_P(\ell_n\ell_{0,n}m_s^{-1/2})=o_P(1).
\]
Since \(e_{0,s}\) is bounded away from zero,
\begin{equation}
\label{eq:HQB-corrected-rate}
 \sup_{s,\rho}
 \|(\bfU_s^{\rm lr})^\top\bfQ_{\rho,s}^0\bfB_s\|
 =O_P(\ell_{0,n}^3).
\end{equation}
Therefore,
\begin{align*}
 &\sup_{s,\rho}
 |\bfB_s^\top(\bfQ_{L,\rho,s}-\bfQ_{\rho,s}^0)\bfB_s|\\
 &\qquad\le
 \sup_{s,\rho}
 \|(\bfU_s^{\rm lr})^\top\bfQ_{\rho,s}^0\bfB_s\|^2
 \sup_{s,\rho}\|\bfF_{s,\rho}\|_{\rm op}\\
 &\qquad=O_P(\ell_{0,n}^8)=O_P(\ell_n).
\end{align*}
Combining the finite-rank term with the exact remainder term proves
\eqref{eq:resolvent-quadratic-rate}.
\end{proof}

\begin{lemma}[Deterministic and mixed centered-resolvent transfer]\label{lem:deterministic-centered-transfer}
Under Assumptions \ref{ass:elliptical}--\ref{ass:grid}, consider the
centered-error sample \(\{\bfvarepsilon_i\}\), and let
\[
 \widehat\bftheta_{0,s}^{(0)}=\widehat\bftheta^{(0)}(J(s)),\qquad
 \widehat\bfY_{i,s}^{(0)}
 =\sqrt p\,U(\bfvarepsilon_i-\widehat\bftheta_{0,s}^{(0)}),
\]
\[
 \widehat\bfR_s^{(0)}
 =m_s^{-1}\sum_{i\in J(s)}
   \widehat\bfY_{i,s}^{(0)}\{\widehat\bfY_{i,s}^{(0)}\}^\top,
 \qquad
 \widehat\bfQ_{\rho,s}^{(0)}
 =(\widehat\bfR_s^{(0)}+\rho\bfI_p)^{-1}.
\]
Consider any deterministic collection of polynomially many triples
\((s,\bfd_p,\bfe_p)\), where \(s\) is a trimmed pooled segment and
\(\|\bfd_p\|+\|\bfe_p\|\le C\).  Uniformly over this collection and
\(\rho\in[\rho_0,\rho_1]\),
\begin{equation}
\label{eq:centered-det-transfer}
 \left|\bfd_p^\top
 \{\widehat\bfQ_{\rho,s}^{(0)}-\bfQ_{\rho,s}^0\}\bfe_p\right|
 =O_P\{\ell_nm_*^{-1}\|\bfd_p\|\,\|\bfe_p\|\}.
\end{equation}
If
\(\bfB_s=\sum_{i\in J(s)}\beta_i(s)\bfY_i\) is the corresponding
centered-error CUSUM score, then, for every polynomial deterministic
collection \((s,\bfd_p)\),
\begin{equation}
\label{eq:centered-mixed-transfer}
 \left|\bfd_p^\top
 \{\widehat\bfQ_{\rho,s}^{(0)}-\bfQ_{\rho,s}^0\}\bfB_s\right|
 =O_P\{\ell_nm_*^{-1/2}\|\bfd_p\|\}.
\end{equation}
Consequently,
\begin{equation}
\label{eq:centered-det-de}
 \bfd_p^\top\widehat\bfQ_{\rho,s}^{(0)}\bfe_p
 =\bfd_p^\top\bfD_{\rho,m_s}\bfe_p
  +O_P\{\ell_nm_*^{-1/2}\|\bfd_p\|\,\|\bfe_p\|\}.
\end{equation}
\end{lemma}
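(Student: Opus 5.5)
The plan is to reuse the decomposition from the proof of Lemma~\ref{lem:score-resolvent}, with the fixed vectors \(\bfd_p,\bfe_p\) (and then \(\bfB_s\)) playing the role of the score. Since \(\widehat\bftheta^{(0)}_{0,s}\) is the centered-error spatial median, Lemma~\ref{lem:uniform-linearization} applies to the error sample. It gives
\[
\widehat\bfdelta_{0,s}=e_{0,s}^{-1}\bar\bfY_s+\bfr_{0,s}.
\]
Combined with \eqref{eq:centered-scatter-small-remainder}, this yields
\[
\widehat\bfR^{(0)}_s=\bfR_s^0+\bfL_s+\bfE_{R,s},
\]
where \(\bfL_s=\bfU_s^{\rm lr}\bfF_s(\bfU_s^{\rm lr})^\top\) has rank at most two and \(\sup_s\|\bfE_{R,s}\|_{\rm op}=O_P(\ell_{0,n}^5m_*^{-1})\). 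The exact identity
\[
\widehat\bfQ^{(0)}_{\rho,s}-\bfQ^0_{\rho,s}=(\bfQ_{L,\rho,s}-\bfQ^0_{\rho,s})-\bfQ_{L,\rho,s}\bfE_{R,s}\widehat\bfQ^{(0)}_{\rho,s}
\]
then splits each bilinear form into a finite-rank piece and an operator remainder.

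For \eqref{eq:centered-det-transfer}, the operator remainder contributes at most \(\rho_0^{-2}\|\bfd_p\|\|\bfe_p\|\|\bfE_{R,s}\|_{\rm op}=O_P(\ell_n m_*^{-1})\|\bfd_p\|\|\bfe_p\|\). The finite-rank piece equals
\[
-\bfd_p^\top\bfQ^0_{\rho,s}\bfU_s^{\rm lr}\,\bfF_{s,\rho}\,(\bfU_s^{\rm lr})^\top\bfQ^0_{\rho,s}\bfe_p ,
\]
with \(\|\bfF_{s,\rho}\|_{\rm op}=O_P(\ell_{0,n}^2)\). So I need \(|\bfd_p^\top\bfQ^0_{\rho,s}\bar\bfY_s|\) and \(|\bfd_p^\top\bfQ^0_{\rho,s}\bar\bfY_{w,s}|\) to be \(O_P(\ell_{0,n}m_*^{-1/2})\|\bfd_p\|\). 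Squaring two such factors and multiplying by \(\ell_{0,n}^2\) then gives \(O_P(\ell_{0,n}^4 m_*^{-1})\), which is the claimed rate. The \(\bfr_{0,s}\) part of \(\widehat\bfdelta_{0,s}\) contributes \(O(\ell_n m_*^{-1})\) directly.

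\textbf{Main obstacle: the projection bound.} The two projections are the crux, because \(\bfQ^0_{\rho,s}\) depends on the same signs. Write
\[
\bfd_p^\top\bfQ^0_{\rho,s}\bar\bfY_s=m_s^{-1}\sum_i\varsigma_i\,\bfd_p^\top\bfQ^0_{\rho,s}\widetilde\bfY_i .
\]
Conditionally on \(\calF_s^0\), the matrix \(\bfQ^0_{\rho,s}\) is measurable (Lemma~\ref{lem:companion-contraction}), so this is a Rademacher sum. Its conditional variance is
\[
m_s^{-2}\sum_i(\bfd_p^\top\bfQ^0_{\rho,s}\widetilde\bfY_i)^2=m_s^{-1}\,\bfd_p^\top\bfQ^0_{\rho,s}\bfR_s^0\bfQ^0_{\rho,s}\bfd_p\le C m_s^{-1}\|\bfd_p\|^2 ,
\]
using \(\bfQ^0\bfR^0\bfQ^0\preceq\bfQ^0\). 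Hoeffding's inequality with the weights \(w_i\le C\ell_{0,n}\), a polynomial union bound and a ridge net with \(\partial_\rho\bfQ=-\bfQ^2\) then give the required \(O_P(\ell_{0,n}m_*^{-1/2})\|\bfd_p\|\).

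For \eqref{eq:centered-mixed-transfer}, I rerun the same split with \(\bfB_s\) in place of \(\bfe_p\). The operator remainder is at most \(\|\bfd_p\|\,\|\bfB_s\|\,\|\bfE_{R,s}\|_{\rm op}\). By \eqref{eq:Bnorm-rate} this is \(O_P(\ell_n m_*^{1/2}\cdot\ell_n m_*^{-1})=O_P(\ell_n m_*^{-1/2})\) after enlarging the envelope. In the finite-rank piece, the \(\bfB_s\) factor is \((\bfU_s^{\rm lr})^\top\bfQ^0_{\rho,s}\bfB_s=O_P(\ell_{0,n}^3)\) by \eqref{eq:HQB-corrected-rate}. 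The \(\bfd_p\) factor is \(O_P(\ell_{0,n}m_*^{-1/2})\|\bfd_p\|\) by the projection bound above. With \(\|\bfF_{s,\rho}\|_{\rm op}=O_P(\ell_{0,n}^2)\), the product is \(O_P(\ell_{0,n}^6m_*^{-1/2})\|\bfd_p\|=O_P(\ell_nm_*^{-1/2})\|\bfd_p\|\).

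Finally, \eqref{eq:centered-det-de} follows by the triangle inequality from \eqref{eq:centered-det-transfer} and Lemma~\ref{lem:bilinear-de}, after homogeneous rescaling of \(\bfd_p,\bfe_p\) to unit norm.
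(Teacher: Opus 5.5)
Your proposal is correct and follows the paper's proof essentially step by step. It uses the same decomposition $\widehat\bfR_s^{(0)}=\bfR_s^0+\bfL_s+\bfE_{R,s}$ with the same resolvent identity, and the same conditional-Rademacher bound on $(\bfU_s^{\rm lr})^\top\bfQ^0_{\rho,s}\bfd_p$, which you justify more explicitly via $\bfQ^0_{\rho,s}\bfR^0_s\bfQ^0_{\rho,s}\preceq\bfQ^0_{\rho,s}$ and Hoeffding, with a slightly smaller polylogarithmic factor than the paper's $\ell_{0,n}^2$. It likewise uses \eqref{eq:HQB-corrected-rate} for the mixed finite-rank term and Lemma~\ref{lem:bilinear-de} for \eqref{eq:centered-det-de}.

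The one thing to tidy is the mixed operator remainder. Writing $\ell_n\cdot\ell_n$ and then enlarging the envelope is not allowed, because the paper fixes $\ell_n$ and never redefines it. Instead, combine the intermediate bound $\sup_s\|\bfB_s\|=O_P(\ell_{0,n}m_*^{1/2})$ from the proof of Lemma~\ref{lem:score-resolvent} with $\|\bfE_{R,s}\|_{\rm op}=O_P(\ell_{0,n}^5m_*^{-1})$. This gives $O_P(\ell_{0,n}^6m_*^{-1/2})=O_P(\ell_nm_*^{-1/2})$, exactly as in the paper.
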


\begin{proof}
Uniformly over the stated polynomial collection,
\[
 \widehat\bfR_s^{(0)}
 =\bfR_s^0+\bfL_s+\bfE_{R,s},
 \qquad
 \bfL_s=\bfU_s^{\rm lr}\bfF_s(\bfU_s^{\rm lr})^\top,
 \qquad
 \operatorname{rank}(\bfL_s)\le2,
\]
\[
 \bfR_s^0+\bfL_s\succeq0,
 \qquad
 \|\bfE_{R,s}\|_{\rm op}=O_P(\ell_nm_*^{-1}).
\]
Let
\[
 \bfQ_{L,\rho,s}
 =(\bfR_s^0+\bfL_s+\rho\bfI_p)^{-1}.
\]
Then
\[
 \widehat\bfQ_{\rho,s}^{(0)}-\bfQ_{\rho,s}^0
 =(\bfQ_{L,\rho,s}-\bfQ_{\rho,s}^0)
 -\bfQ_{L,\rho,s}\bfE_{R,s}
  \widehat\bfQ_{\rho,s}^{(0)},
\]
and every inverse has operator norm at most \(\rho_0^{-1}\).  Consequently,
for deterministic \(\bfd_p,\bfe_p\),
\begin{align*}
 &|\bfd_p^\top\bfQ_{L,\rho,s}\bfE_{R,s}
   \widehat\bfQ_{\rho,s}^{(0)}\bfe_p|\\
 &\qquad\le\rho_0^{-2}\|\bfd_p\|\,\|\bfe_p\|
 \|\bfE_{R,s}\|_{\rm op}
 =O_P(\ell_nm_*^{-1}\|\bfd_p\|\,\|\bfe_p\|).
\end{align*}
For the mixed form, use the sharper intermediate rates
\[
 \|\bfE_{R,s}\|_{\rm op}=O_P(\ell_{0,n}^5m_*^{-1}),
 \qquad
 \|\bfB_s\|=O_P(\ell_{0,n}m_*^{1/2}),
\]
which give
\[
 |\bfd_p^\top\bfQ_{L,\rho,s}\bfE_{R,s}
   \widehat\bfQ_{\rho,s}^{(0)}\bfB_s|
 =O_P(\ell_{0,n}^6m_*^{-1/2}\|\bfd_p\|)
 =O_P(\ell_nm_*^{-1/2}\|\bfd_p\|).
\]

For the rank-two term,
\[
 \bfu^\top(\bfQ_{L,\rho,s}-\bfQ_{\rho,s}^0)\bfv
 =-\{(\bfU_s^{\rm lr})^\top\bfQ_{\rho,s}^0\bfu\}^\top
 \bfF_{s,\rho}
 \{(\bfU_s^{\rm lr})^\top\bfQ_{\rho,s}^0\bfv\}.
\]
For deterministic \(\bfu\), conditional Rademacher concentration and the
companion contraction yield
\begin{equation}
\label{eq:HQd-rate}
 \|(\bfU_s^{\rm lr})^\top\bfQ_{\rho,s}^0\bfu\|
 =O_P\left\{
 (\ell_{0,n}^2m_*^{-1/2}+\ell_nm_*^{-1})\|\bfu\|
 \right\}.
\end{equation}
Moreover,
\[
 \|\bfF_{s,\rho}\|_{\rm op}=O_P(\ell_{0,n}^2).
\]
Applying \eqref{eq:HQd-rate} twice gives
\begin{align*}
 &|\bfd_p^\top(\bfQ_{L,\rho,s}-\bfQ_{\rho,s}^0)\bfe_p|\\
 &\qquad\le
 \|\bfF_{s,\rho}\|_{\rm op}
 \|(\bfU_s^{\rm lr})^\top\bfQ_{\rho,s}^0\bfd_p\|
 \|(\bfU_s^{\rm lr})^\top\bfQ_{\rho,s}^0\bfe_p\|\\
 &\qquad=O_P(\ell_nm_*^{-1}\|\bfd_p\|\,\|\bfe_p\|),
\end{align*}
which proves the finite-rank contribution in
\eqref{eq:centered-det-transfer}.  Combining \eqref{eq:HQd-rate} with
\eqref{eq:HQB-corrected-rate} gives
\begin{align*}
 &|\bfd_p^\top(\bfQ_{L,\rho,s}-\bfQ_{\rho,s}^0)\bfB_s|\\
 &\qquad\le
 \|\bfF_{s,\rho}\|_{\rm op}
 \|(\bfU_s^{\rm lr})^\top\bfQ_{\rho,s}^0\bfd_p\|
 \|(\bfU_s^{\rm lr})^\top\bfQ_{\rho,s}^0\bfB_s\|\\
 &\qquad=O_P(\ell_{0,n}^7m_*^{-1/2}\|\bfd_p\|)
 =O_P(\ell_nm_*^{-1/2}\|\bfd_p\|),
\end{align*}
which is \eqref{eq:centered-mixed-transfer}.  The scan and ridge unions are
valid because the projected bounds have conditional exponential tails on
the truncation event.

Finally,
\begin{align*}
 \bfd_p^\top\{\widehat\bfQ_{\rho,s}^{(0)}-\bfD_{\rho,m_s}\}\bfe_p
 &=\bfd_p^\top\{\widehat\bfQ_{\rho,s}^{(0)}-\bfQ_{\rho,s}^0\}\bfe_p\\
 &\quad+\bfd_p^\top\{\bfQ_{\rho,s}^0-\bfD_{\rho,m_s}\}\bfe_p.
\end{align*}
The first term is \(O_P(\ell_nm_*^{-1})\), while
\eqref{eq:bilinear-de-rate} makes the second
\(O_P(\ell_nm_*^{-1/2})\).  This proves
\eqref{eq:centered-det-de}.
\end{proof}

\begin{lemma}[Feasible companion centering and variance]\label{lem:feasible-companion}
Under \(H_0\) and Assumptions \ref{ass:elliptical}--\ref{ass:grid}, uniformly over \(s\in\calS\) and \(\rho\in[\rho_0,\rho_1]\),
\begin{align}
  \widehat\kappa_\rho(s)-\kappa_\rho^0(s)&=O_P(\ell_nm_*^{-1}),\label{eq:kappa-feasible-rate}\\
  \widehat\sigma_\rho^2(s)-\sigma_\rho^{0,2}(s)&=O_P(\ell_nm_*^{-1/2}).\label{eq:sigma-feasible-rate}
\end{align}
In addition, with the deterministic proxy in
\eqref{eq:finite-sigma-proxy},
\begin{equation}
\label{eq:oracle-sigma-proxy-rate}
 \sup_{s,\rho}
 |\sigma_\rho^{0,2}(s)-\sigma_{\rho,m_s}^{\circ 2}|
 =O_P(\ell_nm_*^{-1/2}).
\end{equation}
Moreover, there are constants \(0<c_\sigma<C_\sigma<\infty\) such that
\begin{equation}
\label{eq:variance-nondegenerate}
  \Pbb\left\{c_\sigma\le \inf_{s,\rho}\sigma_\rho^{0,2}(s)
  \le \sup_{s,\rho}\sigma_\rho^{0,2}(s)\le C_\sigma\right\}\to1.
\end{equation}
\end{lemma}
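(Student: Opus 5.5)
Each of the four claims follows by splitting the feasible quantities into a weight-replacement part and a companion-replacement part, then invoking Lemmas~\ref{lem:uniform-linearization}, \ref{lem:companion-contraction}, \ref{lem:weighted-offdiag} and~\ref{lem:companion-cancellation}.

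\textbf{Centering, \eqref{eq:kappa-feasible-rate}.} Write
\[
\widehat\kappa_\rho(s)-\kappa^0_\rho(s)=\sum_i\{\widehat\beta_i(s)^2-\beta_i(s)^2\}\widehat A_{ii,\rho,s}+\sum_i\beta_i(s)^2\{\widehat A_{ii,\rho,s}-A^0_{ii,\rho,s}\}.
\]
By \eqref{eq:beta-multiplicative}, $\widehat\beta_i^2-\beta_i^2=\beta_i^2\{(1+r^{\rm wt}_{a,s})^2-1\}$ with $\sup|r^{\rm wt}_{a,s}|=O_P(\ell_nm_*^{-1})$. Lemma~\ref{lem:companion-contraction} gives $0\le\widehat A_{ii}\le1$, and $\sum_i\beta_i^2\le C$. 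So the first sum is $O_P(\ell_nm_*^{-1})$. The second sum is exactly the diagonal statement of \eqref{eq:cancellation-input}, applied with $b_i=\beta_i(s)$. This choice is admissible: on the event of \eqref{eq:w-bound} and Lemma~\ref{lem:angular-w-xi}, $e_{a,s}\ge\zeta_{-1}/2$, hence $\max_i|\beta_i|\le Cm_s^{-1/2}$.

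\textbf{Variance, \eqref{eq:sigma-feasible-rate}.} I use the same two-step decomposition. First I replace $\widehat\beta$ by $\beta$ inside $\widehat\sigma^2_\rho$. The multiplicative structure gives $\widehat\beta_i^2\widehat\beta_j^2=\beta_i^2\beta_j^2(1+O_P(\ell_nm_*^{-1}))$ uniformly. Combined with the bound $2m\sum_{i\ne j}\beta_i^2\beta_j^2\widehat A_{ij}^2\le 2m\|\beta\|_\infty^2\sum_i\beta_i^2\sum_j\widehat A_{ij}^2=O_P(1)$ (Lemma~\ref{lem:companion-contraction}), this step costs $O_P(\ell_nm_*^{-1})$. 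Second, I replace $\widehat A$ by $A^0$ using the first line of \eqref{eq:cancellation-input}, which costs $O_P(\ell_nm_*^{-1/2})$.

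\textbf{Proxy, \eqref{eq:oracle-sigma-proxy-rate}.} I write $(A^0_{ij})^2=\widetilde A_{ij}^2$ and apply Lemma~\ref{lem:weighted-offdiag} with $\bfb(s)=\bfb(r)=\bfbeta_s$ and $\rho=\rho'$. This is the oracle-weight version, in which the inverse-distance factors are included. It gives
\[
\sigma^{0,2}_\rho(s)=2\,\mathfrak c_{\rho,\rho,m_s}\Big(\sum_i\beta_i^2\Big)^2+O_P(\ell_nm^{-1/2}).
\]
It then remains to show $\sum_i\beta_i(s)^2=\mu_{w,p}^{-2}+O_P(\ell_nm_*^{-1/2})$. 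Directly,
\[
\sum_i\beta_i^2=N_s\big\{(n_2e_{2,s}^2)^{-1}+(n_1e_{1,s}^2)^{-1}\big\}=\frac{n_1e_{1,s}^{-2}+n_2e_{2,s}^{-2}}{n_1+n_2}.
\]
Lemma~\ref{lem:angular-w-xi} gives $\sup_{s,a}|e_{a,s}-\mu_{w,p}|=O_P(\ell_nm_*^{-1/2})$, and $e_{a,s}$ is bounded below, so the expansion follows. Squaring and using the bound $\mathfrak c\le C_{\rm comp}$ from \eqref{eq:companion-cov-uniform-bounds} yields $2\mu_{w,p}^{-4}\mathfrak c_{\rho,\rho,m_s}=\sigma^{\circ2}_{\rho,m_s}$ up to $O_P(\ell_nm_*^{-1/2})$.

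\textbf{Uniformity.} Lemma~\ref{lem:weighted-offdiag} is stated for deterministic weights, whereas $\bfbeta_s$ is random through $e_{a,s}$. I would therefore apply that lemma to the deterministic vector $b^{\rm cus}(s)$, for which the scan families are polynomial over natural grids. I then pass to $\beta$ via \eqref{eq:beta-b0-intermediate}, as the statement of Lemma~\ref{lem:weighted-offdiag} already indicates. Uniformity in $\rho$ comes from the ridge net together with $\|\partial_\rho\widetilde\bfA_\rho\|_{\rm op}\le C_{\rho_0}$.

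\textbf{Nondegeneracy, \eqref{eq:variance-nondegenerate}.} This follows from the proxy result, \eqref{eq:companion-cov-uniform-bounds}, and $\mu_{w,p}\to\zeta_{-1}\in(0,\infty)$ from \eqref{eq:Ew-E-xi}. Together these give $\sigma^{\circ2}_{\rho,m}\in[c,C]$ for $m\asymp n$, and the $o_P(1)$ error transfers the bound to $\sigma^{0,2}_\rho(s)$ with probability tending to one.

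The step I expect to be the main obstacle is the proxy comparison \eqref{eq:oracle-sigma-proxy-rate}. The difficulty is justifying the uniform passage from the deterministic weight vector $b^{\rm cus}$ to the data-dependent $\beta$ inside a fourth-order weighted functional. I would handle it by expanding the four weight factors exactly as in the proof of Lemma~\ref{lem:weighted-offdiag}, and bounding every term containing a remainder $\bfr_\beta$ via Cauchy--Schwarz and the row bound $\sum_j\widetilde A_{ij}^2\le1$ of \eqref{eq:A-row-bounds}.
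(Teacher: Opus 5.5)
Your proposal is correct and follows essentially the paper's proof. It uses the same weight/companion split via \eqref{eq:beta-multiplicative} and \eqref{eq:cancellation-input}; you telescope in the opposite order, bounding the weight error against $\widehat A_{ii}\in[0,1]$, which spares the paper's extra cross term. For the proxy you use Lemma~\ref{lem:weighted-offdiag} with $b_i=\beta_i(s)$ plus $\sum_i\beta_i(s)^2=\mu_{w,p}^{-2}+O_P(\ell_nm_*^{-1/2})$, and for nondegeneracy \eqref{eq:companion-cov-uniform-bounds} with $\mu_{w,p}\to\zeta_{-1}$, both as the paper does.

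One harmless slip: $N_s\{(n_2e_{2,s}^2)^{-1}+(n_1e_{1,s}^2)^{-1}\}=(n_1e_{2,s}^{-2}+n_2e_{1,s}^{-2})/(n_1+n_2)$, so the segment indices are crossed relative to what you wrote. This does not affect the conclusion, since both $e_{a,s}$ are within $O_P(\ell_nm_*^{-1/2})$ of $\mu_{w,p}$.
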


\begin{proof}[Proof of Lemma \ref{lem:feasible-companion}]
Throughout this proof all suprema over \((s,\rho)\) are suppressed.  We first
control the feasible centering.  For the weight perturbation, by \eqref{eq:beta-multiplicative}, for \(i\in I_a(s)\),
\[
  \widehat\beta_i(s)=\beta_i(s)(1+r_{a,s}^{\rm wt}),
  \qquad
  \max_{a=1,2}|r_{a,s}^{\rm wt}|=O_P(\ell_nm_*^{-1}).
\]
Since \(0\le A_{ii,\rho,s}^0\le1\), \(\sum_i\beta_i(s)^2\le C\), and \(\max_i|\beta_i(s)|\le Cm_s^{-1/2}\),
\begin{align*}
  \sum_i |\widehat\beta_i(s)^2-\beta_i(s)^2|A_{ii,\rho,s}^0
  &\le C\max_{a}|r_{a,s}^{\rm wt}|\sum_i\beta_i(s)^2
       +C\max_a\{r_{a,s}^{\rm wt}\}^2\sum_i\beta_i(s)^2  \\
  &=O_P(\ell_nm_*^{-1}).
\end{align*}
For the diagonal companion perturbation, the cancellation bound in \eqref{eq:cancellation-input}, applied with \(b_i=\beta_i(s)\), gives
\[
  \sum_i \beta_i(s)^2|\widehat A_{ii,\rho,s}-A_{ii,\rho,s}^0|
  =O_P(\ell_nm_*^{-1}).
\]
The additional product of the weight error and the diagonal companion
perturbation is smaller:
\[
\begin{aligned}
 &\sum_i |\widehat\beta_i(s)^2-\beta_i(s)^2|\,
       |\widehat A_{ii,\rho,s}-A_{ii,\rho,s}^0| \\
 &\qquad\le O_P(\ell_nm_*^{-1})
       \sum_i\beta_i(s)^2
       |\widehat A_{ii,\rho,s}-A_{ii,\rho,s}^0| \\
 &\qquad=O_P(\ell_n^2m_*^{-2})
       =o_P(\ell_nm_*^{-1}),
\end{aligned}
\]
because the fixed polylogarithmic envelope satisfies $\ell_n=o(m_*)$.
Therefore
\begin{align*}
  \widehat\kappa_\rho(s)-\kappa_\rho^0(s)
  &=\sum_i\{\widehat\beta_i(s)^2-\beta_i(s)^2\}A_{ii,\rho,s}^0
    +\sum_i\widehat\beta_i(s)^2\{\widehat A_{ii,\rho,s}-A_{ii,\rho,s}^0\} \\
  &=O_P(\ell_nm_*^{-1}),
\end{align*}
which proves \eqref{eq:kappa-feasible-rate}.  Notice that no entrywise diagonal bound is used here; the needed rate is a weighted diagonal cancellation.

For the variance, decompose
\begin{align*}
  \widehat\sigma_\rho^2(s)-\sigma_\rho^{0,2}(s)
  &=2m_s\sum_{i\ne j}
    \{\widehat\beta_i(s)^2\widehat\beta_j(s)^2-\beta_i(s)^2\beta_j(s)^2\}(A_{ij,\rho,s}^0)^2 \\
  &\quad +2m_s\sum_{i\ne j}\widehat\beta_i(s)^2\widehat\beta_j(s)^2
    \{\widehat A_{ij,\rho,s}^2-(A_{ij,\rho,s}^0)^2\}
    \equiv R_{1,s}+R_{2,s}.
\end{align*}
The first term is controlled by the multiplicative form of the weight error.  Indeed,
\[
  |\widehat\beta_i(s)^2\widehat\beta_j(s)^2-\beta_i(s)^2\beta_j(s)^2|
  \le C\max_a|r_{a,s}^{\rm wt}|\beta_i(s)^2\beta_j(s)^2
       +C\max_a\{r_{a,s}^{\rm wt}\}^2\beta_i(s)^2\beta_j(s)^2.
\]
Using \eqref{eq:A-row-bounds},
\begin{align*}
  m_s\sum_{i\ne j}\beta_i(s)^2\beta_j(s)^2(A_{ij,\rho,s}^0)^2
  &\le m_s\sum_i\beta_i(s)^2 \max_i\sum_j\beta_j(s)^2(A_{ij,\rho,s}^0)^2 \\
  &\le C m_s\cdot m_s^{-1}\sum_i\beta_i(s)^2
   \le C.
\end{align*}
Thus
\[
  R_{1,s}=O_P(\ell_nm_*^{-1}),
\]
which is stronger than the required \(O_P(\ell_nm_*^{-1/2})\).

For \(R_{2,s}\), apply the off-diagonal cancellation bound in \eqref{eq:cancellation-input} with \(b_i=\widehat\beta_i(s)\).  The admissibility conditions in Lemma \ref{lem:companion-cancellation} hold because \eqref{eq:beta-rate} and \eqref{eq:beta-multiplicative} imply
\[
  \max_i|\widehat\beta_i(s)|\le Cm_s^{-1/2},
  \qquad
  \sum_i\widehat\beta_i(s)^2\le C
\]
with probability tending to one.  Hence
\[
  R_{2,s}=O_P(\ell_nm_*^{-1/2}).
\]
Combining the bounds for \(R_{1,s}\) and \(R_{2,s}\) proves \eqref{eq:sigma-feasible-rate}.

The remaining point is the uniform nondegeneracy of the oracle variance.  With \(b_i=\beta_i(s)\), \eqref{eq:weighted-offdiag-input} gives
\[
  \sigma_\rho^{0,2}(s)
  =2\mathfrak c_{\rho,\rho,m_s}\left\{\sum_i\beta_i(s)^2\right\}^2+O_P(\ell_nm_*^{-1/2}).
\]
The ordinary CUSUM weights have squared norm one, and the uniform
inverse-distance law gives
\[
 \sup_s\left|\sum_i\beta_i(s)^2-\mu_{w,p}^{-2}\right|
 =O_P(\ell_nm_*^{-1/2}).
\]
Together with \eqref{eq:finite-sigma-proxy}, this proves
\eqref{eq:oracle-sigma-proxy-rate}.
The CUSUM weights satisfy, uniformly on the trimmed scan,
\[
  c_1\le \sum_i\beta_i(s)^2\le c_2,
\]
because the segment lengths are proportional to \(n\) and \(e_{a,s}\) is bounded above and below by \eqref{eq:eI-uniform-prelim}.  The companion-resolvent functional \(\mathfrak c_{\rho,\rho,m_s}\) is continuous and satisfies
\[
  0<c_3\le \inf_{\rho\in[\rho_0,\rho_1]}\mathfrak c_{\rho,\rho,m_s}
  \le \sup_{\rho\in[\rho_0,\rho_1]}\mathfrak c_{\rho,\rho,m_s}\le c_4<\infty,
\]
by Lemma~\ref{lem:weighted-offdiag}.  Hence
\eqref{eq:variance-nondegenerate} follows.
\end{proof}

\begin{lemma}[Conditional quadratic-form CLT]\label{lem:clt}
Under \(H_0\) and Assumptions \ref{ass:elliptical}--\ref{ass:grid}, for every deterministic candidate sequence $s=s_n$ satisfying $m_s\asymp n$ and fixed trimming, and every fixed $\rho$,
\[
  \frac{\widetilde V_\rho^0(s)-m_s\kappa_\rho^0(s)}{\{m_s\sigma_\rho^{0,2}(s)\}^{1/2}}
  \dto N(0,1)
\]
after conditioning on $\calF_s^0$, in probability.
\end{lemma}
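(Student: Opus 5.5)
Conditionally on \(\calF_s^0\), Lemma~\ref{lem:signs} says the signs \(\varsigma_i\), \(i\in J(s)\), are independent Rademacher variables. By \eqref{eq:A-sign}, \(A_{ij,\rho,s}^0=\varsigma_i\varsigma_j\widetilde A_{ij,\rho,s}\), where \(\widetilde\bfA_{\rho,s}\) is \(\calF_s^0\)-measurable. The weights \(\beta_i(s)\) depend only on the \(w_i\), so they are \(\calF_s^0\)-measurable as well. We can therefore write
\[
 \widetilde V_\rho^0(s)-m_s\kappa_\rho^0(s)
 =m_s\sum_{i\ne j}\beta_i(s)\beta_j(s)\widetilde A_{ij,\rho,s}\varsigma_i\varsigma_j
 =:\sum_{i<j}c_{ij}\varsigma_i\varsigma_j,
\]
with \(c_{ij}=2m_s\beta_i\beta_j\widetilde A_{ij,\rho,s}\). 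This is a homogeneous degree-two Rademacher chaos with \(\calF_s^0\)-measurable coefficients. Its conditional mean is zero and its conditional variance is \(\sum_{i<j}c_{ij}^2=m_s\sigma_\rho^{0,2}(s)\), because \((A^0_{ij})^2=\widetilde A_{ij}^2\). Dividing by \(\{m_s\sigma_\rho^{0,2}(s)\}^{1/2}\) gives a conditionally standardized chaos. The claim then follows from de Jong's CLT for degenerate quadratic forms \citep{dejong1987}, applied conditionally, provided its two conditions hold in probability.

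\textbf{Condition (i): maximal influence.} We need \(\max_i\sum_j c_{ij}^2/\sum_{i<j}c_{ij}^2\to0\). By \eqref{eq:A-row-bounds}, \(\sum_j\widetilde A_{ij}^2\le1\), and we have \(\max_i|\beta_i|\le Cm_s^{-1/2}\) on the event \(\inf e_{a,s}\ge\zeta_{-1}/2\) from Lemma~\ref{lem:angular-w-xi}. Together these give
\[
\sum_j c_{ij}^2\le 4m_s^2\beta_i^2\max_j\beta_j^2\sum_j\widetilde A_{ij}^2\le C.
\]
The denominator is \(m_s\sigma_\rho^{0,2}(s)\ge c_\sigma m_s\) with probability tending to one by \eqref{eq:variance-nondegenerate}. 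The ratio is therefore \(O_P(m_s^{-1})\to0\).

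\textbf{Condition (ii): fourth moment.} We need the normalized conditional fourth moment to tend to \(3\). For a Rademacher chaos this is equivalent to showing that \(\tr(\bfC^4)/\{\tr(\bfC^2)\}^2\to0\), where \(\bfC=(c_{ij})\) with zero diagonal. We have \(\bfC=2m_s(\bfW_\beta\widetilde\bfA\bfW_\beta-\diag)\), with \(\bfW_\beta=\diag(\beta_i)\). Since \(\|\bfW_\beta\widetilde\bfA\bfW_\beta\|_{\rm op}\le Cm_s^{-1}\) by \eqref{eq:A-contraction}, and the removed diagonal also has operator norm \(\le Cm_s^{-1}\), we get \(\|\bfC\|_{\rm op}=O_P(1)\). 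Hence \(\tr(\bfC^4)\le\|\bfC\|_{\rm op}^2\tr(\bfC^2)\). Because \(\tr(\bfC^2)\asymp m_s\sigma^{0,2}_\rho(s)\gtrsim m_s\), the ratio is \(O_P(m_s^{-1})\). De Jong's theorem then gives conditional convergence to \(N(0,1)\), in probability. One formal way to phrase this is to check the conditions along subsequences on which they hold almost surely, and to use the Kolmogorov-distance bound for Rademacher chaoses, which is controlled by these two quantities.

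The main obstacle is the variance lower bound, which is needed so that the operator-norm bound turns into a vanishing ratio. It is supplied by Lemma~\ref{lem:feasible-companion} through Lemma~\ref{lem:weighted-offdiag} (\(\mathfrak c_{\rho,\rho,m}\ge c_{\rm comp}>0\)). The rest is routine bookkeeping.
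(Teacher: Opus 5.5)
Your proposal is correct and follows essentially the same route as the paper. Both arguments use the sign representation from Lemmas~\ref{lem:signs} and~\ref{lem:companion-contraction} to reduce the statistic to a zero-diagonal Rademacher quadratic form with conditional variance $m_s\sigma_\rho^{0,2}(s)$. Both then verify de Jong's two conditions using $\|\bfC\|_{\rm op}=O_P(1)$, the row-influence bound from \eqref{eq:A-row-bounds}, and the variance nondegeneracy \eqref{eq:variance-nondegenerate}. The differences are cosmetic: you normalize with a factor of two over pairs $i<j$, and you add a subsequence remark to make the in-probability conditional statement formal.
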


\begin{proof}
By Lemmas~\ref{lem:signs} and \ref{lem:companion-contraction}, conditionally on \(\calF_s^0\),
\[
  A_{ij,\rho,s}^0=\varsigma_i\varsigma_j\widetilde A_{ij,\rho,s},
  \qquad
  (\varsigma_i:i\in J(s))\mid\calF_s^0
  \stackrel{\rm ind}{\sim}\operatorname{Unif}\{-1,1\}.
\]
Therefore
\begin{align}
  \widetilde V_\rho^0(s)
  &=m_s\sum_{i,j\in J(s)}\beta_i(s)\beta_j(s)
    \varsigma_i\varsigma_j\widetilde A_{ij,\rho,s}\notag\\
  &=m_s\sum_i\beta_i(s)^2\widetilde A_{ii,\rho,s}
    +2m_s\sum_{i<j}\beta_i(s)\beta_j(s)
    \widetilde A_{ij,\rho,s}\varsigma_i\varsigma_j .
  \label{eq:V-sign-expansion}
\end{align}
In particular,
\[
  \E\{\widetilde V_\rho^0(s)\mid\calF_s^0\}=m_s\kappa_\rho^0(s).
\]
Define a symmetric matrix $\bfC_{\rho,s}$ with zero diagonal and off-diagonal entries
\begin{equation*}
  C_{ij,\rho,s}=m_s\beta_i(s)\beta_j(s)\widetilde A_{ij,\rho,s},
  \qquad i\ne j.
\end{equation*}
Let $\bfvarsigma=(\varsigma_i:i\in J(s))^{\top}$.  From \eqref{eq:V-sign-expansion},
\begin{equation}
\label{eq:quad-form-C}
  \widetilde V_\rho^0(s)-m_s\kappa_\rho^0(s)
  =\bfvarsigma^{\top}\bfC_{\rho,s}\bfvarsigma.
\end{equation}
Since
\[
  \E(\varsigma_i\varsigma_j\varsigma_k\varsigma_\ell\mid\calF_s^0)
  =\ind\{\{i,j\}=\{k,\ell\}\},
  \qquad i<j,\ k<\ell,
\]
the conditional variance is
\begin{equation}
\label{eq:var-C}
  \Var(\bfvarsigma^{\top}\bfC_{\rho,s}\bfvarsigma\mid\calF_s^0)
  =2\tr(\bfC_{\rho,s}^2)=m_s\sigma_\rho^{0,2}(s).
\end{equation}
We use the Rademacher quadratic-form corollary of the central limit theorem
of \citet{dejong1987}.  In the present notation, that corollary says that a
zero-diagonal symmetric array \(\bfC_n\), normalized by
\(2\tr(\bfC_n^2)\), is asymptotically Gaussian if
\[
 \frac{\tr(\bfC_n^4)}{\{\tr(\bfC_n^2)\}^2}\longrightarrow0,
 \qquad
 \frac{\max_i\sum_j C_{n,ij}^2}{\tr(\bfC_n^2)}\longrightarrow0.
\]
The first condition follows from
\(\tr(\bfC_n^4)\le\|\bfC_n\|_{\rm op}^2\tr(\bfC_n^2)\), so it is enough below to
verify the normalized operator-norm and row-influence bounds.

Let $\bfW_{\beta,s}=\diag(\beta_i(s):i\in J(s))$.  The matrix with diagonal included is
\[
  m_s\bfW_{\beta,s}\widetilde\bfA_{\rho,s}\bfW_{\beta,s}.
\]
By \eqref{eq:A-contraction}, $\opnorm{\widetilde\bfA_{\rho,s}}\le1$.  From \eqref{eq:oracle-e-beta} and trimming, there is a constant $C$ such that, with probability tending to one,
\begin{equation}
\label{eq:beta-max-sum-proof}
  \max_i\beta_i(s)^2\le C m_s^{-1},
  \qquad
  \sum_i\beta_i(s)^2\le C.
\end{equation}
Thus
\begin{equation}
\label{eq:C-op-bound}
  \opnorm{\bfC_{\rho,s}}
  \le m_s\opnorm{\bfW_{\beta,s}}\opnorm{\widetilde\bfA_{\rho,s}}\opnorm{\bfW_{\beta,s}}
  \le C.
\end{equation}
The diagonal removal changes the operator norm by at most
\[
  \max_i m_s\beta_i(s)^2\widetilde A_{ii,\rho,s}\le C,
\]
so \eqref{eq:C-op-bound} remains valid for the zero-diagonal $\bfC_{\rho,s}$.

The variance non-degeneracy \eqref{eq:variance-nondegenerate} gives
\begin{equation}
\label{eq:trC2-lower}
  2\tr(\bfC_{\rho,s}^2)=m_s\sigma_\rho^{0,2}(s)
  \ge c_\sigma m_s
\end{equation}
with probability tending to one.  Therefore
\begin{equation}
\label{eq:eigen-negligible}
  \frac{\opnorm{\bfC_{\rho,s}}}{\{\tr(\bfC_{\rho,s}^2)\}^{1/2}}
  \le \frac{C}{(c_\sigma m_s/2)^{1/2}}\to0.
\end{equation}
For the row influence, by \eqref{eq:A-row-bounds} and \eqref{eq:beta-max-sum-proof},
\begin{align}
  \max_i\sum_j C_{ij,\rho,s}^2
  &\le m_s^2\max_i\beta_i(s)^2
  \max_i\sum_j\beta_j(s)^2\widetilde A_{ij,\rho,s}^2\notag\\
  &\le m_s^2\cdot C m_s^{-1}\cdot C m_s^{-1}\sum_j\widetilde A_{ij,\rho,s}^2
  \le C.
  \label{eq:row-bound-proof}
\end{align}
Combining \eqref{eq:trC2-lower} and \eqref{eq:row-bound-proof},
\begin{equation}
\label{eq:row-negligible}
  \frac{\max_i\sum_j C_{ij,\rho,s}^2}{\tr(\bfC_{\rho,s}^2)}
  \le \frac{C}{c_\sigma m_s/2}\to0.
\end{equation}
Equations \eqref{eq:eigen-negligible} and \eqref{eq:row-negligible} verify the de Jong conditions.  Therefore,
\[
  \frac{\bfvarsigma^{\top}\bfC_{\rho,s}\bfvarsigma}{\{2\tr(\bfC_{\rho,s}^2)\}^{1/2}}
  \dto N(0,1)
\]
conditionally on $\calF_s^0$, in probability.  Substitution of \eqref{eq:quad-form-C} and \eqref{eq:var-C} proves the lemma.
\end{proof}

\begin{lemma}[Cross-scan and cross-ridge covariance factorization]\label{lem:cross-factorization}
Under \(H_0\) and Assumptions \ref{ass:elliptical}--\ref{ass:grid}, let
\(\Xi_{\rho,\rho'}^0(s,r)\) be the conditional covariance of
\(\widetilde V_\rho^0(s)-m_s\kappa_\rho^0(s)\) and
\(\widetilde V_{\rho'}^0(r)-m_r\kappa_{\rho'}^0(r)\), divided by
\((m_sm_r)^{1/2}\).  The following explicit factorization holds uniformly
on the natural \(n^{-1}\)-grids of both scan domains in
Convention~\ref{conv:scan}:
\begin{equation}
\label{eq:cross-factorization}
  \frac{\Xi_{\rho,\rho'}^0(s,r)}{
  \sigma_\rho^0(s)\sigma_{\rho'}^0(r)}
  =r_{E,m}(\rho,\rho')K_0(s,r)+O_P(\ell_n m_*^{-1/2}),
\end{equation}
where, for the common pooled size \(m\) of the two statistics,
\[
  r_{E,m}(\rho,\rho')=
  \frac{\mathfrak c_{\rho,\rho',m}}{\{\mathfrak c_{\rho,\rho,m}\mathfrak c_{\rho',\rho',m}\}^{1/2}},
  \qquad
  r_{E,m}(\rho,\rho')\to r_E(c_{\rm pool};\rho,\rho')
  \quad\text{whenever }m/n\to c_{\rm pool},
\]
where
\[
 r_E(c_{\rm pool};\rho,\rho')=
 \frac{\mathfrak c_{\rho,\rho'}(c_{\rm pool})}
 {\{\mathfrak c_{\rho,\rho}(c_{\rm pool})\mathfrak c_{\rho',\rho'}(c_{\rm pool})\}^{1/2}}.
\]
For both global scans, $m=n$ and we abbreviate
$r_E(\rho,\rho')=r_E(1;\rho,\rho')$.  If the construction is applied inside a
recursive window of length $m_0\asymp n$, the finite statement retains
$r_{E,m_0}$; along a subsequence $m_0/n\to c_{\rm pool}$ its limit is
$r_E(c_{\rm pool};\rho,\rho')$.  In all common-pool cases
$r_E(c_{\rm pool};\rho,\rho)=1$, $\abs{r_E(c_{\rm pool};\rho,\rho')}\le1$, and
\begin{equation*}
  K_0(s,r)=\frac{\psi(s,r)^2}{\psi(s)\psi(r)},
  \qquad
  \psi(s,r)=\int_0^1\varphi_s(x)\varphi_r(x)\,dx,
  \qquad
  \psi(s)=\psi(s,s).
\end{equation*}
Here
\begin{equation*}
\varphi_s(x)=
\begin{cases}
0,&0\le x<t_1,\\
-(t_2-t_1)^{-1},&t_1\le x<t_2,\\
(t_3-t_2)^{-1},&t_2\le x<t_3,\\
0,&t_3\le x\le1.
\end{cases}
\end{equation*}
For \(s_t=(0,t,1)\) and \(s_u=(0,u,1)\),
\begin{equation*}
  K_0(s_t,s_u)=
  \frac{\min(t,u)\{1-\max(t,u)\}}
  {\max(t,u)\{1-\min(t,u)\}}.
\end{equation*}
\end{lemma}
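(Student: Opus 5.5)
The plan is to compute the conditional covariance exactly through the Rademacher representation, reduce it to the weighted companion functional of Lemma~\ref{lem:weighted-offdiag}, and then identify the temporal factor through a Riemann-sum calculation on the CUSUM weights.

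\textbf{Step 1: exact conditional covariance.} Under Convention~\ref{conv:scan} both statistics use the same pool $J=\{1,\ldots,n\}$. Hence $m_s=m_r=m$, and both are driven by the same sign vector $\bfvarsigma$. By Lemmas~\ref{lem:signs}--\ref{lem:companion-contraction} and \eqref{eq:quad-form-C},
\[
\widetilde V_\rho^0(s)-m\kappa_\rho^0(s)=\bfvarsigma^\top\bfC_{\rho,s}\bfvarsigma,
\qquad
C_{ij,\rho,s}=m\beta_i(s)\beta_j(s)\widetilde A_{ij,\rho}\ (i\neq j),
\]
and $\bfC_{\rho,s}$ is $\calF^0$-measurable. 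For zero-diagonal symmetric arrays and independent Rademacher signs, $\Cov(\bfvarsigma^\top\bfC\bfvarsigma,\bfvarsigma^\top\bfC'\bfvarsigma\mid\calF^0)=2\tr(\bfC\bfC')$. Dividing by $m$ gives
\[
\Xi^0_{\rho,\rho'}(s,r)=2m\sum_{i\ne j}\beta_i(s)\beta_i(r)\beta_j(s)\beta_j(r)\widetilde A_{ij,\rho}\widetilde A_{ij,\rho'}.
\]
Taking $s=r$, $\rho=\rho'$ recovers $\sigma_\rho^{0,2}(s)$. This is consistent because $A^0_{ij,\rho}A^0_{ij,\rho'}=\widetilde A_{ij,\rho}\widetilde A_{ij,\rho'}$, by sign-conjugation invariance.

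\textbf{Step 2: apply the weighted companion law.} I apply \eqref{eq:weighted-offdiag-input} with $\bfb(s)=\bfbeta_s$ and $\bfb(r)=\bfbeta_r$. The admissibility conditions $\max_i|\beta_i|\le Cm^{-1/2}$ and $\|\bfbeta_s\|\le C$ hold with probability tending to one by \eqref{eq:eI-uniform-prelim}. The lemma explicitly covers oracle weights that include the inverse-distance factors, and the natural grids are polynomial in $n$, so the bound is uniform. This yields
\[
\Xi^0_{\rho,\rho'}(s,r)=2\mathfrak c_{\rho,\rho',m}\Bigl\{\sum_i\beta_i(s)\beta_i(r)\Bigr\}^2+O_P(\ell_nm^{-1/2}),
\]
and the analogous expressions for $\sigma^{0,2}_\rho(s)$ and $\sigma^{0,2}_{\rho'}(r)$. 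Both variances are bounded below uniformly by \eqref{eq:variance-nondegenerate}, so dividing preserves the $O_P(\ell_nm^{-1/2})$ error. The ratio then equals
\[
r_{E,m}(\rho,\rho')\cdot\frac{\{\sum_i\beta_i(s)\beta_i(r)\}^2}{\sum_i\beta_i(s)^2\sum_i\beta_i(r)^2}+O_P(\ell_nm^{-1/2}).
\]

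\textbf{Step 3: identify the temporal factor.} Substitute the expansion \eqref{eq:beta-b0-intermediate}, $\beta_i=\mu_{w,p}^{-1}b_i^{\rm cus}+r_{\beta,i}$. The factor $\mu_{w,p}^{-2}$ cancels between numerator and denominator. The remainders contribute $O_P(\ell_nm^{-1/2})$, using Cauchy--Schwarz and the bound $\|\bfr_{\beta,s}\|_2=O_P(\ell_{0,n}^2m^{-1/2})$. It remains to evaluate $\sum_ib^{\rm cus}_i(s)b^{\rm cus}_i(r)$, which is deterministic. Since $b^{\rm cus}_i(s)=\sqrt{N_s}\,n^{-1}\varphi_s(i/n)$ up to boundary effects, and $N_s/n=1/\psi(s)+O(n^{-1})$ (indeed $\psi(s)=(t_2-t_1)^{-1}+(t_3-t_2)^{-1}$), a Riemann-sum argument gives
\[
\sum_ib^{\rm cus}_i(s)b^{\rm cus}_i(r)=\frac{\psi(s,r)}{\{\psi(s)\psi(r)\}^{1/2}}+O(n^{-1}),
\]
uniformly on the trimmed grids. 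Squaring yields $K_0(s,r)$, which proves \eqref{eq:cross-factorization}.

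\textbf{Step 4: limits and the explicit kernel.} The convergence $r_{E,m}\to r_E(c_{\rm pool};\cdot,\cdot)$ follows from the convergence of $\mathfrak c_{\rho,\rho',m}$ in Lemma~\ref{lem:weighted-offdiag} together with the positive lower bound \eqref{eq:companion-cov-uniform-bounds}. The identity $r_E(\rho,\rho)=1$ is immediate. The bound $|r_E|\le1$ is Cauchy--Schwarz, since by \eqref{eq:companion-cov-local-aspect} $\mathfrak c_{\rho,\rho'}$ is the covariance $\Cov\{f_\rho(X),f_{\rho'}(X)\}$ with $X\sim\underline F_{\gamma/c_{\rm pool},H}$.

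For the single-change kernel, $\varphi_{s_t}$ equals $-1/t$ on $[0,t)$ and $1/(1-t)$ on $[t,1)$, so $\psi(s_t)=1/\{t(1-t)\}$. For $t\le u$, a direct integration over the three pieces $[0,t)$, $[t,u)$, $[u,1)$ gives $\psi(s_t,s_u)=1/\{u(1-t)\}$. Therefore
\[
K_0(s_t,s_u)=\frac{t(1-t)u(1-u)}{u^2(1-t)^2}=\frac{t(1-u)}{u(1-t)},
\]
which is the stated formula.

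\textbf{Main obstacle.} The delicate step is the uniform application of Lemma~\ref{lem:weighted-offdiag} in Step 2. The weights $\beta_i$ are random through the $w_i$, and the lemma's Haar-concentration union bound must hold simultaneously over pairs $(s,r)$ and over the ridge net. My approach is to first work with the deterministic weights $\mu_{w,p}^{-1}b^{\rm cus}$, where the lemma applies verbatim, and then transfer to $\bfbeta$ via the remainder bounds in \eqref{eq:beta-b0-intermediate}.
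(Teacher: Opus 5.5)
Your proposal is correct and follows the paper's proof. The shared steps are the exact Rademacher covariance $2\tr(\bfC\bfC')$, the reduction through Lemma~\ref{lem:weighted-offdiag} together with the expansion \eqref{eq:beta-b0-intermediate}, and the identification of $K_0$ from the CUSUM weights and $\varphi_s$.

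The only differences are minor. The paper reduces $\bfbeta_s$ to $\mu_{w,p}^{-1}\bfb^{\rm cus}(s)$ before invoking the weighted companion law, which is the ordering you propose in your last paragraph. Also, on the natural grid with the left-endpoint convention $x_i=(i-1)/m$, the identity $\sum_i b_i^{\rm cus}(s)b_i^{\rm cus}(r)=\psi(s,r)/\{\psi(s)\psi(r)\}^{1/2}$ holds exactly rather than up to $O(n^{-1})$.
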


\begin{proof}[Proof of Lemma \ref{lem:cross-factorization}]
Extend every weight vector by zero outside its pool.  Conditionally on
\(\calF^0\),
\[
 \widetilde V_\rho^0(s)-m_s\kappa_\rho^0(s)
 =2m_s\sum_{i<j}c_{ij}^{\rho,s}\varsigma_i\varsigma_j,
 \qquad
 c_{ij}^{\rho,s}
 =\beta_i(s)\beta_j(s)\widetilde A_{ij,\rho,s}.
\]
Here
\[
 c_{ij}^{\rho,s}\in L^0(\calF^0),
 \qquad
 (\varsigma_1,\ldots,\varsigma_n)\perp\calF^0,
 \qquad
 \varsigma_i\stackrel{\rm iid}{\sim}{\rm Unif}\{-1,1\}.
\]
Thus, for \(i<j\) and \(k<l\),
\[
 \E(\varsigma_i\varsigma_j\varsigma_k\varsigma_l\mid\calF^0)
 =\ind\{\{i,j\}=\{k,l\}\},
\]
and therefore
\begin{align*}
 &\Cov\{\widetilde V_\rho^0(s),
          \widetilde V_{\rho'}^0(r)\mid\calF^0\}\\
 &\quad=4m_sm_r\sum_{i<j}c_{ij}^{\rho,s}c_{ij}^{\rho',r}\\
 &\quad=2m_sm_r\sum_{i\ne j}
 \beta_i(s)\beta_i(r)\beta_j(s)\beta_j(r)
 \widetilde A_{ij,\rho,s}\widetilde A_{ij,\rho',r}.
\end{align*}
Hence
\[
 \Xi_{\rho,\rho'}^0(s,r)
 =2(m_sm_r)^{1/2}\sum_{i\ne j}
 \beta_i(s)\beta_i(r)\beta_j(s)\beta_j(r)
 \widetilde A_{ij,\rho,s}\widetilde A_{ij,\rho',r}.
\]
For the common-pool scans,
\begin{equation}
\label{eq:common-pool-proof}
 J(s)=J(r)=J_0,
 \qquad
 m_s=m_r=m,
 \qquad
 \widetilde\bfA_{\rho,s}=\widetilde\bfA_{\rho,J_0},
 \qquad
 \widetilde\bfA_{\rho',r}=\widetilde\bfA_{\rho',J_0}.
\end{equation}
Without \eqref{eq:common-pool-proof}, the exact covariance contains
\[
 2m_sm_r\sum_{\{i,j\}\subset J(s)\cap J(r)}
 \beta_i(s)\beta_j(s)\beta_i(r)\beta_j(r)
 \widetilde A_{ij,\rho,J(s)}\widetilde A_{ij,\rho',J(r)},
\]
so the one-resolvent reduction below is used only under the common-pool
condition.

By \eqref{eq:beta-b0-intermediate},
\[
 \bfbeta_s=\mu_{w,p}^{-1}\bfb^{\rm cus}(s)+\bfr_{\beta,s},
 \qquad
 \|\bfr_{\beta,s}\|_2=O_P(\ell_{0,n}^2m^{-1/2}),
 \qquad
 \|\bfr_{\beta,s}\|_\infty=O_P(\ell_{0,n}^2m^{-1}).
\]
For four vectors with sup norm at most \(Cm^{-1/2}\), define
\[
 \mathfrak L_{\rho,\rho'}(\bfu,\bfv,\bfa,\bfb)
 =m\sum_{i\ne j}u_iv_ia_jb_j
 \widetilde A_{ij,\rho}\widetilde A_{ij,\rho'}.
\]
The companion contractions imply, for a representative first-order term,
\begin{align*}
 &|\mathfrak L_{\rho,\rho'}(
 \bfr_{\beta,s},\bfb^{\rm cus}(r),
 \bfb^{\rm cus}(s),\bfb^{\rm cus}(r))|\\
 &\quad\le
 Cm\|\bfr_{\beta,s}\|_2
 \|\bfb^{\rm cus}(r)\|_\infty
 \left[\sum_j\{b_j^{\rm cus}(s)b_j^{\rm cus}(r)\}^2\right]^{1/2}\\
 &\quad=O_P(\ell_{0,n}^2m^{-1/2}).
\end{align*}
Expanding the four arguments gives
\begin{align*}
 &\mathfrak L_{\rho,\rho'}(\bfbeta_s,\bfbeta_r,
 \bfbeta_s,\bfbeta_r)
 -\mu_{w,p}^{-4}\mathfrak L_{\rho,\rho'}
 (\bfb^{\rm cus}(s),\bfb^{\rm cus}(r),
  \bfb^{\rm cus}(s),\bfb^{\rm cus}(r))\\
 &\quad=
 O_P(\ell_nm^{-1/2})+O_P(\ell_{0,n}^4m^{-1})
 =O_P(\ell_nm^{-1/2}).
\end{align*}
Therefore,
\begin{equation}
\label{eq:Xi-b-reduction-proof}
 \Xi_{\rho,\rho'}^0(s,r)
 =2m\mu_{w,p}^{-4}\sum_{i\ne j}
 b_i^{\rm cus}(s)b_i^{\rm cus}(r)
 b_j^{\rm cus}(s)b_j^{\rm cus}(r)
 \widetilde A_{ij,\rho}\widetilde A_{ij,\rho'}
 +O_P(\ell_nm^{-1/2}).
\end{equation}
The factor \(\mu_{w,p}^{-4}\) appears in both marginal variances and
cancels after standardization.

Lemma~\ref{lem:weighted-offdiag} gives
\begin{align*}
 &m\sum_{i\ne j}
 b_i^{\rm cus}(s)b_i^{\rm cus}(r)
 b_j^{\rm cus}(s)b_j^{\rm cus}(r)
 \widetilde A_{ij,\rho}\widetilde A_{ij,\rho'}\\
 &\quad=
 \mathfrak c_{\rho,\rho',m}
 \left\{\sum_i b_i^{\rm cus}(s)b_i^{\rm cus}(r)\right\}^2
 +O_P(\ell_nm^{-1/2}).
\end{align*}
Consequently,
\begin{align*}
 r_{E,m}(\rho,\rho')
 &=\frac{\mathfrak c_{\rho,\rho',m}}
 {\{\mathfrak c_{\rho,\rho,m}
    \mathfrak c_{\rho',\rho',m}\}^{1/2}}
 \longrightarrow r_E(c_{\rm pool};\rho,\rho'),\\
 r_E(c_{\rm pool};\rho,\rho')
 &=\frac{\mathfrak c_{\rho,\rho'}(c_{\rm pool})}
 {\{\mathfrak c_{\rho,\rho}(c_{\rm pool})
    \mathfrak c_{\rho',\rho'}(c_{\rm pool})\}^{1/2}}.
\end{align*}
In particular,
\[
 r_E(c_{\rm pool};\rho,\rho)=1,
 \qquad
 |r_E(c_{\rm pool};\rho,\rho')|\le1.
\]
For the two global scans, \(c_{\rm pool}=1\), and we write
\(r_E(\rho,\rho')\).

Relabel the common pool as \(1,\ldots,m\), set \(x_i=(i-1)/m\), and use the
left-endpoint convention.  Direct substitution gives
\begin{equation*}
 b_i^{\rm cus}(s)
 =m^{-1/2}\psi(s)^{-1/2}\varphi_s(x_i),
 \qquad
 \psi(s)=\int_0^1\varphi_s(x)^2\,dx.
\end{equation*}
Indeed, on the two adjacent segments the right-hand side equals
\[
 -m^{-1/2}
 \left\{\frac{t_3-t_2}{(t_2-t_1)(t_3-t_1)}\right\}^{1/2},
 \qquad
 m^{-1/2}
 \left\{\frac{t_2-t_1}{(t_3-t_2)(t_3-t_1)}\right\}^{1/2},
\]
and it vanishes elsewhere.  Because
\(\varphi_s\varphi_r\) is constant on the same grid cells,
\[
 \sum_{i=1}^m b_i^{\rm cus}(s)b_i^{\rm cus}(r)
 =\frac{m^{-1}\sum_{i=1}^m\varphi_s(x_i)\varphi_r(x_i)}
 {\{\psi(s)\psi(r)\}^{1/2}}
 =\frac{\psi(s,r)}{\{\psi(s)\psi(r)\}^{1/2}},
\]
where
\[
 \psi(s,r)=\int_0^1\varphi_s(x)\varphi_r(x)\,dx.
\]
Substitution into \eqref{eq:Xi-b-reduction-proof} proves
\eqref{eq:cross-factorization}.

For \(s_t=(0,t,1)\),
\[
 \varphi_{s_t}(x)
 =-t^{-1}\ind(0\le x<t)
 +(1-t)^{-1}\ind(t\le x\le1).
\]
If \(t\le u\),
\begin{align*}
 \psi(s_t,s_u)
 &=\int_0^t\frac{dx}{tu}
 -\int_t^u\frac{dx}{u(1-t)}
 +\int_u^1\frac{dx}{(1-t)(1-u)}\\
 &=\frac{1}{u(1-t)}.
\end{align*}
By symmetry,
\[
 \psi(s_t,s_u)
 =\frac{\min(t,u)\{1-\max(t,u)\}}
 {tu(1-t)(1-u)},
 \qquad
 \psi(s_t)=\{t(1-t)\}^{-1}.
\]
The formula for \(K_0\) follows.
\end{proof}

\begin{lemma}[Oracle variance limit]\label{lem:oracle-variance-limit}
Under \(H_0\) and Assumptions \ref{ass:elliptical}--\ref{ass:grid}, for a
common pool of size \(n\),
\[
 \sigma_{\rho,n}^{\circ 2}
 =2\mu_{w,p}^{-4}\mathfrak c_{\rho,\rho,n}
 \longrightarrow
 \sigma_\rho^2
 :=2\zeta_{-1}^{-4}\mathfrak c_{\rho,\rho}\in(0,\infty)
\]
uniformly for \(\rho\in[\rho_0,\rho_1]\).  For either global scan in
Convention~\ref{conv:scan},
\[
 \sup_{s,\rho}|\sigma_\rho^{0,2}(s)-\sigma_\rho^2|=o_P(1),
 \qquad
 \sup_{s,\rho}|\widehat\sigma_\rho^2(s)-\sigma_\rho^2|=o_P(1).
\]
For a local pool sequence \(m/n\to c_{\rm pool}\in(0,\infty)\), the deterministic
limit is instead
\[
 \sigma_\rho^2(c_{\rm pool})=2\zeta_{-1}^{-4}\mathfrak c_{\rho,\rho}(c_{\rm pool}),
\]
where \(\mathfrak c_{\rho,\rho}(c_{\rm pool})\) uses the companion law at aspect ratio
\(\gamma/c_{\rm pool}\).
\end{lemma}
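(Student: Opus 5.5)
The argument splits into three parts: the deterministic limit of the proxy, the oracle variance via \eqref{eq:oracle-sigma-proxy-rate}, and the feasible variance via \eqref{eq:sigma-feasible-rate}.

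\textbf{Deterministic limit of the proxy.} We write
\[
\sigma_{\rho,n}^{\circ2}=2\mu_{w,p}^{-4}\mathfrak c_{\rho,\rho,n}
\]
and treat the two factors separately.

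\begin{itemize}
\item By \eqref{eq:Ew-E-xi} in Lemma~\ref{lem:angular-w-xi}, $\mu_{w,p}\to\zeta_{-1}\in(0,\infty)$. This quantity does not depend on $\rho$, so $\mu_{w,p}^{-4}\to\zeta_{-1}^{-4}$ trivially uniformly in $\rho$.
\item Lemma~\ref{lem:weighted-offdiag}, applied with $m=n$ (so $c_{\rm pool}=1$) and $\rho=\rho'$, gives $\mathfrak c_{\rho,\rho,n}\to\mathfrak c_{\rho,\rho}=\mathfrak c_{\rho,\rho}(1)$ uniformly on $[\rho_0,\rho_1]$.
\end{itemize}

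The product of a uniformly convergent, bounded sequence with a convergent scalar sequence converges uniformly, so $\sigma_{\rho,n}^{\circ2}\to\sigma_\rho^2$ uniformly in $\rho$. Positivity and finiteness of $\sigma_\rho^2$ follow from \eqref{eq:companion-cov-uniform-bounds}. Concretely, the variance argument at the end of that lemma gives the lower bound $\mathfrak c_{\rho,\rho}(1)\ge d_*^2\gamma>0$, and $0\le f_\rho\le1$ gives the upper bound.

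\textbf{Oracle variance.} Under Convention~\ref{conv:scan} every candidate has $m_s=n$, so the proxy $\sigma^{\circ2}_{\rho,m_s}$ equals $\sigma^{\circ2}_{\rho,n}$ for all $s$. Then \eqref{eq:oracle-sigma-proxy-rate} of Lemma~\ref{lem:feasible-companion} gives
\[
\sup_{s,\rho}|\sigma_\rho^{0,2}(s)-\sigma^{\circ2}_{\rho,n}|=O_P(\ell_n n^{-1/2})=o_P(1),
\]
since $\ell_n$ is polylogarithmic. Combining this with the uniform deterministic convergence of the first part, via the triangle inequality, yields the oracle statement.

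\textbf{Feasible variance.} Lemma~\ref{lem:feasible-companion}, \eqref{eq:sigma-feasible-rate}, gives
\[
\sup_{s,\rho}|\widehat\sigma_\rho^2(s)-\sigma_\rho^{0,2}(s)|=O_P(\ell_nm_*^{-1/2})=o_P(1),
\]
and another application of the triangle inequality gives the feasible statement.

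\textbf{Local pool.} For a local pool with $m/n\to c_{\rm pool}$, the same two steps apply. Lemma~\ref{lem:weighted-offdiag} now gives $\mathfrak c_{\rho,\rho,m}\to\mathfrak c_{\rho,\rho}(c_{\rm pool})$, using the companion law at aspect ratio $\gamma/c_{\rm pool}$. The factor $\mu_{w,p}$ is unchanged, since it depends only on the radial and angular law and not on $m$. This gives the stated limit $\sigma^2_\rho(c_{\rm pool})$.

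\textbf{Main obstacle.} The only delicate point is uniformity in $\rho$ for the convergence $\mathfrak c_{\rho,\rho,n}\to\mathfrak c_{\rho,\rho}$. Lemma~\ref{lem:weighted-offdiag} asserts this uniformly on the compact ridge square. If that needed checking directly, I would use two ingredients:
\begin{itemize}
\item equicontinuity in $\rho$ of $f_\rho(x)=x/(x+\rho)$ on $[0,M]$, where $M$ bounds the companion support;
\item weak convergence of the expected companion spectral distribution to $\underline F_{\gamma,H}$.
\end{itemize}
A finite-net argument over $\rho$ then upgrades pointwise convergence to uniform convergence.
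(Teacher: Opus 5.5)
Your proposal is correct and follows essentially the same route as the paper. The proxy limit comes from $\mu_{w,p}\to\zeta_{-1}$ (Lemma~\ref{lem:angular-w-xi}) combined with the uniform convergence and two-sided bounds for $\mathfrak c_{\rho,\rho,n}$ (Lemma~\ref{lem:weighted-offdiag}); the oracle and feasible statements then follow by the triangle inequality from \eqref{eq:oracle-sigma-proxy-rate} and \eqref{eq:sigma-feasible-rate}, and the local-pool case uses the aspect-ratio version \eqref{eq:companion-cov-local-aspect} exactly as you describe.
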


\begin{proof}
By Lemmas~\ref{lem:angular-w-xi} and \ref{lem:weighted-offdiag},
\[
 \mu_{w,p}\longrightarrow\zeta_{-1},
 \qquad
 \sup_{\rho\in[\rho_0,\rho_1]}
 |\mathfrak c_{\rho,\rho,n}-\mathfrak c_{\rho,\rho}(1)|\longrightarrow0,
\]
with
\[
 0<c\le\inf_\rho\mathfrak c_{\rho,\rho}(1)
 \le\sup_\rho\mathfrak c_{\rho,\rho}(1)\le C<\infty.
\]
Therefore the deterministic variance proxy satisfies
\[
 \sup_\rho\left|
 \sigma_{\rho,n}^{\circ2}
 -2\zeta_{-1}^{-4}\mathfrak c_{\rho,\rho}(1)
 \right|\longrightarrow0,
\]
and the limit is uniformly positive.  Equations
\eqref{eq:oracle-sigma-proxy-rate} and \eqref{eq:sigma-feasible-rate} then
give
\[
 \sup_s|\sigma_\rho^{0,2}(s)-\sigma_{\rho,n}^{\circ2}|=o_P(1),
 \qquad
 \sup_s|\widehat\sigma_\rho^2(s)-\sigma_\rho^{0,2}(s)|=o_P(1).
\]
For a local pool with \(m/n\to c_{\rm pool}\),
\eqref{eq:companion-cov-local-aspect} gives
\[
 \mathfrak c_{\rho,\rho,m}
 \longrightarrow\mathfrak c_{\rho,\rho}(c_{\rm pool}),
\]
so the local-pool variance uses the aspect ratio \(p/m\to\gamma/c_{\rm pool}\),
not the full-sample ratio \(p/n\to\gamma\).
\end{proof}

\begin{lemma}[Multivariate quadratic-form CLT]\label{lem:multi-clt}
For $a=1,\ldots,M$, let $s_{a,n}$ be deterministic candidates on the
natural grid of either global scan in Convention~\ref{conv:scan}, and suppose
$s_{a,n}\to s_a$ in the corresponding trimmed scan domain.  Let
$\rho_n^{(1)},\ldots,\rho_n^{(K)}\in[\rho_0,\rho_1]$ be deterministic with
$\rho_n^{(k)}\to\rho^{(k)}$.  Under $H_0$
and Assumptions~\ref{ass:elliptical}--\ref{ass:grid}, the vector
\[
  \left\{
  \frac{\widetilde V_{\rho_n^{(k)}}^0(s_{a,n})-n\kappa_{\rho_n^{(k)}}^0(s_{a,n})}
  {\{n\sigma_{\rho_n^{(k)}}^{0,2}(s_{a,n})\}^{1/2}}
  :1\le a\le M,\ 1\le k\le K
  \right\}
\]
converges conditionally, in probability, to a centered Gaussian vector with
covariance
\begin{equation*}
  \mathcal K_{k\ell}(a,b)
  =r_E(\rho^{(k)},\rho^{(\ell)})K_0(s_a,s_b).
\end{equation*}
The covariance matrix is allowed to be singular, for example when candidate
points or limiting ridge values coincide.
\end{lemma}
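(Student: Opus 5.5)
By the Cramér--Wold device it suffices to fix real coefficients $\lambda_{ak}$ and show that the linear combination
\[
 \mathcal Q_n=\sum_{a=1}^M\sum_{k=1}^K\lambda_{ak}
 \frac{\widetilde V_{\rho_n^{(k)}}^0(s_{a,n})-n\kappa_{\rho_n^{(k)}}^0(s_{a,n})}
 {\{n\sigma_{\rho_n^{(k)}}^{0,2}(s_{a,n})\}^{1/2}}
\]
is conditionally asymptotically $N(0,\bflambda^\top\mathcal K\bflambda)$, writing $\bflambda=(\lambda_{ak})$. The plan reuses the sign structure of Lemma~\ref{lem:clt}, now for a single quadratic form built from several companion matrices.

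\textbf{Reduction to one sign quadratic form.} Because both global scans use the common pool $J=\{1,\ldots,n\}$, we condition on $\calF^0$. Lemmas~\ref{lem:signs} and~\ref{lem:companion-contraction}, applied as in the proof of Lemma~\ref{lem:cross-factorization}, write each summand as $2n\sum_{i<j}c^{k,a}_{ij}\varsigma_i\varsigma_j$ with $\calF^0$-measurable coefficients $c^{k,a}_{ij}=\beta_i(s_a)\beta_j(s_a)\widetilde A_{ij,\rho^{(k)}_n}$. Hence $\mathcal Q_n=\bfvarsigma^\top\bfC_n\bfvarsigma$, where
\[
 \bfC_n=\sum_{a,k}\lambda_{ak}\,\{n\sigma^{0,2}_{\rho_n^{(k)}}(s_{a,n})\}^{-1/2}\,n\,\bfW_{\beta,s_a}\widetilde\bfA_{\rho_n^{(k)}}\bfW_{\beta,s_a},
\]
with its diagonal removed. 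The conditional variance of $\mathcal Q_n$ is $2\tr(\bfC_n^2)=\sum\lambda_{ak}\lambda_{b\ell}\,\Xi^0/(\sigma^0\sigma^0)$. Lemma~\ref{lem:cross-factorization} gives $r_{E,n}(\rho_n^{(k)},\rho_n^{(\ell)})K_0(s_{a,n},s_{b,n})+O_P(\ell_nn^{-1/2})$ for each term. Then $r_{E,n}(\rho_n^{(k)},\rho_n^{(\ell)})\to r_E(\rho^{(k)},\rho^{(\ell)})$ by the uniform convergence and continuity of $\mathfrak c_{\rho,\rho',m}$ in Lemma~\ref{lem:weighted-offdiag}, together with the positive lower bound~\eqref{eq:companion-cov-uniform-bounds}. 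Also $K_0(s_{a,n},s_{b,n})\to K_0(s_a,s_b)$, since $\psi$ is continuous on the trimmed domain, where $\psi(s)$ is bounded away from zero. Therefore $2\tr(\bfC_n^2)\to\bflambda^\top\mathcal K\bflambda$ in probability.

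\textbf{Gaussian limit.} If $\bflambda^\top\mathcal K\bflambda=0$, the conditional variance tends to zero and $\mathcal Q_n\to0$ in probability; this handles the allowed singular case. Otherwise we verify the de Jong conditions as in Lemma~\ref{lem:clt}. Each of the finitely many summands has operator norm at most $C$, by the contraction~\eqref{eq:A-contraction}, $\max_i\beta_i^2\le Cn^{-1}$, and the variance lower bound~\eqref{eq:variance-nondegenerate}, which gives $\sigma^0\ge c_\sigma^{1/2}$ after normalization. Removing the diagonal changes the operator norm by $O(n^{-1/2})$, so $\|\bfC_n\|_{\rm op}=O_P(n^{-1/2})$ after normalization. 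Likewise the maximal row sum $\max_i\sum_jC_{ij}^2=O_P(n^{-1})$ by~\eqref{eq:A-row-bounds}. Since $\tr(\bfC_n^2)$ is bounded away from zero in probability, both de Jong ratios vanish, and the Rademacher quadratic-form CLT of \citet{dejong1987} gives conditional convergence to $N(0,\bflambda^\top\mathcal K\bflambda)$ in probability.

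\textbf{Main obstacle.} The delicate step is the covariance identification: Lemma~\ref{lem:cross-factorization} must hold along moving sequences $\rho_n^{(k)}$ and $s_{a,n}$, rather than only at fixed points. That lemma is stated uniformly over the natural grids and over the compact ridge interval, so evaluating it at these sequences is legitimate. The one thing I would check carefully is that the $O_P(\ell_nm_*^{-1/2})$ remainder is uniform enough to absorb the finitely many index pairs $(a,b,k,\ell)$, which holds trivially because there are finitely many of them.
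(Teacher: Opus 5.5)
Your proposal is correct and follows essentially the same route as the paper's proof. Both arguments use Cram\'er--Wold, rewrite the linear combination as one zero-diagonal Rademacher quadratic form $\bfvarsigma^\top\bfC_n\bfvarsigma$ conditional on $\calF^0$, and identify $2\tr(\bfC_n^2)$ through Lemma~\ref{lem:cross-factorization} together with ridge convergence and continuity of $K_0$; the degenerate case then follows from vanishing conditional variance, and the nondegenerate case from the de Jong operator-norm and row-influence bounds of order $n^{-1/2}$ and $n^{-1}$.
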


\begin{proof}
For arbitrary fixed real coefficients $b_{ak}$, consider the linear combination
\begin{equation*}
  \mathcal L_n=\sum_{a=1}^M\sum_{k=1}^K b_{ak}
  \frac{\widetilde V_{\rho_n^{(k)}}^0(s_{a,n})-m_{s_{a,n}}\kappa_{\rho_n^{(k)}}^0(s_{a,n})}
  {\{m_{s_{a,n}}\sigma_{\rho_n^{(k)}}^{0,2}(s_{a,n})\}^{1/2}}.
\end{equation*}
Using the sign representation in \eqref{eq:V-sign-expansion}, the same variable has the conditional quadratic-form representation
\begin{equation*}
  \mathcal L_n=\bfvarsigma^{\top}\bfC_n\bfvarsigma .
\end{equation*}
For each pair \((a,k)\), extend \(\beta_i(s_{a,n})\) and \(\widetilde A_{ij,\rho_n^{(k)},s_{a,n}}\) by zero whenever the relevant index is outside \(J(s_{a,n})\), and define the zero-diagonal symmetric coefficient matrix \(\bfC^{(a,k)}\) by
\begin{equation*}
  C_{ij}^{(a,k)}
  =\frac{m_{s_{a,n}}\beta_i(s_{a,n})\beta_j(s_{a,n})
       \widetilde A_{ij,\rho_n^{(k)},s_{a,n}}}
       {\{m_{s_{a,n}}\sigma_{\rho_n^{(k)}}^{0,2}(s_{a,n})\}^{1/2}}\ind(i\ne j),
  \qquad
  1\le i,j\le n.
\end{equation*}
Then \(\bfC^{(a,k)}\) is \(\calF^0\)-measurable and
\begin{equation*}
  \frac{\widetilde V_{\rho_n^{(k)}}^0(s_{a,n})-m_{s_{a,n}}\kappa_{\rho_n^{(k)}}^0(s_{a,n})}
       {\{m_{s_{a,n}}\sigma_{\rho_n^{(k)}}^{0,2}(s_{a,n})\}^{1/2}}
  =\bfvarsigma^\top\bfC^{(a,k)}\bfvarsigma.
\end{equation*}
The diagonal of \(\bfC^{(a,k)}\) is zero, and the sign variables enter only through
\begin{equation*}
  A_{ij,\rho_n^{(k)},s_{a,n}}^0=\varsigma_i\varsigma_j\widetilde A_{ij,\rho_n^{(k)},s_{a,n}},
  \qquad
  \E(\varsigma_i\varsigma_j\varsigma_u\varsigma_v\mid\calF^0)
  =\ind\{\{i,j\}=\{u,v\}\}.
\end{equation*}
Therefore
\[
  \bfC_n=\sum_{a=1}^M\sum_{k=1}^K b_{ak}\bfC^{(a,k)}.
\]
Let \(\bfW_{\beta,a,n}=\diag\{\beta_1(s_{a,n}),\ldots,\beta_n(s_{a,n})\}\), and first include the diagonal:
\[
 \overline\bfC^{(a,k)}=
 \frac{m_{s_{a,n}}\bfW_{\beta,a,n}
       \widetilde\bfA_{\rho_n^{(k)},s_{a,n}}\bfW_{\beta,a,n}}
      {\{m_{s_{a,n}}\sigma_{\rho_n^{(k)}}^{0,2}(s_{a,n})\}^{1/2}}.
\]
Since \(0\preceq \widetilde\bfA_{\rho_n^{(k)},s_{a,n}}\preceq \bfI\),
\(\max_i |\beta_i(s_{a,n})|\le C m_{s_{a,n}}^{-1/2}\), and
\(\sigma_{\rho_n^{(k)}}^{0,2}(s_{a,n})\ge c_\sigma\) with probability tending to one,
\[
 \|\overline\bfC^{(a,k)}\|_{\rm op}\le C m_*^{-1/2}.
\]
Moreover,
\[
 \|\diag(\overline\bfC^{(a,k)})\|_{\rm op}
 \le\max_i\frac{m_{s_{a,n}}\beta_i(s_{a,n})^2
       \widetilde A_{ii,\rho_n^{(k)},s_{a,n}}}
      {\{m_{s_{a,n}}\sigma_{\rho_n^{(k)}}^{0,2}(s_{a,n})\}^{1/2}}
 \le C m_*^{-1/2}.
\]
Because
\(\bfC^{(a,k)}=\overline\bfC^{(a,k)}-
\diag(\overline\bfC^{(a,k)})\), the required zero-diagonal matrix satisfies
\begin{equation*}
  \opnorm{\bfC^{(a,k)}}\le C m_*^{-1/2}.
\end{equation*}
The fixed dimension of the Cramer--Wold combination gives
\begin{equation}
\label{eq:Cn-op-multi}
  \opnorm{\bfC_n}
  \le \sum_{a,k}|b_{ak}|\,\opnorm{\bfC^{(a,k)}}
  =O_P(m_*^{-1/2}).
\end{equation}
For the row-influence bound, \eqref{eq:row-bound-proof} gives, uniformly in \((a,k)\),
\begin{equation*}
  \max_i\sum_j \{C_{ij}^{(a,k)}\}^2
  \le C m_*^{-1}.
\end{equation*}
Thus
\begin{equation}
\label{eq:Cn-row-multi}
  \max_i\sum_j C_{n,ij}^2
  \le MK\sum_{a=1}^M\sum_{k=1}^K b_{ak}^2
      \max_i\sum_j \{C_{ij}^{(a,k)}\}^2
  =O_P(m_*^{-1}).
\end{equation}
The conditional variance of $\mathcal L_n$ is
\begin{equation*}
  \Var(\mathcal L_n\mid\calF^0)=2\tr(\bfC_n^2).
\end{equation*}
Lemma~\ref{lem:cross-factorization}, the ridge convergence, and continuity
of $K_0$ on the trimmed scan domains give
\begin{align*}
  2\tr(\bfC_n^2)
  &=\sum_{a,b=1}^M\sum_{k,\ell=1}^K b_{ak}b_{b\ell}
    r_E(\rho^{(k)},\rho^{(\ell)})K_0(s_a,s_b)+o_P(1)\\
  &=\sigma_L^2+o_P(1),
\end{align*}
where
\[
  \sigma_L^2=\sum_{a,b=1}^M\sum_{k,\ell=1}^K b_{ak}b_{b\ell}
  r_E(\rho^{(k)},\rho^{(\ell)})K_0(s_a,s_b).
\]
If $\sigma_L^2>0$, then \eqref{eq:Cn-op-multi} and \eqref{eq:Cn-row-multi} imply
\[
  \frac{\opnorm{\bfC_n}}{\{\tr(\bfC_n^2)\}^{1/2}}\to0,
  \qquad
  \frac{\max_i\sum_j C_{n,ij}^2}{\tr(\bfC_n^2)}\to0.
\]
Moreover,
\begin{equation*}
 \frac{\tr(\bfC_n^4)}{\{\tr(\bfC_n^2)\}^2}
 \le
 \frac{\opnorm{\bfC_n}^2}{\tr(\bfC_n^2)}\to0,
\end{equation*}
so these are precisely the Rademacher quadratic-form conditions used in
Lemma~\ref{lem:clt}.  Hence, conditionally on \(\calF^0\),
\begin{equation*}
  \mathcal L_n\dto N(0,\sigma_L^2)\quad(\sigma_L^2>0).
\end{equation*}
If \(\sigma_L^2=0\), then
\begin{equation*}
  \E(\mathcal L_n^2\mid\calF^0)=2\tr(\bfC_n^2)\pto0,
  \qquad
  \mathcal L_n\pto0.
\end{equation*}
The Cramer--Wold device gives the asserted vector limit.
\end{proof}

For the continuous single-change scan, we use the following increment bound.

\begin{lemma}[Tightness for the common-pool scans]\label{lem:tightness}\label{lem:tightness-mc}
Under \(H_0\) and Assumptions \ref{ass:elliptical}--\ref{ass:grid}, the
oracle and feasible processes are tight on both scan domains in
Convention~\ref{conv:scan}.  Define the oracle process normalized by the
common deterministic proxy as
\[
 \bar Z_\rho^0(s)=
 \frac{\widetilde V_\rho^0(s)-n\kappa_\rho^0(s)}
      {\sqrt n\,\sigma_{\rho,n}^\circ},
\]
and define the candidate-normalized oracle process as
\[
 Z_\rho^0(s)=
 \frac{\widetilde V_\rho^0(s)-n\kappa_\rho^0(s)}
      {\{n\sigma_\rho^{0,2}(s)\}^{1/2}}.
\]
More precisely, there are events \(\mathcal E_n^{\rm proc}\), with
\(\Pbb(\mathcal E_n^{\rm proc})\to1\), such that, conditionally on the
unsigned direction representatives and inverse-distance weights, that is, on \(\calF^0\), \(\bar Z_\rho^0\)
satisfies on \(\mathcal E_n^{\rm proc}\)
\begin{align}
 \E_{\varsigma}|\bar Z_\rho^0(s_t)-\bar Z_\rho^0(s_u)|^4
 &\le C\{|t-u|+n^{-1}\}^2,\label{eq:increment-fourth}\\
 \E_{\varsigma}|\bar Z_\rho^0(s)-\bar Z_\rho^0(r)|^8
 &\le C\{\|s-r\|_1+n^{-1}\}^4,
 \qquad s,r\in\calS_{\rm mc}(\varepsilon).
 \label{eq:mc-increment-eighth}
\end{align}
Furthermore,
\begin{align}
 \sup_{s\in\calS_{\rm sc}(\varepsilon)}
 |Z_\rho(s)-\bar Z_\rho^0(s)|&=o_P(1),
 \label{eq:tightness-feasible-replacement}\\
 \sup_{s\in\calS_{\rm mc}(\varepsilon)}
 |Z_\rho(s)-\bar Z_\rho^0(s)|&=o_P(1).
 \label{eq:mc-tightness-feasible-replacement}
\end{align}
\end{lemma}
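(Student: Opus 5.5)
We prove the increment bounds by working conditionally on \(\calF^0\), where \(\bar Z_\rho^0(s)\) is a homogeneous Rademacher chaos of order two. By Lemmas~\ref{lem:signs} and~\ref{lem:companion-contraction},
\[
 \bar Z_\rho^0(s)-\bar Z_\rho^0(r)=\bfvarsigma^\top\bfC^{s,r}\bfvarsigma,
 \qquad
 C^{s,r}_{ij}=\frac{\sqrt n\,\{\beta_i(s)\beta_j(s)-\beta_i(r)\beta_j(r)\}\widetilde A_{ij,\rho}}{\sigma_{\rho,n}^\circ}\ind(i\ne j).
\]
The common pool \(J=\{1,\ldots,n\}\) makes \(\widetilde\bfA_\rho\) identical for \(s\) and \(r\). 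Bonami--Hölder hypercontractivity for degree-two chaos gives
\[
 \E_\varsigma|\bfvarsigma^\top\bfC\bfvarsigma|^{q}\le (q-1)^{q}\{2\tr(\bfC^2)\}^{q/2}.
\]
So \eqref{eq:increment-fourth} and \eqref{eq:mc-increment-eighth} both reduce to a second-moment bound
\[
 \tr\{(\bfC^{s,r})^2\}\le C\{\|s-r\|_1+n^{-1}\}
\]
on a high-probability event \(\mathcal E_n^{\rm proc}\). On that event we take \(\sigma_{\rho,n}^\circ\ge c\) (Lemma~\ref{lem:oracle-variance-limit}), \(\max_iw_i\le C\ell_{0,n}\), and \(e_{a,s}\) bounded above and below.

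To get this bound, write \(\bfbeta_s\bfbeta_s^\top-\bfbeta_r\bfbeta_r^\top=(\bfbeta_s-\bfbeta_r)\bfbeta_s^\top+\bfbeta_r(\bfbeta_s-\bfbeta_r)^\top\). Using \(0\preceq\widetilde\bfA_\rho\preceq\bfI\), each piece satisfies
\[
 n\sum_{i,j}(\delta_i\beta_j)^2\widetilde A_{ij}^2\le n\|\bfdelta\|_2^2\max_j\beta_j^2\le C\|\bfdelta\|_2^2,
 \qquad \bfdelta=\bfbeta_s-\bfbeta_r .
\]
It therefore suffices to show \(\|\bfbeta_s-\bfbeta_r\|_2^2\le C(\|s-r\|_1+n^{-1})\) on \(\mathcal E_n^{\rm proc}\). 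Split \(\beta_i(s)=\sqrt{N_s}\{\ind(i\in I_2)/(n_2e_{2,s})-\ind(i\in I_1)/(n_1e_{1,s})\}\) into two parts:
\begin{itemize}
\item The CUSUM part with \(e_{a,s}\) frozen. Its squared \(\ell_2\) change is \(O(\|s-r\|_1+n^{-1})\) by the explicit form \(b^{\rm cus}_i=n^{-1/2}\psi(s)^{-1/2}\varphi_s(x_i)\) and Lipschitz continuity of \(\psi\) on the trimmed domain. The symmetric difference of indicator supports contributes \(n\|s-r\|_1\) indices, each of weight \(O(n^{-1})\).
\item The change in \(e_{a,s}^{-1}\). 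This costs \(\|\bfbeta\|_2^2\max_a|e_{a,s}-e_{a,r}|^2\).
\end{itemize}

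The main obstacle is the second part. Lemma~\ref{lem:angular-w-xi} only gives \(|e_{a,t}-e_{a,u}|\le C|t-u|+O_P\{\ell_nn^{-1/2}(|t-u|+n^{-1})^{1/2}\}\). Squaring yields \(C|t-u|^2+\ell_n^2n^{-1}(|t-u|+n^{-1})\), which is at most \(C(|t-u|+n^{-1})\) since \(\ell_n^2n^{-1}\to0\). Because this is needed uniformly over the grid, we define \(\mathcal E_n^{\rm proc}\) to be the event on which the partial-sum modulus bound holds for all grid pairs, a polynomial union controlled via the truncation. On the multiple-change domain, the same argument applies coordinatewise in \((t_1,t_2,t_3)\), giving \(\|s-r\|_1\). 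With fourth (resp.\ eighth) moments and exponent \(2>1\) (resp.\ \(4>3\)) in one (resp.\ three) dimensions, a Kolmogorov--Chentsov/Bickel--Wichura chaining on the \(n^{-1}\)-grid yields tightness of \(\bar Z^0_\rho\). The \(n^{-1}\) floor is harmless because increments below grid scale vanish under the piecewise-constant interpolation.

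For \eqref{eq:tightness-feasible-replacement}--\eqref{eq:mc-tightness-feasible-replacement}, decompose
\[
 Z_\rho(s)-\bar Z_\rho^0(s)
 =\frac{V^{\rm raw}_\rho-\widetilde V^0_\rho}{\sqrt n\,\widehat\sigma_\rho}
 -\frac{\sqrt n(\widehat\kappa_\rho-\kappa^0_\rho)}{\widehat\sigma_\rho}
 +\bar Z^0_\rho(s)\left(\frac{\sigma^\circ_{\rho,n}}{\widehat\sigma_\rho(s)}-1\right).
\]
The first term is \(O_P(\ell_nn^{-1/2})\) by Proposition~\ref{prop:raw-score}. The second is \(O_P(\ell_nn^{-1/2})\) by \eqref{eq:kappa-feasible-rate}. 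The third is \(\sup|\bar Z^0_\rho|\cdot O_P(\ell_nn^{-1/2})\) by \eqref{eq:sigma-feasible-rate} and \eqref{eq:oracle-sigma-proxy-rate}. Here \(\sup_s|\bar Z^0_\rho|=O_P(1)\) follows from the tightness just proved together with the pointwise bound. All three bounds hold uniformly over \(s\), so their sum is \(o_P(1)\). The candidate-normalized process \(Z^0_\rho\) is handled identically. Tightness of the feasible processes follows, since each is a uniformly \(o_P(1)\) perturbation of a tight process.
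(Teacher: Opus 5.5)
Your proposal is correct and follows essentially the same route as the paper. You write the increment conditionally on $\calF^0$ as a zero-diagonal degree-two Rademacher chaos with the common unsigned companion $\widetilde\bfA_\rho$, and use hypercontractivity to reduce both moment bounds to $\tr\{(\bfC^{s,r})^2\}\le C\{\|s-r\|_1+n^{-1}\}$. That bound comes from the $\ell_2$-increment of $\bfbeta_s$, split into the CUSUM part and the $e_{a,s}$ part, the latter controlled by a partial-sum modulus on an $\calF^0$-measurable high-probability event. You then chain on the natural grid and conclude with the feasible replacement via Proposition~\ref{prop:raw-score}, Lemma~\ref{lem:feasible-companion}, and $\sup_s|\bar Z_\rho^0(s)|=O_P(1)$. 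The differences are cosmetic: you pass directly from $Z_\rho$ to $\bar Z_\rho^0$ instead of through $Z_\rho^0$, and you cite Kolmogorov--Chentsov/Bickel--Wichura where the paper writes out the nested-net chaining explicitly.
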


\begin{proof}
Work on the natural grid and on the event \(\mathcal E_n^{\rm proc}\) on
which all radial, angular, inverse-distance, and companion-contraction bounds
hold simultaneously.  By construction,
\[
 \Pbb\{(\mathcal E_n^{\rm proc})^c\}=o(1).
\]

Let \(I=\{a+1,\ldots,b\}\) and
\(I'=\{a'+1,\ldots,b'\}\), with
\(|I|,|I'|\ge\varepsilon n\), and define
\[
 d_{\rm end}(I,I')=\frac{|a-a'|+|b-b'|}{n}.
\]
On \(\mathcal E_n^{\rm proc}\),
\[
 \left|\sum_{j=u+1}^{v}(w_j-\mu_{w,p})\right|
 \le C\ell_{0,n}\sqrt{|v-u|+1},
 \qquad 0\le u<v\le n.
\]
Decomposing \(I\) and \(I'\) into their intersection and symmetric
difference gives
\begin{equation}
\label{eq:interval-average-increment-general}
 |e_I-e_{I'}|
 \le Cd_{\rm end}(I,I')
 +C\ell_{0,n}n^{-1/2}
 \{d_{\rm end}(I,I')+n^{-1}\}^{1/2}.
\end{equation}

Let \(\bfb^{\rm cus}(s)\) be the ordinary adjacent-segment CUSUM vector.
On the trimmed domain, its two nonzero coefficients are
\(n^{-1/2}\) times Lipschitz functions of the normalized segment lengths.
Hence
\begin{equation}
\label{eq:bcus-common-increment}
 \|\bfb^{\rm cus}(s)-\bfb^{\rm cus}(r)\|_2^2
 \le C\{d_{\calS}(s,r)+n^{-1}\},
 \qquad
 \max_i\{|b_i^{\rm cus}(s)|+|b_i^{\rm cus}(r)|\}
 \le Cn^{-1/2}.
\end{equation}
Writing \(\bfb_a^{\rm cus}(s)\) for the restriction to \(I_a(s)\),
\[
 \bfbeta_s=e_{1,s}^{-1}\bfb_1^{\rm cus}(s)
           +e_{2,s}^{-1}\bfb_2^{\rm cus}(s).
\]
The uniform lower bound for \(e_{a,s}\), the reciprocal identity
\[
 e^{-1}-{e'}^{-1}=\frac{e'-e}{ee'},
\]
and \eqref{eq:interval-average-increment-general}--
\eqref{eq:bcus-common-increment} imply
\begin{equation}
\label{eq:beta-common-increment}
 \|\bfbeta_s-\bfbeta_r\|_2^2
 \le C\{d_{\calS}(s,r)+n^{-1}\},
 \qquad
 \|\bfbeta_s-\bfbeta_r\|_\infty\le Cn^{-1/2}.
\end{equation}
Here
\[
 d_{\calS}(s_t,s_u)=|t-u|
 \quad\text{and}\quad
 d_{\calS}(s,r)=\|s-r\|_1
\]
for the single-change and adjacent-triple scans, respectively.

The unsigned companion matrix is common to every candidate in either global
scan.  Define the zero-diagonal standardized coefficient matrix
\[
 \bfC_{\rho,s}^{\rm std}
 =\frac{\sqrt n}{\sigma_{\rho,n}^\circ}
 \left[
 \bfW_{\beta,s}\widetilde\bfA_\rho\bfW_{\beta,s}
 -\diag\{\bfW_{\beta,s}\widetilde\bfA_\rho\bfW_{\beta,s}\}
 \right].
\]
Using \(0\preceq\widetilde\bfA_\rho\preceq\bfI\),
\eqref{eq:beta-common-increment}, and
\(\inf_\rho\sigma_{\rho,n}^\circ>c>0\),
\begin{equation}
\label{eq:common-coefficient-increment}
 \tr\{(\bfC_{\rho,s}^{\rm std}-
        \bfC_{\rho,r}^{\rm std})^2\}
 \le C\{d_{\calS}(s,r)+n^{-1}\},
 \qquad
 \|\bfC_{\rho,s}^{\rm std}-
   \bfC_{\rho,r}^{\rm std}\|_{\rm op}
 \le C\{d_{\calS}(s,r)+n^{-1}\}^{1/2}.
\end{equation}
Conditionally on \(\calF^0\),
\[
 \bar Z_\rho^0(s)-\bar Z_\rho^0(r)
 =\bfvarsigma^\top
 (\bfC_{\rho,s}^{\rm std}-\bfC_{\rho,r}^{\rm std})
 \bfvarsigma.
\]
For \(q\in\{2,4,8\}\), degree-two Walsh hypercontractivity and
\eqref{eq:common-coefficient-increment} yield
\[
 \E_{\varsigma}|\bar Z_\rho^0(s)-\bar Z_\rho^0(r)|^q
 \le C_q\{d_{\calS}(s,r)+n^{-1}\}^{q/2}.
\]
This proves \eqref{eq:increment-fourth} and
\eqref{eq:mc-increment-eighth} on \(\mathcal E_n^{\rm proc}\).

Set
\[
 (d_{\rm scan},q_0)
 =\begin{cases}
 (1,4),&\calS_{\rm sc}(\varepsilon),\\
 (3,8),&\calS_{\rm mc}(\varepsilon),
 \end{cases}
 \qquad
 q_0/2-d_{\rm scan}=1.
\]
Construct nested admissible nets \(\mathcal N_{j,n}\) satisfying
\[
 \mathcal N_{j-1,n}\subset\mathcal N_{j,n},
 \qquad
 |\mathcal N_{j,n}|\le C2^{d_{\rm scan}j},
\]
\[
 d_{\calS}\{u,\operatorname{par}(u)\}\le C2^{-j},
 \qquad u\in\mathcal N_{j,n}\setminus\mathcal N_{j-1,n}.
\]
Stop at
\[
 J_n=\min\{j:2^{-j}\le n^{-1}\},
\]
and adjoin all remaining natural-grid points at level \(J_n\).  Then
\[
 |\mathcal N_{J_n,n}|=O(n^{d_{\rm scan}}),
 \qquad
 d_{\calS}\{u,\operatorname{par}(u)\}=O(n^{-1})
\]
for the added final-level points.

For a parent edge at level \(j\),
\[
 \E_{\varsigma}
 |\bar Z_\rho^0(u)-\bar Z_\rho^0(\operatorname{par}u)|^{q_0}
 \le C2^{-q_0j/2}.
\]
Therefore, for \(x_j>0\),
\[
 \Pbb_{\varsigma}\left(
 \max_{u\in\mathcal N_{j,n}}
 |\bar Z_\rho^0(u)-\bar Z_\rho^0(\operatorname{par}u)|>x_j
 \right)
 \le C2^{-j(q_0/2-d_{\rm scan})}x_j^{-q_0}.
\]
Choose
\[
 0<\theta_{\rm ch}<\frac{q_0/2-d_{\rm scan}}{q_0},
 \qquad
 x_j=C_{\rm ch}\delta
 2^{-\theta_{\rm ch}(j-j_0)},
 \qquad
 \sum_{j>j_0}x_j\le\delta.
\]
Then
\begin{align*}
 &\sum_{j>j_0}
 2^{-j(q_0/2-d_{\rm scan})}x_j^{-q_0}
 \longrightarrow0
 \qquad(j_0\to\infty),\\
 &n^{d_{\rm scan}}n^{-q_0/2}x_{J_n}^{-q_0}
 \longrightarrow0.
\end{align*}
Thus every natural-grid point differs from its level-\(j_0\) ancestor by at
most \(\delta\), except on a conditional event whose probability tends to
zero uniformly in \(n\).

For the base level, define
\[
 \mathcal N_{j_0,n}^{\rm pair}
 =\{(u,v)\in\mathcal N_{j_0,n}^2:
 d_{\calS}(u,v)\le C_02^{-j_0}\}.
\]
Packing gives
\[
 |\mathcal N_{j_0,n}^{\rm pair}|
 \le C2^{d_{\rm scan}j_0}.
\]
Hence
\[
 \Pbb_{\varsigma}\left(
 \max_{(u,v)\in\mathcal N_{j_0,n}^{\rm pair}}
 |\bar Z_\rho^0(u)-\bar Z_\rho^0(v)|>\delta
 \right)
 \le C\delta^{-q_0}2^{-j_0(q_0/2-d_{\rm scan})}.
\]
Joining each point to its ancestor, crossing one base-level edge, and
following the second ancestor chain gives
\begin{equation}
\label{eq:direct-grid-chaining}
 \lim_{j_0\to\infty}\limsup_{n\to\infty}
 \Pbb_{\varsigma}\left(
 \sup_{\substack{s,r\text{ on the natural grid}\\
 d_{\calS}(s,r)\le C2^{-j_0}}}
 |\bar Z_\rho^0(s)-\bar Z_\rho^0(r)|>3\delta
 \right)=0.
\end{equation}
Taking expectations and adding
\(\Pbb\{(\mathcal E_n^{\rm proc})^c\}=o(1)\) converts
\eqref{eq:direct-grid-chaining} into the unconditional modulus bound.  With
the finite-dimensional convergence, this proves tightness of
\(\bar Z_\rho^0\) on both scan domains.

Finally,
\[
 \sup_s\left|\frac{\sigma_\rho^0(s)}
 {\sigma_{\rho,n}^\circ}-1\right|=o_P(1),
 \qquad
 \sup_s|\bar Z_\rho^0(s)|=O_P(1),
\]
so
\[
 \sup_s|Z_\rho^0(s)-\bar Z_\rho^0(s)|=o_P(1).
\]
Proposition~\ref{prop:raw-score},
\eqref{eq:kappa-feasible-rate},
\eqref{eq:sigma-feasible-rate}, and \(\ell_n=o(\sqrt n)\) give
\[
 \sup_s|Z_\rho(s)-Z_\rho^0(s)|=o_P(1).
\]
The triangle inequality proves
\eqref{eq:tightness-feasible-replacement} and
\eqref{eq:mc-tightness-feasible-replacement}.
\end{proof}

\begin{lemma}[Uniform small-shift reduction]\label{lem:alternative-expansion}\label{lem:alternative-perturbation}\label{lem:alt-cross-term}
Assume Assumptions \ref{ass:elliptical}--\ref{ass:grid}.  Suppose there is one
change at \(\tau_\star\in[\varepsilon,1-\varepsilon]\), put
\(\vartheta_i^\star=\ind\{i>\lfloor n\tau_\star\rfloor\}\),
\(r_{\Delta,n}=\|\bfDelta_p\|\), and assume
\[
 r_{\Delta,n}\to0,\qquad \ell_n r_{\Delta,n}\to0.
\]
For every scanned segment \(I\), let
\[
 \bar\vartheta_I^\star=|I|^{-1}\sum_{i\in I}\vartheta_i^\star.
\]
The centered-error median \(\widehat\bftheta^{(0)}(I)\) is the object defined
in \eqref{eq:centered-error-spatial-median}.  For every scan candidate, put
\[
 \widehat\bfDelta_s^{(0)}
 =\widehat\bftheta^{(0)}(I_2(s))
  -\widehat\bftheta^{(0)}(I_1(s)).
\]
For \(i\in I_a(s)\), \(a=1,2\), define the centered-error feasible inverse-distance weights and their segment averages by
\[
 \widehat w_{i,a,s}^{(0)}=
 \begin{cases}
  \sqrt p\big/\|\bfvarepsilon_i-\widehat\bftheta^{(0)}(I_a(s))\|,
   &\bfvarepsilon_i\ne\widehat\bftheta^{(0)}(I_a(s)),\\
  0,&\bfvarepsilon_i=\widehat\bftheta^{(0)}(I_a(s)),
 \end{cases}
 \qquad
 \widehat e_{a,s}^{(0)}=\frac1{n_a(s)}\sum_{i\in I_a(s)}\widehat w_{i,a,s}^{(0)}.
\]
Then, uniformly over the trimmed segment collection,
\begin{align}
 \widehat\bftheta(I)
 &=\bftheta^{(1)}+\widehat\bftheta^{(0)}(I)
   +\bar\vartheta_I^\star\bfDelta_p+\bfr_{\Delta,I}^{\rm alt},\notag\\
 \sup_I\|\bfr_{\Delta,I}^{\rm alt}\|
 &=O_P(\ell_nm_*^{-1})+O_P(r_{\Delta,n}m_*^{-1/2})
   +O(r_{\Delta,n}^2p^{-1/2})+O_P(\ell_np^{-1}r_{\Delta,n}).
 \label{eq:alternative-median-reduction}
\end{align}
Consequently,
\[
 \widehat\bfDelta_s
 =\widehat\bfDelta_s^{(0)}
  +\vartheta_{s,n}(\tau_\star)\bfDelta_p+\bfr_{\Delta,s}^{\rm alt},
 \qquad
 \vartheta_{s,n}(\tau_\star)=\bar\vartheta_{I_2(s)}^\star-\bar\vartheta_{I_1(s)}^\star.
\]
On the single-change path,
\[
 \vartheta_{s_t,n}(\tau_\star)=
 \begin{cases}
 (1-\tau_\star)/(1-t),&t<\tau_\star,\\
 \tau_\star/t,&t\ge\tau_\star,
 \end{cases}
 +O(n^{-1})
\]
uniformly on the natural grid.

Let \(\widehat\bfR_s^{(0)}\), \(\widehat\bfQ_{\rho,s}^{(0)}\),
\(\widehat\kappa_\rho^{(0)}(s)\), and
\(\widehat\sigma_\rho^{(0),2}(s)\) denote the feasible quantities computed
from the centered errors.  Write \(\widehat\sigma_\rho^{(0)}(s)\) for the
nonnegative square root of \(\widehat\sigma_\rho^{(0),2}(s)\).  With
\[
 \mathfrak e_{\sigma,n}^{\rm pool}
 =\ell_nr_{\Delta,n}+r_{\Delta,n}^2+\ell_nm_*^{-1/2},
\]
the following bounds hold uniformly:
\begin{align}
 \|\widehat\bfR_s-\widehat\bfR_s^{(0)}\|_{\rm op}
 &=O_P(\ell_nr_{\Delta,n}+r_{\Delta,n}^2+\ell_nm_*^{-1})=o_P(1),
 \label{eq:alt-scatter-op}\\
 |\bfDelta_p^\top
  (\widehat\bfQ_{\rho,s}-\widehat\bfQ_{\rho,s}^{(0)})
  \bfDelta_p|&=o_P(r_{\Delta,n}^2),\notag\\
 \bfDelta_p^\top\widehat\bfQ_{\rho,s}^{(0)}\bfDelta_p
 &=\bfDelta_p^\top\bfD_{\rho,m_s}\bfDelta_p+o_P(r_{\Delta,n}^2),
 \label{eq:alt-det-quadratic}\\
 \sqrt{m_s}|\widehat\kappa_\rho(s)
              -\widehat\kappa_\rho^{(0)}(s)|
 &=O_P(\ell_nr_{\Delta,n}+\ell_nm_*^{-1/2})=o_P(1),\notag\\
 |\widehat\sigma_\rho^2(s)
   -\widehat\sigma_\rho^{(0),2}(s)|
 &=O_P(\mathfrak e_{\sigma,n}^{\rm pool})=o_P(1),\notag\\
 \frac{N_s|(\widehat\bfDelta_s^{(0)})^\top
 (\widehat\bfQ_{\rho,s}-\widehat\bfQ_{\rho,s}^{(0)})
 \widehat\bfDelta_s^{(0)}|}{\sqrt{m_s}}
 &=O_P(\mathfrak e_{\sigma,n}^{\rm pool})=o_P(1).
 \label{eq:alt-null-quadratic}
\end{align}
For the full-pool single-change scan, with
\(\bfB_s=\sum_i\beta_i(s)\bfY_i\),
\begin{equation}
\label{eq:alt-mixed-resolvent-perturbation}
 \sup_{s,\rho}N_s^{1/2}n^{-1/2}
 |\bfDelta_p^\top
 (\widehat\bfQ_{\rho,s}-\widehat\bfQ_{\rho,s}^{(0)})\bfB_s|
 =O_P(\ell_nr_{\Delta,n}^2+r_{\Delta,n}^3+\ell_nr_{\Delta,n}n^{-1/2})=o_P(1).
\end{equation}
Finally,
\begin{align}
 \sup_t\frac{N_{s_t}|\bfDelta_p^\top\widehat\bfQ_{\rho,s_t}^{(0)}
          \widehat\bfDelta_{s_t}^{(0)}|}{\sqrt n}
 &=O_P\{\ell_n(\bfDelta_p^\top
          \bfD_{\rho,n}\bfDelta_p)^{1/2}\}+o_P(1),
 \label{eq:alt-cross-centeredQ}\\
 \sup_t\frac{N_{s_t}|\bfDelta_p^\top\widehat\bfQ_{\rho,s_t}
          \widehat\bfDelta_{s_t}^{(0)}|}{\sqrt n}
 &=O_P\{\ell_n(\bfDelta_p^\top
          \bfD_{\rho,n}\bfDelta_p)^{1/2}\}+o_P(1).
 \label{eq:alt-cross-feasibleQ}
\end{align}
For the full-pool single-change scan, \(m_*=n\) and
\(r_{\Delta,n}=\|\bfDelta_p\|\), so
\(\mathfrak e_{\sigma,n}^{\rm pool}=\mathfrak e_{\sigma,n}\), where the latter
is the global quantity used in Theorem~\ref{thm:localization}.
\end{lemma}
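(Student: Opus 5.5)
The plan is to reduce every alternative quantity to its centered-error counterpart, treating the shift as a small deterministic perturbation of the null geometry. Everything then rests on the null lemmas already proved, applied to the error sample \(\{\bfvarepsilon_i\}\).

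\emph{Median reduction \eqref{eq:alternative-median-reduction}.} Write \(\bfX_i-\bftheta^{(1)}=\bfvarepsilon_i+\vartheta_i^\star\bfDelta_p\). For a segment \(I\), the score equation of \(\widehat\bftheta(I)\) reads
\[
0=\textstyle\sum_{i\in I}\sqrt p\,U\{\bfvarepsilon_i-(\bfu-\vartheta_i^\star\bfDelta_p)\}
\]
with \(\bfu=\widehat\bftheta(I)-\bftheta^{(1)}\). I would apply Lemma~\ref{lem:spatial-sign-taylor} with perturbation \(\bfd_i=\bfu-\vartheta_i^\star\bfDelta_p\), which is admissible because \(\ell_nr_{\Delta,n}\to0\) and the preliminary radius bound \eqref{eq:median-radius-prelim} carries over. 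Subtracting the same expansion for \(\widehat\bftheta^{(0)}(I)\) gives a linear equation with Jacobian \(\bfJ_I\). Its inverse is \(e_I^{-1}\bfI_p+O_P(\ell_np^{-1})\) by \eqref{eq:J-inv-prelim}.

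The linear term is \(|I|^{-1}\sum_iw_i\bfP_i\vartheta_i^\star\bfDelta_p\). Replacing \(w_i\) by its segment mean costs \(O_P(r_{\Delta,n}m_*^{-1/2})\) by the partial-sum bound in Lemma~\ref{lem:angular-w-xi}. The \(\bfU_i\bfU_i^\top\) part costs \(O_P(\ell_np^{-1}r_{\Delta,n})\) by \eqref{eq:H-I-prelim}. The quadratic Taylor terms with \(\bfDelta_p\) give \(O(r_{\Delta,n}^2p^{-1/2})\), and the pure null remainder is \(O_P(\ell_nm_*^{-1})\). This yields \eqref{eq:alternative-median-reduction}. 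The formula for \(\vartheta_{s_t,n}\) is then just the counting of post-change indices in \([0,t)\) and \([t,1)\).

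\emph{Scatter and studentization bounds.} Next I would expand the centered signs:
\[
\widehat\bfY_{i,s}=\widehat\bfY^{(0)}_{i,s}-w_i\bfP_i\{(\vartheta_i^\star-\bar\vartheta^\star_{J})\bfDelta_p+\bfr\}+O(\cdot).
\]
The perturbation columns have norm \(O(\ell_{0,n}r_{\Delta,n})\), and the column average nearly vanishes. This gives \eqref{eq:alt-scatter-op}: the cross term \(m^{-1}\widehat\bfY^{(0)}\bfE^\top\) is bounded by a Rademacher matrix-Bernstein argument after conditioning on \(\calF^0\), which yields \(\ell_nr_{\Delta,n}\), while the \(\bfE\bfE^\top\) term is of order \(r_{\Delta,n}^2\). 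The resolvent identity then transfers this to the \(\bfDelta_p\)-quadratic forms. Equation \eqref{eq:alt-det-quadratic} follows from \eqref{eq:centered-det-de} applied with the deterministic vector \(\bfDelta_p/r_{\Delta,n}\).

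For \(\widehat\kappa\) and \(\widehat\sigma^2\) I would treat the change \(\widehat\bfA-\widehat\bfA^{(0)}\) as in Lemma~\ref{lem:companion-cancellation}. Its nuclear norm is controlled by \(\|\widehat\bfK-\widehat\bfK^{(0)}\|_*\), which is at most \(m^{-1}\|\widehat\bfY^{(0)}\|_F\|\bfE\|_F+m^{-1}\|\bfE\|_F^2\). The weighted diagonal and off-diagonal sums then follow exactly as there. The feasible weights change multiplicatively through \(\widehat e_{a,s}\), and the rates add up to \(\mathfrak e_{\sigma,n}^{\rm pool}\). Bound \eqref{eq:alt-null-quadratic} is obtained like Lemma~\ref{lem:score-resolvent}: use \(\|\sqrt{N_s}\widehat\bfDelta_s^{(0)}\|=O_P(\ell_n\sqrt n)\) together with the low-rank-plus-small-operator decomposition of \(\widehat\bfR_s-\widehat\bfR_s^{(0)}\).

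\emph{Cross terms, the main obstacle.} The hardest steps are \eqref{eq:alt-mixed-resolvent-perturbation} and \eqref{eq:alt-cross-centeredQ}--\eqref{eq:alt-cross-feasibleQ}, because a crude operator bound only gives \(r_{\Delta,n}\cdot\ell_n\sqrt n\), which is not small. For \eqref{eq:alt-cross-centeredQ} I would first replace \(\widehat\bfDelta^{(0)}_{s_t}\) by the score \(N^{-1/2}\bfB_{s_t}\), using Lemma~\ref{lem:uniform-linearization}. Then \eqref{eq:centered-mixed-transfer} moves \(\widehat\bfQ^{(0)}\) to \(\bfQ^0\). Conditionally on \(\calF^0\), the quantity \(\bfDelta_p^\top\bfQ^0\bfB_s=\sum_i\beta_i\varsigma_i\bfDelta_p^\top\bfQ^0\widetilde\bfY_i\) is a Rademacher sum. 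Its variance is
\[
\textstyle\sum_i\beta_i^2(\bfDelta_p^\top\bfQ^0\widetilde\bfY_i)^2\le Cm^{-1}\,\bfDelta_p^\top\bfQ^0\bfR^0\bfQ^0\bfDelta_p\lesssim m^{-1}\bfDelta_p^\top\bfD\bfDelta_p,
\]
and Lemma~\ref{lem:bilinear-de} supplies the last comparison. A scan union then gives the \(\ell_n(\bfDelta_p^\top\bfD\bfDelta_p)^{1/2}\) rate.

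For \eqref{eq:alt-mixed-resolvent-perturbation} I would write \(\widehat\bfQ-\widehat\bfQ^{(0)}=-\widehat\bfQ(\widehat\bfR-\widehat\bfR^{(0)})\widehat\bfQ^{(0)}\) and split the difference into its low-rank part and the rest. The low-rank directions are projected against \(\bfB_s\) via the leverage-contrast bounds of Lemma~\ref{lem:leverage-contrast}. This is where I expect the bookkeeping to be delicate. Finally, \eqref{eq:alt-cross-feasibleQ} follows from \eqref{eq:alt-cross-centeredQ} and \eqref{eq:alt-mixed-resolvent-perturbation}.
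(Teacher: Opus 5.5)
Your median reduction, the scatter bound \eqref{eq:alt-scatter-op}, the $\bfDelta_p$-quadratic forms up to \eqref{eq:alt-det-quadratic}, the centering and variance perturbations, and \eqref{eq:alt-cross-centeredQ} follow essentially the paper's route. The gap is the step you describe as delicate bookkeeping, and it cannot be closed.

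To leading order, $\widehat\bfR_s-\widehat\bfR_s^{(0)}$ equals $\mathbf{g}\bfDelta_p^\top+\bfDelta_p\mathbf{g}^\top$, plus a multiple of $\bfDelta_p\bfDelta_p^\top$ and a part of negligible operator norm, where $\mathbf{g}=m^{-1}\sum_i(\vartheta_i^\star-\bar\vartheta_J^\star)w_i\bfY_i$. Projecting this low-rank part against $\bfB_s$ requires
\[
 \mathbf{g}^\top\bfQ^0_{\rho,s}\bfB_s=\sum_{i,j}\beta_j(s)(\vartheta_i^\star-\bar\vartheta_J^\star)w_iA^0_{ij,\rho,s}.
\]
Lemma~\ref{lem:leverage-contrast} only treats weights $v_i\in\{1,w_i\}$. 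For those, the two window averages share a common center and cancel. The extra factor $\vartheta_i^\star-\bar\vartheta_J^\star$ destroys that cancellation. Apply the segment-average part of the lemma separately to the pre- and post-change pieces of each window. The diagonal part is then $\approx\alpha_\rho\sqrt{N_s}\,\vartheta_{s,n}(\tau_\star)$, where
\[
 \alpha_\rho=m^{-1}\tr\bfA^0_{\rho,s}=\gamma_n\bigl(1-\rho p^{-1}\tr\bfQ^0_{\rho,s}\bigr)\to1-a_\rho>0 .
\]
This is of order $\sqrt n$, not polylogarithmic. Hence
\[
 \bfDelta_p^\top(\widehat\bfQ_{\rho,s}-\widehat\bfQ^{(0)}_{\rho,s})\bfB_s
 \approx-(\bfDelta_p^\top\bfQ^0_{\rho,s}\bfDelta_p)(\mathbf{g}^\top\bfQ^0_{\rho,s}\bfB_s)
 \approx-\alpha_\rho\sqrt{N_s}\,\vartheta_{s,n}(\tau_\star)\,\bfDelta_p^\top\bfD_{\rho,n}\bfDelta_p .
\]
So the left side of \eqref{eq:alt-mixed-resolvent-perturbation} is of order $\sqrt n\norm{\bfDelta_p}^2$, with no smaller factor. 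That is of order one, not $o_P(1)$, under Assumption~\ref{ass:alternatives}(a), and it diverges under (b).

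The same projection defeats your route to \eqref{eq:alt-null-quadratic}. The second-order (Woodbury) part of $\bfB_s^\top(\widehat\bfQ_{\rho,s}-\widehat\bfQ^{(0)}_{\rho,s})\bfB_s$ contains
\[
 (\mathbf{g}^\top\bfQ^0\bfB_s)^2\,\bfDelta_p^\top\bfQ^0\bfDelta_p\approx\alpha_\rho^2N_s\vartheta_{s,n}^2\,\bfDelta_p^\top\bfD_{\rho,n}\bfDelta_p ,
\]
which is not $o(\sqrt{m_s})$. Likewise \eqref{eq:alt-cross-feasibleQ} inherits the display above. These are not artifacts of loose bounds: collecting terms, the signal in $V^{\rm raw}_\rho(s)$ is $(1-\alpha_\rho)^2N_s\vartheta_{s,n}^2\bfDelta_p^\top\bfD_{\rho,n}\bfDelta_p$. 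As a sanity check, take Gaussian data with $\rho\to0$ and $p<m$. At the true split the grand-mean-centered Hotelling statistic equals $T_{\rm within}/(1+T_{\rm within}/m)$, and its derivative at the null level $T_{\rm within}\approx p/(1-\gamma)$ is exactly $(1-\gamma)^2$.

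The paper's own proof fails at the same point, in two places. First, in its Walsh reduction of $\bfDelta_p^\top\bfQ_0\bfM^{o}_{1,J}\bfQ_0\bfB_s$, the degree-zero coefficient is precisely this coherent diagonal sum, so the asserted energy bound $Cr_{\Delta,n}^4$ is too small by a factor of order $n$. Second, its second-order resolvent term is handled by applying a projected bound to the data-dependent vector $\widetilde\bfv$, for which no such bound was established.

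A correct version must keep $-(\bfDelta_p^\top\bfQ^0\bfDelta_p)(\mathbf{g}^\top\bfQ^0\bfB_s)$ and its second-order counterpart as explicit drift terms. The net effect is a factor $a_\rho^2$ in the drift of Theorem~\ref{thm:power}(a).
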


\begin{proof}
Fix a segment \(I\), put \(m=|I|\), and define
\[
 \bfa_i=(\vartheta_i^\star-\bar\vartheta_I^\star)\bfDelta_p,
 \qquad
 m^{-1}\sum_{i\in I}\bfa_i=\mathbf0,
\]
\[
 \widehat\bfd_I
 =\widehat\bftheta(I)-\bftheta^{(1)}
  -\bar\vartheta_I^\star\bfDelta_p.
\]
After removing the average shift, the score is
\[
 \bfS_I(\bfd)
 =m^{-1}\sum_{i\in I}\sqrt p\,
 U(\bfvarepsilon_i+\bfa_i-\bfd).
\]
The spatial-sign Taylor formula gives
\[
 \bfS_I(\bfd)
 =\bar\bfY_I+\bfJ_I^a-\bfJ_I\bfd
  +\bfT_{2,I}(\bfd)+\bfT_{3,I}(\bfd),
\]
where
\[
 \bfJ_I^a=m^{-1}\sum_iw_i\bfP_i\bfa_i,
 \qquad
 \bfJ_I=m^{-1}\sum_iw_i\bfP_i.
\]
On \(\max_iw_i\le W_n=C\ell_{0,n}\), the expansion is valid for
\(\|\bfd\|\le M\ell_{0,n}\), because
\[
 \max_i\|\bfa_i\|\le r_{\Delta,n}=o(1),
 \qquad
 W_n(M\ell_{0,n}+r_{\Delta,n})=o(\sqrt p).
\]
For \(\|\bfd\|=M\ell_{0,n}\),
\begin{align*}
 \bfd^\top\bfS_I(\bfd)
 &\le M\ell_{0,n}\sup_I\|\bar\bfY_I\|
  +M\ell_{0,n}\sup_I e_Ir_{\Delta,n}
  -cM^2\ell_{0,n}^2\\
 &\quad+O_P\left\{
 p^{-1/2}M\ell_{0,n}(M\ell_{0,n}+r_{\Delta,n})^2
 +p^{-1}M\ell_{0,n}(M\ell_{0,n}+r_{\Delta,n})^3
 \right\}.
\end{align*}
For sufficiently large fixed \(M\), the right-hand side is negative with
arbitrarily high probability.  Convexity gives
\[
 \sup_I\|\widehat\bfd_I\|=O_P(\ell_{0,n}).
\]
Moreover,
\[
 \inf_I\min_{i\in I}
 \|\bfvarepsilon_i+\bfa_i-\widehat\bfd_I\|
 \ge\frac{\sqrt p}{W_n}-r_{\Delta,n}
   -\sup_I\|\widehat\bfd_I\|>0
\]
with probability tending to one.  Thus every shifted median is unique,
avoids the observations, and satisfies
\[
 \bfS_I(\widehat\bfd_I)=\mathbf0.
\]

Since \(\sum_i\bfa_i=0\),
\[
 \bfJ_I^a
 =m^{-1}\sum_i(w_i-e_I)\bfa_i
  -m^{-1}\sum_iw_i\bfU_i\bfU_i^\top\bfa_i.
\]
Bernstein's inequality and the Jacobian bound imply
\[
 \sup_I\|\bfJ_I^a\|
 =O_P(r_{\Delta,n}m_*^{-1/2})
  +O_P(\ell_np^{-1}r_{\Delta,n}).
\]
After reduction of repeated signs, the quadratic Taylor average consists of
one deterministic degree-zero term and Walsh chaoses of degrees one and
three, with
\[
 \|\text{degree-zero term}\|
 =O(r_{\Delta,n}^2p^{-1/2}),
 \qquad
 \sum_{\mathfrak m}\|c_{\mathfrak m}\|^2=O((mp)^{-1}).
\]
Hypercontractivity and the polynomial scan union give
\[
 \sup_I\|\bfT_{2,I}\|
 =O_P(\ell_nm_*^{-1})+O(r_{\Delta,n}^2p^{-1/2}),
 \qquad
 \sup_I\|\bfT_{3,I}\|=O_P(\ell_nm_*^{-1}).
\]
Using
\[
 \bfJ_I^{-1}=e_I^{-1}\bfI_p+O_P(\ell_np^{-1}),
 \qquad
 \inf_Ie_I>c>0,
\]
and subtracting the centered-error score equation proves
\eqref{eq:alternative-median-reduction}.

For a pooled segment \(J\), define
\[
 \bfz_{i,J}=\bfvarepsilon_i-\widehat\bftheta^{(0)}(J),
 \qquad
 \bfa_{i,J}=(\vartheta_i^\star-\bar\vartheta_J^\star)\bfDelta_p-\bfd_J,
 \qquad
 \bfd_J=\bfr_{\Delta,J}^{\rm alt}.
\]
With \(\bfY_{i,J}^{(0)}=\sqrt p\,U(\bfz_{i,J})\),
\[
 \widehat\bfY_{i,J}-\bfY_{i,J}^{(0)}
 =\widetilde w_{i,J}\widetilde\bfP_{i,J}\bfa_{i,J}
  +\bfH_{i,J}(\bfa_{i,J})+\bfr_{i,J}.
\]
Separate the scatter perturbation as
\[
 \widehat\bfR_J-\widehat\bfR_J^{(0)}
 =\bfM_{1,J}^{o}+\bfM_{2,J},
\]
where
\[
 \bfM_{1,J}^{o}
 =m^{-1}\sum_{i\in J}
 (\vartheta_i^\star-\bar\vartheta_J^\star)
 \left[
 \bfY_i\{w_i\bfP_i\bfDelta_p\}^\top
 +\{w_i\bfP_i\bfDelta_p\}\bfY_i^\top
 \right].
\]
The centered-column expansion, the median remainder, and matrix Bernstein
give
\begin{align}
 \sup_J\|\bfM_{1,J}^{o}\|_{\rm op}
 &=O_P(\ell_nr_{\Delta,n}),\notag\\
 \sup_J\|\bfM_{2,J}\|_{\rm op}
 &=O_P(r_{\Delta,n}^2+\ell_nm_*^{-1}
      +\ell_nr_{\Delta,n}m_*^{-1/2}).
 \label{eq:alt-M12-op}
\end{align}
This is \eqref{eq:alt-scatter-op}.

Conditionally on \(\calF^0\), write
\[
 \bfY_i=\varsigma_i\widetilde\bfY_i,
 \qquad
 \widetilde\vartheta_{i,J}^\star
 =\vartheta_i^\star-\bar\vartheta_J^\star,
 \qquad
 \bfv_i=w_i\bfP_i\bfDelta_p.
\]
Then
\begin{equation*}
 \bfM_{1,J}^{o}
 =m^{-1}\sum_{i\in J}
 \widetilde\vartheta_{i,J}^\star\varsigma_i
 \{\widetilde\bfY_i\bfv_i^\top
   +\bfv_i\widetilde\bfY_i^\top\}.
\end{equation*}
For
\[
 \bfB_s=\sum_j\beta_j(s)\varsigma_j\widetilde\bfY_j
\]
and an unsigned ridge inverse \(\bfQ_0\),
\begin{align}
 &m^{-1/2}\bfB_s^\top\bfQ_0\bfM_{1,J}^{o}\bfQ_0\bfB_s\notag\\
 &\quad=\frac{1}{m^{3/2}}\sum_{i,j,k}
 \widetilde\vartheta_{i,J}^\star\beta_j(s)\beta_k(s)
 \varsigma_i\varsigma_j\varsigma_k\notag\\
 &\qquad\times\left[
 (\widetilde\bfY_j^\top\bfQ_0\widetilde\bfY_i)
 (\bfv_i^\top\bfQ_0\widetilde\bfY_k)
 +(\widetilde\bfY_j^\top\bfQ_0\bfv_i)
 (\widetilde\bfY_i^\top\bfQ_0\widetilde\bfY_k)
 \right].
 \label{eq:alternative-projected-Walsh}
\end{align}
Repeated indices reduce this expression to Walsh degrees one and three.  The
contractions
\[
 m^{-1}\widetilde\bfY^\top\bfQ_0\widetilde\bfY\preceq\bfI_m,
 \qquad
 \sum_j\beta_j(s)^2
 (\widetilde\bfY_j^\top\bfQ_0\bfv_i)^2
 \le C\|\bfv_i\|^2,
 \qquad
 \max_j\beta_j(s)^2\le Cm^{-1}
\]
and
\[
 m^{-1}\sum_iw_i^r=O_P(1),
 \qquad r\le4,
\]
give the coefficient-energy bound
\[
 \sum_{\mathfrak m}|c_{\mathfrak m}^{\rm ss}|^2
 \le Cr_{\Delta,n}^2.
\]
Similarly,
\begin{align*}
 \bfDelta_p^\top\bfQ_0\bfM_{1,J}^{o}\bfQ_0\bfB_s
 &=\frac1m\sum_{i,k}
 \widetilde\vartheta_{i,J}^\star\beta_k(s)
 \varsigma_i\varsigma_k\\
 &\quad\times\left[
 (\bfDelta_p^\top\bfQ_0\widetilde\bfY_i)
 (\bfv_i^\top\bfQ_0\widetilde\bfY_k)
 +(\bfDelta_p^\top\bfQ_0\bfv_i)
 (\widetilde\bfY_i^\top\bfQ_0\widetilde\bfY_k)
 \right],
\end{align*}
which reduces to Walsh degrees zero and two, with
\[
 \sum_{\mathfrak m}|c_{\mathfrak m}^{\rm ds}|^2
 \le Cr_{\Delta,n}^4.
\]
Hypercontractivity therefore yields
\begin{align*}
 m^{-1/2}|\bfB_s^\top\bfQ_0\bfM_{1,J}^{o}\bfQ_0\bfB_s|
 &=O_P(\ell_nr_{\Delta,n}),\\
 |\bfDelta_p^\top\bfQ_0\bfM_{1,J}^{o}\bfQ_0\bfB_s|
 &=O_P(\ell_nr_{\Delta,n}^2).
\end{align*}

Expanding the perturbed columns before the resolvent keeps the direct
column terms.  The induced first-order companion increments satisfy
\begin{equation*}
 \sum_{\mathfrak m}|c_{\mathfrak m}^{\rm diag}|^2
 \le Cr_{\Delta,n}^2,
 \qquad
 \sum_{\mathfrak m}|c_{\mathfrak m}^{\rm off}|^2
 \le Cr_{\Delta,n}^2.
\end{equation*}
Indeed,
\[
 \sum_j\widetilde A_{ij,\rho}^2\le1,
 \qquad
 \|\bfW_{\beta,s}\|_{\rm op}\le Cm^{-1/2},
 \qquad
 m^{-1}\sum_i\|\bfv_i\|^2=O_P(r_{\Delta,n}^2).
\]

For the second-order perturbation \(\bfM_{2,J}\), the same projected
contractions give
\begin{align}
 m^{-1/2}|\bfB_s^\top\bfQ_0\bfM_{2,J}\bfQ_0\bfB_s|
 &=O_P(r_{\Delta,n}^2+\ell_nm_*^{-1/2}),\notag\\
 |\bfDelta_p^\top\bfQ_0\bfM_{2,J}\bfQ_0\bfB_s|
 &=O_P(r_{\Delta,n}^3+\ell_nr_{\Delta,n}m_*^{-1/2}),\notag\\
 \sqrt m\left|\sum_i\beta_i(s)^2\delta A_{ii}^{(2)}\right|
 &=O_P(\ell_nm_*^{-1/2}),\notag\\
 m\left|\sum_{i\ne j}\beta_i(s)^2\beta_j(s)^2
 A_{ij}\delta A_{ij}^{(2)}\right|
 &=O_P(r_{\Delta,n}^2+\ell_nm_*^{-1/2}).
 \label{eq:alternative-M2-projections}
\end{align}
For example,
\[
 \E\{(\bfDelta_p^\top\bfQ_0\bfB_s)^2\mid\calF^0\}
 \le C r_{\Delta,n}^2,
\]
so a rank-one term \(\bfDelta_p\bfDelta_p^\top\) is projected at order
\(r_{\Delta,n}^2\), rather than bounded by
\(\|\bfB_s\|^2r_{\Delta,n}^2\).  Combining
\eqref{eq:alternative-projected-Walsh}--
\eqref{eq:alternative-M2-projections} gives, uniformly,
\begin{align*}
 m^{-1/2}\bfB_s^\top\bfQ_0(\,\cdot\,)\bfQ_0\bfB_s
 &=O_P(\ell_nr_{\Delta,n}+r_{\Delta,n}^2+\ell_nm_*^{-1/2}),\\
 \bfDelta_p^\top\bfQ_0(\,\cdot\,)\bfQ_0\bfB_s
 &=O_P(\ell_nr_{\Delta,n}^2+r_{\Delta,n}^3
       +\ell_nr_{\Delta,n}m_*^{-1/2}),\\
 \sqrt m\sum_i\beta_i(s)^2\delta A_{ii}
 &=O_P(\ell_nr_{\Delta,n}+\ell_nm_*^{-1/2}),\\
 m\sum_{i\ne j}\beta_i(s)^2\beta_j(s)^2A_{ij}\delta A_{ij}
 &=O_P(\ell_nr_{\Delta,n}+r_{\Delta,n}^2
       +\ell_nm_*^{-1/2}).
\end{align*}
Score--score projections use Lemmas~\ref{lem:score-resolvent} and
\ref{lem:companion-cancellation}; deterministic projections use
Lemma~\ref{lem:deterministic-centered-transfer}.

Let
\[
 \bfM_J=\bfM_{1,J}^{o}+\bfM_{2,J}.
\]
The exact second-order resolvent identity is
\[
 \widehat\bfQ-\widehat\bfQ^{(0)}
 =-\widehat\bfQ^{(0)}\bfM_J\widehat\bfQ^{(0)}
 +\widehat\bfQ^{(0)}\bfM_J\widehat\bfQ^{(0)}
  \bfM_J\widehat\bfQ.
\]
The first term is controlled by the projected bounds above.  For the second,
\begin{align*}
 &|\bfu^\top\widehat\bfQ^{(0)}\bfM_J
 \widehat\bfQ^{(0)}\bfM_J\widehat\bfQ\bfv|\\
 &\quad\le
 C\|\bfM_J\|_{\rm op}
 |\bfu^\top\widehat\bfQ^{(0)}\bfM_J
 \widehat\bfQ^{(0)}\widetilde\bfv|,
\end{align*}
where \(\|\widetilde\bfv\|\le C\|\bfv\|\).  By
\eqref{eq:alt-M12-op} and \(\ell_nr_{\Delta,n}=o(1)\), this is of smaller
order.  Hence
\eqref{eq:alt-det-quadratic}, \eqref{eq:alt-null-quadratic}, and
\eqref{eq:alt-mixed-resolvent-perturbation} follow.

For the inverse-distance averages,
\[
 D\left(\frac{\sqrt p}{\|\bfx\|}\right)[\bfu]
 =-\sqrt p\,\frac{\bfx^\top\bfu}{\|\bfx\|^3}.
\]
Taylor expansion at the centered-error median gives
\[
 \sup_{s,a}|\widehat e_{a,s}-\widehat e_{a,s}^{(0)}|
 =O_P(r_{\Delta,n}p^{-1/2})+O_P(\ell_nm_*^{-1}).
\]
The multiplicative weight identity and the last two projected bounds yield
the stated centering and variance perturbations.

Finally, Lemma~\ref{lem:uniform-linearization} gives
\[
 \widehat\bfDelta_{s_t}^{(0)}
 =N_{s_t}^{-1/2}\bfB_{s_t}+O_P(\ell_nn^{-1}).
\]
After the finite-rank transfer, the leading cross term is
\[
 \mathcal X_t
 =N_{s_t}^{1/2}n^{-1/2}
 \sum_i\beta_i(s_t)\varsigma_i
 \bfDelta_p^\top\bfQ_{\rho,s_t}^0\widetilde\bfY_i.
\]
Conditionally on \(\calF^0\),
\begin{align*}
 \Var(\mathcal X_t\mid\calF^0)
 &\le C\bfDelta_p^\top\bfQ_{\rho,s_t}^0\bfDelta_p,\\
 \E\{(\mathcal X_t-\mathcal X_u)^2\mid\calF^0\}
 &\le C(|t-u|+n^{-1})
 \bfDelta_p^\top\bfQ_{\rho,s_t}^0\bfDelta_p.
\end{align*}
Dyadic chaining and Lemma~\ref{lem:bilinear-de} prove
\eqref{eq:alt-cross-centeredQ}; adding
\eqref{eq:alt-mixed-resolvent-perturbation} gives
\eqref{eq:alt-cross-feasibleQ}.
\end{proof}

\subsection{Proofs of the main results}\label{sec:proofs}

The proofs in this subsection use only the primitive assumptions and the auxiliary lemmas proved above.  All stochastic comparisons are uniform over the trimmed scan set and over the fixed ridge interval unless a fixed point is explicitly specified.
\begin{proof}[Proof of Proposition \ref{prop:raw-score}]
By \eqref{eq:uniform-median-expansion},
\begin{equation}
\label{eq:Delta-exp-proof}
  \widehat\bfDelta_s=N_s^{-1/2}\bfB_s+\bfr_{\Delta,s}^{(0)},
  \qquad
  \sup_s\norm{\bfr_{\Delta,s}^{(0)}}
  =O_P(\ell_{0,n}^4m_*^{-1})=O_P(\ell_nm_*^{-1}).
\end{equation}
Substituting \eqref{eq:Delta-exp-proof} into \eqref{eq:Vraw} yields
\begin{align}
  V^{\rm raw}_{\rho}(s)
  &=N_s\left(N_s^{-1/2}\bfB_s+\bfr_{\Delta,s}^{(0)}\right)^{\top}
  \widehat\bfQ_{\rho,s}
  \left(N_s^{-1/2}\bfB_s+\bfr_{\Delta,s}^{(0)}\right)\notag\\
  &=\bfB_s^{\top}\widehat\bfQ_{\rho,s}\bfB_s
  +2N_s^{1/2}(\bfr_{\Delta,s}^{(0)})^{\top}\widehat\bfQ_{\rho,s}\bfB_s
  +N_s(\bfr_{\Delta,s}^{(0)})^{\top}\widehat\bfQ_{\rho,s}\bfr_{\Delta,s}^{(0)}.
  \label{eq:raw-expansion-proof}
\end{align}
Since $\widehat\bfQ_{\rho,s}$ is a ridge inverse,
\begin{equation*}
  \opnorm{\widehat\bfQ_{\rho,s}}\le \rho_0^{-1}.
\end{equation*}
The interval sizes are trimmed, so $N_s\le m_s/4\le m_*/4\cdot (\sup_s m_s/m_*)$, and in the common-pool single-change and multiple-change scans the ratio $\sup_s m_s/m_*$ is equal to one.  The proof of Lemma~\ref{lem:score-resolvent} gives the intermediate bound
\(\sup_s\|\bfB_s\|=O_P(\ell_{0,n}m_*^{1/2})\).  Together with the
intermediate median-remainder rate in \eqref{eq:Delta-exp-proof},
\begin{align*}
  \sup_{s,\rho}\abs{2N_s^{1/2}(\bfr_{\Delta,s}^{(0)})^{\top}\widehat\bfQ_{\rho,s}\bfB_s}
  &\le 2\rho_0^{-1}\sup_s N_s^{1/2}\sup_s\norm{\bfr_{\Delta,s}^{(0)}}\sup_s\norm{\bfB_s} \\
  &=O_P(\ell_{0,n}^5)=O_P(\ell_n),
\end{align*}
and
\begin{align*}
  \sup_{s,\rho}\abs{N_s(\bfr_{\Delta,s}^{(0)})^{\top}\widehat\bfQ_{\rho,s}\bfr_{\Delta,s}^{(0)}}
  &\le \rho_0^{-1}\sup_sN_s\sup_s\norm{\bfr_{\Delta,s}^{(0)}}^2 \\
  &=O_P(\ell_{0,n}^8m_*^{-1})=O_P(\ell_nm_*^{-1}).
\end{align*}
The leading term satisfies
\begin{equation}
\label{eq:leading-resolvent-proof}
  \bfB_s^{\top}\widehat\bfQ_{\rho,s}\bfB_s
  =\bfB_s^{\top}\bfQ_{\rho,s}^0\bfB_s
  +\bfB_s^{\top}(\widehat\bfQ_{\rho,s}-\bfQ_{\rho,s}^0)\bfB_s.
\end{equation}
The last term in \eqref{eq:leading-resolvent-proof} is $O_P(\ell_n)$ uniformly by Lemma~\ref{lem:score-resolvent}, specifically \eqref{eq:resolvent-quadratic-rate}.  Moreover,
\begin{align}
  \bfB_s^{\top}\bfQ_{\rho,s}^0\bfB_s
  &=\left(\sum_i\beta_i(s)\bfY_i\right)^{\top}
  \bfQ_{\rho,s}^0
  \left(\sum_j\beta_j(s)\bfY_j\right)\notag\\
  &=\sum_{i,j}\beta_i(s)\beta_j(s)\bfY_i^{\top}\bfQ_{\rho,s}^0\bfY_j\notag\\
  &=m_s\sum_{i,j}\beta_i(s)\beta_j(s)A_{ij,\rho,s}^0
  =m_s\bfbeta_s^{\top}\bfA_{\rho,s}^0\bfbeta_s
  =\widetilde V_\rho^0(s).
  \label{eq:companion-ident-proof}
\end{align}
Combining \eqref{eq:raw-expansion-proof}-\eqref{eq:companion-ident-proof},
\[
  \sup_{s,\rho}\abs{V_\rho^{\rm raw}(s)-\widetilde V_\rho^0(s)}
  =O_P(\ell_n)+O_P(\ell_n)+O_P(\ell_nm_*^{-1})=O_P(\ell_n).
\]
Dividing the last display by \(\sqrt{m_s}\) gives
\[
  \sup_{s,\rho}\frac{\abs{V_\rho^{\rm raw}(s)-\widetilde V_\rho^0(s)}}{\sqrt{m_s}}
  =O_P(\ell_nm_*^{-1/2})=o_P(1),
\]
which is the standardized form used in the feasible null law.
\end{proof}

\begin{proof}[Proof of Theorem \ref{thm:pointwise}]
Lemma \ref{lem:clt} gives conditional weak convergence to a standard normal
law.  Since the limit distribution function \(\Phi\) is continuous, Polya's
theorem upgrades this conditional weak convergence to
\begin{equation}
\label{eq:oracle-null-rate}
 \sup_{x\in\mathbb R}
 \left|
 \Pbb\left\{
 \left.
 \frac{\widetilde V_\rho^0(s)-m_s\kappa_\rho^0(s)}
 {\{m_s\sigma_\rho^{0,2}(s)\}^{1/2}}
 \le x
 \,\right|\,\mathcal F_s^0
 \right\}
 -\Phi(x)
 \right|
 \pto0.
\end{equation}
Proposition \ref{prop:raw-score} gives
\begin{equation}
\label{eq:point-proof-raw}
  \frac{V_\rho^{\rm raw}(s)-\widetilde V_\rho^0(s)}{\{m_s\sigma_\rho^{0,2}(s)\}^{1/2}}
  =O_P(\ell_nm_s^{-1/2})=o_P(1).
\end{equation}
The centering rate \eqref{eq:kappa-feasible-rate} gives
\begin{equation*}
  \frac{m_s\{\widehat\kappa_\rho(s)-\kappa_\rho^0(s)\}}{\{m_s\sigma_\rho^{0,2}(s)\}^{1/2}}
  =O_P(\ell_nm_s^{-1/2})=o_P(1).
\end{equation*}
The variance rate \eqref{eq:sigma-feasible-rate} and \eqref{eq:variance-nondegenerate} imply
\begin{equation}
\label{eq:point-proof-var}
  \frac{\widehat\sigma_\rho^2(s)}{\sigma_\rho^{0,2}(s)}
  =1+O_P(\ell_nm_s^{-1/2})=1+o_P(1).
\end{equation}
Combining \eqref{eq:point-proof-raw}-\eqref{eq:point-proof-var},
\begin{align*}
  Z_\rho(s)
  &=\frac{\widetilde V_\rho^0(s)-m_s\kappa_\rho^0(s)}
  {\{m_s\sigma_\rho^{0,2}(s)\}^{1/2}}
  \left\{\frac{\sigma_\rho^{0,2}(s)}{\widehat\sigma_\rho^2(s)}\right\}^{1/2}
  +o_P(1)\\
  &=\frac{\widetilde V_\rho^0(s)-m_s\kappa_\rho^0(s)}
  {\{m_s\sigma_\rho^{0,2}(s)\}^{1/2}}+o_P(1).
\end{align*}
Slutsky's theorem and Lemma \ref{lem:clt} prove
\eqref{eq:pointwise-null}.
\end{proof}

\begin{proof}[Proof of Theorem \ref{thm:sc-process}]
For \(s=(t_1,t_2,t_3)\), let
\[
  \mathfrak p_n(s)
  =\left(\frac{\iota_n(t_1)-1}{n},
          \frac{\iota_n(t_2)-1}{n},
          \frac{\iota_n(t_3)-1}{n}\right).
\]
The definition of \(\iota_n\) gives
\[
  \iota_n\!\left(\frac{\iota_n(t_j)-1}{n}\right)=\iota_n(t_j),
  \qquad j=1,2,3,
\]
and hence
\[
  I_a\{\mathfrak p_n(s)\}=I_a(s),\quad a=1,2,
  \qquad
  \|\mathfrak p_n(s)-s\|_\infty\le n^{-1}.
\]
Therefore all finite-sample quantities indexed by \(s\), including
\(Z_\rho(s)\), are exactly equal to their counterparts indexed by
\(\mathfrak p_n(s)\).  Since every \(s\in\calS_{\rm sc}(\varepsilon)\) is
trimmed, \(\mathfrak p_n(s)\) belongs to the \(\varepsilon/2\)-trimmed
natural grid for all sufficiently large \(n\).

Fix \(s_1,\ldots,s_M\in\calS_{\rm sc}(\varepsilon)\), and define
\[
  Z_\rho^0(s_a)
  =\frac{\widetilde V_\rho^0(s_a)-n\kappa_\rho^0(s_a)}
         {\{n\sigma_\rho^{0,2}(s_a)\}^{1/2}},
  \qquad
  \mathbf Z_{\rho,n}^{0,M}
  =\bigl\{Z_\rho^0(s_1),\ldots,Z_\rho^0(s_M)\bigr\}^{\top}.
\]
Lemmas \ref{lem:multi-clt} and \ref{lem:cross-factorization} yield
\[
  \mathbf Z_{\rho,n}^{0,M}
  \dto N_M(\mathbf0,\mathbf K_M),
  \qquad
  \mathbf K_M=\{K_0(s_a,s_b)\}_{a,b=1}^M.
\]
By Proposition \ref{prop:raw-score} and Lemma
\ref{lem:feasible-companion},
\[
  \left\|
  \bigl\{Z_\rho(s_a)\bigr\}_{a=1}^M-\mathbf Z_{\rho,n}^{0,M}
  \right\|_\infty
  \le \max_{1\le a\le M}|Z_\rho(s_a)-Z_\rho^0(s_a)|
  =o_P(1).
\]
Slutsky's theorem therefore gives the required finite-dimensional limits.

For \(f\in\ell^\infty\{\calS_{\rm sc}(\varepsilon)\}\), put
\[
  \omega_{\rm sc}(f,\delta)
  =\sup_{\substack{s,r\in\calS_{\rm sc}(\varepsilon)\\
                    \|s-r\|_\infty\le\delta}}
    |f(s)-f(r)|.
\]
Lemma \ref{lem:tightness} gives, for every \(\eta>0\),
\[
  \lim_{\delta\downarrow0}\limsup_{n\to\infty}
  \Pbb\{\omega_{\rm sc}(Z_\rho,\delta)>\eta\}=0.
\]
Thus the feasible process is asymptotically tight.  Combining this bound
with the finite-dimensional convergence proves \eqref{eq:sc-process-limit}.
Finally,
\[
  \left|\sup_{s\in\calS_{\rm sc}(\varepsilon)}f(s)
        -\sup_{s\in\calS_{\rm sc}(\varepsilon)}g(s)\right|
  \le \|f-g\|_\infty,
\]
so the supremum functional is Lipschitz on the ambient space.  Applying the
continuous mapping theorem to \eqref{eq:sc-process-limit} proves
\eqref{eq:sc-T-limit}.
\end{proof}

\begin{proof}[Proof of Theorem \ref{thm:mc-process}]
For \(s=(t_1,t_2,t_3)\in\calS_{\rm mc}(\varepsilon)\), use the natural-grid
representative
\[
  \mathfrak p_n(s)
  =\left(\frac{\iota_n(t_1)-1}{n},
          \frac{\iota_n(t_2)-1}{n},
          \frac{\iota_n(t_3)-1}{n}\right).
\]
Exactly as in the single-change case,
\[
  I_a\{\mathfrak p_n(s)\}=I_a(s),\quad a=1,2,
  \qquad
  \|\mathfrak p_n(s)-s\|_\infty\le n^{-1},
\]
so replacing \(s\) by \(\mathfrak p_n(s)\) changes neither the statistic nor
its normalizer and preserves \(\varepsilon/2\)-trimming for all large \(n\).

For fixed \(s_1,\ldots,s_M\in\calS_{\rm mc}(\varepsilon)\), write
\[
  Z_\rho^0(s_a)
  =\frac{\widetilde V_\rho^0(s_a)-n\kappa_\rho^0(s_a)}
         {\{n\sigma_\rho^{0,2}(s_a)\}^{1/2}},
  \qquad
  \mathbf Z_{\rho,n}^{0,M}
  =\bigl\{Z_\rho^0(s_1),\ldots,Z_\rho^0(s_M)\bigr\}^{\top}.
\]
The common full-sample scatter pool and Lemmas \ref{lem:multi-clt} and
\ref{lem:cross-factorization} imply
\[
  \mathbf Z_{\rho,n}^{0,M}
  \dto N_M(\mathbf0,\mathbf K_M),
  \qquad
  \mathbf K_M=\{K_0(s_a,s_b)\}_{a,b=1}^M.
\]
Moreover,
\[
  \sup_{s\in\calS_{\rm mc}(\varepsilon)}
  |Z_\rho(s)-Z_\rho^0(s)|=o_P(1)
\]
by Proposition \ref{prop:raw-score}, Lemma
\ref{lem:feasible-companion}, and Lemma \ref{lem:tightness-mc}.  Hence
\[
  \bigl\{Z_\rho(s_1),\ldots,Z_\rho(s_M)\bigr\}^{\top}
  \dto N_M(\mathbf0,\mathbf K_M).
\]

Define
\[
  \omega_{\rm mc}(f,\delta)
  =\sup_{\substack{s,r\in\calS_{\rm mc}(\varepsilon)\\
                    \|s-r\|_\infty\le\delta}}
    |f(s)-f(r)|.
\]
The eighth-moment increment bound in Lemma \ref{lem:tightness-mc} gives,
for every \(\eta>0\),
\[
  \lim_{\delta\downarrow0}\limsup_{n\to\infty}
  \Pbb\{\omega_{\rm mc}(Z_\rho,\delta)>\eta\}=0.
\]
The preceding finite-dimensional limit and this tightness bound prove
\eqref{eq:mc-process-limit}.  Since
\[
  \left|\sup_{s\in\calS_{\rm mc}(\varepsilon)}f(s)
        -\sup_{s\in\calS_{\rm mc}(\varepsilon)}g(s)\right|
  \le \|f-g\|_\infty,
\]
\eqref{eq:mc-T-limit} follows by the continuous mapping theorem.
\end{proof}

\begin{proof}[Proof of Theorem \ref{thm:sc-joint}]
For \(k=1,\ldots,K\), define the oracle process
\[
  Z_{n,k}^0(s)
  =\frac{\widetilde V_{\rho_n^{(k)}}^0(s)
         -n\kappa_{\rho_n^{(k)}}^0(s)}
        {\{n\sigma_{\rho_n^{(k)}}^{0,2}(s)\}^{1/2}},
  \qquad s\in\calS_{\rm sc}(\varepsilon).
\]
Fix \(s_1,\ldots,s_M\in\calS_{\rm sc}(\varepsilon)\).  Applying Lemma
\ref{lem:multi-clt} to the \(KM\)-vector and then Lemma
\ref{lem:cross-factorization} gives
\[
  \{Z_{n,k}^0(s_a):1\le k\le K,\ 1\le a\le M\}
  \dto
  \{G_k^{\rm sc}(s_a):1\le k\le K,\ 1\le a\le M\},
\]
with
\[
  \Cov\{G_k^{\rm sc}(s_a),G_\ell^{\rm sc}(s_b)\}
  =r_E(\rho^{(k)},\rho^{(\ell)})K_0(s_a,s_b).
\]
The feasible-oracle difference satisfies
\[
  \max_{1\le k\le K}
  \sup_{s\in\calS_{\rm sc}(\varepsilon)}
  |Z_{\rho_n^{(k)}}(s)-Z_{n,k}^0(s)|=o_P(1)
\]
by Proposition \ref{prop:raw-score} and Lemma
\ref{lem:feasible-companion}.  Hence the feasible vector has the same
finite-dimensional limits.

For \(f\in\ell^\infty\{\calS_{\rm sc}(\varepsilon)\}\), define
\[
  \omega_{\rm sc}(f,\delta)
  =\sup_{\substack{s,r\in\calS_{\rm sc}(\varepsilon)\\
                    \|s-r\|_\infty\le\delta}}
    |f(s)-f(r)|.
\]
Then, for every \(\eta>0\),
\[
\begin{split}
  &\Pbb\left\{
    \max_{1\le k\le K}
    \omega_{\rm sc}(Z_{\rho_n^{(k)}},\delta)>\eta\right\}\\
  &\qquad\le
    \sum_{k=1}^K
    \Pbb\left\{
      \omega_{\rm sc}(Z_{\rho_n^{(k)}},\delta)>\eta\right\},
\end{split}
\]
and Lemma \ref{lem:tightness}, uniformly over the compact ridge interval,
implies
\[
  \lim_{\delta\downarrow0}\limsup_{n\to\infty}
  \Pbb\left\{
    \max_{1\le k\le K}
    \omega_{\rm sc}(Z_{\rho_n^{(k)}},\delta)>\eta\right\}=0.
\]
Thus the finite-ridge vector is jointly tight, and the finite-dimensional
limit proves \eqref{eq:sc-joint-process} with covariance
\eqref{eq:sc-joint-cov}.

Set
\[
  S_{n,k}=\sup_{s\in\calS_{\rm sc}(\varepsilon)}
          Z_{\rho_n^{(k)}}(s),
  \qquad
  S_{\infty,k}=\sup_{s\in\calS_{\rm sc}(\varepsilon)}G_k^{\rm sc}(s).
\]
The map from the product process space to \(\R^K\) satisfies
\[
  \max_{1\le k\le K}|\sup_s f_k(s)-\sup_s g_k(s)|
  \le \max_{1\le k\le K}\|f_k-g_k\|_\infty,
\]
so
\[
  (S_{n,1},\ldots,S_{n,K})^\top
  \dto(S_{\infty,1},\ldots,S_{\infty,K})^\top.
\]
Since \(F_{\rm sc}\) is continuous,
\[
  (x_1,\ldots,x_K)\mapsto
  \{1-F_{\rm sc}(x_1),\ldots,1-F_{\rm sc}(x_K)\}
\]
is continuous on \(\R^K\).  A final application of the continuous mapping
theorem proves \eqref{eq:sc-pvector-limit}.
\end{proof}

\begin{proof}[Proof of Theorem \ref{thm:mc-joint}]
For \(k=1,\ldots,K\), put
\[
  Z_{n,k}^0(s)
  =\frac{\widetilde V_{\rho_n^{(k)}}^0(s)
         -n\kappa_{\rho_n^{(k)}}^0(s)}
        {\{n\sigma_{\rho_n^{(k)}}^{0,2}(s)\}^{1/2}},
  \qquad s\in\calS_{\rm mc}(\varepsilon).
\]
For every fixed \(s_1,\ldots,s_M\in\calS_{\rm mc}(\varepsilon)\), Lemmas
\ref{lem:multi-clt} and \ref{lem:cross-factorization} imply
\[
  \{Z_{n,k}^0(s_a):1\le k\le K,\ 1\le a\le M\}
  \dto
  \{G_k^{\rm mc}(s_a):1\le k\le K,\ 1\le a\le M\},
\]
where
\[
  \Cov\{G_k^{\rm mc}(s_a),G_\ell^{\rm mc}(s_b)\}
  =r_E(\rho^{(k)},\rho^{(\ell)})K_0(s_a,s_b).
\]
The common full-sample scatter pool is what makes the temporal factor in this
covariance equal to \(K_0\) for every pair of candidates.

Proposition \ref{prop:raw-score}, Lemma
\ref{lem:feasible-companion}, and Lemma \ref{lem:tightness-mc} give
\[
  \max_{1\le k\le K}
  \sup_{s\in\calS_{\rm mc}(\varepsilon)}
  |Z_{\rho_n^{(k)}}(s)-Z_{n,k}^0(s)|=o_P(1).
\]
For \(f\in\ell^\infty\{\calS_{\rm mc}(\varepsilon)\}\), define
\[
  \omega_{\rm mc}(f,\delta)
  =\sup_{\substack{s,r\in\calS_{\rm mc}(\varepsilon)\\
                    \|s-r\|_\infty\le\delta}}
    |f(s)-f(r)|.
\]
Then, for every \(\eta>0\),
\[
\begin{split}
  &\Pbb\left\{
    \max_{1\le k\le K}
    \omega_{\rm mc}(Z_{\rho_n^{(k)}},\delta)>\eta\right\}\\
  &\qquad\le
    \sum_{k=1}^K
    \Pbb\left\{
      \omega_{\rm mc}(Z_{\rho_n^{(k)}},\delta)>\eta\right\},
\end{split}
\]
so Lemma \ref{lem:tightness-mc}, uniformly over the compact ridge interval,
yields
\[
  \lim_{\delta\downarrow0}\limsup_{n\to\infty}
  \Pbb\left\{
    \max_{1\le k\le K}
    \omega_{\rm mc}(Z_{\rho_n^{(k)}},\delta)>\eta\right\}=0.
\]
The finite-dimensional limits and joint tightness prove
\eqref{eq:mc-joint-process} and \eqref{eq:mc-joint-cov}.

Let
\[
  S_{n,k}=\sup_{s\in\calS_{\rm mc}(\varepsilon)}Z_{\rho_n^{(k)}}(s),
  \qquad
  S_{\infty,k}=\sup_{s\in\calS_{\rm mc}(\varepsilon)}G_k^{\rm mc}(s).
\]
The product supremum map is Lipschitz:
\[
  \max_{1\le k\le K}|\sup_s f_k(s)-\sup_s g_k(s)|
  \le \max_{1\le k\le K}\|f_k-g_k\|_\infty.
\]
Consequently,
\[
  (S_{n,1},\ldots,S_{n,K})^\top
  \dto(S_{\infty,1},\ldots,S_{\infty,K})^\top.
\]
Since \(F_{\rm mc}\) is continuous, applying the continuous map
\[
  (x_1,\ldots,x_K)\mapsto
  \{1-F_{\rm mc}(x_1),\ldots,1-F_{\rm mc}(x_K)\}
\]
proves \eqref{eq:mc-pvector-limit}.
\end{proof}

\begin{proof}[Proof of Theorems \ref{thm:sc-cauchy} and \ref{thm:mc-cauchy}]
Write \({\rm d}\in\{{\rm sc},{\rm mc}\}\), and define
\[
  H(p_1,\ldots,p_K)
  =\sum_{k=1}^K\varpi_k
    \tan\left[\pi\left\{\frac12-p_k\right\}\right],
  \qquad (p_1,\ldots,p_K)\in(0,1)^K.
\]
For each \(k\), continuity of the corresponding Gaussian-supremum
distribution and the probability integral transform give
\[
  P_{\infty,k}^{\rm d}\sim U(0,1),
  \qquad
  \Pbb\{P_{\infty,k}^{\rm d}\in\{0,1\}\}=0.
\]
Hence
\[
  \Pbb\{(P_{\infty,1}^{\rm d},\ldots,P_{\infty,K}^{\rm d})
  \in(0,1)^K\}=1,
\]
and \(H\) is almost surely continuous at the limiting p-value vector.
Theorem \ref{thm:sc-joint} or Theorem \ref{thm:mc-joint}, followed by the
continuous mapping theorem, therefore yields
\[
  T_{\rm CC}^{\rm d}
  =H(P_1^{\rm d},\ldots,P_K^{\rm d})
  \dto
  H(P_{\infty,1}^{\rm d},\ldots,P_{\infty,K}^{\rm d})
  =T_{{\rm CC},\infty}^{\rm d}.
\]
At every continuity point \(c_{\alpha,K}^{\rm d}\) satisfying
\(\Pbb(T_{{\rm CC},\infty}^{\rm d}>c_{\alpha,K}^{\rm d})=\alpha\),
Portmanteau's theorem gives
\[
  \Pbb(T_{\rm CC}^{\rm d}>c_{\alpha,K}^{\rm d})
  \longrightarrow
  \Pbb(T_{{\rm CC},\infty}^{\rm d}>c_{\alpha,K}^{\rm d})
  =\alpha.
\]
Finally, for \(\alpha\in(0,1)\), monotonicity of \(\arctan\) gives the exact
equivalence
\[
\begin{split}
  P_{\rm CC}^{\rm d}\le\alpha
  &\Longleftrightarrow
  \frac12-\frac1\pi\arctan(T_{\rm CC}^{\rm d})\le\alpha\\
  &\Longleftrightarrow
  T_{\rm CC}^{\rm d}\ge\cot(\pi\alpha).
\end{split}
\]
If \(\cot(\pi\alpha)\) is a continuity point of
\(T_{{\rm CC},\infty}^{\rm d}\), then
\[
  \Pbb(P_{\rm CC}^{\rm d}\le\alpha)
  \longrightarrow
  \Pbb\{T_{{\rm CC},\infty}^{\rm d}\ge\cot(\pi\alpha)\}.
\]
Taking \({\rm d}={\rm sc}\) and \({\rm d}={\rm mc}\), respectively, proves
both theorems.
\end{proof}

\begin{corollary}[Finite-grid multiple-change inference]
\label{cor:mc-finite-grid}
Suppose \(H_0\), Assumptions~\ref{ass:elliptical}--\ref{ass:grid}, and
Convention~\ref{conv:scan}(ii) hold.  Recall the fixed finite grid
\[
 \calG_\varepsilon
 =\{j\varepsilon:j\in\mathbb Z\}\cap[0,1],
 \qquad
 \calS_{\rm mc}^*(\varepsilon)
 =\calS_{\rm mc}(\varepsilon)\cap\calG_\varepsilon^3,
\]
and enumerate its distinct candidates by
\(s_1^*,\ldots,s_Q^*\).

For every deterministic sequence
\(\rho_n\to\rho\in[\rho_0,\rho_1]\),
\begin{equation}
\label{eq:mc-star-vector-limit}
 \{Z_{\rho_n}(s_q^*):1\le q\le Q\}
 \dto
 \{G_\rho^{\rm mc}(s_q^*):1\le q\le Q\}.
\end{equation}
Consequently,
\begin{equation}
\label{eq:supp-mc-star-T-limit}
 T_{\rho_n}^{{\rm mc},*}
 :=\max_{s\in\calS_{\rm mc}^*(\varepsilon)}Z_{\rho_n}(s)
 \dto
 T_0^{{\rm mc},*}
 :=\max_{s\in\calS_{\rm mc}^*(\varepsilon)}G_\rho^{\rm mc}(s).
\end{equation}
The distribution of \(T_0^{{\rm mc},*}\) does not depend on \(\rho\);
denote its continuous distribution function by
\begin{equation}
\label{eq:mc-star-cdf}
 F_{\rm mc}^*(x)=\Pbb(T_0^{{\rm mc},*}\le x).
\end{equation}

For the finite regularization-parameter grid
\(\calR_K=\{\rho_n^{(1)},\ldots,\rho_n^{(K)}\}\), jointly,
\begin{equation}
\label{eq:mc-star-joint-vector-limit}
 \{Z_{\rho_n^{(k)}}(s_q^*):
   1\le k\le K,\ 1\le q\le Q\}
 \dto
 \{G_{\rho^{(k)}}^{\rm mc}(s_q^*):
   1\le k\le K,\ 1\le q\le Q\},
\end{equation}
where
\[
 \Cov\{G_{\rho^{(k)}}^{\rm mc}(s_q^*),
       G_{\rho^{(\ell)}}^{\rm mc}(s_{q'}^*)\}
 =r_E(\rho^{(k)},\rho^{(\ell)})K_0(s_q^*,s_{q'}^*).
\]
Define
\begin{align*}
 P_k^{{\rm mc},*}
 &=1-F_{\rm mc}^*(T_{\rho_n^{(k)}}^{{\rm mc},*}),\\
 P_{\infty,k}^{{\rm mc},*}
 &=1-F_{\rm mc}^*\left\{
   \max_{s\in\calS_{\rm mc}^*(\varepsilon)}
   G_{\rho^{(k)}}^{\rm mc}(s)\right\}.
\end{align*}
Then
\begin{equation}
\label{eq:mc-star-pvector-limit}
 (P_1^{{\rm mc},*},\ldots,P_K^{{\rm mc},*})^\top
 \dto
 (P_{\infty,1}^{{\rm mc},*},\ldots,
  P_{\infty,K}^{{\rm mc},*})^\top.
\end{equation}

Finally, for positive weights \(\varpi_k\) summing to one, put
\begin{align*}
 T_{\rm CC}^{{\rm mc},*}
 &=\sum_{k=1}^K\varpi_k
   \tan\left[\pi\left\{\frac12-P_k^{{\rm mc},*}\right\}\right],\\
 P_{\rm CC}^{{\rm mc},*}
 &=\frac12-\frac1\pi\arctan(T_{\rm CC}^{{\rm mc},*}),\\
 T_{{\rm CC},\infty}^{{\rm mc},*}
 &=\sum_{k=1}^K\varpi_k
   \tan\left[\pi\left\{\frac12-P_{\infty,k}^{{\rm mc},*}\right\}\right].
\end{align*}
If \(c_{\alpha,K}^{{\rm mc},*}\) is a continuity point satisfying
\(\Pbb(T_{{\rm CC},\infty}^{{\rm mc},*}>
c_{\alpha,K}^{{\rm mc},*})=\alpha\), then
\begin{equation}
\label{eq:mc-star-cauchy-calibration}
 \Pbb(T_{\rm CC}^{{\rm mc},*}>
 c_{\alpha,K}^{{\rm mc},*})\longrightarrow\alpha.
\end{equation}
Moreover, whenever \(\cot(\pi\alpha)\) is a continuity point of
\(T_{{\rm CC},\infty}^{{\rm mc},*}\),
\begin{equation}
\label{eq:mc-star-analytic-calibration}
 \Pbb(P_{\rm CC}^{{\rm mc},*}\le\alpha)
 \longrightarrow
 \Pbb\{T_{{\rm CC},\infty}^{{\rm mc},*}
       \ge\cot(\pi\alpha)\}.
\end{equation}
\end{corollary}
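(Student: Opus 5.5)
The corollary is a finite-dimensional specialization of Theorems~\ref{thm:mc-process}--\ref{thm:mc-cauchy}, so I would not re-run any chaining. Since \(\calG_\varepsilon\) is a fixed finite set independent of \(n\), \(\calS_{\rm mc}^*(\varepsilon)=\{s_1^*,\ldots,s_Q^*\}\) is a fixed finite subset of \(\calS_{\rm mc}(\varepsilon)\). Each \(s_q^*\) satisfies the trimming \(t_2-t_1\ge\varepsilon\), \(t_3-t_2\ge\varepsilon\) and uses the common pool \(J(s)=\{1,\ldots,n\}\).

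For \eqref{eq:mc-star-vector-limit} and \eqref{eq:mc-star-joint-vector-limit}, I would use the natural-grid representatives \(\mathfrak p_n(s_q^*)\) from the proof of Theorem~\ref{thm:mc-process}. These leave every \(Z_\rho(s_q^*)\) unchanged and converge to \(s_q^*\). Lemma~\ref{lem:multi-clt}, applied with \(M=Q\) candidates and the sequences \(\rho_n^{(k)}\) (or the single sequence \(\rho_n\)), gives conditional, hence unconditional, joint Gaussian convergence of the oracle standardized statistics. Lemma~\ref{lem:cross-factorization} identifies the covariance as \(r_E(\rho^{(k)},\rho^{(\ell)})K_0(s_q^*,s_{q'}^*)\).

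To pass to the feasible \(Z\), I would invoke Proposition~\ref{prop:raw-score} together with \eqref{eq:kappa-feasible-rate}, \eqref{eq:sigma-feasible-rate} and \eqref{eq:variance-nondegenerate}. Each of these is uniform in \(\rho\in[\rho_0,\rho_1]\), so moving \(\rho_n\to\rho\) costs nothing. They give \(\max_{k,q}|Z_{\rho_n^{(k)}}(s_q^*)-Z^0_{n,k}(s_q^*)|=o_P(1)\), exactly as in the proof of Theorem~\ref{thm:pointwise}, and Slutsky's theorem concludes. Equivalently, the vector limits are just the restriction of \eqref{eq:mc-joint-process} to the finite coordinate set, since evaluation at finitely many points is continuous on \(\ell^\infty\). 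The same remark covers a sequence \(\rho_n\): treat it as a grid with \(K=1\) in Theorem~\ref{thm:mc-joint}.

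The maximum over \(Q\) coordinates is a continuous map, so \eqref{eq:supp-mc-star-T-limit} follows from the continuous mapping theorem. Independence from \(\rho\) holds because the marginal covariance is \(r_E(\rho,\rho)K_0=K_0\), by Lemma~\ref{lem:cross-factorization}.

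The one step that needs care is continuity of \(F^*_{\rm mc}\), which is used both for the p-value map and for uniformity of \(P^{{\rm mc},*}_{\infty,k}\). The covariance may be singular, but every coordinate has unit variance because \(K_0(s,s)=1\). The maximum of finitely many jointly Gaussian variables each with positive variance has no atoms. Indeed, \(\Pbb(\max_q G_q=x)\le\sum_q\Pbb(G_q=x)=0\). So \(F^*_{\rm mc}\) is continuous, \(P_{\infty,k}^{{\rm mc},*}\sim U(0,1)\), and \((x_k)\mapsto\{1-F^*_{\rm mc}(x_k)\}\) is continuous. Then \eqref{eq:mc-star-pvector-limit} follows by continuous mapping applied to the vector of maxima obtained from \eqref{eq:mc-star-joint-vector-limit}.

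Finally, \eqref{eq:mc-star-cauchy-calibration} and \eqref{eq:mc-star-analytic-calibration} repeat verbatim the argument in the proof of Theorems~\ref{thm:sc-cauchy} and~\ref{thm:mc-cauchy}. The map \(H\) is almost surely continuous at the limiting p-value vector because its coordinates lie in \((0,1)\) almost surely. The Portmanteau theorem applies at continuity points, and the equivalence \(P_{\rm CC}\le\alpha\iff T_{\rm CC}\ge\cot(\pi\alpha)\) handles the analytic rule.
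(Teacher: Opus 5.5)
Your proposal is correct and follows essentially the same route as the paper. It uses Lemma~\ref{lem:multi-clt} with Lemma~\ref{lem:cross-factorization} for the oracle finite-dimensional limits, Proposition~\ref{prop:raw-score} and Lemma~\ref{lem:feasible-companion} for the feasible transfer, continuous mapping for the finite maximum, the bound $\Pbb(\max_q G_q=x)\le\sum_q\Pbb(G_q=x)=0$ for continuity of $F^*_{\rm mc}$, and the threshold argument from Theorems~\ref{thm:sc-cauchy}--\ref{thm:mc-cauchy}. Your alternative remark that the vector limits are restrictions of \eqref{eq:mc-joint-process} is valid but unnecessary, since, as the paper notes, no tightness argument is needed for a fixed finite grid.
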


\begin{proof}
Because \(Q\) and \(K\) are fixed, Lemmas~\ref{lem:multi-clt} and
\ref{lem:cross-factorization}, Proposition~\ref{prop:raw-score}, and
Lemma~\ref{lem:feasible-companion} give
\eqref{eq:mc-star-vector-limit} and
\eqref{eq:mc-star-joint-vector-limit} directly; no process-tightness
argument is needed.  The finite-maximum map is Lipschitz, so the continuous
mapping theorem gives \eqref{eq:supp-mc-star-T-limit}.  Every marginal Gaussian
vector in \eqref{eq:mc-star-vector-limit} has covariance
\(\{K_0(s_q^*,s_{q'}^*)\}_{q,q'=1}^Q\), which is independent of \(\rho\).
Hence \(F_{\rm mc}^*\) is common to all regularization-parameter values.
Furthermore, each \(G_\rho^{\rm mc}(s_q^*)\) is a nondegenerate standard
normal variable, and therefore, for every \(x\in\mathbb R\),
\[
 \Pbb(T_0^{{\rm mc},*}=x)
 \le\sum_{q=1}^Q\Pbb\{G_\rho^{\rm mc}(s_q^*)=x\}=0.
\]
Thus \(F_{\rm mc}^*\) is continuous.  Applying its componentwise
probability transform to the joint maximum vector proves
\eqref{eq:mc-star-pvector-limit}.

Each limiting p-value is consequently uniform on \((0,1)\), so the Cauchy
map is almost surely continuous at the limiting vector.  The continuous
mapping theorem and the same threshold argument used in the proof of
Theorems~\ref{thm:sc-cauchy} and \ref{thm:mc-cauchy} give
\eqref{eq:mc-star-cauchy-calibration} and
\eqref{eq:mc-star-analytic-calibration}.
\end{proof}

\begin{proof}[Proof of Theorem \ref{thm:power}]
Write
\[
 \mathfrak q_{\rho,n}(\bfd)=\bfd^\top\bfD_{\rho,n}\bfd,
 \qquad
 \bfDelta_p=n^{-1/4}\bfd_p
\]
for the local alternative.  Lemmas~\ref{lem:alternative-expansion} and
\ref{lem:alternative-perturbation} give, uniformly over the trimmed
single-change scan and \(\rho\in[\rho_0,\rho_1]\),
\begin{align*}
 \widehat\bfDelta_s
 &=\widehat\bfDelta_s^{(0)}
   +\vartheta_{s,n}(\tau_\star)\bfDelta_p
   +\bfr_{\Delta,s}^{\rm alt},\\
 \sup_s\|\bfr_{\Delta,s}^{\rm alt}\|
 &=O_P(\ell_{0,n}^4n^{-1})+O_P(n^{-3/4})+O_P(n^{-1})
   =o_P(n^{-1/2}),\\
 \sup_{s,\rho}\frac{N_s}{\sqrt n}
 \left| (\widehat\bfDelta_s^{(0)})^\top
 (\widehat\bfQ_{\rho,s}-\widehat\bfQ_{\rho,s}^{(0)})
 \widehat\bfDelta_s^{(0)}\right|
 &=o_P(1),\\
 \sup_{s,\rho}
 \left|\bfDelta_p^\top
 (\widehat\bfQ_{\rho,s}-\widehat\bfQ_{\rho,s}^{(0)})
 \bfDelta_p\right|
 &=o_P(n^{-1/2}),\\
 \bfDelta_p^\top\widehat\bfQ_{\rho,s}^{(0)}\bfDelta_p
 &=n^{-1/2}\{\mathfrak q_{\rho,n}(\bfd_p)+o_P(1)\}.
\end{align*}
Substitution into \eqref{eq:Vraw} yields the exact decomposition
\begin{align*}
 V_\rho^{\rm raw}(s)
 &=N_s(\widehat\bfDelta_s^{(0)})^\top
   \widehat\bfQ_{\rho,s}^{(0)}\widehat\bfDelta_s^{(0)}\\
 &\quad+2N_s\vartheta_{s,n}(\tau_\star)
   \bfDelta_p^\top\widehat\bfQ_{\rho,s}^{(0)}
   \widehat\bfDelta_s^{(0)}\\
 &\quad+N_s\vartheta_{s,n}(\tau_\star)^2
   \bfDelta_p^\top\widehat\bfQ_{\rho,s}^{(0)}\bfDelta_p
   +R_{V,n}(s).
\end{align*}
The uniform score bounds
\[
 \widehat\bfDelta_s^{(0)}
 =N_s^{-1/2}\bfB_s+O_P(\ell_nn^{-1}),
 \qquad
 \sup_s\|\bfB_s\|=O_P(\ell_nn^{1/2}),
\]
together with \eqref{eq:alt-mixed-resolvent-perturbation}, imply
\begin{align*}
 \sup_s\frac{|R_{V,n}(s)|}{\sqrt n}
 &\le C\sup_s\Bigg[
   \sqrt n\|\bfr_{\Delta,s}^{\rm alt}\|^2
   +\|\bfr_{\Delta,s}^{\rm alt}\|\|\bfB_s\|\\
 &\qquad
   +\frac{N_s^{1/2}}{\sqrt n}
    \left|\bfDelta_p^\top
    (\widehat\bfQ_{\rho,s}-\widehat\bfQ_{\rho,s}^{(0)})
    \bfB_s\right|\\
 &\qquad
   +\sqrt n\left|\bfDelta_p^\top
    (\widehat\bfQ_{\rho,s}-\widehat\bfQ_{\rho,s}^{(0)})
    \bfDelta_p\right|\Bigg]+o_P(1)\\
 &=O_P(\ell_{0,n}^8n^{-3/2}+n^{-1})
   +O_P(\ell_{0,n}^5n^{-1/2}+\ell_{0,n}n^{-1/4})\\
 &\quad+O_P(\ell_nn^{-1/2}+n^{-3/4}+\ell_nn^{-3/4})+o_P(1)
 =o_P(1).
\end{align*}

Define the centered-error statistic
\[
 Z_\rho^{(0)}(s)
 =\frac{N_s(\widehat\bfDelta_s^{(0)})^\top
       \widehat\bfQ_{\rho,s}^{(0)}\widehat\bfDelta_s^{(0)}
       -m_s\widehat\kappa_\rho^{(0)}(s)}
      {\{m_s\widehat\sigma_\rho^{(0),2}(s)\}^{1/2}}.
\]
The feasible centering and variance perturbations satisfy
\begin{align*}
 \sup_s\sqrt{m_s}\,
 |\widehat\kappa_\rho(s)-\widehat\kappa_\rho^{(0)}(s)|&=o_P(1),\\
 \sup_s\left|
 \frac{\widehat\sigma_\rho^2(s)}
      {\widehat\sigma_\rho^{(0),2}(s)}-1\right|&=o_P(1),
\end{align*}
and Lemma~\ref{lem:alt-cross-term} gives
\[
 \sup_s\frac{N_s}{\sqrt n}
 \left|\bfDelta_p^\top\widehat\bfQ_{\rho,s}
 \widehat\bfDelta_s^{(0)}\right|
 =O_P(\ell_nn^{-1/4})=o_P(1).
\]
For \(s_t=(0,t,1)\), put
\[
 \Lambda_{\rho,n}
 =\frac{\mathfrak q_{\rho,n}(\bfd_p)}{\sigma_{\rho,n}^{\circ}},
 \qquad
 \mathfrak h_n(t;\tau_\star)
 =\frac{N_{s_t}}{n}\vartheta_{s_t,n}(\tau_\star)^2.
\]
The deterministic-equivalent and floor corrections give
\[
 \Lambda_{\rho,n}\longrightarrow\Lambda_\rho(G_\Delta),
 \qquad
 \sup_{t\in[\varepsilon,1-\varepsilon]}
 |\mathfrak h_n(t;\tau_\star)-\mathfrak h(t;\tau_\star)|
 =O(n^{-1}).
\]
Combining the preceding bounds yields
\begin{equation}
\label{eq:power-process-slutsky}
 \sup_{t\in[\varepsilon,1-\varepsilon]}
 \left|Z_\rho(s_t)-Z_\rho^{(0)}(s_t)
 -\Lambda_{\rho,n}\mathfrak h_n(t;\tau_\star)\right|=o_P(1).
\end{equation}
By Theorem~\ref{thm:sc-process},
\[
 Z_\rho^{(0)}(s_t)\Rightarrow G_\rho^{\rm sc}(s_t)
 \quad\text{in }\ell^\infty[\varepsilon,1-\varepsilon].
\]
Hence \eqref{eq:power-process-slutsky} and functional Slutsky's theorem imply
\[
 Z_\rho(s_t)\Rightarrow
 G_\rho^{\rm sc}(s_t)
 +\Lambda_\rho(G_\Delta)\mathfrak h(t;\tau_\star).
\]
The piecewise-constant interpolation preserves
\eqref{eq:power-process-slutsky} on each natural-grid cell.  Applying the
continuous mapping theorem to \(f\mapsto\sup_t f(t)\) proves the local-power
formula.

For the strong-signal assertion, let
\[
 k_\star=\lfloor n\tau_\star\rfloor,
 \qquad
 s_\star=(0,\tau_\star,1).
\]
Then \(\iota_n(\tau_\star)=k_\star+1\), so the two windows split the sample
at the population boundary and
\[
 \vartheta_{s_\star,n}(\tau_\star)=1,
 \qquad N_{s_\star}\asymp n,
 \qquad m_{s_\star}=n.
\]
Lemmas~\ref{lem:alternative-perturbation} and \ref{lem:ridge-stability} give
\begin{align*}
 \bfDelta_p^\top\widehat\bfQ_{\rho,s_\star}\bfDelta_p
 &=\mathfrak q_{\rho,n}(\bfDelta_p)
   +o_P(\|\bfDelta_p\|^2),\\
 c\|\bfDelta_p\|^2
 &\le\mathfrak q_{\rho,n}(\bfDelta_p)
 \le C\|\bfDelta_p\|^2.
\end{align*}
Therefore
\[
 \frac{N_{s_\star}\bfDelta_p^\top
       \widehat\bfQ_{\rho,s_\star}\bfDelta_p}
      {\sqrt n\,\widehat\sigma_\rho(s_\star)}
 =\{1+o_P(1)\}\frac{N_{s_\star}}{n}
  \frac{\sqrt n\,\mathfrak q_{\rho,n}(\bfDelta_p)}{\sigma_\rho}
 \pto\infty.
\]
The centered quadratic term is \(O_P(1)\), whereas the mixed term satisfies
\begin{align*}
 \frac{N_{s_\star}}
 {\sqrt n}
 \left|\bfDelta_p^\top\widehat\bfQ_{\rho,s_\star}
 \widehat\bfDelta_{s_\star}^{(0)}\right|
 &=O_P\{\ell_n\mathfrak q_{\rho,n}(\bfDelta_p)^{1/2}\}+o_P(1),\\
 \frac{\ell_n\mathfrak q_{\rho,n}(\bfDelta_p)^{1/2}}
      {\sqrt n\,\mathfrak q_{\rho,n}(\bfDelta_p)}
 &\le\frac{C\ell_n}{\sqrt n\|\bfDelta_p\|}
 =\frac{C\ell_n\|\bfDelta_p\|}
       {\sqrt n\|\bfDelta_p\|^2}=o(1).
\end{align*}
Thus, uniformly over \(\rho\in[\rho_0,\rho_1]\),
\[
 T_\rho^{\rm sc}\pto\infty.
\]
Since the ridge grid is fixed,
\[
 \min_{1\le k\le K}T_{\rho_n^{(k)}}^{\rm sc}\pto\infty,
 \qquad
 \max_{1\le k\le K}P_k^{\rm sc}\pto0.
\]
The positivity of all Cauchy weights then gives
\[
 T_{\rm CC}^{\rm sc}
 =\sum_{k=1}^K\varpi_k\cot(\pi P_k^{\rm sc})\pto\infty,
 \qquad
 P_{\rm CC}^{\rm sc}
 =\frac12-\pi^{-1}\arctan(T_{\rm CC}^{\rm sc})\pto0,
\]
which proves consistency.
\end{proof}

\begin{proof}[Proof of Theorem \ref{thm:localization}]
For \(s_t=(0,t,1)\), the common scatter pool is \(\{1,\ldots,n\}\).
Choose
\[
 t_n^\dagger\in\argmin_{t\in\calG_n^{\rm sc}}|t-\tau_\star|.
\]
Then
\[
 |t_n^\dagger-\tau_\star|\le n^{-1},
 \qquad
 \left|\mathfrak h_n(t_n^\dagger;\tau_\star)
       -\mathfrak h(\tau_\star;\tau_\star)\right|=O(n^{-1}).
\]
Since
\[
 \mathfrak s_{n,\rho}
 \le C\sqrt n\|\bfDelta_p\|^2=o(\sqrt n),
\]
the standardized grid error is
\[
 O(\mathfrak s_{n,\rho}/n)=o(1).
\]

Put
\[
 \widehat{\mathfrak q}_{\rho,n}^{\rm alt}(\bfd)
 =\bfd^\top\widehat\bfQ_{\rho,s_{1/2}}\bfd,
 \qquad
 \mathfrak r_{\rho,n}^{\rm alt}
 =\frac{\widehat{\mathfrak q}_{\rho,n}^{\rm alt}(\bfDelta_p)}
       {\mathfrak q_{\rho,n}(\bfDelta_p)},
 \qquad
 \mathfrak r_{\sigma,\rho,n}
 =\frac{\sigma_\rho}{\sigma_{\rho,n}^{\circ}}.
\]
Because the full-pool inverse is common to all \(t\),
Lemma~\ref{lem:alternative-perturbation},
\eqref{eq:oracle-sigma-proxy-rate}, and \eqref{eq:sigma-feasible-rate} imply
\begin{align*}
 \mathfrak r_{\rho,n}^{\rm alt}&=1+o_P(1),
 &\mathfrak r_{\sigma,\rho,n}&\longrightarrow1,\\
 \sup_{t\in[\varepsilon,1-\varepsilon]}
 \left|\frac{\sigma_{\rho,n}^{\circ}}
 {\widehat\sigma_\rho(s_t)}-1\right|
 &=O_P(\mathfrak e_{\sigma,n}).
\end{align*}
Define
\[
 \mathfrak h_n(t;\tau_\star)
 =\frac{N_{s_t}}{n}\vartheta_{s_t,n}(\tau_\star)^2,
 \qquad
 D_{n,\rho}(t)
 =\frac{N_{s_t}\vartheta_{s_t,n}(\tau_\star)^2
       \bfDelta_p^\top\widehat\bfQ_{\rho,s_t}\bfDelta_p}
      {\sqrt n\,\widehat\sigma_\rho(s_t)}.
\]
Then
\[
 \sup_t|\mathfrak h_n(t;\tau_\star)-\mathfrak h(t;\tau_\star)|
 =O(n^{-1})
\]
and
\begin{equation}
\label{eq:localization-deterministic-profile}
 \sup_t\left|D_{n,\rho}(t)
 -\mathfrak s_{n,\rho}
  \mathfrak r_{\sigma,\rho,n}
  \mathfrak r_{\rho,n}^{\rm alt}
  \mathfrak h(t;\tau_\star)\right|
 =O_P(\mathfrak s_{n,\rho}\mathfrak e_{\sigma,n})+o_P(1).
\end{equation}

The stochastic part admits
\[
 Z_\rho(0,t,1)-D_{n,\rho}(t)
 =Z_\rho^{(0)}(0,t,1)+C_\rho(t)+R_{n,\rho}(t),
\]
where
\[
 C_\rho(t)
 =\frac{2N_{s_t}\vartheta_{s_t,n}(\tau_\star)
 \bfDelta_p^\top\widehat\bfQ_{\rho,s_t}
 \widehat\bfDelta_{s_t}^{(0)}}
 {\sqrt n\,\sigma_\rho}.
\]
Lemmas~\ref{lem:alternative-perturbation} and
\ref{lem:alt-cross-term} give
\begin{align*}
 \sup_t|R_{n,\rho}(t)|&=o_P(1),\\
 \sup_t|C_\rho(t)|
 &=O_P\{\ell_n\mathfrak q_{\rho,n}(\bfDelta_p)^{1/2}\}+o_P(1)\\
 &\le O_P(\ell_n\|\bfDelta_p\|)+o_P(1)=o_P(1).
\end{align*}
Theorem~\ref{thm:sc-process} therefore yields
\begin{equation}
\label{eq:localization-stochastic-envelope}
 \sup_t|Z_\rho(0,t,1)-D_{n,\rho}(t)|
 \le\sup_t|Z_\rho^{(0)}(0,t,1)|+o_P(1)
 =O_P(1).
\end{equation}

\begin{equation}
\label{eq:signal-profile-explicit}
 \mathfrak h(t;\tau_\star)
 =\begin{cases}
 t(1-\tau_\star)^2/(1-t),&t<\tau_\star,\\[1mm]
 (1-t)\tau_\star^2/t,&t\ge\tau_\star.
 \end{cases}
\end{equation}
Consequently,
\[
 \mathfrak h(\tau_\star;\tau_\star)-\mathfrak h(t;\tau_\star)
 =\begin{cases}
 (1-\tau_\star)(\tau_\star-t)/(1-t),&t<\tau_\star,\\[1mm]
 \tau_\star(t-\tau_\star)/t,&t>\tau_\star,
 \end{cases}
\]
so
\begin{equation}
\label{eq:signal-profile-separation}
 \mathfrak h(\tau_\star;\tau_\star)-\mathfrak h(t;\tau_\star)
 \ge\varepsilon|t-\tau_\star|,
 \qquad t\in[\varepsilon,1-\varepsilon].
\end{equation}
Using \eqref{eq:localization-deterministic-profile},
\eqref{eq:signal-profile-separation}, and
\(\mathfrak r_{\sigma,\rho,n}\mathfrak r_{\rho,n}^{\rm alt}=1+o_P(1)\),
we obtain, uniformly over \(t\in\calG_n^{\rm sc}\),
\begin{equation}
\label{eq:localization-profile-gap}
 D_{n,\rho}(t_n^\dagger)-D_{n,\rho}(t)
 \ge\frac{\varepsilon}{2}\mathfrak s_{n,\rho}|t-\tau_\star|
 -O_P(\mathfrak s_{n,\rho}\mathfrak e_{\sigma,n})-o_P(1).
\end{equation}

For \(M>0\), define
\[
 \mathcal A_{M,n}
 =\left\{
 \mathfrak s_{n,\rho}|\widehat\tau_\rho-\tau_\star|
 \ge M(1+\mathfrak s_{n,\rho}\mathfrak e_{\sigma,n})
 \right\}.
\]
The argmax property and \eqref{eq:localization-profile-gap} imply on
\(\mathcal A_{M,n}\) that
\begin{align*}
 0
 &\le Z_\rho(0,\widehat\tau_\rho,1)
      -Z_\rho(0,t_n^\dagger,1)\\
 &\le-\frac{\varepsilon M}{2}
      (1+\mathfrak s_{n,\rho}\mathfrak e_{\sigma,n})
   +O_P(\mathfrak s_{n,\rho}\mathfrak e_{\sigma,n})\\
 &\quad+2\sup_t|Z_\rho(0,t,1)-D_{n,\rho}(t)|+o_P(1).
\end{align*}
By \eqref{eq:localization-stochastic-envelope},
\[
 \lim_{M\to\infty}\limsup_{n\to\infty}
 \Pbb(\mathcal A_{M,n})=0.
\]
Hence
\[
 \mathfrak s_{n,\rho}|\widehat\tau_\rho-\tau_\star|
 =O_P(1+\mathfrak s_{n,\rho}\mathfrak e_{\sigma,n}),
\]
and division by \(\mathfrak s_{n,\rho}\) proves
\eqref{eq:localization-consistency}.  If
\(\mathfrak s_{n,\rho}\mathfrak e_{\sigma,n}=O(1)\), then
\[
 |\widehat\tau_\rho-\tau_\star|=O_P(\mathfrak s_{n,\rho}^{-1}).
\]
\end{proof}

\subsection{Auxiliary WBS results and proof of Theorem~\ref{thm:wbs-consistency}}\label{app:wbs-proof}

We first fix the auxiliary constants used only in the proof.  The strict
inequalities in Assumption~\ref{ass:wbs-design} allow us to choose
\(\varepsilon_0\in(\varepsilon,1/2)\) sufficiently close to \(\varepsilon\)
so that
\[
  \delta_{\min}<(1-2\varepsilon_0)\delta_R.
\]
We may then choose
\[
  \delta_{\min}<\delta_W<
  \min\left\{(1-2\varepsilon_0)\delta_R,
             \frac{\delta_0}{2}-\delta_R\right\}
\]
and finally choose \(\delta_B>0\) such that
\[
  \delta_R+\delta_W+\delta_B<\frac{\delta_0}{2}.
\]
For the union bounds below, also put
\begin{equation}
\label{eq:wbs-proof-stochastic-envelope}
  \mathfrak z_n^{\rm WBS}
  =\{\log(C_0Kn^5)\}^{1/2},
\end{equation}
where \(C_0>1\) is fixed.  Since \(K\) is fixed,
\(\mathfrak z_n^{\rm WBS}\asymp\sqrt{\log n}\), so
Assumption~\ref{ass:wbs-signal-tuning} implies
\[
  \mathfrak z_n^{\rm WBS}
  +\overline{\mathfrak s}_n^{\rm WBS}\mathfrak e_n^{\rm WBS}
  =o(\mathfrak t_n^{\rm WBS}).
\]

We next introduce the interval-specific objects used only in the proof.  For
an integer interval \(I=[L,R]\), let \(|I|=R-L+1\).  Relative to the break
\(k_j\), the balance and buffer conditions are
\begin{equation}
\label{eq:wbs-good-balance}
  \min\{k_j-L+1,R-k_j\}\ge\varepsilon_0|I|,
\end{equation}
and
\begin{equation}
\label{eq:wbs-good-buffer}
  L-k_{j-1}\ge\delta_Bn,
  \qquad
  k_{j+1}-R\ge\delta_Bn.
\end{equation}
Let \(\mathfrak G_j\) be the deterministic class of intervals for which
\(k_j\in I\), \(m_{\min}\le |I|\le\delta_Wn\), and both
\eqref{eq:wbs-good-balance} and \eqref{eq:wbs-good-buffer} hold.  Let
\(\mathfrak B_j\) be the class obtained by replacing this length restriction
with
\begin{equation}
\label{eq:wbs-refinement-class-length}
  m_{\min}\le |I|\le3\delta_Rn
\end{equation}
while keeping all its other conditions.  Under
Assumption~\ref{ass:wbs-design}, these classes are nonempty for all
sufficiently large \(n\), and every interval in either class contains only
\(k_j\).

For \((a,b,c)\in\mathcal A(I,\varepsilon)\), define
\[
  n_1=b-a+1,
  \qquad
  n_2=c-b,
  \qquad
  N_{a,b,c}=\frac{n_1n_2}{n_1+n_2},
\]
and
\[
  \bfDelta_I(a,b,c)
  =\frac1{n_2}\sum_{i=b+1}^{c}\bftheta_i
   -\frac1{n_1}\sum_{i=a}^{b}\bftheta_i .
\]
The interval population noncentral component is
\begin{equation}
\label{eq:wbs-population-component}
  D_{\rho,I}(a,b,c)
  =\frac{N_{a,b,c}}{|I|^{1/2}\sigma_{\rho,|I|}^{\circ}}
  \bfDelta_I(a,b,c)^\top\bfD_{\rho,|I|}\bfDelta_I(a,b,c),
  \qquad
  D_I(a,b,c)=\max_{1\le k\le K}
  D_{\rho_n^{(k)},I}(a,b,c).
\end{equation}
Here \(\bfD_{\rho,|I|}\) and
\(\sigma_{\rho,|I|}^{\circ}\) are the deterministic quantities in
\eqref{eq:generic-D-rho-m} and \eqref{eq:finite-sigma-proxy}, evaluated at
the local aspect ratio \(p/|I|\).

For the balanced refinement class, define the interval-specific ridge signal
by
\begin{equation}
\label{eq:wbs-proof-signal}
  \mathfrak s_{n,j}^{\rm WBS}
  =\inf_{I\in\mathfrak B_j}\max_{1\le k\le K}
  \frac{\sqrt{|I|}\,\bfDelta_j^\top
        \bfD_{\rho_n^{(k)},|I|}\bfDelta_j}
       {\sigma_{\rho_n^{(k)},|I|}^{\circ}},
  \qquad
  \mathfrak s_{n,\min}^{\rm WBS}
  =\min_{1\le j\le q}\mathfrak s_{n,j}^{\rm WBS},
  \qquad
  \mathfrak s_{n,\max}^{\rm WBS}
  =\max_{1\le j\le q}\mathfrak s_{n,j}^{\rm WBS}.
\end{equation}
Lemma~\ref{lem:ridge-stability}, \eqref{eq:finite-sigma-proxy}, and the
uniform nondegeneracy of the companion-resolvent variance imply that fixed
constants \(0<c_{\rm WBS}<C_{\rm WBS}<\infty\) exist such that
\begin{equation}
\label{eq:wbs-signal-equivalence}
  c_{\rm WBS}\sqrt n\,\|\bfDelta_j\|^2
  \le\mathfrak s_{n,j}^{\rm WBS}
  \le C_{\rm WBS}\sqrt n\,\|\bfDelta_j\|^2,
  \qquad j=1,\ldots,q.
\end{equation}
Consequently,
\begin{equation}
\label{eq:wbs-extreme-signal-equivalence}
  \mathfrak s_{n,\min}^{\rm WBS}
  \asymp\underline{\mathfrak s}_n^{\rm WBS},
  \qquad
  \mathfrak s_{n,\max}^{\rm WBS}
  \asymp\overline{\mathfrak s}_n^{\rm WBS}.
\end{equation}
Because \(q\) is fixed, differences between averages of the
piecewise-constant locations over any two intervals are bounded by \(qr_n\).
It follows from the same deterministic bounds that the largest population
component over all admissible WBS pools and triples is at most
\(C\mathfrak s_{n,\max}^{\rm WBS}\).

\begin{lemma}[Isolation by uniformly sampled WBS intervals]\label{lem:wbs-random-isolation}
Under Assumption~\ref{ass:wbs-design},
\begin{equation}
\label{eq:wbs-isolation-event}
  \Pbb\left(\bigcap_{j=1}^q\{\mathcal I_{M_n}\cap\mathfrak G_j\ne\varnothing\}\right)\to1 .
\end{equation}
\end{lemma}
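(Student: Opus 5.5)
The plan is to show that, for each fixed $j$, the class $\mathfrak G_j$ carries a fixed positive fraction of the sampling probability. Then the $M_n\to\infty$ independent draws hit it with probability tending to one, and a union bound over the fixed number $q$ of breaks finishes the proof.

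First, fix $j$ and describe a sub-box of $\mathfrak G_j$ in terms of the endpoints $(L,R)$. I would take the target length $\ell^\circ=\lfloor \bar\delta n\rfloor$ with $\bar\delta=(\delta_{\min}+\delta_W)/2$. Then $m_{\min}\le \ell^\circ\le\delta_W n$ for large $n$, because $m_{\min}/n\to\delta_{\min}<\bar\delta<\delta_W$. I would also place $k_j$ near the middle of $I$. Concretely, consider all $(L,R)$ with
\[
 L\in\bigl[k_j-\tfrac12\ell^\circ-\eta n,\ k_j-\tfrac12\ell^\circ+\eta n\bigr],
 \qquad
 R\in\bigl[k_j+\tfrac12\ell^\circ-\eta n,\ k_j+\tfrac12\ell^\circ+\eta n\bigr],
\]
where $\eta>0$ is a small fixed constant.

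Next, verify that for sufficiently small $\eta$ (depending only on $\varepsilon_0,\delta_{\min},\delta_W,\delta_B,\delta_0$) every such pair lies in $\mathfrak G_j$ for all large $n$. The length condition holds because $|I|=\ell^\circ+O(\eta n)$ stays strictly inside $(m_{\min},\delta_W n)$. The balance condition \eqref{eq:wbs-good-balance} holds because both $k_j-L+1$ and $R-k_j$ equal $\ell^\circ/2+O(\eta n)$, which is at least $\varepsilon_0|I|$ since $\varepsilon_0<1/2$. For the buffer condition \eqref{eq:wbs-good-buffer}, spacing gives $k_j-k_{j-1}\ge\delta_0 n-1$, so
\[
 L-k_{j-1}\ge \delta_0 n-\tfrac12\delta_W n-\eta n-1\ge\delta_B n,
\]
using $\delta_W+\delta_B<\delta_0/2$, which follows from the choice $\delta_R+\delta_W+\delta_B<\delta_0/2$. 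The right buffer is symmetric. For the first and last breaks, $k_0=0$ and $k_{q+1}=n$ play the same role, and the interval also stays inside $\{1,\ldots,n\}$. This deterministic bookkeeping is the only step that needs care with the constants; it works because every constraint is strict.

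Then compute the hitting probability. The box contains at least $(2\eta n)^2(1-o(1))$ ordered pairs $L<R$. Each ordered pair of distinct endpoints is drawn with probability $2/\{n(n-1)\}$. Hence a single WBS interval falls in $\mathfrak G_j$ with probability at least $p_j\ge 8\eta^2(1-o(1))\ge c>0$ for large $n$. By independence across $v$,
\[
 \Pbb(\mathcal I_{M_n}\cap\mathfrak G_j=\varnothing)\le(1-c)^{M_n}\to0,
\]
since $M_n\to\infty$. A union bound over $j=1,\ldots,q$, with $q$ fixed, gives $\Pbb(\bigcup_j\{\mathcal I_{M_n}\cap\mathfrak G_j=\varnothing\})\le q(1-c)^{M_n}\to0$, which is \eqref{eq:wbs-isolation-event}.
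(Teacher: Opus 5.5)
Your proposal is correct and follows essentially the same route as the paper. The paper also shows that a fixed box of endpoint pairs around $k_j$ lies in $\mathfrak G_j$, with your $\bar\delta/2$ and $\eta$ playing the roles of its $d_{\rm iso}$ and $b_{\rm iso}$; this box has probability bounded below by a positive constant under uniform endpoint sampling, and independence over $v$ plus a union bound over the fixed $q$ breaks then give $q(1-c)^{M_n}\to0$.
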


\begin{proof}[Proof of Lemma \ref{lem:wbs-random-isolation}]
Because
\[
 \delta_{\min}<\delta_W,
 \qquad
 \varepsilon_0<\frac12,
\]
we may choose \(d_{\rm iso}\) and \(b_{\rm iso}\) such that
\begin{equation}
\label{eq:wbs-isolation-band-constants}
 2(d_{\rm iso}-b_{\rm iso})>\delta_{\min},
 \qquad
 2(d_{\rm iso}+b_{\rm iso})<\delta_W,
 \qquad
 \frac{d_{\rm iso}-b_{\rm iso}}
      {2(d_{\rm iso}+b_{\rm iso})}>\varepsilon_0.
\end{equation}
Indeed, take
\[
 d_{\rm iso}\in(\delta_{\min}/2,\delta_W/2)
\]
and then choose \(b_{\rm iso}>0\) sufficiently small.

For a fixed \(k_j\), define
\begin{align*}
 \mathcal W_{j,n}^{\rm L}
 &=\left\{L:
 k_j-(d_{\rm iso}+b_{\rm iso})n
 \le L\le
 k_j-(d_{\rm iso}-b_{\rm iso})n\right\},\\
 \mathcal W_{j,n}^{\rm R}
 &=\left\{R:
 k_j+(d_{\rm iso}-b_{\rm iso})n
 \le R\le
 k_j+(d_{\rm iso}+b_{\rm iso})n\right\}.
\end{align*}
For every \((L,R)\in
\mathcal W_{j,n}^{\rm L}\times\mathcal W_{j,n}^{\rm R}\), integer rounding
changes the following inequalities by only \(O(1)\):
\begin{align*}
 2(d_{\rm iso}-b_{\rm iso})n-O(1)
 &\le R-L+1
 \le2(d_{\rm iso}+b_{\rm iso})n+O(1),\\
 k_j-L+1&\ge(d_{\rm iso}-b_{\rm iso})n-O(1),\\
 R-k_j&\ge(d_{\rm iso}-b_{\rm iso})n-O(1).
\end{align*}
By \eqref{eq:wbs-isolation-band-constants}, for all sufficiently large
\(n\),
\[
 m_{\min}\le R-L+1\le\delta_Wn,
 \qquad
 \min\{k_j-L+1,R-k_j\}\ge\varepsilon_0(R-L+1).
\]
Moreover,
\[
 d_{\rm iso}+b_{\rm iso}<\delta_W/2,
 \qquad
 \min_{0\le l\le q}(k_{l+1}-k_l)\ge\delta_0n-O(1).
\]
Assumption~\ref{ass:wbs-design} therefore gives
\[
 L-k_{j-1}\ge\delta_Bn,
 \qquad
 k_{j+1}-R\ge\delta_Bn,
\]
so
\[
 \mathcal W_{j,n}^{\rm L}\times\mathcal W_{j,n}^{\rm R}
 \subseteq\mathfrak G_j.
\]

The two bands satisfy
\[
 |\mathcal W_{j,n}^{\rm L}|=2b_{\rm iso}n+O(1),
 \qquad
 |\mathcal W_{j,n}^{\rm R}|=2b_{\rm iso}n+O(1).
\]
Since each WBS interval is generated by a uniformly sampled unordered pair
of distinct endpoints,
\[
 \Pbb\{[\ell_v^{\rm WBS},u_v^{\rm WBS}]\in\mathfrak G_j\}
 \ge
 \frac{|\mathcal W_{j,n}^{\rm L}|\,
       |\mathcal W_{j,n}^{\rm R}|}{\binom n2}
 \ge p_0
\]
for some constant \(p_0>0\), uniformly in \(j\) and all sufficiently large
\(n\).  Independence over \(v\) yields
\[
 \Pbb\left(
 \bigcup_{j=1}^q
 \{\mathcal I_{M_n}\cap\mathfrak G_j=\varnothing\}
 \right)
 \le q(1-p_0)^{M_n}
 \le q\exp(-p_0M_n)\longrightarrow0.
\]
This is \eqref{eq:wbs-isolation-event}.
\end{proof}

\begin{lemma}[Population separation on a balanced isolated WBS interval]\label{lem:wbs-population-separation}
Under Assumptions~\ref{ass:elliptical}--\ref{ass:grid},
\ref{ass:wbs-design}, and \ref{ass:wbs-signal-tuning}, there are constants
\(c,C>0\) such that the following statements hold for all large \(n\).  If
\(I\in\mathfrak G_j\), then the triple \((L,k_j,R)\) belongs to
\(\mathcal A(I,\varepsilon)\) and
\begin{equation}
\label{eq:wbs-good-signal-lemma}
  D_I(L,k_j,R)\ge c \mathfrak s_{n,j}^{\rm WBS}.
\end{equation}
Moreover, if \(I\in\mathfrak B_j\) and \((a,b,c)\in\mathcal A(I,\varepsilon)\) satisfies
\begin{equation*}
  D_I(a,b,c)\ge \sup_{(u,v,w)\in\mathcal A(I,\varepsilon)}D_I(u,v,w)-u_n
\end{equation*}
for some \(u_n=o(\mathfrak s_{n,\min}^{\rm WBS})\), then
\begin{equation}
\label{eq:wbs-pop-location}
  |b-k_j|/n\le C u_n/\mathfrak s_{n,\min}^{\rm WBS}.
\end{equation}
\end{lemma}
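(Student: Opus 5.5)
This is a deterministic statement about the population component $D_I$. My plan is to reduce it to an explicit one-change CUSUM profile computation, and then use the uniform bounds of Lemma~\ref{lem:ridge-stability} together with \eqref{eq:companion-cov-uniform-bounds} to turn that profile into ridge-signal units.

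\emph{Step 1: admissibility and the lower bound \eqref{eq:wbs-good-signal-lemma}.} Let $I=[L,R]\in\mathfrak G_j$. Then $n_1=k_j-L+1$ and $n_2=R-k_j$, and by \eqref{eq:wbs-good-balance} both are at least $\varepsilon_0|I|\ge\varepsilon|I|$. Hence $(L,k_j,R)\in\mathcal A(I,\varepsilon)$.

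The buffer condition \eqref{eq:wbs-good-buffer} ensures that $I$ contains only the break $k_j$. On this triple, therefore, $\bfDelta_I(L,k_j,R)=\bfDelta_j$ exactly. Moreover $N\ge\varepsilon_0(1-\varepsilon_0)|I|$, so $N/|I|^{1/2}\ge c|I|^{1/2}$. This gives
\[
D_I(L,k_j,R)\ge c\max_k \sqrt{|I|}\,\bfDelta_j^\top\bfD_{\rho_n^{(k)},|I|}\bfDelta_j/\sigma^\circ_{\rho_n^{(k)},|I|}.
\]
Here $|I|\ge m_{\min}\asymp n$, so Lemma~\ref{lem:ridge-stability} (with local aspect ratio $p/|I|$ on a compact range) and \eqref{eq:companion-cov-uniform-bounds} bound the right-hand side below by $c\sqrt n\|\bfDelta_j\|^2$. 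By \eqref{eq:wbs-signal-equivalence} this is at least $c\,\mathfrak s^{\rm WBS}_{n,j}$.

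\emph{Step 2: explicit profile on $\mathfrak B_j$.} Now take $I\in\mathfrak B_j$. Since $k_j$ is the only break in $I$, for any triple $(a,b,c)$ we have $\bfDelta_I(a,b,c)=\vartheta(a,b,c)\bfDelta_j$. Here $\vartheta$ is the difference between the fraction of the right window lying after $k_j$ and the fraction of the left window lying after $k_j$, so it lies in $[0,1]$.

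This factorises the component as
\[
D_{\rho,I}(a,b,c)=\bigl\{N_{a,b,c}\vartheta(a,b,c)^2/|I|\bigr\}\,S_\rho(I),
\qquad
S_\rho(I)=\sqrt{|I|}\,\bfDelta_j^\top\bfD_{\rho,|I|}\bfDelta_j/\sigma^\circ_{\rho,|I|}.
\]
The crucial observation is that $S_\rho(I)$ does not depend on the triple. Consequently the maximum over $k$ and the near-argmax condition both concern only the scalar profile $g(a,b,c)=N\vartheta^2/|I|$, up to the factor $\max_kS_{\rho_n^{(k)}}(I)\asymp\sqrt n\|\bfDelta_j\|^2\asymp\mathfrak s^{\rm WBS}_{n,\min}$ (up to constants depending on $q$ and the ratio of jump sizes, which only affects $C$).

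\emph{Step 3: separation, the main obstacle.} I must show
\[
\sup_{\mathcal A(I,\varepsilon)}g-g(a,b,c)\ge c\,|b-k_j|/n
\]
uniformly over $I\in\mathfrak B_j$. This requires casework on the position of $b$ relative to $k_j$.

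First suppose $b<k_j$. Then the left window lies entirely before the break, so
\[
g=\frac{n_1n_2}{(n_1+n_2)|I|}\left(\frac{c-k_j}{n_2}\right)^2,
\]
which is nonincreasing in $a$. Hence the worst case is the windows $[L,b]$ and $[b+1,R]$, i.e.\ the single-change path. There the computation in \eqref{eq:signal-profile-explicit} and \eqref{eq:signal-profile-separation} applies, after rescaling $t=(b-L)/|I|$ and $\tau=(k_j-L)/|I|$. It yields a gap of at least $\min(\tau,1-\tau)|t-\tau|$.

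The case $b>k_j$ is symmetric. For triples that are not full-length, I will separately show that $g$ is maximised at $a=L$ and $c=R$ for fixed $b$. This follows because moving an endpoint outward adds only same-level observations, which increases $N$ without decreasing $\vartheta^2N$, by a direct derivative check.

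The remaining subtlety is whether $\tau$ stays bounded away from $0$ and $1$ on $\mathfrak B_j$. Only the length and buffer conditions are imposed explicitly there, so I will use the balance condition \eqref{eq:wbs-good-balance}, which $\mathfrak B_j$ keeps from $\mathfrak G_j$, to obtain $\min(\tau,1-\tau)\ge\varepsilon_0$.

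Combining the steps, $\varepsilon_0|b-k_j|/|I|\cdot c\,\mathfrak s^{\rm WBS}_{n,\min}\le u_n$. Since $|I|\le 3\delta_Rn$, this gives \eqref{eq:wbs-pop-location}. The condition $u_n=o(\mathfrak s^{\rm WBS}_{n,\min})$ guarantees that the resulting bound is $o(1)$, consistent with the near-argmax lying in the admissible range.
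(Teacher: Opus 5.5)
Your proposal is correct and follows essentially the paper's route: you factor $D_I(a,b,c)=|I|^{-1}N_{a,b,c}\vartheta(a,b,c)^2\max_k S_{\rho_n^{(k)}}(I)$, use monotonicity of the profile in the outer endpoints for fixed $b$, and obtain a gap of order $\varepsilon_0|b-k_j|$ on the full-length path. The paper computes that gap in closed form as $h n_{\rm R}/(n_{\rm R}+h)$ rather than by rescaling to $\mathfrak h(t;\tau)$, but this is the same calculation. One tightening: you only need the lower bound $\max_k S_{\rho_n^{(k)}}(I)\ge\mathfrak s^{\rm WBS}_{n,j}\ge\mathfrak s^{\rm WBS}_{n,\min}$. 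This follows directly from the infimum definition of $\mathfrak s^{\rm WBS}_{n,j}$ because $I\in\mathfrak B_j$, and Step 1 goes the same way since $\mathfrak G_j\subseteq\mathfrak B_j$. So no two-sided comparison involving the ratio of jump sizes is needed, which matters because the assumptions do not bound that ratio.
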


\begin{proof}[Proof of Lemma \ref{lem:wbs-population-separation}]
Let \(I=[L,R]\in\mathfrak G_j\) and set
\[
 n_{\rm L}=k_j-L+1,
 \qquad
 n_{\rm R}=R-k_j,
 \qquad
 m=n_{\rm L}+n_{\rm R}=|I|.
\]
By \eqref{eq:wbs-good-balance},
\[
 n_{\rm L}\wedge n_{\rm R}\ge\varepsilon_0m,
 \qquad
 (L,k_j,R)\in\mathcal A(I,\varepsilon),
\]
and the isolated population contrast at \((L,k_j,R)\) equals
\(\bfDelta_j\).  Hence
\begin{align*}
 N_{L,k_j,R}
 &=\frac{n_{\rm L}n_{\rm R}}{m}
 \ge\varepsilon_0^2m,\\
 D_I(L,k_j,R)
 &\ge\varepsilon_0^2
 \max_{1\le k\le K}
 \frac{\sqrt m\,\bfDelta_j^\top
       \bfD_{\rho_n^{(k)},m}\bfDelta_j}
      {\sigma_{\rho_n^{(k)},m}^{\circ}}
 \ge c\mathfrak s_{n,j}^{\rm WBS}.
\end{align*}
This proves \eqref{eq:wbs-good-signal-lemma}.

Now let \(I\in\mathfrak B_j\).  The buffer conditions imply that \(k_j\)
is the only change in \(I\).  For
\((a,b,c)\in\mathcal A(I,\varepsilon)\), write
\[
 \bfDelta_I(a,b,c)
 =d_I(a,b,c;k_j)\bfDelta_j,
 \qquad
 \mathcal P_I^{\rm pop}(a,b,c)
 =N_{a,b,c}d_I(a,b,c;k_j)^2.
\]
At \(b=k_j\), put
\[
 x=k_j-a+1,
 \qquad
 y=c-k_j.
\]
Then
\[
 d_I(a,k_j,c;k_j)=1,
 \qquad
 \mathcal P_I^{\rm pop}(a,k_j,c)=\frac{xy}{x+y},
\]
and
\[
 \partial_x\frac{xy}{x+y}=\frac{y^2}{(x+y)^2}>0,
 \qquad
 \partial_y\frac{xy}{x+y}=\frac{x^2}{(x+y)^2}>0.
\]
Therefore
\[
 \sup_{a,c}\mathcal P_I^{\rm pop}(a,k_j,c)
 =\mathcal P_I^{\rm pop}(L,k_j,R)
 =:\mathcal P_{I,0}^{\rm pop}
 =\frac{n_{\rm L}n_{\rm R}}{n_{\rm L}+n_{\rm R}}.
\]

If \(b=k_j-h<k_j\), let
\[
 x=c-k_j,
 \qquad
 u=b-a+1.
\]
The left window is entirely pre-change and
\[
 \mathcal P_I^{\rm pop}(a,b,c)
 =\frac{u x^2}{(h+x)(u+h+x)}.
\]
Its partial derivatives satisfy
\begin{align*}
 \partial_u\frac{u x^2}{(h+x)(u+h+x)}
 &=\frac{x^2}{(u+h+x)^2}>0,\\
 \partial_x\log\frac{u x^2}{(h+x)(u+h+x)}
 &=\frac2x-\frac1{h+x}-\frac1{u+h+x}>0.
\end{align*}
Thus the maximum over \(a,c\) is attained at \((L,R)\), and
\begin{align*}
 \mathcal P_{I,-}^{\rm pop}(h)
 &:=\sup_{a,c}\mathcal P_I^{\rm pop}(a,k_j-h,c)\\
 &=\frac{(n_{\rm L}-h)n_{\rm R}^2}
 {(n_{\rm R}+h)(n_{\rm L}+n_{\rm R})},\\
 \mathcal P_{I,0}^{\rm pop}-\mathcal P_{I,-}^{\rm pop}(h)
 &=\frac{h n_{\rm R}}{n_{\rm R}+h}
 \ge\varepsilon_0h.
\end{align*}
If \(b=k_j+h>k_j\), the symmetric calculation gives
\begin{align*}
 \mathcal P_{I,+}^{\rm pop}(h)
 &:=\sup_{a,c}\mathcal P_I^{\rm pop}(a,k_j+h,c)\\
 &=\frac{(n_{\rm R}-h)n_{\rm L}^2}
 {(n_{\rm L}+h)(n_{\rm L}+n_{\rm R})},\\
 \mathcal P_{I,0}^{\rm pop}-\mathcal P_{I,+}^{\rm pop}(h)
 &=\frac{h n_{\rm L}}{n_{\rm L}+h}
 \ge\varepsilon_0h.
\end{align*}
Consequently,
\begin{equation}
\label{eq:wbs-single-pop-separation-proof}
 \sup_{(u,v,w)\in\mathcal A(I,\varepsilon)}
 N_{u,v,w}d_I(u,v,w;k_j)^2
 -N_{a,b,c}d_I(a,b,c;k_j)^2
 \ge\varepsilon_0|b-k_j|.
\end{equation}

Define the interval-specific signal factor
\[
 \mathfrak s_{j,I}^{\rm WBS}
 =\max_{1\le k\le K}
 \frac{\sqrt m\,\bfDelta_j^\top
       \bfD_{\rho_n^{(k)},m}\bfDelta_j}
      {\sigma_{\rho_n^{(k)},m}^{\circ}}.
\]
Because the ridge-direction factor is independent of \((a,b,c)\),
\[
 D_I(a,b,c)
 =m^{-1}\mathfrak s_{j,I}^{\rm WBS}
  \mathcal P_I^{\rm pop}(a,b,c).
\]
Moreover,
\[
 \mathfrak s_{j,I}^{\rm WBS}\ge\mathfrak s_{n,j}^{\rm WBS}
 \ge\mathfrak s_{n,\min}^{\rm WBS},
 \qquad
 m\le3\delta_Rn.
\]
Combining these relations with
\eqref{eq:wbs-single-pop-separation-proof} gives
\[
 \sup_{(u,v,w)\in\mathcal A(I,\varepsilon)}D_I(u,v,w)
 -D_I(a,b,c)
 \ge c\mathfrak s_{n,\min}^{\rm WBS}
       \frac{|b-k_j|}{n}.
\]
Hence
\[
 \sup_{(u,v,w)}D_I(u,v,w)-D_I(a,b,c)\le u_n
 \quad\Longrightarrow\quad
 \frac{|b-k_j|}{n}
 \le C\frac{u_n}{\mathfrak s_{n,\min}^{\rm WBS}},
\]
which is \eqref{eq:wbs-pop-location}.
\end{proof}

\begin{lemma}[Refinement-window geometry]\label{lem:wbs-refinement-geometry}
Suppose Assumption~\ref{ass:wbs-design} holds.  Let a recursive
segment \([\ell,r]\) contain \(k_j\) and satisfy, for all large \(n\),
\begin{equation}
\label{eq:wbs-current-margin}
 \min\{k_j-\ell+1,r-k_j\}
 \ge (\delta_R+\delta_W+\delta_B)n.
\end{equation}
If an interval \(I^*\subset[\ell,r]\) contains exactly \(k_j\),
\(|I^*|\le\delta_Wn\), and \(\widetilde k\in I^*\), then the refinement
interval \(B(\widetilde k;\ell,r)\) is not clipped, belongs to
\(\mathfrak B_j\), and contains no change other than \(k_j\).
\end{lemma}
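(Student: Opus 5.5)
The argument is purely deterministic interval arithmetic, and I would organize it around the endpoints of \(B(\widetilde k;\ell,r)=[\max\{\ell,\widetilde k-g_n\},\min\{r,\widetilde k+g_n\}]\) with \(g_n=\lfloor\delta_Rn\rfloor\).

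The first step is to locate \(\widetilde k\). Since \(I^*\) contains \(k_j\) and \(\widetilde k\in I^*\), with \(|I^*|\le\delta_Wn\), we have \(|\widetilde k-k_j|\le\delta_Wn\).

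The second step is to show that \(B\) is not clipped. Using \eqref{eq:wbs-current-margin},
\[
\widetilde k-g_n\ge k_j-\delta_Wn-\delta_Rn\ge \ell-1+\delta_Bn>\ell,
\]
and symmetrically \(\widetilde k+g_n\le r-\delta_Bn<r\). Hence \(B=[\widetilde k-g_n,\widetilde k+g_n]\), and \(|B|=2g_n+1\).

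The third step establishes membership in \(\mathfrak B_j\) by checking each condition in turn.

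\emph{Length.} We have \(|B|=2\lfloor\delta_Rn\rfloor+1\le3\delta_Rn\) for large \(n\). For the lower bound, \(m_{\min}/n\to\delta_{\min}<(1-2\varepsilon)\delta_R<2\delta_R\), so \(|B|\ge m_{\min}\) eventually.

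\emph{Containment of \(k_j\) and balance.} Both \(k_j-(\widetilde k-g_n)+1\) and \((\widetilde k+g_n)-k_j\) are at least \(g_n-|\widetilde k-k_j|\ge\delta_Rn-\delta_Wn-1\). Balance \eqref{eq:wbs-good-balance} needs this to exceed \(\varepsilon_0|B|\approx2\varepsilon_0\delta_Rn\). That is, it needs \(\delta_W<(1-2\varepsilon_0)\delta_R\), which holds by the choice of \(\delta_W\), with a strict margin absorbing the \(O(1)\) rounding. The same bound shows \(k_j\in B\).

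\emph{Buffer.} The margin condition puts \(\ell\ge k_{j-1}\)-side constraints only indirectly, so I would instead use the spacing \eqref{eq:wbs-spacing}. The left endpoint satisfies \(\widetilde k-g_n\ge k_j-(\delta_W+\delta_R)n-1\), so
\[
(\widetilde k-g_n)-k_{j-1}\ge\delta_0n-O(1)-(\delta_W+\delta_R)n\ge\delta_Bn,
\]
because \(\delta_R+\delta_W+\delta_B<\delta_0/2<\delta_0\). The right endpoint is symmetric.

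The last step follows from the buffer bound: \(B\) lies strictly between \(k_{j-1}\) and \(k_{j+1}\), so \(k_j\) is its only change.

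The main (mild) obstacle is the balance inequality. It is the one place where the strict inequality \(\delta_W<(1-2\varepsilon_0)\delta_R\) is used, and the integer rounding of \(g_n\) and of the interval endpoints must be absorbed by taking \(n\) large. Everything else consists of linear inequalities with slack of order \(n\).
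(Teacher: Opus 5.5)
Your proposal is correct and follows essentially the same route as the paper's proof. Both arguments bound $|\widetilde k-k_j|\le\delta_W n$, use the margin condition to show $B$ is unclipped with $|B|=2g_n+1$, obtain balance from the strict inequality $\delta_W<(1-2\varepsilon_0)\delta_R$, and derive the buffer (hence the single-change property) from the spacing $\delta_0$ together with $\delta_R+\delta_W+\delta_B<\delta_0/2$. Your explicit check that $|B|\ge m_{\min}$, via $\delta_{\min}<2\delta_R$, only fills in a step the paper asserts without detail.
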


\begin{proof}
Since \(k_j,\widetilde k\in I^*\) and \(|I^*|\le\delta_Wn\),
\[
 |\widetilde k-k_j|\le\delta_Wn.
\]
Using \eqref{eq:wbs-current-margin} and
\(g_n=\lfloor\delta_Rn\rfloor\),
\begin{align*}
 \widetilde k-g_n-\ell
 &\ge k_j-\ell-(\delta_R+\delta_W)n-O(1)
 \ge\delta_Bn-O(1)>0,\\
 r-(\widetilde k+g_n)
 &\ge r-k_j-(\delta_R+\delta_W)n-O(1)
 \ge\delta_Bn-O(1)>0.
\end{align*}
Thus the refinement interval is not clipped and
\[
 B=[\widetilde k-g_n,\widetilde k+g_n],
 \qquad
 |B|=2g_n+1,
 \qquad
 m_{\min}\le|B|\le3\delta_Rn
\]
for all sufficiently large \(n\).  Its two distances from \(k_j\) satisfy
\begin{align*}
 k_j-(\widetilde k-g_n)+1
 &\ge(\delta_R-\delta_W)n-O(1),\\
 \widetilde k+g_n-k_j
 &\ge(\delta_R-\delta_W)n-O(1).
\end{align*}
Since
\[
 \delta_W<(1-2\varepsilon_0)\delta_R,
\]
we have
\[
 (\delta_R-\delta_W)n-O(1)
 \ge\varepsilon_0(2g_n+1)=\varepsilon_0|B|
\]
for all large \(n\).  Furthermore, every endpoint of \(B\) is at distance
at most
\[
 (\delta_R+\delta_W)n+O(1)
\]
from \(k_j\).  The spacing and buffer inequalities give
\[
 \min_{0\le l\le q}(k_{l+1}-k_l)
 \ge\delta_0n-O(1),
 \qquad
 \delta_R+\delta_W+\delta_B<\delta_0/2,
\]
and hence
\[
 \inf B-k_{j-1}\ge\delta_Bn,
 \qquad
 k_{j+1}-\sup B\ge\delta_Bn.
\]
Therefore \(B\) contains no change other than \(k_j\) and satisfies all
defining inequalities of \(\mathfrak B_j\), so \(B\in\mathfrak B_j\).
\end{proof}

\begin{lemma}[Uniform local envelope and fixed-
\(q\) multiple-jump expansion]\label{lem:wbs-envelope}
Let
\begin{equation*}
 \mathcal J_n=\{[u,v]:1\le u\le v\le n,\ v-u+1\ge m_{\min}\}.
\end{equation*}
For \(I\in\mathcal J_n\), let
\(Z_{\rho,I}^{(0)}(a,b,c)\) denote the fully feasible statistic obtained by
applying exactly the construction in Section~\ref{sec:calibration} to the
centered errors \(\{\bfvarepsilon_i:i\in I\}\), with pool \(I\).  For
\((a,b,c)\in\mathcal A(I,\varepsilon)\), define its centered-error median
contrast by
\[
 \widehat\bfDelta_I^{(0)}(a,b,c)
 =\widehat\bftheta^{(0)}([b+1,c])
  -\widehat\bftheta^{(0)}([a,b]).
\]
When the candidate arguments are suppressed, write
\(\widehat\bfR_I\), \(\widehat\bfQ_{\rho,I}\), and
\(\widehat\sigma_{\rho,I}\) for the local feasible SSCM, ridge inverse,
and standard deviation constructed from the observed sample with common
scatter pool \(I\).  Their centered-error counterparts are denoted by
\(\widehat\bfR_I^{(0)}\), \(\widehat\bfQ_{\rho,I}^{(0)}\), and
\(\widehat\sigma_{\rho,I}^{(0)}\).  Also write
\(V_{\rho,I}^{\rm raw,(0)}\), \(\widehat\kappa_{\rho,I}^{(0)}\), and
\(\widehat\sigma_{\rho,I}^{(0),2}\) for, respectively, the raw statistic,
centering term, and variance obtained from this centered-error construction;
\(\widehat\sigma_{\rho,I}^{(0)}\) is the nonnegative square root of the last
quantity.  Use
\(\widetilde V_{\rho,I}^{0}\), \(\kappa_{\rho,I}^{0}\),
and \(\bfQ_{\rho,I}^{0}\) for the corresponding local-pool oracle quantities
obtained by applying the oracle definitions at the beginning of this appendix
with common scatter pool \(I\) and adjacent windows \([a,b]\) and
\([b+1,c]\).  To make the candidate dependence explicit, put
\begin{align*}
 e_{1,I}(a,b,c)&=n_1^{-1}\sum_{i=a}^{b}w_i,
 &e_{2,I}(a,b,c)&=n_2^{-1}\sum_{i=b+1}^{c}w_i,\\
 \beta_{i,I}(a,b,c)
 &=\sqrt{N_{a,b,c}}
 \left\{
 \frac{\ind(b+1\le i\le c)}{n_2e_{2,I}(a,b,c)}
 -\frac{\ind(a\le i\le b)}{n_1e_{1,I}(a,b,c)}
 \right\},& & i\in I,
\end{align*}
where \(\beta_{i,I}(a,b,c)=0\) for indices in the scatter pool outside the two
candidate windows.  Define
\begin{equation*}
 \bfB_{I,a,b,c}=\sum_{i\in I}\beta_{i,I}(a,b,c)\bfY_i,
 \qquad
 \widetilde A_{ij,\rho,I}
 =\varsigma_i\varsigma_j A_{ij,\rho,I}^{0},
\end{equation*}
where \(\bfY_I\) has columns \(\bfY_i\), \(i\in I\), and
\(A_{ij,\rho,I}^{0}\) are the entries of the local oracle companion matrix
\begin{equation*}
 \bfA_{\rho,I}^{0}
 =|I|^{-1}\bfY_I^\top\bfQ_{\rho,I}^{0}\bfY_I.
\end{equation*}
Equivalently,
\begin{equation*}
 \widetilde V_{\rho,I}^{0}(a,b,c)
 =|I|\sum_{i,j\in I}\beta_{i,I}(a,b,c)\beta_{j,I}(a,b,c)
        A_{ij,\rho,I}^{0},
 \qquad
 \kappa_{\rho,I}^{0}(a,b,c)
 =\sum_{i\in I}\beta_{i,I}(a,b,c)^2A_{ii,\rho,I}^{0}.
\end{equation*}
Thus \(\widetilde A_{ij,\rho,I}\) is measurable with respect to
\(\calF_I^0\).  For the structural Rademacher signs restricted to the pool,
write \(\bfvarsigma_I=(\varsigma_i:i\in I)^\top\).  Also set
\begin{equation*}
 \bar Z_{\rho,I}^{0}(a,b,c)
 =\frac{\widetilde V_{\rho,I}^{0}(a,b,c)
       -|I|\kappa_{\rho,I}^{0}(a,b,c)}
       {|I|^{1/2}\sigma_{\rho,|I|}^{\circ}},
\end{equation*}
where the numerator is the centered-error oracle Rademacher quadratic form.
Under \(H_0\), Assumptions~\ref{ass:elliptical}--\ref{ass:grid}, and
Assumption~\ref{ass:wbs-design},
\begin{align}
 \max_{I\in\mathcal J_n}\max_{1\le k\le K}
 \max_{(a,b,c)\in\mathcal A(I,\varepsilon)}
 |\bar Z_{\rho_n^{(k)},I}^{0}(a,b,c)|
 &=O_P(\mathfrak z_n^{\rm WBS}),
 \label{eq:wbs-oracle-envelope}\\
 \max_{I\in\mathcal J_n}\max_{1\le k\le K}
 \max_{(a,b,c)\in\mathcal A(I,\varepsilon)}
 |Z_{\rho_n^{(k)},I}^{(0)}(a,b,c)
  -\bar Z_{\rho_n^{(k)},I}^{0}(a,b,c)|
 &=o_P(1).
 \label{eq:wbs-local-feasible-transfer}
\end{align}
Consequently,
\begin{equation}
\label{eq:wbs-proof-null-envelope}
 \max_{I\in\mathcal J_n}\max_{1\le k\le K}
 \max_{(a,b,c)\in\mathcal A(I,\varepsilon)}
 |Z_{\rho_n^{(k)},I}^{(0)}(a,b,c)|=O_P(\mathfrak z_n^{\rm WBS}).
\end{equation}

Under the fixed-\(q\) multiple-change model and Assumptions
\ref{ass:elliptical}--\ref{ass:grid} and
\ref{ass:wbs-design}--\ref{ass:wbs-signal-tuning}, uniformly over the same
local collection,
\begin{equation}
\label{eq:wbs-uniform-expansion}
 Z_{\rho,I}(a,b,c)
 =Z_{\rho,I}^{(0)}(a,b,c)+D_{\rho,I}(a,b,c)
  +R_{\rho,I}(a,b,c),
\end{equation}
where every maximum over \(\rho\) is taken over the finite grid \(\calR_K\), and
\begin{equation}
\label{eq:wbs-uniform-remainder}
 \max_{I,\rho,a,b,c}|R_{\rho,I}(a,b,c)|
 =O_P\{\mathfrak s_{n,\max}^{\rm WBS}\mathfrak e_n^{\rm WBS}\}
  +O_P(\mathfrak z_n^{\rm WBS}r_n)+o_P(1)
 =O_P\{\mathfrak s_{n,\max}^{\rm WBS}\mathfrak e_n^{\rm WBS}\}+o_P(1).
\end{equation}
\end{lemma}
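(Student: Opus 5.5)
The statement has two parts: a null-envelope claim (\eqref{eq:wbs-oracle-envelope}--\eqref{eq:wbs-proof-null-envelope}) and a signal-plus-remainder expansion (\eqref{eq:wbs-uniform-expansion}--\eqref{eq:wbs-uniform-remainder}). I would prove them in that order.

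\textbf{Null envelope.} The first step is a union bound over a polynomial collection. The set of triples \((I,k,(a,b,c))\) has cardinality at most \(Kn^5\), since \(I\) has two integer endpoints and \((a,b,c)\) three. For each fixed triple, condition on \(\calF_I^0\). Then \(|I|^{1/2}\sigma_{\rho,|I|}^{\circ}\bar Z_{\rho,I}^0(a,b,c)=\bfvarsigma_I^\top\bfC\bfvarsigma_I\), where \(\bfC\) is the zero-diagonal matrix with entries \(|I|\beta_{i,I}\beta_{j,I}\widetilde A_{ij,\rho,I}\). Exactly as in Lemma~\ref{lem:clt}, the contraction \eqref{eq:A-contraction} and \(\max_i\beta_{i,I}^2\le C|I|^{-1}\) give \(\opnorm{\bfC}\le C\) and \(\|\bfC\|_F^2\le C|I|\). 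This uses the truncation event \(\max_i w_i\le C\ell_{0,n}\), together with the lower bound \(\inf_I e_I\ge\zeta_{-1}/2\) of Lemma~\ref{lem:angular-w-xi}, which extends to every \(I\in\mathcal J_n\) because \(m_{\min}\asymp n\).

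The Hanson--Wright inequality for Rademacher chaos then gives the conditional tail
\[
\Pbb(|\bar Z|>x\mid\calF_I^0)\le 2\exp\{-c\min(x^2,x\sqrt{|I|})\},
\]
using \(\sigma^\circ\ge c\) from \eqref{eq:companion-cov-uniform-bounds}. Taking \(x=A\mathfrak z_n^{\rm WBS}\) with \(A\) large and summing over the \(C_0Kn^5\) indices yields \eqref{eq:wbs-oracle-envelope}. This is exactly why \(\mathfrak z_n^{\rm WBS}\) is defined through \(\log(C_0Kn^5)\).

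For \eqref{eq:wbs-local-feasible-transfer}, I would rerun Proposition~\ref{prop:raw-score} and Lemmas~\ref{lem:score-resolvent}, \ref{lem:companion-cancellation} and \ref{lem:feasible-companion} with the local pool \(I\) in place of \(J(s)\). Every one of these bounds was proved uniformly over polynomially many trimmed pools with \(m\asymp n\), so each remainder is \(O_P(\ell_n|I|^{-1/2})=o_P(1)\). Replacing \(\sigma_\rho^{0}\) by \(\sigma^{\circ}_{\rho,|I|}\) uses \eqref{eq:oracle-sigma-proxy-rate} times the \(O_P(\mathfrak z_n^{\rm WBS})\) envelope. That product is \(O_P(\ell_n\sqrt{\log n}\,n^{-1/2})=o_P(1)\). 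Combining the two displays gives \eqref{eq:wbs-proof-null-envelope}.

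\textbf{Expansion under the fixed-\(q\) model.} Next I would generalize Lemma~\ref{lem:alternative-expansion} from one jump to \(q\) jumps. Inside each window and pool, set \(\bfa_i=\bftheta_i-\bar\bftheta_I\), so that \(\|\bfa_i\|\le qr_n\) and \(\ell_nr_n\to0\) by Assumption~\ref{ass:wbs-signal-tuning}. The convexity and Taylor argument then applies verbatim and gives \(\widehat\bfDelta=\widehat\bfDelta^{(0)}+\bfDelta_I(a,b,c)+\bfr^{\rm alt}\), with the remainder rates of \eqref{eq:alternative-median-reduction} in which \(r_{\Delta,n}\) is replaced by \(r_n\). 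The scatter and resolvent perturbations \eqref{eq:alt-scatter-op}--\eqref{eq:alt-mixed-resolvent-perturbation} transfer in the same way with \(r_n\).

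Expanding \(V^{\rm raw}\) as in the proof of Theorem~\ref{thm:power} yields four pieces:
\begin{itemize}
\item the centered statistic \(Z^{(0)}_{\rho,I}\);
\item the signal term, \(N\bfDelta_I^\top\widehat\bfQ_{\rho,I}\bfDelta_I/(|I|^{1/2}\widehat\sigma)\), which \eqref{eq:centered-det-de} and the ratio \(\widehat\sigma/\sigma^\circ=1+O_P(\mathfrak e_n^{\rm WBS})\) turn into \(D_{\rho,I}\) up to an error \(O_P(\mathfrak s_{n,\max}^{\rm WBS}\mathfrak e_n^{\rm WBS})\), because the population component is at most \(C\mathfrak s_{n,\max}^{\rm WBS}\);
\item the cross term;
\item perturbations of centering and variance, which are \(o_P(1)\) times \(Z^{(0)}\) or are absorbed into the same error.
\end{itemize}

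\textbf{Main obstacle: the uniform cross term.} This is \(2N\bfDelta_I^\top\widehat\bfQ\widehat\bfDelta^{(0)}/(|I|^{1/2}\sigma)\), taken over all \(n^5\) candidates. Here \(\bfDelta_I(a,b,c)\) is a deterministic linear combination of at most \(q\) fixed jump vectors \(\bfDelta_j\), with coefficients bounded uniformly in the candidate. So I would write the cross term as \(\sum_j c_j(a,b,c)\,\bfDelta_j^\top\widehat\bfQ_{\rho,I}\bfB_{I,a,b,c}\), up to transferable remainders.

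Conditionally on \(\calF_I^0\), each \(\bfDelta_j^\top\bfQ^0_{\rho,I}\bfB\) is a Rademacher linear form whose variance is at most \(C\|\bfDelta_j\|^2\). Hoeffding's inequality plus the union bound then gives \(O_P(\mathfrak z_n^{\rm WBS}r_n)\). Passing from \(\bfQ^0\) to \(\widehat\bfQ\) uses \eqref{eq:centered-mixed-transfer} and \eqref{eq:alt-mixed-resolvent-perturbation}. Finally, \(\mathfrak z_n^{\rm WBS}r_n\to0\) because \(\ell_nr_n\to0\), which collapses the middle term in \eqref{eq:wbs-uniform-remainder}. The delicate point is checking that every remainder in the single-jump lemma was in fact derived with polynomial-union exponential tails, so that it survives the \(n^5\) union. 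That holds for every bound that came from hypercontractivity with an \(\ell_{0,n}\) margin; I would recheck the \(\bfM_{2,J}\) projections in \eqref{eq:alternative-M2-projections} specifically.
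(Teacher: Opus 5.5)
Your route is the paper's. It uses a conditional Hanson--Wright bound plus a union over the at most $Kn^5$ candidate--ridge pairs for \eqref{eq:wbs-oracle-envelope}, and a rerun of the null reductions on local pools for \eqref{eq:wbs-local-feasible-transfer}. It then uses a $q$-jump version of Lemma~\ref{lem:alternative-expansion}, with a sub-Gaussian Rademacher linear-form bound plus union bound for the cross term. Splitting $\bfDelta_I(a,b,c)$ into the $q$ fixed jumps is harmless but unnecessary, since the transfer lemmas already allow polynomially many deterministic vectors. In the null part you are more explicit than the paper, because you multiply the $O_P(\ell_nn^{-1/2})$ variance error by the $O_P(\mathfrak z_n^{\rm WBS})$ envelope.

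The step that fails as written is your fourth bullet, where the studentization error on the centered-error component is called ``$o_P(1)$ times $Z^{(0)}$''. Over the local family, $\max_{I,\rho,a,b,c}|Z^{(0)}_{\rho,I}(a,b,c)|$ is only $O_P(\mathfrak z_n^{\rm WBS})$, and $\mathfrak z_n^{\rm WBS}\to\infty$. The tightness that made such a product negligible on the single-change path (via Theorem~\ref{thm:sc-process}) is not available here. Quantitatively $\widehat\sigma_{\rho,I}/\widehat\sigma^{(0)}_{\rho,I}-1=O_P(\mathfrak e_n^{\rm WBS})$, so the term is $O_P(\mathfrak z_n^{\rm WBS}\mathfrak e_n^{\rm WBS})$. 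Because $\mathfrak e_n^{\rm WBS}$ contains $\ell_nr_n$, which is only assumed to be $o(1)$, this product is neither automatically $o_P(1)$ nor visibly inside $O_P(\mathfrak s_{n,\max}^{\rm WBS}\mathfrak e_n^{\rm WBS})+O_P(\mathfrak z_n^{\rm WBS}r_n)$.

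The paper closes this with a case split. If $\mathfrak s_{n,\max}^{\rm WBS}\ge\mathfrak z_n^{\rm WBS}$, the term is at most $\mathfrak s_{n,\max}^{\rm WBS}\mathfrak e_n^{\rm WBS}$. Otherwise \eqref{eq:wbs-signal-equivalence} forces $r_n^2\le C\mathfrak z_n^{\rm WBS}n^{-1/2}$, whence $\mathfrak z_n^{\rm WBS}\mathfrak e_n^{\rm WBS}\le C\{\ell_n(\mathfrak z_n^{\rm WBS})^{3/2}n^{-1/4}+(\mathfrak z_n^{\rm WBS})^2n^{-1/2}+\ell_n\mathfrak z_n^{\rm WBS}n^{-1/2}\}=o(1)$. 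Under Assumption~\ref{ass:wbs-signal-tuning} the second case cannot occur for large $n$, since $\sqrt{\log n}=o(\mathfrak t_n^{\rm WBS})$ and $\mathfrak t_n^{\rm WBS}=o(\underline{\mathfrak s}_n^{\rm WBS})$. So you could also close the gap in one line by invoking the tuning hierarchy.

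Without some such step your argument only gives a remainder of order $o_P(\mathfrak z_n^{\rm WBS})$. That would still be absorbed by the threshold in the proof of Theorem~\ref{thm:wbs-consistency}, but it is weaker than the claimed \eqref{eq:wbs-uniform-remainder}.
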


\begin{proof}
Let
\[
 \mathcal N_n
 =K\sum_{I\in\mathcal J_n}|\mathcal A(I,\varepsilon)|
 \le Kn^5.
\]
Every candidate window contains at least
\(\varepsilon m_{\min}\asymp n\) observations.  The uniform bounds for
inverse distances, spatial medians, companion matrices, diagonal cancellation,
and off-diagonal terms therefore apply to this polynomial collection:
\begin{align}
 &\max_{I,\rho,a,b,c}
 \frac{|V_{\rho,I}^{\rm raw,(0)}-
              \widetilde V_{\rho,I}^{0}|}{|I|^{1/2}}
 =O_P(\ell_nm_{\min}^{-1/2}),\notag\\
 &\max_{I,\rho,a,b,c}|I|^{1/2}
 |\widehat\kappa_{\rho,I}^{(0)}-\kappa_{\rho,I}^{0}|
 =O_P(\ell_nm_{\min}^{-1/2}),\notag\\
 &\max_{I,\rho,a,b,c}
 |\widehat\sigma_{\rho,I}^{(0),2}
       -\sigma_{\rho,|I|}^{\circ2}|
 =O_P(\ell_nm_{\min}^{-1/2}).
 \label{eq:wbs-local-feasible-rates}
\end{align}
The deterministic variance proxies satisfy
\[
 0<c_\sigma
 \le\inf_{I,\rho}\sigma_{\rho,|I|}^{\circ2}
 \le\sup_{I,\rho}\sigma_{\rho,|I|}^{\circ2}
 \le C_\sigma<\infty.
\]
Since \(m_{\min}\asymp n\) and \(\ell_n=o(n^{1/2})\),
\eqref{eq:wbs-local-feasible-rates} implies
\[
 \max_{I,\rho,a,b,c}
 |Z_{\rho,I}^{(0)}(a,b,c)-\bar Z_{\rho,I}^{0}(a,b,c)|
 =o_P(1),
\]
which is \eqref{eq:wbs-local-feasible-transfer}.

Conditionally on \(\calF^0\),
\[
 \bar Z_{\rho,I}^{0}(a,b,c)
 =\bfvarsigma_I^\top
  \bfC_{\rho,I,a,b,c}\bfvarsigma_I,
\]
where the diagonal of \(\bfC_{\rho,I,a,b,c}\) is zero and
\[
 C_{ij,\rho,I,a,b,c}
 =\frac{|I|^{1/2}\beta_{i,I}(a,b,c)
 \beta_{j,I}(a,b,c)\widetilde A_{ij,\rho,I}}
 {\sigma_{\rho,|I|}^{\circ}},
 \qquad i\ne j.
\]
The trimming bounds and companion contraction yield
\[
 \tr(\bfC_{\rho,I,a,b,c}^2)\le C,
 \qquad
 \|\bfC_{\rho,I,a,b,c}\|_{\rm op}
 \le Cm_{\min}^{-1/2}.
\]
Hence the conditional Hanson--Wright inequality gives
\[
 \Pbb_{\varsigma}
 \{ |\bar Z_{\rho,I}^{0}(a,b,c)|>x\}
 \le2\exp\{-c\min(x^2,xm_{\min}^{1/2})\}.
\]
With \(x=A\mathfrak z_n^{\rm WBS}\), where
\(\mathfrak z_n^{\rm WBS}=o(m_{\min}^{1/2})\) by
\eqref{eq:wbs-proof-stochastic-envelope} and \(m_{\min}\asymp n\), and with
\(A\) sufficiently large,
\[
 \Pbb_{\varsigma}\left
 \{\max_{I,\rho,a,b,c}
 |\bar Z_{\rho,I}^{0}(a,b,c)|
 >A\mathfrak z_n^{\rm WBS}\}
 \middle|\calF^0\right)
 \le2Kn^5e^{-cA^2(\mathfrak z_n^{\rm WBS})^2}
 \longrightarrow0.
\]
This proves \eqref{eq:wbs-oracle-envelope}; together with
\eqref{eq:wbs-local-feasible-transfer}, it gives
\eqref{eq:wbs-proof-null-envelope}.

Under the fixed-\(q\) alternative, write
\[
 \bftheta_i=\bftheta^{(1)}+
 \sum_{j=1}^q\ind(i>k_j)\bfDelta_j,
 \qquad
 \bar\vartheta_{j,A}^{\rm ch}
 =|A|^{-1}\sum_{i\in A}\ind(i>k_j),
 \qquad
 \bar\bftheta_A=|A|^{-1}\sum_{i\in A}\bftheta_i.
\]
For every scanned segment \(A\),
\begin{equation}
\label{eq:wbs-offset-decomposition}
 \bfa_{i,A}
 =\bftheta_i-\bar\bftheta_A
 =\sum_{j=1}^q
 \{\ind(i>k_j)-\bar\vartheta_{j,A}^{\rm ch}\}\bfDelta_j,
 \qquad
 |A|^{-1}\sum_{i\in A}\bfa_{i,A}=\mathbf0.
\end{equation}
Consequently,
\[
 \max_{i\in A}\|\bfa_{i,A}\|
 \le\sum_{j=1}^q\|\bfDelta_j\|
 \le qr_n.
\]
Every first- and second-order Taylor component generated by
\eqref{eq:wbs-offset-decomposition} has the form
\[
 \sum_{j=1}^q\mathcal T_{1,j},
 \qquad\text{or}\qquad
 \sum_{j,l=1}^q\mathcal T_{2,jl},
\]
and hence contains at most \(q\) or \(q^2\) terms.  Since \(q\) is fixed,
the bounds in Lemma~\ref{lem:alternative-expansion} apply uniformly with
\(r_n\) in place of \(r_{\Delta,n}\).  Thus
\begin{equation}
\label{eq:wbs-multijump-median-expansion}
 \widehat\bftheta(A)
 =\widehat\bftheta^{(0)}(A)+\bar\bftheta_A+\bfr_A,
 \qquad
 \begin{aligned}
 r_{*,n}:=\max_A\|\bfr_A\|
 &=O_P(\ell_nm_{\min}^{-1})
   +O_P(r_nm_{\min}^{-1/2})\\
 &\quad+O(r_n^2p^{-1/2})
   +O_P(\ell_np^{-1}r_n).
 \end{aligned}
\end{equation}
For a candidate triple,
\[
 \widehat\bfDelta_I(a,b,c)
 =\widehat\bfDelta_I^{(0)}(a,b,c)
  +\bfDelta_I(a,b,c)+\bfr_{\Delta,I}(a,b,c),
 \qquad
 \max_{I,a,b,c}\|\bfr_{\Delta,I}(a,b,c)\|
 \le2r_{*,n}.
\]
The same expansion gives
\begin{align*}
 \max_I\|\widehat\bfR_I-
             \widehat\bfR_I^{(0)}\|_{\rm op}
 &=O_P(\ell_nr_n+r_n^2+\ell_nn^{-1})=o_P(1),\\
 \max_{I,\rho}
 \left|\frac{\widehat\sigma_{\rho,I}}
 {\widehat\sigma_{\rho,I}^{(0)}}-1\right|
 &=O_P(\mathfrak e_n^{\rm WBS}).
\end{align*}

Fix \((I,a,b,c,\rho)\) and abbreviate
\[
 \bfDelta_I=\bfDelta_I(a,b,c),
 \qquad
 \|\bfDelta_I\|\le qr_n.
\]
By Lemmas~\ref{lem:bilinear-de} and
\ref{lem:deterministic-centered-transfer},
\[
 \max_{I,\rho,a,b,c}
 \left|\bfDelta_I^\top
 \{\widehat\bfQ_{\rho,I}^{(0)}-\bfD_{\rho,|I|}\}
 \bfDelta_I\right|
 =O_P(\ell_nn^{-1/2}r_n^2).
\]
Since
\[
 \frac{N_{a,b,c}}{|I|^{1/2}}\le Cn^{1/2},
 \qquad
 \mathfrak s_{n,\max}^{\rm WBS}\asymp n^{1/2}r_n^2,
\]
we obtain
\begin{equation}
\label{eq:wbs-multijump-deterministic-transfer}
 \max_{I,\rho,a,b,c}
 \frac{N_{a,b,c}}{|I|^{1/2}}
 \left|\bfDelta_I^\top
 \{\widehat\bfQ_{\rho,I}^{(0)}-\bfD_{\rho,|I|}\}
 \bfDelta_I\right|
 =O_P(\mathfrak s_{n,\max}^{\rm WBS}\ell_nn^{-1/2})=o_P(1).
\end{equation}
All deterministic local components are bounded by
\[
 \max_{I,\rho,a,b,c}|D_{\rho,I}(a,b,c)|
 \le C\mathfrak s_{n,\max}^{\rm WBS}.
\]
Therefore the scatter, centering, and studentization perturbations attached
to deterministic components are
\[
 O_P(\mathfrak s_{n,\max}^{\rm WBS}
       \mathfrak e_n^{\rm WBS}).
\]

For the stochastic cross term, the centered-error contrast satisfies
\[
 \widehat\bfDelta_I^{(0)}(a,b,c)
 =N_{a,b,c}^{-1/2}\bfB_{I,a,b,c}
  +\bfr_I^{(0)},
 \qquad
 \max_{I,a,b,c}\|\bfr_I^{(0)}\|
 =O_P(\ell_nn^{-1}).
\]
Conditionally on \(\calF^0\),
\(\bfQ_{\rho,I}^{0}\) is measurable and
\(\bfB_{I,a,b,c}\) is a degree-one Rademacher sum.  Hence
\begin{align*}
 \Var_{\varsigma}
 \{\bfDelta_I^\top\bfQ_{\rho,I}^{0}\bfB_{I,a,b,c}\}
 &\le C\|\bfDelta_I\|^2
 \le Cr_n^2,\\
 \Pbb_{\varsigma}
 \left(
 |\bfDelta_I^\top\bfQ_{\rho,I}^{0}\bfB_{I,a,b,c}|>x r_n
 \middle|\calF^0\right)
 &\le2e^{-cx^2}.
\end{align*}
A union bound over at most \(Kn^5\) candidates gives
\begin{equation}
\label{eq:wbs-oracle-cross-bound}
 \max_{I,\rho,a,b,c}
 |\bfDelta_I^\top\bfQ_{\rho,I}^{0}\bfB_{I,a,b,c}|
 =O_P(\mathfrak z_n^{\rm WBS}r_n).
\end{equation}
The centered and alternative mixed-resolvent transfers give, respectively,
\begin{align*}
 \max_{I,\rho,a,b,c}
 |\bfDelta_I^\top
 (\widehat\bfQ_{\rho,I}^{(0)}-\bfQ_{\rho,I}^{0})
 \bfB_{I,a,b,c}|
 &=O_P(\ell_nn^{-1/2}r_n)=o_P(1),\\
 \max_{I,\rho,a,b,c}
 |\bfDelta_I^\top
 (\widehat\bfQ_{\rho,I}-\widehat\bfQ_{\rho,I}^{(0)})
 \bfB_{I,a,b,c}|
 &=O_P(\ell_nr_n^2+r_n^3+\ell_nr_nn^{-1/2})=o_P(1).
\end{align*}
Therefore the fully feasible cross component is
\[
 O_P(\mathfrak z_n^{\rm WBS}r_n)+o_P(1).
\]

The terms containing at least one median remainder satisfy
\begin{align*}
 &\max_{I,\rho,a,b,c}
 |\mathcal R_{\rho,I}^{\rm med}(a,b,c)|\\
 &\quad\le C\{
 \ell_{0,n}n^{1/2}r_{*,n}
 +n^{1/2}r_nr_{*,n}
 +n^{1/2}r_{*,n}^2\}=o_P(1),
\end{align*}
by \eqref{eq:wbs-multijump-median-expansion},
\(p\asymp n\), \(r_n\to0\), and \(\ell_nr_n\to0\).
The studentization of the centered-error component contributes
\[
 O_P(\mathfrak z_n^{\rm WBS}\mathfrak e_n^{\rm WBS}).
\]
If \(\mathfrak s_{n,\max}^{\rm WBS}\ge\mathfrak z_n^{\rm WBS}\), then
\[
 \mathfrak z_n^{\rm WBS}\mathfrak e_n^{\rm WBS}
 \le\mathfrak s_{n,\max}^{\rm WBS}\mathfrak e_n^{\rm WBS}.
\]
If \(\mathfrak s_{n,\max}^{\rm WBS}<\mathfrak z_n^{\rm WBS}\), then
\eqref{eq:wbs-signal-equivalence} gives
\[
 r_n^2\le C\mathfrak z_n^{\rm WBS}n^{-1/2},
\]
and hence
\[
 \mathfrak z_n^{\rm WBS}\mathfrak e_n^{\rm WBS}
 \le C\left\{
 \ell_n(\mathfrak z_n^{\rm WBS})^{3/2}n^{-1/4}
 +(\mathfrak z_n^{\rm WBS})^2n^{-1/2}
 +\ell_n\mathfrak z_n^{\rm WBS}n^{-1/2}
 \right\}=o(1).
\]
Combining these bounds with
\eqref{eq:wbs-multijump-deterministic-transfer} and
\eqref{eq:wbs-oracle-cross-bound} yields
\[
 Z_{\rho,I}(a,b,c)
 =Z_{\rho,I}^{(0)}(a,b,c)+D_{\rho,I}(a,b,c)
  +R_{\rho,I}(a,b,c),
\]
with
\[
 \max_{I,\rho,a,b,c}|R_{\rho,I}(a,b,c)|
 =O_P(\mathfrak s_{n,\max}^{\rm WBS}
       \mathfrak e_n^{\rm WBS})
  +O_P(\mathfrak z_n^{\rm WBS}r_n)+o_P(1).
\]
Finally,
\[
 \mathfrak z_n^{\rm WBS}=O(\ell_n),
 \qquad
 \ell_nr_n\to0
 \quad\Longrightarrow\quad
 \mathfrak z_n^{\rm WBS}r_n=o(1),
\]
which proves \eqref{eq:wbs-uniform-expansion} and
\eqref{eq:wbs-uniform-remainder}.
\end{proof}

\begin{proof}[Proof of Theorem \ref{thm:wbs-consistency}]
Assumption~\ref{ass:wbs-signal-tuning}, together with
\eqref{eq:wbs-extreme-signal-equivalence}, permits a deterministic sequence
\(c_n\uparrow\infty\) such that
\begin{equation}
\label{eq:wbs-slow-envelope}
 c_n\{\mathfrak z_n^{\rm WBS}
 +\mathfrak s_{n,\max}^{\rm WBS}\mathfrak e_n^{\rm WBS}+1\}
 =o(\mathfrak t_n^{\rm WBS}).
\end{equation}
Set
\[
 \mathfrak u_n^{\rm WBS}
 =c_n\{\mathfrak z_n^{\rm WBS}
 +\mathfrak s_{n,\max}^{\rm WBS}\mathfrak e_n^{\rm WBS}+1\}.
\]
Let \(\mathcal E_n^{\rm WBS}\) be the intersection of
\begin{enumerate}[label=(\roman*)]
\item the isolation event in \eqref{eq:wbs-isolation-event};
\item the events on which every uniform \(O_P\)-bound in
Lemma~\ref{lem:wbs-envelope} is bounded by \(c_n\) times its deterministic
rate;
\item the events on which every uniform \(o_P(1)\)-term in that lemma has
absolute value at most one.
\end{enumerate}
Then
\[
 \Pbb(\mathcal E_n^{\rm WBS})\longrightarrow1.
\]
All subsequent inequalities are established on
\(\mathcal E_n^{\rm WBS}\).

First,
\begin{align*}
 \mathfrak s_{n,\min}^{\rm WBS}
 &\le\mathfrak s_{n,\max}^{\rm WBS}
 \le C\sqrt n\,r_n^2=o(\sqrt n),\\
 \frac{\mathfrak t_n^{\rm WBS}}
      {\mathfrak z_n^{\rm WBS}}&\longrightarrow\infty,
 \qquad
 \mathfrak z_n^{\rm WBS}\longrightarrow\infty.
\end{align*}
Hence
\[
 \frac{n\mathfrak t_n^{\rm WBS}}
      {\mathfrak s_{n,\min}^{\rm WBS}}\longrightarrow\infty.
\]
By Assumption~\ref{ass:wbs-signal-tuning} and
\eqref{eq:wbs-extreme-signal-equivalence},
\[
 \frac{n\mathfrak t_n^{\rm WBS}}
      {\mathfrak s_{n,\min}^{\rm WBS}}=o(h_n),
\]
so
\[
 h_n\longrightarrow\infty,
 \qquad
 h_n/n\longrightarrow0.
\]

If a pool \(I\) contains no true change, location equivariance gives the
exact identities
\[
 \widehat\bfDelta_I=\widehat\bfDelta_I^{(0)},
 \qquad
 \widehat\bfR_I=\widehat\bfR_I^{(0)},
 \qquad
 Z_{\rho,I}=Z_{\rho,I}^{(0)}.
\]
Therefore, by \eqref{eq:wbs-proof-null-envelope},
\eqref{eq:wbs-slow-envelope}, and
\(\mathfrak u_n^{\rm WBS}=o(\mathfrak t_n^{\rm WBS})\),
\[
 \max_{\rho,a,b,c}|Z_{\rho,I}(a,b,c)|
 \le c_n\mathfrak z_n^{\rm WBS}
 <\mathfrak t_n^{\rm WBS}
\]
for all large \(n\).  Thus a homogeneous recursive segment stops and cannot
produce a false estimate.

For each \(j\), choose
\[
 I_j^\circ=[L_j^\circ,R_j^\circ]
 \in\mathcal I_{M_n}\cap\mathfrak G_j.
\]
Theorem event \(\mathcal E_n^{\rm WBS}\),
Lemma~\ref{lem:wbs-population-separation}, and
\eqref{eq:wbs-uniform-expansion} give
\begin{align*}
 \mathcal T(I_j^\circ)
 &\ge D_{I_j^\circ}(L_j^\circ,k_j,R_j^\circ)
  -\max_{\rho,a,b,c}|Z_{\rho,I_j^\circ}^{(0)}(a,b,c)|
  -\max_{\rho,a,b,c}|R_{\rho,I_j^\circ}(a,b,c)|\\
 &\ge c\mathfrak s_{n,j}^{\rm WBS}
  -C\mathfrak u_n^{\rm WBS}.
\end{align*}
Since
\[
 \mathfrak u_n^{\rm WBS}=o(\mathfrak t_n^{\rm WBS}),
 \qquad
 \mathfrak t_n^{\rm WBS}=o(\mathfrak s_{n,\min}^{\rm WBS}),
\]
we have
\[
 \mathcal T(I_j^\circ)>\mathfrak t_n^{\rm WBS},
 \qquad j=1,\ldots,q,
\]
for all sufficiently large \(n\).

We now prove the recursion by induction.  For an active segment
\(S=[\ell,r]\), let its unselected true changes form the consecutive block
\[
 \mathcal K(S)=\{k_u,\ldots,k_v\}.
\]
The induction invariant is
\begin{align}
 I_j^\circ&\subseteq S,
 &&j=u,\ldots,v,
 \label{eq:wbs-invariant-good-intervals}\\
 \ell=1\quad&\text{or}\quad
 |\ell-k_{u-1}|\le2h_n+o(h_n),
 &r=n\quad&\text{or}\quad
 |r-k_{v+1}|\le2h_n+o(h_n).
 \label{eq:wbs-invariant-boundaries}
\end{align}
The initial segment \([1,n]\) satisfies both relations.

Consider an active segment with \(\mathcal K(S)\ne\varnothing\).
By \eqref{eq:wbs-invariant-good-intervals}, it contains a significant
interval \(I_j^\circ\); hence the call cannot stop.  Let \(I^*\) be the
shortest significant interval selected by the algorithm.  Since every good
interval satisfies \(|I_j^\circ|\le\delta_Wn\),
\[
 |I^*|\le\delta_Wn.
\]
The spacing condition gives
\[
 \min_l(k_{l+1}-k_l)\ge\delta_0n-O(1),
 \qquad
 \delta_W<\delta_0/2,
\]
so \(I^*\) contains at most one true boundary.  It contains at least one,
because otherwise the homogeneous-pool bound would imply
\(\mathcal T(I^*)<\mathfrak t_n^{\rm WBS}\).  Thus
\[
 I^*\cap\{k_1,\ldots,k_q\}=\{k_j\}
\]
for exactly one unselected \(k_j\).  With
\(\widetilde k=\widehat k_{\calR}(I^*)\),
\[
 |\widetilde k-k_j|\le|I^*|\le\delta_Wn.
\]

From \eqref{eq:wbs-invariant-boundaries} and the spacing of true changes,
\[
 \min\{k_j-\ell+1,r-k_j\}
 \ge\delta_0n-2h_n-o(h_n).
\]
Since \(h_n=o(n)\) and
\(\delta_R+\delta_W+\delta_B<\delta_0/2\),
\[
 \delta_0n-2h_n-o(h_n)
 \ge(\delta_R+\delta_W+\delta_B)n
\]
for all large \(n\).  Therefore
\eqref{eq:wbs-current-margin} holds, and
Lemma~\ref{lem:wbs-refinement-geometry} gives
\[
 B^*=B(\widetilde k;\ell,r)\in\mathfrak B_j,
 \qquad
 B^*\cap\{k_1,\ldots,k_q\}=\{k_j\}.
\]

Let
\[
 (\widehat a,\widehat k,\widehat c,\widehat\rho)
 \in\argmax_{\substack{(a,b,c)\in\mathcal A(B^*,\varepsilon)\\
                       \rho\in\calR_K}}
 Z_{\rho,B^*}(a,b,c).
\]
Choose a population maximizer
\[
 (a^o,k^o,c^o)
 \in\argmax_{(a,b,c)\in\mathcal A(B^*,\varepsilon)}
 D_{B^*}(a,b,c),
 \qquad
 \rho^o\in\argmax_{\rho\in\calR_K}
 D_{\rho,B^*}(a^o,k^o,c^o).
\]
The empirical maximizing property gives
\[
 Z_{\widehat\rho,B^*}(\widehat a,\widehat k,\widehat c)
 \ge Z_{\rho^o,B^*}(a^o,k^o,c^o).
\]
Using \eqref{eq:wbs-uniform-expansion} at these two candidate--ridge pairs,
\begin{align*}
 0
 &\le D_{B^*}(a^o,k^o,c^o)
      -D_{B^*}(\widehat a,\widehat k,\widehat c)\\
 &\le D_{\rho^o,B^*}(a^o,k^o,c^o)
      -D_{\widehat\rho,B^*}(\widehat a,\widehat k,\widehat c)\\
 &\le
 |Z_{\rho^o,B^*}^{(0)}(a^o,k^o,c^o)|
 +|Z_{\widehat\rho,B^*}^{(0)}(\widehat a,\widehat k,\widehat c)|\\
 &\quad+|R_{\rho^o,B^*}(a^o,k^o,c^o)|
 +|R_{\widehat\rho,B^*}(\widehat a,\widehat k,\widehat c)|\\
 &\le C\mathfrak u_n^{\rm WBS}
 =o(\mathfrak t_n^{\rm WBS}).
\end{align*}
The second inequality uses
\[
 D_{B^*}(\widehat a,\widehat k,\widehat c)
 =\max_{\rho\in\calR_K}
 D_{\rho,B^*}(\widehat a,\widehat k,\widehat c)
 \ge D_{\widehat\rho,B^*}(\widehat a,\widehat k,\widehat c).
\]
Since \(\mathfrak u_n^{\rm WBS}=o(\mathfrak t_n^{\rm WBS})\) and
\(\mathfrak t_n^{\rm WBS}=o(\mathfrak s_{n,\min}^{\rm WBS})\),
Lemma~\ref{lem:wbs-population-separation} yields
\begin{equation}
\label{eq:wbs-proof-localization}
 \frac{|\widehat k-k_j|}{n}
 \le C\frac{\mathfrak t_n^{\rm WBS}}
              {\mathfrak s_{n,\min}^{\rm WBS}}.
\end{equation}
By Assumption~\ref{ass:wbs-signal-tuning},
\[
 |\widehat k-k_j|=o(h_n).
\]

The deletion band satisfies, for all large \(n\),
\[
 k_j\in[\widehat k-h_n,\widehat k+h_n].
\]
Moreover,
\[
 2h_n+2|\widehat k-k_j|+1=o(n)
 <\min_l(k_{l+1}-k_l),
\]
so this band contains no other true change.  The new recursive endpoints
obey
\begin{align*}
 |(\widehat k-h_n)-k_j|
 &\le h_n+|\widehat k-k_j|,\\
 |(\widehat k+h_n+1)-k_j|
 &\le h_n+|\widehat k-k_j|+1
 \le2h_n+o(h_n).
\end{align*}
Thus \eqref{eq:wbs-invariant-boundaries} is preserved.

For every remaining change \(k_l\) to the right of \(k_j\),
\eqref{eq:wbs-good-buffer} gives
\[
 \inf I_l^\circ-k_j\ge\delta_Bn.
\]
Hence
\[
 \inf I_l^\circ-(\widehat k+h_n+1)
 \ge\delta_Bn-h_n-|\widehat k-k_j|-1>0.
\]
Similarly, for every remaining change to the left,
\[
 (\widehat k-h_n)-\sup I_l^\circ
 \ge\delta_Bn-h_n-|\widehat k-k_j|>0.
\]
Therefore each remaining good interval is contained in the corresponding
child segment, so \eqref{eq:wbs-invariant-good-intervals} is also preserved.
Each child with at least one remaining change contains a good interval of
length at least \(m_{\min}\) and therefore is not removed by the minimum-length
stopping rule.

Every successful call consequently selects exactly one previously
unselected true change, deletes no other true change, and preserves the
induction invariant for all remaining changes.  After exactly \(q\)
successful calls,
\[
 \mathcal K(S)=\varnothing
\]
in every active child; all such homogeneous children stop by the null-pool
bound.  Thus
\[
 \widehat q=q
\]
on \(\mathcal E_n^{\rm WBS}\) for all sufficiently large \(n\).  Applying
\eqref{eq:wbs-proof-localization} to the \(q\) selected changes gives
\[
 \max_{1\le j\le q}\frac{|\widehat k_j-k_j|}{n}
 \le C\frac{\mathfrak t_n^{\rm WBS}}
              {\mathfrak s_{n,\min}^{\rm WBS}}
 \le C'\frac{\mathfrak t_n^{\rm WBS}}
              {\underline{\mathfrak s}_n^{\rm WBS}}.
\]
Because the right-hand side is \(o(h_n/n)=o(1)\), the estimated and true
changes have the same increasing order for all large \(n\).  Finally,
\(\Pbb(\mathcal E_n^{\rm WBS})\to1\), which proves the theorem.
\end{proof}


\begin{thebibliography}{99}

\bibitem[Baranowski et al.(2019)]{baranowski2019not}
Baranowski, R., Chen, Y., and Fryzlewicz, P. (2019).
\newblock Narrowest-over-threshold detection of multiple change points and change-point-like features.
\newblock \emph{Journal of the Royal Statistical Society: Series B} 81, 649--672.

\bibitem[Boucheron et al.(2013)]{boucheron2013}
Boucheron, S., Lugosi, G., and Massart, P. (2013).
\newblock \emph{Concentration Inequalities: A Nonasymptotic Theory of Independence}.
\newblock Oxford University Press, Oxford.

\bibitem[Chen et al.(2011)]{chen2011rht}
Chen, L. S., Paul, D., Prentice, R. L., and Wang, P. (2011).
\newblock A regularized Hotelling's $T^2$ test for pathway analysis in proteomic studies.
\newblock \emph{Journal of the American Statistical Association} 106, 1345--1360.

\bibitem[Chen et al.(2022)]{chen2022breakpoints}
Chen, L., Wang, W., and Wu, W. B. (2022).
\newblock Inference of breakpoints in high-dimensional time series.
\newblock \emph{Journal of the American Statistical Association} 117, 1951--1963.

\bibitem[Cho and Fryzlewicz(2015)]{cho2015sbs}
Cho, H. and Fryzlewicz, P. (2015).
\newblock Multiple-change-point detection for high dimensional time series via sparsified binary segmentation.
\newblock \emph{Journal of the Royal Statistical Society: Series B} 77, 475--507.

\bibitem[Collins and \'{S}niady(2006)]{collins2006}
Collins, B. and \'{S}niady, P. (2006).
\newblock Integration with respect to the Haar measure on unitary, orthogonal and symplectic group.
\newblock \emph{Communications in Mathematical Physics} 264, 773--795.

\bibitem[El Karoui(2009)]{elkaroui2009}
El Karoui, N. (2009).
\newblock Concentration of measure and spectra of random matrices: Applications to correlation matrices, elliptical distributions and beyond.
\newblock \emph{The Annals of Applied Probability} 19, 2362--2405.

\bibitem[Enikeeva and Harchaoui(2019)]{enikeeva2019}
Enikeeva, F. and Harchaoui, Z. (2019).
\newblock High-dimensional change-point detection under sparse alternatives.
\newblock \emph{The Annals of Statistics} 47, 2051--2079.

\bibitem[Fang et al.(1990)]{fang1990}
Fang, K.-T., Kotz, S., and Ng, K.-W. (1990).
\newblock \emph{Symmetric Multivariate and Related Distributions}.
\newblock Chapman \& Hall, London.

\bibitem[Fryzlewicz(2014)]{fryzlewicz2014}
Fryzlewicz, P. (2014).
\newblock Wild binary segmentation for multiple change-point detection.
\newblock \emph{The Annals of Statistics} 42, 2243--2281.

\bibitem[French(2026)]{frenchdata}
French, K. R. (2026).
\newblock 49 industry portfolios.
\newblock Kenneth R. French Data Library.
\newblock \url{https://mba.tuck.dartmouth.edu/pages/faculty/ken.french/data_library.html}.

\bibitem[James and Matteson(2015)]{james2014ecp}
James, N. A. and Matteson, D. S. (2015).
\newblock ecp: An R package for nonparametric multiple change point analysis of multivariate data.
\newblock \emph{Journal of Statistical Software} 62, 1--25.

\bibitem[Jiang et al.(2023)]{jiang2023robust}
Jiang, F., Wang, R., and Shao, X. (2023).
\newblock Robust inference for change points in high dimension.
\newblock \emph{Journal of Multivariate Analysis} 193, 105114.

\bibitem[Jirak(2015)]{jirak2015}
Jirak, M. (2015).
\newblock Uniform change point tests in high dimension.
\newblock \emph{The Annals of Statistics} 43, 2451--2483.

\bibitem[Li and Xu(2026)]{li2026change}
Li, H. and Xu, H. (2026).
\newblock Adaptable high-dimensional change point detection via ridge regularization.
\newblock arXiv preprint.

\bibitem[Li et al.(2020)]{li2020adaptable}
Li, H., Aue, A., Paul, D., Peng, J., and Wang, P. (2020).
\newblock An adaptable generalization of Hotelling's $T^2$ test in high dimension.
\newblock \emph{The Annals of Statistics} 48, 1815--1847.

\bibitem[Li et al.(2024)]{li2024twoway}
Li, J., Chen, L., Wang, W., and Wu, W. B. (2024).
\newblock $\ell_2$ inference for change points in high-dimensional time series via a Two-Way MOSUM.
\newblock \emph{The Annals of Statistics} 52, 602--627.

\bibitem[Lifshits(1984)]{lifshits1984}
Lifshits, M. A. (1984).
\newblock Absolute continuity of functionals of ``supremum'' type for Gaussian processes.
\newblock \emph{Journal of Soviet Mathematics} 27, 3103--3112.

\bibitem[Liu et al.(2020)]{liu2020unified}
Liu, B., Zhou, C., Zhang, X., and Liu, Y. (2020).
\newblock A unified data-adaptive framework for high dimensional change point detection.
\newblock \emph{Journal of the Royal Statistical Society: Series B} 82, 933--963.

\bibitem[Liu et al.(2022)]{liu2022review}
Liu, B., Zhang, X., and Liu, Y. (2022).
\newblock High dimensional change point inference: Recent developments and extensions.
\newblock \emph{Journal of Multivariate Analysis} 188, 104833.

\bibitem[Liu et al.(2025)]{liu2025sscpd}
Liu, J., Feng, L., Peng, L., and Wang, Z. (2025).
\newblock Spatial-sign based high dimensional change point inference.
\newblock arXiv:2504.19306.

\bibitem[Meckes(2019)]{meckes2019}
Meckes, E. S. (2019).
\newblock \emph{The Random Matrix Theory of the Classical Compact Groups}.
\newblock Cambridge University Press, Cambridge.

\bibitem[Matteson and James(2014)]{matteson2014nonparametric}
Matteson, D. S. and James, N. A. (2014).
\newblock A nonparametric approach for multiple change point analysis of multivariate data.
\newblock \emph{Journal of the American Statistical Association} 109, 334--345.

\bibitem[O'Donnell(2014)]{odonnell2014}
O'Donnell, R. (2014).
\newblock \emph{Analysis of Boolean Functions}.
\newblock Cambridge University Press, Cambridge.

\bibitem[Pinelis(1994)]{pinelis1994}
Pinelis, I. (1994).
\newblock Optimum bounds for the distributions of martingales in Banach spaces.
\newblock \emph{The Annals of Probability} 22, 1679--1706.

\bibitem[Shu et al.(2022)]{shu2022spatialrank}
Shu, L., Chen, Y., Zhang, W., and Wang, X. (2022).
\newblock Spatial rank-based high-dimensional change point detection via random integration.
\newblock \emph{Journal of Multivariate Analysis} 189, 104942.

\bibitem[Tropp(2012)]{tropp2012}
Tropp, J. A. (2012).
\newblock User-friendly tail bounds for sums of random matrices.
\newblock \emph{Foundations of Computational Mathematics} 12, 389--434.

\bibitem[Wang and Feng(2023)]{wangfeng2023}
Wang, G. and Feng, L. (2023).
\newblock Computationally efficient and data-adaptive changepoint inference in high dimension.
\newblock \emph{Journal of the Royal Statistical Society: Series B} 85, 936--958.

\bibitem[Wang and Samworth(2018)]{wangsamworth2018}
Wang, T. and Samworth, R. J. (2018).
\newblock High dimensional change point estimation via sparse projection.
\newblock \emph{Journal of the Royal Statistical Society: Series B} 80, 57--83.

\bibitem[Wang and Shao(2023)]{wangshao2023dating}
Wang, R. and Shao, X. (2023).
\newblock Dating the break in high-dimensional data.
\newblock \emph{Bernoulli} 29, 2879--2901.

\bibitem[Wang et al.(2019)]{wang2019multiple}
Wang, Y., Zou, C., Wang, Z., and Yin, G. (2019).
\newblock Multiple change-points detection in high dimension.
\newblock \emph{Random Matrices: Theory and Applications} 8, 1950014.

\bibitem[Wang et al.(2022)]{wang2022self}
Wang, R., Zhu, C., Volgushev, S., and Shao, X. (2022).
\newblock Inference for change points in high-dimensional data via self-normalization.
\newblock \emph{The Annals of Statistics} 50, 781--806.

\bibitem[Wen et al.(2024)]{wen2024activation}
Wen, M., Wang, G., Zou, C., and Wang, Z. (2024).
\newblock Activation discovery with FDR control: Application to fMRI data.
\newblock \emph{Statistica Sinica} 34, 1625--1647.

\bibitem[Yu and Chen(2021)]{yu2021finite}
Yu, M. and Chen, X. (2021).
\newblock Finite sample change point inference and identification for high-dimensional mean vectors.
\newblock \emph{Journal of the Royal Statistical Society: Series B} 83, 247--270.

\bibitem[Zhang and Lavitas(2018)]{zhang2018self}
Zhang, T. and Lavitas, L. (2018).
\newblock Unsupervised self-normalized change-point testing for time series.
\newblock \emph{Journal of the American Statistical Association} 113, 637--648.

\bibitem[Zhang et al.(2010)]{zhang2010simultaneous}
Zhang, N. R., Siegmund, D. O., Ji, H., and Li, J. Z. (2010).
\newblock Detecting simultaneous changepoints in multiple sequences.
\newblock \emph{Biometrika} 97, 631--645.

\bibitem[Zhang et al.(2022)]{zhang2022adaptive}
Zhang, Y., Wang, R., and Shao, X. (2022).
\newblock Adaptive inference for change points in high-dimensional data.
\newblock \emph{Journal of the American Statistical Association} 117, 1751--1762.

\bibitem[Zhao et al.(2026)]{zhao2026rhtcp}
Zhao, P., Zhou, L., and Feng, L. (2026).
\newblock Cauchy aggregation of ridge-regularized Hotelling tests for high-dimensional change-point detection.
\newblock Manuscript submitted to \emph{Random Matrices: Theory and Applications}.

\end{thebibliography}

\begin{thebibliography}{99}

\bibitem[Bai and Silverstein(2010)]{bai2010spectral}
Bai, Z. D. and Silverstein, J. W. (2010).
\newblock \emph{Spectral Analysis of Large Dimensional Random Matrices}, 2nd ed.
\newblock Springer, New York.

\bibitem[de Jong(1987)]{dejong1987}
de Jong, P. (1987).
\newblock A central limit theorem for generalized quadratic forms.
\newblock \emph{Probability Theory and Related Fields} 75, 261--277.

\bibitem[Hachem et al.(2007)]{hachem2007deterministic}
Hachem, W., Loubaton, P., and Najim, J. (2007).
\newblock Deterministic equivalents for certain functionals of large random matrices.
\newblock \emph{The Annals of Applied Probability} 17, 875--930.

\bibitem[Hachem et al.(2013)]{hachem2013bilinear}
Hachem, W., Loubaton, P., Najim, J., and Vallet, P. (2013).
\newblock On bilinear forms based on the resolvent of large random matrices.
\newblock \emph{Annales de l'Institut Henri Poincare, Probabilites et Statistiques} 49, 36--63.


\bibitem[Li and Xu(2022)]{li2022spatialmedian}
Li, W. and Xu, Y. (2022).
\newblock Asymptotic properties of high-dimensional spatial median in elliptical distributions with application.
\newblock \emph{Journal of Multivariate Analysis} 190, 104975.

\bibitem[Li et al.(2022)]{li2022sscm}
Li, W., Wang, Q., Yao, J., and Zhou, W. (2022).
\newblock On eigenvalues of a high-dimensional spatial-sign covariance matrix.
\newblock \emph{Bernoulli} 28, 606--637.

\bibitem[Magyar and Tyler(2011)]{magyar2011}
Magyar, A. and Tyler, D. E. (2011).
\newblock The asymptotic efficiency of the spatial median for elliptically symmetric distributions.
\newblock \emph{Sankhya B} 73, 165--192.

\bibitem[Oja(2010)]{oja2010}
Oja, H. (2010).
\newblock \emph{Multivariate Nonparametric Methods with R: An Approach Based on Spatial Signs and Ranks}.
\newblock Springer, New York.

\bibitem[Silverstein and Choi(1995)]{silverstein1995}
Silverstein, J. W. and Choi, S. I. (1995).
\newblock Analysis of the limiting spectral distribution of large-dimensional random matrices.
\newblock \emph{Journal of Multivariate Analysis} 54, 295--309.

\end{thebibliography}
\end{document}